\begin{document}	

\title{Lossless Source Coding in the Point-to-Point, \\ Multiple Access, and Random Access Scenarios}
\author{Shuqing~Chen,~\IEEEmembership{Graduate Student Member,~IEEE,}
            Michelle~Effros,~\IEEEmembership{Fellow,~IEEE,}
        and~Victoria~Kostina,~\IEEEmembership{Member,~IEEE}
      	\IEEEauthorrefmark{1}
\thanks{Manuscript received September 4, 2019; 
	revised May 27, 2020; accepted June 8, 2020}%
\thanks{This work is supported in part by the National Science Foundation under Grants CCF-1817241 and CCF-1956386. The work of S. Chen is supported in part by the Oringer Fellowship Fund in Information Science and Technology. This paper was presented in part at the 2019 IEEE International Symposium on Information Theory \cite{chen-e-k}. Matlab code for the computation of nonasymptotic bounds in this paper is available at github \cite{spectre}.}%
\thanks{Shuqing Chen was with the Department of Electrical Engineering, California
Institute of Technology, Pasadena, CA 91125 USA. She is now with Virtu
Financial Inc., New York, NY 10006 USA. (e-mail: {schen2@caltech.edu}).}
\thanks{Michelle Effros and Victoria Kostina are with the Department of Electrical Engineering, California Institute of Technology, Pasadena, CA 91125, USA. (e-mail: {effros@caltech.edu}, {vkostina@caltech.edu}).}%
\thanks{Communicated by I. Kontoyiannis, Associate Editor At Large. }
\thanks{Digital Object Identifier 10.1109/TIT.2020.3005155}
}



\maketitle

\begin{abstract}
This work studies point-to-point, multiple access, 
and random access lossless source coding 
in the finite-blocklength regime. 
In each scenario, a random coding technique is developed 
and used to analyze third-order coding performance.
Asymptotic results include a third-order characterization 
of the Slepian-Wolf rate region 
with an improved converse 
that relies on a connection to composite hypothesis testing. 
For dependent sources, 
the result implies that 
the independent encoders used by Slepian-Wolf codes 
can achieve the same third-order-optimal performance 
as a single joint encoder. 
The concept of random access source coding is introduced
to generalize multiple access (Slepian-Wolf) source coding 
to the case where encoders decide independently whether or not to participate 
and the set of participating encoders is unknown {\em a priori} 
to both the encoders and the decoder.
The proposed random access source coding 
strategy employs rateless coding with scheduled feedback.  
A random coding argument proves 
the existence of a single deterministic code of this structure 
that simultaneously achieves the third-order-optimal Slepian-Wolf performance 
for each possible active encoder set. 
\end{abstract}

\begin{IEEEkeywords}
Lossless source coding, Slepian-Wolf, random access, finite blocklength, random coding, non-asymptotic information theory, Gaussian approximation, hypothesis testing, meta-converse.
\end{IEEEkeywords}

\theoremstyle{plain}
\newtheorem{thm}{Theorem}
\newtheorem{cor}[thm]{Corollary}
\newtheorem{lem}[thm]{Lemma}
\newtheorem{prop}[thm]{Proposition}
\newtheorem{defn}{Definition}
\theoremstyle{remark}
\newtheorem{remark}{Remark}
\renewcommand\qedsymbol{$\blacksquare$}
\newcommand\mydots{\makebox[1em][c]{.\hfil.\hfil.}}


%

\section{Introduction}
%
%
%
%



\IEEEPARstart{W}{e} study the fundamental limits 
of fixed-length, finite-blocklength lossless source coding in three scenarios: 
\begin{enumerate}
	\item \emph{Point-to-point}: A single source 
	is compressed by a single encoder and decompressed by a single decoder.
	\item \emph{Multiple access}: 
	Each source in a fixed set of sources 
	is compressed by an independent encoder; 
	all sources are decompressed by a joint decoder. 
	\item \emph{Random access}: 
	Each active source from some set of possible sources 
	is compressed by an independent encoder; 
	all active sources are decompressed by a joint decoder.
\end{enumerate}

The information-theoretic limit 
in any lossless source coding scenario 
is the set of code sizes or rates 
at which a desired level of reconstruction error is achievable. 
Shannon's theory~\cite{shannon} analyzes this fundamental limit 
by allowing an arbitrarily long encoding blocklength 
in order to obtain a vanishing error probability. 
Finite-blocklength limits~\cite{strassen, pol-poo-ver, kontoyiannis-verdu, kostina-pol-verdu}, 
which are of particular interest 
in delay-sensitive and computationally-constrained coding environments, 
allow a non-vanishing error probability 
and study refined asymptotics of the rates achievable 
with encoding blocklength $n$.
Due to their non-vanishing error probability, 
the resulting codes are sometimes called ``almost-lossless'' source codes.  
We here use the term ``source coding'' 
to refer to this almost-lossless coding paradigm. 

In point-to-point source coding, 
non-asymptotic bounds and asymptotic expansions 
of the minimum achievable rate appear in~\cite{yushkevich,strassen,han,hayashi, kontoyiannis-verdu}. 
In \cite{kontoyiannis-verdu}, Kontoyiannis and Verd\'{u} 
analyze the optimal code 
to give a \emph{third-order} characterization 
of the minimum achievable rate $R^{*}(n,\epsilon)$ 
at blocklength $n$ and error probability $\epsilon$. 
For a finite-alphabet, stationary, memoryless source 
with single-letter distribution $P_X$, entropy $H(X)$, and varentropy $V(X)>0$, 
\begin{equation}
R^{*}(n,\epsilon) \approx H(X) + \sqrt{\frac{V(X)}{n}}Q^{-1}(\epsilon) - \frac{\log n}{2n}, \label{eq-intro-1}
\end{equation} 
where $Q^{-1}(\cdot)$ is 
the inverse complementary Gaussian distribution function, and 
any higher-order term is bounded by $O\big(\frac{1}{n}\big)$.  

For a multiple access source code (MASC), 
also known as a Slepian-Wolf (SW) source code~\cite{slepian-wolf}, 
the fundamental limit is the set of achievable rate tuples 
known as the rate region. 
The \emph{first-order} rate region 
for stationary, memoryless and general sources 
appears in~\cite{slepian-wolf} and~\cite{miyake-kanaya, han}, 
respectively. 
{\em Second-order} asymptotic expansions of the MASC rate region 
for stationary, memoryless sources 
appear in~\cite{tan-kosut, nomura-han}. 
Tan and Kosut's characterization~\cite{tan-kosut} 
is similar in form to the first two terms of \eqref{eq-intro-1}, 
with varentropy $V(X)$ replaced by the entropy dispersion matrix 
and third-order term bounded by $O\big(\frac{\log n}{n}\big)$.

For point-to-point source coding, 
our contributions include non-asymptotic characterizations 
of the performance of randomly designed codes using  
threshold and maximum-likelihood decoders. 
The former analysis demonstrates that 
combining random coding with the best possible threshold decoder 
cannot achieve $-\frac{\log n}{2n}$ in the third-order term in~\eqref{eq-intro-1}, and thus it is strictly sub-optimal. 
The latter shows that combining random coding 
with maximum likelihood decoding 
achieves the first three terms in \eqref{eq-intro-1}. 
We derive both bounds by deriving and analyzing 
a source coding analog to the random coding union (RCU) bound 
from channel coding~\cite[Th.~16]{pol-poo-ver}. 
Our asymptotic expansion is achieved by a random code 
rather than the optimal code from~\cite{kontoyiannis-verdu}.
Thus, there is no loss (up to the third-order term) 
due to random code design, 
which in turn shows that many codes have near-optimal performance; 
further, since our RCU bound 
holds when restricted to linear compressors, 
there are many good linear codes. 
The RCU bound is also important because it generalizes 
to the MASC and other scenarios 
where the optimal code is not known.  

Our MASC RCU bound 
yields a new MASC achievability bound (Theorem~\ref{thm-sw-rcu}). 
Establishing a link to composite hypothesis testing (HT) 
yields a new MASC HT converse (Theorem~\ref{thm-sw-cht-conv}), 
which extends the meta-converse for channel coding in~\cite{pol-poo-ver} 
to source coding with multiple encoders. 
This converse recovers and improves the previous converse 
due to Han \cite[Lemma~7.2.2]{han} 
and is equivalent to the LP-based converse 
of Jose and Kulkarni~\cite{jose-k}, 
which is the current best MASC converse.  
Our analysis of composite HT, 
including both non-asymptotic and asymptotic characterizations, 
develops tools with potential application 
in other multiple-terminal communication scenarios and beyond.
The MASC RCU bound and HT converse 
together yield the \emph{third-order} MASC rate region 
for stationary, memoryless sources (Theorem~\ref{thm-sw}), 
revealing a $-\frac{\log n}{2n}$ third-order term 
that is independent of the number of encoders. 
This tightens the $O\big(\frac{\log n}{n}\big)$ third-order bound 
from~\cite{tan-kosut}, 
which grows linearly with the source alphabet size 
and exponentially with the number of encoders. 
For dependent sources, 
the MASC's third-order-optimal sum rate 
equals the third-order-optimal rate 
achievable through joint encoding. 

While a MASC assumes a fixed, known collection of encoders, 
the set of transmitters communicating with a given access point in 
applications like sensor networks, the internet of things, 
and random access communication 
may be unknown or time-varying. 
The information theory literature treats 
the resulting \emph{channel coding} challenges 
in papers such as~\cite{minero-tse,ra-polyanskiy, recep}. 
We introduce the notion of a \emph{random access source code (RASC)} 
and tackle the resulting source coding challenges. 
The RASC  extends the MASC 
to scenarios where some encoders are inactive, 
and the decoder seeks to reliably reconstruct 
the sources associated with the active encoders 
assuming that the set of active encoders is unknown {\em a priori}.

We propose and analyze a robust RASC 
with rateless encoders that transmit codewords symbol by symbol 
until the receiver tells them to stop. 
Unlike typical rateless codes, 
which allow arbitrary decoding times~\cite{burnashev, tt, ppv2, draper}, 
our code employs a small set of decoding times. 
Single-bit feedback from the decoder 
to all encoders at each potential decoding time 
tells the encoders whether or not to continue transmitting.

We demonstrate (Theorem \ref{thm-rasc}) 
that there exists a single deterministic RASC 
that {\em simultaneously} achieves, 
for every possible set of active encoders, 
the third-order-optimal MASC performance 
for the active source set.  
Since traditional random coding arguments 
do not guarantee the existence of a single deterministic code 
that meets multiple independent constraints, 
prior code designs for multiple-constraint scenarios 
(e.g.,~\cite{ppv2}) 
employ a family of codes indexed using common randomness 
shared by all communicators. 
We develop an alternative approach, 
deriving a refined random coding argument 
(Lemma~\ref{lem-bad-code}) 
that demonstrates the existence of a single deterministic code 
that meets all our constraints simultaneously; 
this technique may eliminate the need for common randomness 
in other communication scenarios. 
For stationary, memoryless, permutation-invariant sources, 
employing identical encoders at all transmitters 
reduces RASC design complexity.  

Except where noted, 
all presented source coding results 
apply to both finite and countably infinite source alphabets. 

The organization of this paper is as follows. 
Section~\ref{sec-notation} defines notation. 
Section~\ref{sec-almost-lossless} treats (point-to-point) source coding. 
Section~\ref{sec-cht} studies composite HT, 
developing general tools for multiple-encoder communication scenarios. 
Section~\ref{sec-SW} treats the MASC. 
Section~\ref{sec-rasc} introduces and studies the RASC. 
Each of Sections~\ref{sec-almost-lossless}, \ref{sec-SW}, and \ref{sec-rasc} 
follows a similar flow:
\begin{enumerate}
\item For the (point-to-point) source code: 
Section~\ref{sec-def-almost-lossless} defines the problem. 
Section~\ref{sec-b-almost-lossless} provides historical background. 
Section~\ref{sec-result-almost-lossless} presents our new random coding 
achievability bounds and their asymptotic expansions.
\item For the MASC: 
Section~\ref{sec-def-sw} gives definitions.
Section~\ref{sec-b-SW} provides historical background.
Section~\ref{sec-sw-nonasymp} presents new non-asymptotic bounds.
Section~\ref{sec-result-SW} presents the third-order MASC characterization, 
comparing MASC and point-to-point source coding performance. 
Section~\ref{sec-sw-feedback} bounds the impact 
of limited feedback (and cooperation) on the third-order-optimal MASC region.
\item For the RASC: 
Section~\ref{sec-def-ra} defines the problem and describes our proposed code.
Section~\ref{sec-ra-b} highlights related work.
Section~\ref{sec-result-rasc} derives converse and achievability characterizations 
for our proposed code's finite-blocklength performance. 
Section~\ref{sec-rasc-perminv} 
treats the simplified code for permutation-invariant sources. 
\end{enumerate}
Section~\ref{sec-conclusion} contains concluding remarks. 
Proofs of auxiliary results appear in the appendices.

\section{Notation} \label{sec-notation}
For any positive integer $i$, let $[i]\triangleq\{1,\ldots,i\}$. 
We use uppercase letters (e.g., $X$) for random variables, 
lowercase letters (e.g., $x$) for scalar values, 
calligraphic uppercase letters (e.g., $\mathcal{E}$) 
for subsets of a sample space (events) or index sets, and 
script uppercase letters (e.g., $\mathscr{Q}$) 
for subsets of a Euclidean space. 
We use both bold face and superscripts for vectors 
(e.g., $\mathbf{x} = x^n$, $\mathbf{1} = (1,\ldots,1)$, 
and $\mathbf{0} = (0,\ldots,0)$). 
Given a sequence $(x_1,x_2,\ldots)$ 
with element $x_i$ in set $\mathcal{X}_i$ for each $i$ 
and given an ordered index set $\mathcal{T} \subseteq \mathbb{N}$, 
we define vector $\mathbf{x}_{\mathcal{T}} \triangleq (x_i, \; i \in \mathcal{T})$ 
and set $\mathcal{X}_{\mathcal{T}} \triangleq \prod_{i \in \mathcal{T}} \mathcal{X}_i$. 
Given a set $\mathcal{X}$, 
$\mathcal{X}^n$ is the $n$-fold Cartesian product of $\mathcal{X}$.
We denote matrices by sans serif uppercase letters (e.g., $\mathsf{V}$) 
and the $(i,j)$-th element of matrix $\mathsf{V}$ by $[\mathsf{V}]_{i,j}$. 
Inequalities between two vectors of the same dimension 
indicate elementwise inequalities. 
Given vector $\mathbf{u} \in \mathbb{R}^d$ 
and set $\mathscr{Q} \subset \mathbb{R}^d$, 
$\mathbf{u} + \mathscr{Q}$ denotes the Minkowski sum 
of $\{\mathbf{u}\}$ and $\mathscr{Q}$, 
giving $\mathbf{u} + \mathscr{Q} \triangleq 
\left\{\mathbf{u} + \mathbf{q}: \mathbf{q} \in \mathscr{Q} \right\}$. 
For two functions $u(n)$ and $f(n)$, 
$u(n) = O(f(n))$ if there exist $c, \, n_{0} \in \mathbb{R}_{+}$ 
such that $0 \leq u(n) \leq cf(n)$ for all $n > n_{0}$. 
For a $d$-dimensional function 
$\mathbf{u}: \mathbb{N} \rightarrow \mathbb{R}^d$, 
$\mathbf{u}(n) = O(f(n))\mathbf{1}$ 
if $u_i(n) = O(f(n))$ for all $i \in [d]$. 
For any finite set $\mathcal{A}$, 
$\mathcal{P}(\mathcal{A})$ represents the power set of $\mathcal{A}$ 
excluding the empty set, 
giving $\mathcal{P}(\mathcal{A})\triangleq
\{\mathcal{T}: \mathcal{T}\subseteq\mathcal{A}\}\setminus\emptyset$. 
We use $|\cdot|_+ \triangleq \max\{0,\cdot\}$. 
All uses of `$\log$' and `$\exp$', if not specified, 
employ an arbitrary common base, which determines the information unit.

Denote the standard and complementary 
Gaussian cumulative distribution functions (cdf) by $\Phi(z)$ and $Q(z)$, giving 
\begin{IEEEeqnarray}{rcl}
\Phi(z) &\, \triangleq\, & \frac{1}{\sqrt{2\pi}}\int_{-\infty}^{z} e^ {-\frac{u^2}{2}} du 
\label{def-gaussian-cdf} \\
Q(z) &\, \triangleq \,& 1- \Phi(z). \label{def-qfunc}
\end{IEEEeqnarray}
Function $Q^{-1}(\cdot)$ denotes the inverse of $Q(\cdot)$. 
The standard Gaussian probability density function is 
\begin{equation}
\phi(z) \triangleq \Phi'(z) = \frac{1}{\sqrt{2\pi}}e^ {-\frac{z^2}{2}}.
\end{equation} 
The $d$-dimensional generalization of the Gaussian cdf is 
\begin{IEEEeqnarray}{rCl}
\lefteqn{\Phi(\mathsf{V}; \mathbf{z})
\triangleq\Phi(\mathsf{V}; z_1,\ldots,z_d)} \label{eq-def-gaussian} \\ 
&\triangleq& \frac{1}{\sqrt{(2\pi)^{d}|\mathsf{V}|}} 
\int_{-\infty}^{z_1} \mydots \int_{-\infty}^{z_d} 
e^{-\frac{1}{2}\sum\limits_{i,j=1}^{d} u_{i}u_{j}[\mathsf{V}^{-1}]_{i,j}} du_{d}
\ldots du_{1}.  \nonumber 
\end{IEEEeqnarray}

Given an ordered index set $\mathcal{T} \subset \mathbb{N}$, 
let $P_{\mathbf{X}_\mathcal{T}}$ be a distribution 
defined on countable alphabet $\mathcal{X}_\mathcal{T}$. 
For any $\mathcal{A}, \mathcal{B} \subseteq \mathcal{T}$ 
with $\mathcal{A} \cap \mathcal{B} = \emptyset$ 
and any $(\mathbf{x}_{\mathcal{A}},\mathbf{x}_{\mathcal{B}})
\in \mathcal{X}_{\mathcal{A}} \times \mathcal{X}_{\mathcal{B}}$, 
the information and conditional information are defined as 
\begin{IEEEeqnarray}{rCl} 
\imath(\mathbf{x}_{\mathcal{A}}) 
&\triangleq& \log \frac{1}
	{P_{\mathbf{X}_{\mathcal{A}}}(\mathbf{x}_{\mathcal{A}})} \\
\imath(\mathbf{x}_{\mathcal{A}}|\mathbf{x}_{\mathcal{B}}) 
&\triangleq& \log \frac{1}
	{P_{\mathbf{X}_{\mathcal{A}}|\mathbf{X}_{\mathcal{B}}}
	(\mathbf{x}_{\mathcal{A}}|\mathbf{x}_{\mathcal{B}})}.
\end{IEEEeqnarray} 
The corresponding entropy, conditional entropy, varentropy, conditional varentropy, third centered moment of information, and third centered moment of conditional information are defined by, respectively, 
\begin{IEEEeqnarray}{rCl}
H(\mathbf{X}_{\mathcal{A}}) &\triangleq&  \mathbb{E}\left[\imath(\mathbf{X}_{\mathcal{A}})\right] \\
H(\mathbf{X}_{\mathcal{A}}|\mathbf{X}_{\mathcal{B}}) &\triangleq&  \mathbb{E}\left[\imath(\mathbf{X}_{\mathcal{A}}|\mathbf{X}_{\mathcal{B}})\right] \\
V(\mathbf{X}_{\mathcal{A}}) &\triangleq&  \text{Var}\left[\imath(\mathbf{X}_{\mathcal{A}})\right] \\
V(\mathbf{X}_{\mathcal{A}}|\mathbf{X}_{\mathcal{B}}) &\triangleq&  \text{Var}\left[\imath(\mathbf{X}_{\mathcal{A}}|\mathbf{X}_{\mathcal{B}})\right] \\
T(\mathbf{X}_{\mathcal{A}}) &\triangleq&  \mathbb{E}\left[|\imath(\mathbf{X}_{\mathcal{A}})-H(\mathbf{X}_{\mathcal{A}})|^3\right] \\
T(\mathbf{X}_{\mathcal{A}}|\mathbf{X}_{\mathcal{B}}) &\triangleq&  \mathbb{E}\left[|\imath(\mathbf{X}_{\mathcal{A}}|\mathbf{X}_{\mathcal{B}})-H(\mathbf{X}_{\mathcal{A}}|\mathbf{X}_{\mathcal{B}})|^3\right].
\end{IEEEeqnarray}

We also define random variables 
\begin{align}
 V_c(\mathbf{X}_{\mathcal{A}}|\mathbf{X}_{\mathcal{B}}) &\triangleq  \mathbb{E} \left[ \left( \imath(\mathbf{X}_{\mathcal{A}}|\mathbf{X}_{\mathcal{B}}) - \mathbb E \left[ \imath(\mathbf{X}_{\mathcal{A}}|\mathbf{X}_{\mathcal{B}}) | \mathbf{X}_{\mathcal{B}} \right] \right)^2 | \mathbf{X}_{\mathcal{B}} \right]\\
 T_c(\mathbf{X}_{\mathcal{A}}|\mathbf{X}_{\mathcal{B}}) &\triangleq  \mathbb{E} \left[ \left| \imath(\mathbf{X}_{\mathcal{A}}|\mathbf{X}_{\mathcal{B}}) - \mathbb E \left[ \imath(\mathbf{X}_{\mathcal{A}}|\mathbf{X}_{\mathcal{B}}) | \mathbf{X}_{\mathcal{B}} \right] \right|^3 | \mathbf{X}_{\mathcal{B}} \right].
\end{align}

\section{Point-to-Point Source Coding}
\label{sec-almost-lossless}
\subsection{Definitions}
\label{sec-def-almost-lossless}
In point-to-point source coding, 
the encoder maps a discrete random variable $X$ 
defined on finite or countably infinite alphabet $\mathcal{X}$ 
into a message from codebook $[M]$. 
The decoder reconstructs $X$ 
from the compressed description. 
Formal definitions of codes 
and their information-theoretic limits follow.
For prior definitions, see, for example,~\cite[Chapter~1]{han}.

\begin{defn}[Point-to-point source code] \label{def-almost-lossless}
An $(M, \epsilon)$ code for a random variable $X$ 
with discrete alphabet $\mathcal{X}$ 
comprises an encoding function 
$\mathsf{f}\colon \mathcal{X} \rightarrow [M]$ and 
a decoding function 
$\mathsf{g}\colon [M] \rightarrow \mathcal{X}$ 
with error probability 
$\mathbb{P}\left[\mathsf{g}(\mathsf{f}(X)) \neq X\right] \leq \epsilon$.
\end{defn}

\begin{defn}[Block point-to-point source code] \label{def-almost-lossless-block} 
An $(n,M,\epsilon)$ code is an $(M,\epsilon)$ code 
defined for a random vector $X^n$ 
with discrete vector alphabet $\mathcal{X}^n$. 
\end{defn}

\begin{defn}[Minimum achievable rate]
The minimum code size $M^*(n,\epsilon)$ and rate $R^*(n,\epsilon)$ 
achievable at blocklength $n$ and error probability $\epsilon$ 
are defined as 
\begin{IEEEeqnarray*}{rcl}
M^{*}(n,\epsilon) & \triangleq & \min \left\{M: \exists \, (n,M,\epsilon) \text{ code}\right\} \\
R^{*}(n,\epsilon) & \triangleq & \frac{1}{n}\log M^{*}(n,\epsilon).
\end{IEEEeqnarray*}
\end{defn}

A discrete information source 
is a sequence of discrete random variables, $X_1, X_2, \ldots$, 
specified by the transition probability kernels $P_{X_i | X^{i-1}}$, $i = 1, 2, \ldots$ 
While Definition~\ref{def-almost-lossless-block} applies to many classes of sources, 
including sources with memory and non-stationary sources, 
our asymptotic analysis focuses on stationary, memoryless sources, 
where $P_{X_i | X^{i-1}} = P_{X}$ 
for all $i = 1, 2, \ldots$ (i.e., $X_1, X_2, \ldots$ are i.i.d.). 

\subsection{Background}
\label{sec-b-almost-lossless}
Shannon's source coding theorem~\cite{shannon} 
describes the fundamental limit on the asymptotic performance 
for lossless source coding on a stationary, memoryless source, 
giving 
\begin{equation} 
\underset{n\rightarrow \infty}{\lim}R^{*}(n,\epsilon) = H(X),
\quad \forall\, \epsilon \in (0,1). 
\end{equation} 

In the finite-blocklength regime, 
Kontoyiannis and Verd\'{u} \cite{kontoyiannis-verdu} 
characterize $R^{*}(n,\epsilon)$ 
using upper and lower bounds
that match in their first three terms 
and show an $ O\left(\frac{1}{n}\right)$ fourth-order gap. 
\begin{thm}[Kontoyiannis and Verd\'{u} \cite{kontoyiannis-verdu}] \label{thm-k-v} 
Consider a stationary, memoryless source 
with finite alphabet $\mathcal{X}$, 
single-letter distribution $P_X$, 
and varentropy $V(X) > 0$. 
Then\footnote{These bounds, 
which are stated in a base-2 logarithmic scale in~\cite{kontoyiannis-verdu}, 
hold for any base. The base of the logarithm determines the information unit.} 
	(achievability) for all $0 < \epsilon \leq \frac{1}{2}$ 
	and all\footnote{According to \cite{kontoyiannis-verdu}, 
	the achievability bound holds for any $n \geq 1$. 
	Notice, however, that it only becomes meaningful 
	when $n > \left(\frac{T(X)}{V(X)^{3/2}\epsilon}\right)^2$.}  
	$n > \left(\frac{T(X)}{V(X)^{3/2}\epsilon}\right)^2$,	
\begin{IEEEeqnarray}{rCl}
	R^{*}(n,\epsilon) 
	&\leq& H(X) + \sqrt{\frac{V(X)}{n}}Q^{-1}(\epsilon) - \frac{\log_2 n}{2n} 
	\label{eq-achiev-k-v} \\ 
	&&+ \frac{1}{n}\log_2 \left(\frac{\log_2 e}{\sqrt{2\pi V(X)}} 
		+ \frac{T(X)}{V(X)^{3/2}} \right) \nonumber \\ 
	&&+ \frac{1}{n}\frac{T(X)}{V(X)\phi\left(\Phi^{-1}\left(\Phi(Q^{-1}(\epsilon))
		+\frac{T(X)}{V(X)^{3/2}\sqrt{n}}\right)\right)}; 
		\nonumber 
	\end{IEEEeqnarray}
	(converse) for all $0 < \epsilon \leq \frac{1}{2}$ and all 
	\begin{equation}
	n > \frac{1}{4}\left(1+\frac{T(X)}{2V(X)^{3/2}}\right)^{2}
	\frac{1}{\left(\phi(Q^{-1}(\epsilon))Q^{-1}(\epsilon)\right)^2},
	\end{equation} 
	\begin{IEEEeqnarray}{rCl}
	R^{*}(n,\epsilon) &\geq& H(X) 
	+ \sqrt{\frac{V(X)}{n}}Q^{-1}(\epsilon) - \frac{\log_2 n}{2n} \nonumber \\
	&& -\frac{1}{n} \frac{T(X)+2V(X)^{3/2}}{2V(X)\phi(Q^{-1}(\epsilon))}. 
	\label{eq-conv-k-v}
	\end{IEEEeqnarray}
\end{thm}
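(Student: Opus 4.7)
The plan is to analyze the optimal $(n,M,\epsilon)$ code directly, characterize $M^*(n,\epsilon)$ via the distribution of the information random variable $\imath(X^n)=\sum_{i=1}^n \imath(X_i)$, and then invoke a refined Berry--Esseen / local central-limit estimate to extract the $-\frac{\log n}{2n}$ third-order term. The same two ingredients (CLT control on the cdf of $\imath(X^n)$, and a lattice-type estimate on its pmf) will drive both the achievability and the converse, which is why the two sides of the bound agree up to the $O(1/n)$ fourth-order gap.

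For the achievability bound \eqref{eq-achiev-k-v}, I would start from the standard fact that the optimal encoder assigns distinct messages to the $M-1$ most probable $n$-tuples and a single error message to everything else; equivalently, for any $\lambda\in\mathbb{R}$ the code that describes losslessly the level set $\mathcal{L}_\lambda \triangleq \{x^n \in \mathcal{X}^n : \imath(x^n) \le \lambda\}$ has size $|\mathcal{L}_\lambda|+1$ and exact error probability $\mathbb{P}[\imath(X^n) > \lambda]$. Choosing $\lambda = nH(X)+\sqrt{nV(X)}\,Q^{-1}(\epsilon)+c$ with $c$ calibrated so that Berry--Esseen (applied to the i.i.d.\ sum $\imath(X^n)$ with mean $H(X)$, variance $V(X)$, third absolute moment $T(X)$) gives $\mathbb{P}[\imath(X^n)>\lambda]\le\epsilon$, the error constraint is met. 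The key step is the level-set cardinality bound: instead of the trivial $|\mathcal{L}_\lambda|\le\exp(\lambda)$, I would invoke the identity $|\mathcal{L}_\lambda|=\sum_{k\le\lambda}\exp(k)\,\mathbb{P}[\imath(X^n)=k]$ and apply a local CLT / Edgeworth-type estimate to obtain $|\mathcal{L}_\lambda|\le C\,\exp(\lambda)/\sqrt{n}$, with $C$ expressible in terms of $V(X)$ and $T(X)$. Taking logs and dividing by $n$ delivers the first three terms of \eqref{eq-achiev-k-v}; tracking the Berry--Esseen constant and using the Lipschitz behaviour of $Q^{-1}$ near $\epsilon$ yields the explicit $O(1/n)$ residual.

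For the converse \eqref{eq-conv-k-v}, I would use the elementary information-spectrum lower bound
\begin{equation*}
\epsilon \,\ge\, \mathbb{P}[\imath(X^n) > \log M + \gamma] - \exp(-\gamma), \qquad \gamma>0,
\end{equation*}
which follows for every $(n,M,\epsilon)$ code by splitting the decoding set $\mathcal{D}$ into its intersections with $\mathcal{L}_{\log M+\gamma}$ and its complement, and using $|\mathcal{D}|\le M$ together with $P(x^n)\le\exp(-\log M-\gamma)$ on the complement. Setting $\gamma=\tfrac{1}{2}\log n+\beta$ with $\beta$ tuned so that $\exp(-\gamma)$ absorbs into the second-order fluctuations of Berry--Esseen applied to $\imath(X^n)$, and inverting the resulting inequality for $\log M$, produces a matching lower bound on $R^*(n,\epsilon)$ up to the stated $O(1/n)$ gap.

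The main obstacle is the local central-limit estimate on the pmf of $\imath(X^n)$: ordinary Berry--Esseen only controls the cdf to $O(1/\sqrt n)$, whereas the $-\frac{\log n}{2n}$ improvement requires an $O(1/\sqrt n)$ estimate on the probability that $\imath(X^n)$ lies in a short interval, which in turn converts $|\mathcal{L}_\lambda|\le \exp(\lambda)$ into $|\mathcal{L}_\lambda|\le C\exp(\lambda)/\sqrt n$. Handling the possibly non-lattice support of $\imath(X_i)$ and obtaining explicit constants (rather than asymptotic estimates) is the delicate quantitative input; once it is in place, the remainder is bookkeeping — optimizing $c,\beta,\gamma$ and tracking the $T(X)/V(X)$ dependence to arrive at the explicit residuals in \eqref{eq-achiev-k-v} and \eqref{eq-conv-k-v}.
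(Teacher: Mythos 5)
The paper cites this result from Kontoyiannis and Verd\'u \cite{kontoyiannis-verdu} without giving a proof, so there is no ``paper's proof'' to compare against; your outline is essentially the Kontoyiannis--Verd\'u argument. The one thing you should revise is your assessment of the ``main obstacle'': the $O\bigl(\exp(\lambda)/\sqrt{n}\bigr)$ level-set cardinality bound does \emph{not} require a local CLT or an Edgeworth expansion, and there is no lattice issue to worry about. Your identity $|\mathcal{L}_\lambda| = \mathbb{E}\bigl[\exp(\imath(X^n))\,1\{\imath(X^n)\le\lambda\}\bigr]$ is exactly the left side of Lemma~\ref{lem-p} (Lemma~47 of~\cite{pol-poo-ver}) with $Z_i=-\imath(X_i)$ and $A=-\lambda$. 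That lemma obtains the $1/\sqrt{n}$ factor from the ordinary Berry--Esseen bound on the \emph{cdf} together with a layer-cake step: writing $\mathbb{E}[e^{-S}1\{S\ge A\}]=\int_0^{e^{-A}}\mathbb{P}[A\le S\le -\ln t]\,dt$ and bounding each short-interval probability by the peak Gaussian density plus twice the Berry--Esseen error produces an explicit constant and the needed $1/\sqrt n$ factor while being completely insensitive to the support structure of $\imath(X_i)$. With that substitution, the rest of your achievability argument carries through and, after tracking the Berry--Esseen constant and the Lipschitz error in $Q^{-1}$, yields the two $O(1/n)$ residuals of \eqref{eq-achiev-k-v}. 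Your converse calculation is also correct: the information-spectrum bound with $\gamma=\tfrac12\log n$, followed by Berry--Esseen and the tangent-line lower bound $Q^{-1}(\epsilon+\delta)\ge Q^{-1}(\epsilon)-\delta/\phi(Q^{-1}(\epsilon))$ (valid for $\epsilon\le\tfrac12$ since $Q^{-1}$ is convex there), produces the form of \eqref{eq-conv-k-v}, with the blocklength restriction in the theorem ensuring that $\epsilon+\delta$ stays in the range where the tangent bound applies.
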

\begin{remark}
	Although \cite[Theorem~\ref{thm-k-v}]{kontoyiannis-verdu} 
	restricts attention to $0 < \epsilon \leq \frac{1}{2}$ and 
	$\mathcal{X}$ finite, 
	the proof in \cite{kontoyiannis-verdu} applies 
	for all $0 < \epsilon < 1$ and any countable source alphabet, 
	achieving the same first three terms 
	in~\eqref{eq-achiev-k-v} and~\eqref{eq-conv-k-v}
	and fourth-order term $\pm O\left(\frac{1}{n}\right)$ (which varies with $\epsilon$) 
	provided that the third centered moment $T(X)$ 
	of information random variable $X$ is finite.
\end{remark}
\begin{remark} \label{rem-zero-var}
	When $V(X) = 0$, 
	the source is uniformly distributed over a finite alphabet 
	(i.e., non-redundant), 
	and $H(X) = \log|\mathcal{X}|$.
	The optimal code maps any $1-\epsilon$ fraction 
	of possible source outcomes to unique codewords, 
	giving 
	\[
	1-\epsilon \leq \frac{M^*(n,\epsilon)}{|\mathcal{X}|^n} \leq 1 - \epsilon 
	+ \frac{1}{|\mathcal{X}|^n}. \label{eq-rem2-1}
	\]
	As a result, when $P_X$ is uniform, 
	\begin{align}
	&~ H(X) - \frac{1}{n}\log\frac{1}{1-\epsilon}\leq R^*(n,\epsilon) \nonumber \\
	\leq&~ H(X) - \frac{1}{n}\log\frac{1}{1-\epsilon} 
	+ \frac{\log e}{n(1-\epsilon)}\exp\left(-nH(X)\right), 
	\IEEEeqnarraynumspace \label{eq-rem-zero-var-1}
	\end{align} 
	which matches \eqref{eq-achiev-k-v} 
	up to the second order (since $V(X) = 0$) 
	but omits the $-\frac{\log n}{2n}$ third-order term.
\end{remark}
\begin{remark} \label{rem-zero-var-1}
While it is not captured by our notation, $R^*(n,\epsilon)$ is a function of $P_X$.  
Since the $-\frac{\log n}{2n}$ third-order term 
appears in~\eqref{eq-achiev-k-v} and~\eqref{eq-conv-k-v} 
but not in~\eqref{eq-rem-zero-var-1}, 
the {\em bound} on $R^{*}(n,\epsilon)$, when viewed as a function of $P_X$, 
is discontinuous at the point where 
$P_X$ equals the uniform distribution on $\mathcal{X}$.  
In contrast, $R^*(n,\epsilon)$, which is known and calculable, is continuous.
The problem arises because 
Berry-Esseen type bounds are loose for small $V(X)$. 
Thus for any finite $n$, the achievability bound in~\eqref{eq-achiev-k-v} 
blows up as $V(X) \rightarrow 0$.
See Figure~\ref{fig-p2p}. 
Theorem~\ref{thm-k-v} states that 
for any $V(X) > 0$ there exists some $n_0 = n_0(P_X, \epsilon)$ 
such that for all $n > n_0$, $R^{*}(n,\epsilon)$ 
behaves like $-\frac{\log n}{2n}$ in the third-order term; 
the smaller the value of $V(X)$, 
the larger $n_0$ must be. 
\end{remark}
\begin{figure}[!t]
\centering
\includegraphics[width=0.48\textwidth]{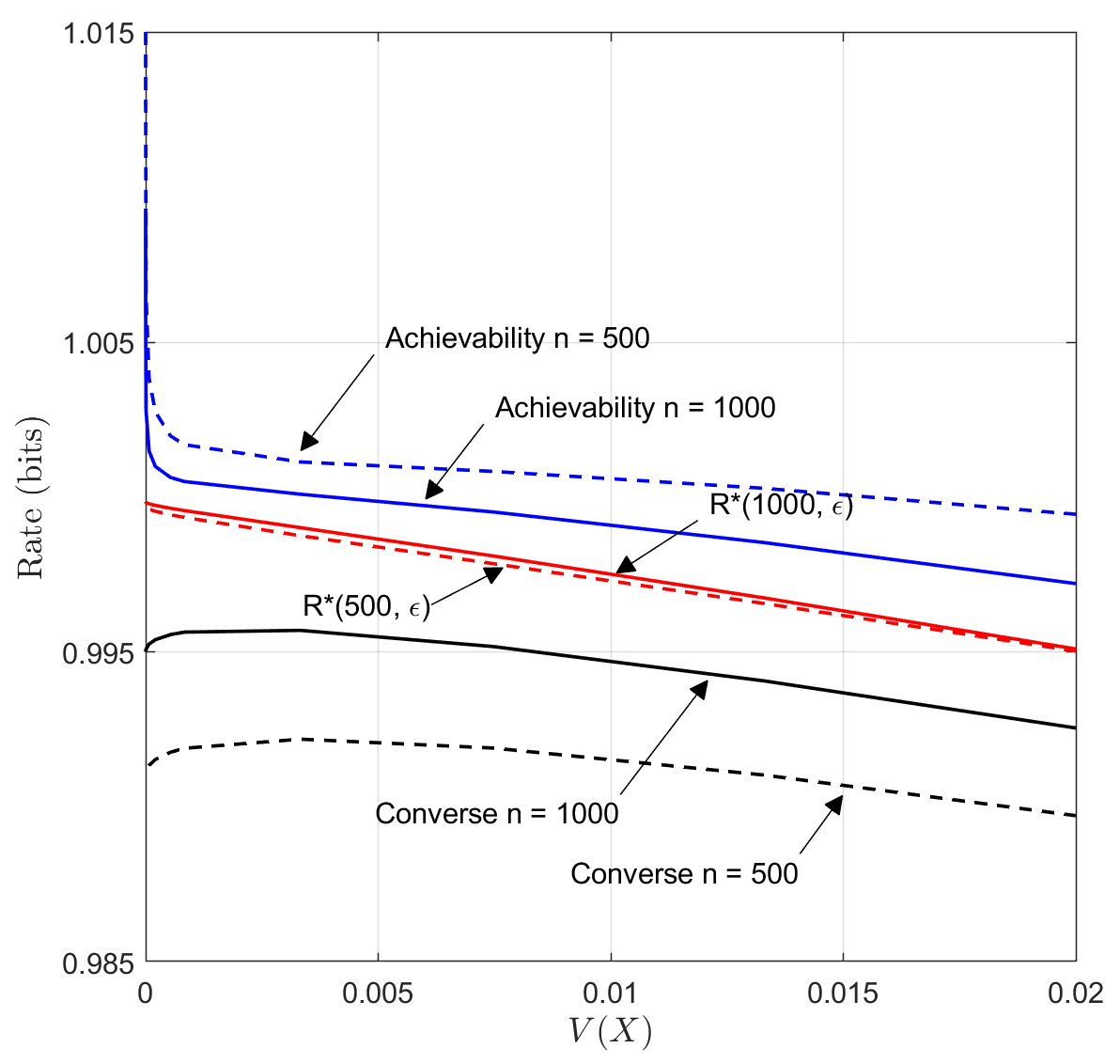}
\caption{Evaluations of the achievability bound in \eqref{eq-achiev-k-v}, 
the converse bound in \eqref{eq-conv-k-v}, 
and the optimum $R^{*}(n, \epsilon)$, 
all shown as a function of $V(X) = p(1-p)(\log \frac{1-p}{p})^2$ 
for a Bernoulli-$p$ source at $\epsilon = 0.1$.}
\label{fig-p2p}
\end{figure}
	
Achievability results 
based on Shannon's random coding argument~\cite{shannon} 
are important because they do not require knowledge of the optimal code, 
which is available only in a few special communication scenarios 
(e.g.,~\cite{kontoyiannis-verdu,kostina-pol-verdu}).
The following \emph{random coding} 
achievability bound\footnote{Tighter bounds 
based on the optimal code appear in~\cite[Lemma~1.3.1]{han} 
and~\cite[Remark~5]{kostina-verdu}.} 
is obtained by assigning source realizations to codewords 
independently and uniformly at random. 
The threshold decoder 
decodes to $x \in \mathcal{X}$ 
if and only if $x$ is a unique source realization that 
(i) is compatible with the observed codeword 
under the given (random) code design, and 
(ii) has information $\imath(x)$ 
below $\log M - \gamma$.

\begin{thm}[e.g. {\cite{verdu-notes}}, {\cite[Th.~9.4]{polyanskiy-notes}}] \label{thm-achiev-p} There exists an $(M,\epsilon)$ code for discrete random variable $X$ such that \begin{equation} \label{eq-achiev-p} \epsilon \leq \mathbb{P}\left[\imath(X) > \log M - \gamma\right] + \exp\left(-\gamma\right),\, \forall \, \gamma > 0. \end{equation}
\end{thm}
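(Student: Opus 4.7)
The plan is a standard random-coding-plus-threshold-decoding argument, made quantitative by exploiting the definition of $\imath(x)$. First, I would generate the encoder $\mathsf{f}\colon \mathcal{X}\to[M]$ at random, assigning each source letter $x\in\mathcal{X}$ independently and uniformly to one of the $M$ codewords. Define the threshold set $\mathcal{A}_\gamma\triangleq\{x\in\mathcal{X}\colon \imath(x)\leq \log M-\gamma\}$, and let the decoder $\mathsf{g}$ output $x$ if $x$ is the unique element of $\mathcal{A}_\gamma$ with $\mathsf{f}(x)=m$, and declare an error otherwise.

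Next, I would decompose the error event conditional on $X=x$ into two pieces: either the source is atypical in the sense that $x\notin\mathcal{A}_\gamma$, or $x\in\mathcal{A}_\gamma$ but some competitor $x'\neq x$ in $\mathcal{A}_\gamma$ collides with $x$ under $\mathsf{f}$. A union bound over competitors, together with the fact that $\mathbb{P}[\mathsf{f}(x')=\mathsf{f}(x)]=1/M$ for $x'\neq x$ under the random code, gives
\begin{equation}
\mathbb{P}[\mathsf{g}(\mathsf{f}(X))\neq X]\leq \mathbb{P}[X\notin\mathcal{A}_\gamma]+\frac{|\mathcal{A}_\gamma|}{M},
\end{equation}
where the first probability equals $\mathbb{P}[\imath(X)>\log M-\gamma]$. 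Then I would bound $|\mathcal{A}_\gamma|$ using the defining inequality of the threshold: every $x\in\mathcal{A}_\gamma$ satisfies $P_X(x)\geq \exp(\gamma)/M$, so summing over $\mathcal{A}_\gamma$ and using $\sum_{x\in\mathcal{A}_\gamma}P_X(x)\leq 1$ yields $|\mathcal{A}_\gamma|\leq M\exp(-\gamma)$, hence $|\mathcal{A}_\gamma|/M\leq \exp(-\gamma)$.

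Finally, averaging the above error bound over the random code gives the same bound on expected error probability, which is deterministic in $\gamma$ and $M$, so by the probabilistic method there exists a deterministic realization $(\mathsf{f},\mathsf{g})$ of the code achieving it, yielding the stated $(M,\epsilon)$ code with $\epsilon\leq \mathbb{P}[\imath(X)>\log M-\gamma]+\exp(-\gamma)$. I do not anticipate a genuine obstacle here; the only subtle point is making sure the collision bound holds pointwise over $x$ before averaging (so the bound on $|\mathcal{A}_\gamma|$, a deterministic quantity depending only on $P_X$, can be pulled out of the expectation over the random code). This is essentially the source-coding mirror of Feinstein's argument and carries over verbatim to countably infinite $\mathcal{X}$, since the bound $|\mathcal{A}_\gamma|\leq M\exp(-\gamma)$ never invokes finiteness of the alphabet.
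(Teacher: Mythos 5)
Your proof is correct and follows essentially the same random-coding-plus-threshold-decoding argument that the paper attributes to the references and rehearses in more refined form in Appendix~\ref{append-thm-achiev-dt} (for Theorem~\ref{thm-achiev-dt}): random uniform assignment of source symbols to $[M]$, a unique-decoding rule on the threshold set $\mathcal{A}_\gamma$, a union bound over collisions giving $|\mathcal{A}_\gamma|/M$, and the deterministic cardinality bound $|\mathcal{A}_\gamma|\le M\exp(-\gamma)$ from $P_X(x)\ge\exp(\gamma)/M$ on $\mathcal{A}_\gamma$, followed by the probabilistic method. The only cosmetic difference is that the paper parametrizes the threshold as $\log\gamma$ and stops its Appendix~\ref{append-thm-achiev-dt} derivation at the counting-measure term $\frac{1}{M}\mathbb{U}[\imath(X)\le\log\gamma]$ (then optimizing $\gamma=M$ for the tighter DT bound), whereas you supply the additional elementary step bounding that term by $\exp(-\gamma)$ to recover the stated Theorem~\ref{thm-achiev-p}.
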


Particularizing \eqref{eq-achiev-p} to a stationary, memoryless source 
with single-letter distribution $P_X$ satisfying $V(X)>0$ and $T(X) < \infty$, 
choosing $\log M$ and $\gamma$ optimally, 
and applying the Berry-Esseen inequality 
(see Theorem~\ref{thm-berry-esseen} below)  
gives 
\begin{IEEEeqnarray}{rCl} \label{eq-achiev-p-1} 
R^{*}(n,\epsilon) 
\leq H(X) + \sqrt{\frac{V(X)}{n}}Q^{-1}(\epsilon) 
+ \frac{\log n}{2n} + O\left(\frac{1}{n}\right). \IEEEeqnarraynumspace 
\end{IEEEeqnarray} 

Since the optimal application of Theorem~\ref{thm-achiev-p} 
yields \eqref{eq-achiev-p-1}, 
which exceeds the bounds in Theorem~\ref{thm-k-v} 
by $+\frac{\log n}{n}$ in the third-order term, 
we are left to wonder whether 
random code design, threshold decoding, or both 
yield third-order performance penalties.
In \cite[Th.~8]{kontoyiannis-verdu}, 
Kontoyiannis and Verd\'{u} precisely characterize 
the performance of a code designed with 
i.i.d.\ uniform random codeword generation 
and an optimal (maximum likelihood) decoder.  
Unfortunately, that result is difficult to use in the asymptotic analysis. 
In Section~\ref{sec-result-almost-lossless} 
Theorem~\ref{thm-rcu-bound}, below, 
we derive a new random coding bound 
using a maximum likelihood decoder; 
this result demonstrates that random coding suffices 
to achieve the third-order optimal performance 
for a stationary, memoryless source.

\subsection{New Achievability Bounds Based on Random Coding}
\label{sec-result-almost-lossless}

We next use random code design to derive 
two new non-asymptotic achievability bounds 
for point-to-point source coding. 
We call these results 
the dependence testing (DT) bound and 
the random coding union (RCU) bound 
since they are the source coding analogues 
of the DT~\cite[Th.~17]{pol-poo-ver} 
and RCU~\cite[Th.~16]{pol-poo-ver} bounds 
in channel coding. 
The DT bound tightens Theorem~\ref{thm-achiev-p},  
which is also based on threshold decoding.  
\begin{thm}[DT bound] \label{thm-achiev-dt} 
Given a discrete random variable $X$, 
there exists an $(M,\epsilon)$ code with a threshold decoder 
for which 
\begin{equation} \label{eq-achiev-dt} 
\epsilon \leq 
\mathbb{E}\left[\exp \left\{-\left|\log M - \imath(X)\right]|_{+} \right\}\right]. \end{equation}	
\end{thm}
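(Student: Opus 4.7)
\bigskip
\noindent\textbf{Proof plan for Theorem 3 (DT bound).}

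The plan is to mirror the channel-coding DT argument of Polyanskiy–Poor–Verdú in the source-coding setting via random binning. First I would generate a code at random: for each $x\in\mathcal X$, draw $\mathsf f(x)$ independently and uniformly from $[M]$. The threshold decoder $\mathsf g$ on input $m\in[M]$ is defined to output $\hat x$ if $\hat x$ is the \emph{unique} element of $\mathsf f^{-1}(m)$ satisfying $\imath(\hat x)<\log M$ (equivalently, $P_X(\hat x)>1/M$); otherwise it declares error. The boundary convention (strict vs.\ non-strict) matters and is chosen precisely so that the bookkeeping at $\imath(x)=\log M$ lines up with $|\log M-\imath(X)|_{+}$.

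Next I would bound the conditional error probability given $X=x$ by a union bound over the two ways the decoder can fail: either $\imath(x)\ge\log M$ (so $x$ itself is ineligible), or some competitor $x'\neq x$ with $\imath(x')<\log M$ lands in the same bin as $x$. Since the bin assignments are independent and uniform,
\begin{equation*}
\mathbb{P}\!\left[\mathsf g(\mathsf f(x))\neq x \mid X=x\right]
\;\le\; \mathbf{1}\{\imath(x)\ge\log M\} \;+\; \frac{1}{M}\,\bigl|\{x'\neq x:\imath(x')<\log M\}\bigr|,
\end{equation*}
where the expectation is over the random code. Averaging over $X\sim P_X$, swapping the order of summation in the union-bound term, and using $1-P_X(x')\le 1$ gives
\begin{equation*}
\mathbb E[\epsilon]\;\le\;\mathbb P[\imath(X)\ge\log M]\;+\;\frac{1}{M}\,\bigl|\{x:\imath(x)<\log M\}\bigr|.
\end{equation*}

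The key identity is then the pointwise rewriting
\begin{equation*}
\exp\!\bigl\{-|\log M-\imath(x)|_{+}\bigr\}
=\begin{cases}1, & \imath(x)\ge\log M,\\ \exp(\imath(x))/M=1/(M\,P_X(x)), & \imath(x)<\log M,\end{cases}
\end{equation*}
so that $\mathbb E\!\bigl[\exp\{-|\log M-\imath(X)|_{+}\}\bigr]$ equals exactly the right-hand side above. Finally, since the expected error probability over the random ensemble satisfies this bound, there must exist at least one deterministic realization of $\mathsf f$ (with the stated threshold decoder) for which the same bound on $\epsilon$ holds; this furnishes the claimed $(M,\epsilon)$ code.

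I do not expect a substantive obstacle here: the argument is a short random-binning calculation, and the only delicate point is choosing the strict inequality $\imath(x)<\log M$ in the decoder's threshold so that the two cases in the computation of $\mathbb E[\exp\{-|\log M-\imath(X)|_{+}\}]$ align cleanly with the union bound, rather than leaving a gap at the boundary $\imath(X)=\log M$.
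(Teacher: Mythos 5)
Your plan is correct and takes essentially the same route as the paper's Appendix~\ref{append-thm-achiev-dt}: random binning, a threshold decoder, a two-term union bound, and the pointwise identity for $\exp\{-|\log M-\imath(x)|_{+}\}$ (the paper's version of \cite[Eq.~(68)]{pol-poo-ver}) to match the two sides. The only cosmetic differences are that the paper keeps the decoder threshold as a free parameter $\gamma$ and sets $\gamma=M$ at the end, and it uses the non-strict convention $\imath(x)\le\log\gamma$ rather than your strict one — as you note, both conventions agree at the boundary, so the bookkeeping is identical.
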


\begin{proof}
Appendix~\ref{append-thm-achiev-dt}.
\end{proof}

The proof of Theorem~\ref{thm-achiev-dt} 
bounds the random coding performance 
of a threshold decoder with threshold $\log \gamma$ as  
\begin{equation} 
\epsilon \leq \mathbb{P}\left[\imath(X) > \log \gamma \right] 
+ \frac{1}{M} \mathbb{U}\left[\imath(X) \leq \log \gamma\right], 
\label{eq-lem-dt-1-copy}
\end{equation} 
where $\mathbb{U}\left[\cdot\right]$ denotes a mass 
with respect to the counting measure $U_X$ on $\mathcal{X}$, 
which assigns unit weight to each $x \in \mathcal{X}$. 
As in a channel coding argument from~\cite{pol-poo-ver}, 
we apply the Neyman-Pearson lemma and 
find that the right-hand side of~\eqref{eq-lem-dt-1-copy} 
equals $\frac{M+1}{M}$ times the minimum measure of the error event 
in a Bayesian binary hypothesis test between $P_X$ 
with a priori probability $\frac{M}{M+1}$ and 
$U_X$ with a priori probability $\frac{1}{M+1}$. 
(The Neyman-Pearson lemma generalizes to $\sigma$-finite measures 
like $U_X$~\cite[Remark~5]{kostina-verdu}.) 
This error measure is minimized 
by the test that compares the log likelihood ratio 
$\log \frac{U_X(X)}{P_X(X)}$ 
to the log ratio of a priori probabilities $\log \frac{M/(M+1)}{1/(M+1)}$, giving
\begin{align*}
&H_{0}: P_X, \text{ selected if } \imath(X) \leq \log M\\
&H_{1}: U_X, \text{ selected if } \imath(X) > \log M.
\end{align*} 
Taking $\gamma = M$ minimizes the right-hand side of \eqref{eq-lem-dt-1-copy}, 
which implies that Theorem~\ref{thm-achiev-dt} is the tightest possible bound 
for random coding with threshold decoding. 

Particularizing Theorem~\ref{thm-achiev-dt} 
to a stationary, memoryless source with a single-letter distribution $P_X$ 
satisfying $V(X)>0$ and $T(X) < \infty$ 
and invoking the Berry-Esseen inequality 
(see Theorem~\ref{thm-berry-esseen} below), 
we obtain the asymptotic expansion
\begin{equation}
R^{*}(n,\epsilon) \leq H(X) + \sqrt{\frac{V(X)}{n}}Q^{-1}(\epsilon) + O\left(\frac{1}{n}\right). \label{eq-achiev-dt-7}
\end{equation} 
Unfortunately, \eqref{eq-achiev-dt-7} 
is sub-optimal in its third-order term. 
Thus, random code design with threshold-based decoding 
fails to achieve the optimal third-order performance. 

Next, we present the RCU bound, 
which employs random code design 
and maximum likelihood decoding.
\begin{thm}[RCU bound] \label{thm-rcu-bound} 
Given a discrete random variable $X$, there exists an $(M,\epsilon)$ code with a maximum likelihood decoder for which \begin{IEEEeqnarray}{rCl}\label{eq-rcu} 
\!\!\!\!\!\!\!\!\!		\epsilon \leq \mathbb{E} \left[\min \left\{1, \, \frac{1}{M}\mathbb{E}\left[\exp\left(\imath(\bar{X})\right)1\left\{\imath(\bar{X}) \leq \imath(X) \right\}|X\right] \right\} \right], 
	\end{IEEEeqnarray} 
	where $P_{X\bar{X}}(a,b) = P_{X}(a)P_{X}(b)$ for all $a,b \in \mathcal{X}$. 
\end{thm}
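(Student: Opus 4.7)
The plan is to mirror the channel coding RCU derivation of Polyanskiy, Poor, and Verd\'u \cite[Th.~16]{pol-poo-ver} in the source coding setting. I would construct a random code by drawing $\mathsf{f}(x)$ independently and uniformly from $[M]$ for each $x \in \mathcal{X}$, paired with the maximum likelihood decoder $\mathsf{g}$ that, on input $m \in [M]$, outputs any $\hat{x} \in \mathsf{f}^{-1}(m)$ that maximizes $P_X(\hat{x})$ (equivalently, minimizes $\imath(\hat{x})$), with ties broken arbitrarily. Under this rule, if $X = x$ then a decoding error can occur only if some $\bar{x} \neq x$ shares the codeword $\mathsf{f}(x)$ and satisfies $\imath(\bar{x}) \leq \imath(x)$.

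Conditioning on $X = x$ and applying the union bound over such confusable realizations,
\begin{equation*}
\mathbb{P}\bigl[\mathsf{g}(\mathsf{f}(X)) \neq X \,\big|\, X = x\bigr] \leq \sum_{\bar{x} \neq x: \imath(\bar{x}) \leq \imath(x)} \mathbb{P}\bigl[\mathsf{f}(\bar{x}) = \mathsf{f}(x)\bigr] = \frac{1}{M}\bigl|\{\bar{x} \neq x: \imath(\bar{x}) \leq \imath(x)\}\bigr|.
\end{equation*}
I would then upper bound this count by dropping the restriction $\bar{x} \neq x$ and rewrite it using the identity $P_X(\bar{x})\exp(\imath(\bar{x})) = 1$:
\begin{equation*}
\bigl|\{\bar{x}: \imath(\bar{x}) \leq \imath(x)\}\bigr| = \sum_{\bar{x}} P_X(\bar{x}) \exp(\imath(\bar{x})) 1\{\imath(\bar{x}) \leq \imath(x)\} = \mathbb{E}\bigl[\exp(\imath(\bar{X})) 1\{\imath(\bar{X}) \leq \imath(x)\}\bigr],
\end{equation*}
where $\bar{X} \sim P_X$ is independent of $X$. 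Combining with the trivial bound that any probability is at most $1$ yields the conditional RCU inequality
\begin{equation*}
\mathbb{P}\bigl[\mathsf{g}(\mathsf{f}(X)) \neq X \,\big|\, X = x\bigr] \leq \min\Bigl\{1, \tfrac{1}{M} \mathbb{E}\bigl[\exp(\imath(\bar{X})) 1\{\imath(\bar{X}) \leq \imath(x)\}\bigr]\Bigr\}.
\end{equation*}

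Averaging over $X$ bounds the expected error of the random ensemble by the right-hand side of \eqref{eq-rcu}. Since the ensemble average upper bounds the error of at least one realization, a standard selection argument extracts a deterministic $(M,\epsilon)$ code (with its induced ML decoder) satisfying the claimed inequality.

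The main subtlety I anticipate is the accounting around ties and the $\bar{x} = x$ contribution to the count: ties are correctly captured by the non-strict inequality $\imath(\bar{x}) \leq \imath(x)$ in the error event, while the spurious $\bar{x} = x$ term is absorbed by overestimating the count and by the $\min\{1,\cdot\}$ truncation. Everything else is a direct source-coding analogue of the channel coding RCU derivation, and the remaining steps are routine.
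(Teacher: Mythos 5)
Your proposal is correct and mirrors the paper's own proof: the same random encoder, the same maximum likelihood decoder, the same union-bound-plus-truncation argument, and the same rewriting of the confusability count as an expectation over an independent copy $\bar{X} \sim P_X$. The only superficial differences are that you condition explicitly on $X = x$ before averaging and break ties arbitrarily rather than uniformly at random, neither of which changes the bound.
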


\begin{proof}
	Our random code design 
	randomly and independently draws 
	encoder output $\mathsf{F}(x)$ for each $x \in \mathcal{X}$ 
	from the uniform distribution on $[M]$. 
	We use the maximum likelihood decoder 
	\begin{equation} 
	\mathsf{g}(c) = \arg \underset{x \in \mathcal{X}:\,\mathsf{F}(x)=c}{\max} 
	P_X(x) = \arg \underset{x \in \mathcal{X}:\,\mathsf{F}(x)=c}{\min} \imath(x).
	\end{equation} 
	If multiple source symbols have the maximal probability mass, 
	the decoder design chooses among them uniformly at random. 
	
	Under this random code construction, 
	the expected error probability is bounded by the probability 
	$\mathbb{P}\left[\mathcal{E}\right]$ of event 
	\begin{equation}
	\mathcal{E}\triangleq \{\exists \, \bar{x} \in \mathcal{X}\backslash\{X\} 
	\text{ s.t. } \imath(\bar{x}) \leq \imath(X), \mathsf{F}(\bar{x}) = \mathsf{F}(X)\},
	\end{equation} 
	where probability measure $\mathbb{P}[\cdot]$ captures 
	both the random source output $X$ 
	and the random encoding map $\mathsf{F}$. 
	The resulting error bound is 
	\begin{IEEEeqnarray}{rCl}
	\lefteqn{\mathbb{E}\left[\left.\mathbb{P}\left(\left\{g(F(X))\neq X\right\}
		\right|F(\cdot)\right)\right]}\nonumber \\
	& \leq & \mathbb{P}\left[\mathcal{E}\right] \nonumber \\
	&=& \mathbb{E}\left[\mathbb{P}\left[\underset{\mathclap{\quad \;\; 
	\bar{x}\in \mathcal{X}\backslash\{X\}}}{\quad \bigcup\;} 
	\left\{\imath(\bar{x}) \leq \imath(X), 
	\mathsf{F}(\bar{x}) = \mathsf{F}(X)\right\}|X\right] \right] \label{eq-rcu-1} \\
	&\leq& \mathbb{E} \left[\min \left\{1,
	 \, \underset{\substack{{\bar{x}\in \mathcal{X}:} \\ {\bar{x} \neq X}}}{\sum} 
	\mathbb{P}\left[\{\imath(\bar{x}) \leq \imath(X),
	\mathsf{F}(\bar{x}) = \mathsf{F}(X)\}|X\right] \right\} \right] 
	\label{eq-rcu-2} \nonumber \\*\\
	&\leq& \mathbb{E} \left[\min \left\{1,
	 \,\frac{1}{M}\underset{\bar{x}\in \mathcal{X}}{\sum} 
	 1\{\imath(\bar{x}) \leq \imath(X)\}\right\}\right] 
	\label{eq-rcu-3} \\
	&=& \mathbb{E} \left[\min \left\{1,
	 \, \frac{1}{M} \mathbb{E}\left[\frac{1}{P_X(\bar{X})} 
	 1\{\imath(\bar{X}) \leq \imath(X)\}|X\right] \right\}\right] \label{eq-rcu-4},
	\end{IEEEeqnarray} 
	where \eqref{eq-rcu-1} applies the law of iterated expectation, 
	\eqref{eq-rcu-2} bounds the probability 
	by the minimum of the union bound and 1, 
	\eqref{eq-rcu-3} holds because the encoder outputs 
	are drawn i.i.d. uniformly at random and independently of $X$, 
	and \eqref{eq-rcu-4} rewrites \eqref{eq-rcu-3} 
	in terms of the distribution $P_{X\bar{X}} = P_{X}P_{X}$. 
	
	The existence of the desired $(M,\epsilon)$ code follows 
	since~\eqref{eq-rcu-4} equals the right-hand side of~\eqref{eq-rcu}.
\end{proof}
\begin{remark}
By the argument employed in the proof of \cite[Th.~9.5]{polyanskiy-notes}, 
we obtain the same RCU bound 
if we randomize only over linear encoding maps. 
Thus, there is no loss in performance when restricting to linear compressors.
\end{remark}

We next show that the RCU bound 
recovers the first three terms of the achievability result in Theorem~\ref{thm-k-v}. 
Thus, the sub-optimal third-order terms 
in~\eqref{eq-achiev-p-1} and~\eqref{eq-achiev-dt-7} 
result from the sub-optimal decoder 
rather than the random encoder design. 
This is important since optimal codes are not available 
for scenarios like the MASC studied in Section~\ref{sec-SW}, below.

Theorem~\ref{thm-rcu-asymp} 
focuses on a stationary, memoryless source with single-letter distribution $P_X$ 
satisfying 
\begin{IEEEeqnarray}{rCl}
V(X) &>& 0 \label{assump-a1} \\
T(X) &<& \infty. \label{assump-a2}
\end{IEEEeqnarray}
Define constants
\begin{align}
B &\triangleq C_0 \frac{T(X)}{V(X)^{3/2}} \\
C &\triangleq 2\left(\frac{\log 2}{\sqrt{2\pi V(X)}} + 2 B(X) \right) \label{eq-def-C},
\end{align}  
where $C_0$ is the absolute constant in the Berry-Esseen inequality 
for i.i.d. random variables. 
(See Theorem~\ref{thm-berry-esseen}, below.) 

\begin{thm}[Third-order-optimal achievability via random coding] \label{thm-rcu-asymp}
Consider a stationary, memoryless source 
satisfying the conditions in \eqref{assump-a1} and \eqref{assump-a2}. For all $0 < \epsilon < 1$,
\begin{equation}
R^{*}(n,\epsilon) \leq 
H(X) + \sqrt{\frac{V(X)}{n}}Q^{-1}(\epsilon) - \frac{\log n}{2n} + \xi(n), 
\label{eq-thm-rcu-asymp}
\end{equation}
where $\xi(n) = O\big(\frac{1}{n}\big)$ 
is bounded more precisely as follows.  
\begin{enumerate}[leftmargin=1.3\parindent]
\item[1)] For all $0 < \epsilon \leq \frac{1}{2}$ and $n > \left(\frac{B + C}{\epsilon}\right)^2$,
\begin{IEEEeqnarray}{rCl}
\!\!\!\!\!\!\!\!\!\!\!\!\!\! \xi(n)&\leq& \frac{1}{n} \log C\label{eq-thm-rcu-asymp-1} 
+\frac{1}{n} \frac{B + C}{\phi\left(\Phi^{-1}\left(\Phi(Q^{-1}(\epsilon))+\frac{B + C}{\sqrt{n}} \right)\right)}.
\end{IEEEeqnarray}
\item[2)] For all $\frac{1}{2} < \epsilon < 1$ and $n > \left(\frac{B + C}{\epsilon - \frac{1}{2}}\right)^2$,
\begin{IEEEeqnarray}{rCl}
	\xi(n) &\leq& \frac{1}{n} \log C 
	+ \frac{1}{n} \frac{B + C}{\phi\left(Q^{-1}(\epsilon)\right)}. \label{eq-thm-rcu-asymp-2}
\end{IEEEeqnarray} 
	\end{enumerate}
\end{thm}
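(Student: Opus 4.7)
The plan is to apply the RCU bound of Theorem~\ref{thm-rcu-bound} to the blocklength-$n$ source $X^n$ and sharpen the resulting bound via a careful estimate of the inner expectation. With $\bar{X}^n$ an i.i.d.\ copy of $X^n$, set $N_n(x^n)\triangleq\mathbb{E}[e^{\imath(\bar{X}^n)}\,1\{\imath(\bar{X}^n)\leq\imath(x^n)\}\mid x^n]$, so the RCU bound reads $\epsilon\leq\mathbb{E}[\min\{1,N_n(X^n)/M\}]$. The crux of the proof is the sharpened bound
\[
N_n(x^n) \;\leq\; e^{\imath(x^n)} \cdot \tfrac{C}{\sqrt{n}},
\]
with $C$ as in~\eqref{eq-def-C}. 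The extra $1/\sqrt{n}$ factor beyond the naive Markov estimate $N_n(x^n)\leq e^{\imath(x^n)} F_n(\imath(x^n))$, where $F_n$ denotes the CDF of $\imath(X^n)$, is precisely what yields the $-\tfrac{\log n}{2n}$ savings in the rate.

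To establish this sharpened bound I would write $N_n(x^n)=\int_{-\infty}^{\imath(x^n)}e^s\,dF_n(s)$ and integrate by parts:
\[
N_n(x^n) = e^{\imath(x^n)}F_n(\imath(x^n)) - \int_{-\infty}^{\imath(x^n)} e^s F_n(s)\,ds.
\]
Replacing $F_n$ by its Gaussian surrogate $\Phi_n(s)\triangleq\Phi((s-nH(X))/\sqrt{nV(X)})$ incurs a Berry-Esseen error of at most $B/\sqrt{n}$ (Theorem~\ref{thm-berry-esseen}), which feeds the $2B(X)$ contribution to $C$. The deterministic remainder, $e^{\imath(x^n)}\Phi_n(\imath(x^n))-\int e^s\Phi_n(s)\,ds$, telescopes by the product rule to $\int_{-\infty}^{\imath(x^n)}e^s\phi_n(s)/\sqrt{nV(X)}\,ds$; completing the square reveals this to be a Gaussian integral evaluated at the upper limit, which is bounded by a multiple of $e^{\imath(x^n)}/\sqrt{nV(X)}$ and supplies the $\log 2/\sqrt{2\pi V(X)}$ contribution in $C$.

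Substituting $N_n(x^n)/M\leq (C/\sqrt{n})\,e^{\imath(x^n)-\log M}$ collapses the RCU bound to
\[
\epsilon \;\leq\; \mathbb{E}\bigl[\min\{1,\,e^{\imath(X^n)-\log M'}\}\bigr], \qquad \log M' \triangleq \log M - \log C + \tfrac{1}{2}\log n,
\]
which is the DT expression of Theorem~\ref{thm-achiev-dt} with effective codebook size $M'$. The same Berry-Esseen argument that yielded~\eqref{eq-achiev-dt-7}, refined with the consolidated remainder $(B+C)/\sqrt{n}$, gives $\log M' \leq nH(X) + \sqrt{nV(X)}\,Q^{-1}(\epsilon)+ O(1)$. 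Unwinding the definition of $M'$ and dividing by $n$ produces the claimed $R^{*}(n,\epsilon)\leq H(X)+\sqrt{V(X)/n}\,Q^{-1}(\epsilon)-\tfrac{\log n}{2n}+\xi(n)$, with the explicit $\xi(n)$ emerging from the first-order inversion $Q^{-1}(\epsilon+\delta)-Q^{-1}(\epsilon)\approx-\delta/\phi(Q^{-1}(\epsilon))$ with $\delta=(B+C)/\sqrt{n}$.

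The hard part is tracking explicit constants: every Berry-Esseen remainder, the Gaussian integral evaluation, and the nonlinear inversion of $Q^{-1}$ must be consolidated to reproduce the precise bounds~\eqref{eq-thm-rcu-asymp-1}–\eqref{eq-thm-rcu-asymp-2}. The two cases $\epsilon\leq\tfrac12$ and $\epsilon>\tfrac12$ arise from the monotonicity of $\phi(Q^{-1}(\cdot))$ around $\tfrac12$: for $\epsilon\leq\tfrac12$, where $Q^{-1}(\epsilon)\geq 0$ and $\phi$ is decreasing on the relevant argument, one must bound the inversion slope using the shifted argument $\phi(\Phi^{-1}(\Phi(Q^{-1}(\epsilon))+(B+C)/\sqrt{n}))$, whereas for $\epsilon>\tfrac12$ bounding by $\phi(Q^{-1}(\epsilon))$ directly already suffices.
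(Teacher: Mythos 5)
Your proposal is correct and takes essentially the same route as the paper: apply the RCU bound of Theorem~\ref{thm-rcu-bound} to $X^n$, sharpen the conditional inner expectation to $\mathbb{E}\bigl[e^{\imath(\bar{X}^n)}1\{\imath(\bar{X}^n)\leq\imath(x^n)\}\mid x^n\bigr]\leq\frac{C}{\sqrt{n}}e^{\imath(x^n)}$, then apply Berry-Esseen and a first-order Taylor inversion of $Q^{-1}$, splitting on $\epsilon\leq\frac12$ versus $\epsilon>\frac12$ according to the monotonicity of $\phi(Q^{-1}(\cdot))$. The single substantive difference is how the sharpened bound is established. The paper simply invokes Lemma~\ref{lem-p} (Polyanskiy et al.'s Lemma~47) twice---once to produce \eqref{eq-rcu-asymp-1} and once more on the second term of \eqref{eq-rcu-asymp-3}---whereas you re-derive a version of it from scratch via integration by parts, a uniform Berry-Esseen substitution of $F_n$ by its Gaussian surrogate, and a Gaussian integral obtained by completing the square. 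Your re-derivation is sound in outline (the Gaussian piece after IBP reduces to $e^{nH(X)+nV(X)/2}\Phi\bigl((\imath(x^n)-nH(X))/\sqrt{nV(X)}-\sqrt{nV(X)}\bigr)$, and a uniform optimization over the upper limit indeed yields an $O(1/\sqrt{n})\,e^{\imath(x^n)}$ bound), but it does not literally reproduce the constant $C$ of \eqref{eq-def-C}: the $\log 2/\sqrt{2\pi V(X)}$ factor is an artifact of the dyadic-interval decomposition used in the published proof of Lemma~47, and a direct Gaussian-integral evaluation would produce a somewhat different leading coefficient (roughly $1/\sqrt{2\pi V(X)}$). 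So your route buys a self-contained proof of the key estimate at the cost of not matching the exact numeric bound stated in \eqref{eq-thm-rcu-asymp-1}--\eqref{eq-thm-rcu-asymp-2}; if you want those exact constants, cite Lemma~\ref{lem-p} as the paper does. Your ``effective codebook size $M'$'' framing is equivalent to the paper's explicit choice of $\log M$ in \eqref{eq-rcu-asymp-5} and is an elegant way to exhibit the $-\tfrac{\log n}{2n}$ savings: it makes transparent that the $1/\sqrt{n}$ factor in the RCU sharpening is exactly what improves the $+O(1/n)$ third-order term of the DT bound \eqref{eq-achiev-dt-7} to $-\tfrac{\log n}{2n}+O(1/n)$. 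One small sign slip: the inversion used in the proof is $Q^{-1}(\epsilon-\delta)\approx Q^{-1}(\epsilon)+\delta/\phi(Q^{-1}(\epsilon))$ with $\delta=(B+C)/\sqrt{n}>0$ (not $Q^{-1}(\epsilon+\delta)$), since the effective error level fed to $Q^{-1}$ is $\epsilon$ reduced by the accumulated Berry-Esseen and Lemma~\ref{lem-p} remainders.
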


Before we show our proof of the asymptotic expansion 
in Theorem~\ref{thm-rcu-asymp}, 
we state two auxiliary results used in our analysis. 
The first is the classical Berry-Esseen inequality 
(e.g.,~\cite[Chapter~XVI.5]{feller}), 
stated here with the best known absolute constant $C_0$ from \cite{shevtsova2013absolute}.

\begin{thm}[Berry-Esseen inequality] \label{thm-berry-esseen}
Let  $Z_1, \ldots, Z_n$ be independent random variables
such that $V \triangleq \frac 1 n \sum_{i = 1}^n \emph{Var}[Z_i] > 0$ 
and $T \triangleq \frac 1 n \sum_{i = 1}^n \mathbb{E}[|Z_i - \mathbb E[Z_i]|^3] < \infty$. 
Then for any real $t$ and $n \geq 1$,
\begin{align}
\left|\mathbb{P}\left[\frac 1 {\sqrt{nV}}\sum\limits_{i = 1}^{n}(Z_i - \mathbb E[Z_i]) \geq  t \right]-Q(t)\right| \leq \frac{C_0T}{V^{3/2}\sqrt{n}}, \label{eq-berry-esseen}
\end{align} where $0.4097 \leq C_0 \leq 0.5583$ ($0.4097 \leq C_0 < 0.4690$ 
for identically distributed $Z_i$) \cite{shevtsova2013absolute}. 
\end{thm}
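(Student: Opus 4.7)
The plan is to specialize Theorem~\ref{thm-rcu-bound} to the i.i.d.\ source $X^n \sim P_X^n$ and then distill the asymptotic expansion by a chain of Berry-Esseen estimates. Theorem~\ref{thm-rcu-bound} immediately yields an $(n,M,\epsilon_n)$ code with
\begin{equation*}
\epsilon_n \leq \mathbb{E}\!\left[\min\!\left\{1, \frac{1}{M} G_n(\imath(X^n))\right\}\right],\qquad G_n(t)\triangleq \mathbb{E}\!\left[\exp(\imath(\bar{X}^n))\,1\{\imath(\bar{X}^n)\leq t\}\right],
\end{equation*}
where $\bar{X}^n$ is an independent copy of $X^n$. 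Because $\imath(X^n)=\sum_{i=1}^{n}\imath(X_i)$ is an i.i.d.\ sum with single-letter mean $H(X)$, variance $V(X)>0$, and finite third absolute centered moment $T(X)$, every Gaussian approximation used below will follow from a direct invocation of Theorem~\ref{thm-berry-esseen} applied to $\{\imath(X_i)\}$, at a uniform cost of $B/\sqrt{n}$ per use.

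The technical heart of the argument is the sharp bound $G_n(t)\leq C\exp(t)/\sqrt{n}$, which I would prove by a dyadic decomposition in units of $\log 2$. Partition $(-\infty,t]$ into the intervals $I_k\triangleq(t-(k+1)\log 2,\, t-k\log 2]$, $k=0,1,\ldots$. On $I_k$ one has $\exp(\imath(\bar{X}^n))\leq 2^{-k}\exp(t)$, so
\begin{equation*}
G_n(t)\leq \exp(t)\sum_{k=0}^{\infty} 2^{-k}\, \mathbb{P}[\imath(\bar{X}^n)\in I_k].
\end{equation*}
Each $\mathbb{P}[\imath(\bar{X}^n)\in I_k]$ is an increment of the CDF of $\imath(\bar{X}^n)$; applying Theorem~\ref{thm-berry-esseen} at each endpoint and using $\phi\leq 1/\sqrt{2\pi}$ bounds this increment by $\log 2/\sqrt{2\pi nV(X)}+2B/\sqrt{n}$. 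Summing the geometric series $\sum_{k\geq 0}2^{-k}=2$ then produces exactly the constant $C$ from \eqref{eq-def-C}. This dyadic estimate is the principal obstacle; once it is in place the remainder is mechanical.

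With the $G_n$-bound installed, the RCU inequality becomes
\begin{equation*}
\epsilon_n \leq \mathbb{E}\!\left[\min\!\left\{1, \exp(\imath(X^n)-\alpha)\right\}\right],\qquad \alpha\triangleq \log M+\tfrac{1}{2}\log n-\log C.
\end{equation*}
I then split this expectation at the threshold $\{\imath(X^n)>\alpha\}$: the above-threshold contribution is at most $\mathbb{P}[\imath(X^n)>\alpha]$, and the below-threshold contribution equals $\exp(-\alpha)G_n(\alpha)\leq C/\sqrt{n}$ by reusing the same $G_n$-bound. Hence $\epsilon_n\leq \mathbb{P}[\imath(X^n)>\alpha]+C/\sqrt{n}$. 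A final Berry-Esseen step replaces $\mathbb{P}[\imath(X^n)>\alpha]$ by $Q((\alpha-nH(X))/\sqrt{nV(X)})$ at a cost of $B/\sqrt{n}$, so setting this aggregate bound equal to $\epsilon$ and inverting yields
\begin{equation*}
\log M = nH(X)+\sqrt{nV(X)}\,Q^{-1}\!\left(\epsilon-\tfrac{B+C}{\sqrt{n}}\right)-\tfrac{1}{2}\log n+\log C.
\end{equation*}

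Finally, I apply the mean-value theorem to $Q^{-1}$ with $\delta\triangleq(B+C)/\sqrt{n}$: since $(Q^{-1})'(u)=-1/\phi(Q^{-1}(u))$, one has $Q^{-1}(\epsilon-\delta)-Q^{-1}(\epsilon)=\delta/\phi(Q^{-1}(\eta))$ for some $\eta\in(\epsilon-\delta,\epsilon)$. Dividing by $n$ then reproduces the claimed expansion \eqref{eq-thm-rcu-asymp}. The hypothesis $n>((B+C)/\epsilon)^2$ (resp.\ $n>((B+C)/(\epsilon-1/2))^2$) is precisely what is needed to ensure $\epsilon-\delta>0$ (resp.\ $\epsilon-\delta>1/2$), so that $Q^{-1}(\epsilon-\delta)$ is defined and $\eta$ stays on the correct side of $1/2$. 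The two cases then follow from the standard monotonicity of $\phi\circ Q^{-1}$: for $\epsilon\leq 1/2$, the minimum of $\phi(Q^{-1}(\cdot))$ on $(\epsilon-\delta,\epsilon)$ occurs at $\epsilon-\delta$, which after using $\Phi(Q^{-1}(\epsilon))=1-\epsilon$ becomes the denominator $\phi(\Phi^{-1}(\Phi(Q^{-1}(\epsilon))+(B+C)/\sqrt{n}))$ of \eqref{eq-thm-rcu-asymp-1}; for $\epsilon>1/2$, the minimum occurs at $\epsilon$ itself, giving the simpler denominator $\phi(Q^{-1}(\epsilon))$ of \eqref{eq-thm-rcu-asymp-2}.
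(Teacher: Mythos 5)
Your proposal does not prove the statement it was assigned. The statement is Theorem~\ref{thm-berry-esseen}, the classical Berry--Esseen inequality itself: a bound of order $C_0 T/(V^{3/2}\sqrt{n})$ on the Kolmogorov distance between the distribution of a normalized sum of independent random variables and the standard Gaussian. The paper does not prove this theorem; it is imported as a known result (Feller, Ch.~XVI.5, with the constant $C_0$ from Shevtsova). What you have written is instead a proof of Theorem~\ref{thm-rcu-asymp} (the third-order achievability expansion via the RCU bound), together with a sketch of a proof of Lemma~\ref{lem-p} via a dyadic decomposition in units of $\log 2$ --- both of which are fine arguments for \emph{those} results, and your dyadic estimate is indeed essentially how \cite[Lemma~47]{pol-poo-ver} is proved, recovering the constant $C$ of \eqref{eq-def-C}.

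The decisive gap is circularity: as a purported proof of the Berry--Esseen inequality, your argument invokes Theorem~\ref{thm-berry-esseen} repeatedly (``every Gaussian approximation used below will follow from a direct invocation of Theorem~\ref{thm-berry-esseen}, at a uniform cost of $B/\sqrt{n}$ per use''), so nothing in the proposal establishes the inequality \eqref{eq-berry-esseen}. An actual proof would require one of the standard techniques --- Esseen's smoothing inequality relating the Kolmogorov distance to an integral of the difference of characteristic functions over $|s|\le c\sqrt{n}V^{3/2}/T$, or Stein's method / the Lindeberg replacement argument --- none of which appears here. If the intended target was in fact Theorem~\ref{thm-rcu-asymp}, then your route matches the paper's (RCU bound, Lemma~\ref{lem-p}, threshold split, Berry--Esseen inversion, Taylor/mean-value expansion of $Q^{-1}$) and is correct; but as a proof of Theorem~\ref{thm-berry-esseen} it is vacuous.
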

We refer to $C_0 \cdot T/V^{3/2}$ as the Berry-Esseen constant.

The second result is from Polyanskiy et al.~\cite[Lemma~47]{pol-poo-ver}.
\begin{lem}[{\cite[Lemma~47]{pol-poo-ver}}] \label{lem-p}
	In the setting of Theorem~\ref{thm-berry-esseen}, it holds for any $A$ and $n \geq 1$ that
	\begin{IEEEeqnarray}{rCl} 
	\lefteqn{\mathbb{E}\left[\exp\left\{-\sum \limits_{i=1}^{n}Z_i\right\}
		1 \left\{\sum \limits_{i=1}^{n}Z_i \geq A\right\} \right]} \nonumber \\ 
	&\leq& 2\left(\frac{\log 2}{\sqrt{2\pi V}}
		+2C_0\frac{T}{V^{3/2}}\right)\frac{1}{\sqrt{n}}\exp\left(-A\right). 
	\label{eq-lem-p}
	\end{IEEEeqnarray}
\end{lem}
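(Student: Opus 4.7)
The plan is to reduce the expectation $\mathbb{E}[e^{-S_n} \mathbf{1}\{S_n \geq A\}]$, where $S_n \triangleq \sum_{i=1}^n Z_i$, to a geometric sum of fixed-width slab probabilities, each controlled by a single application of the Berry-Esseen theorem (Theorem~\ref{thm-berry-esseen}). The only nonobvious ingredient is the slab width, which I would take to be $\log 2$ in order to reproduce the clean constants in \eqref{eq-lem-p}.

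First, I would partition the tail event into dyadic slabs of width $\log 2$. On the $k$-th slab $\{A + k\log 2 \leq S_n < A + (k+1)\log 2\}$, we have $e^{-S_n} \leq e^{-A} \cdot 2^{-k}$, which gives the pointwise bound
\begin{equation*}
e^{-S_n} \mathbf{1}\{S_n \geq A\} \leq e^{-A} \sum_{k=0}^{\infty} 2^{-k} \mathbf{1}\{A + k\log 2 \leq S_n < A + (k+1)\log 2\}.
\end{equation*}
Taking expectations shows that it suffices to bound each slab probability $\mathbb{P}[A + k\log 2 \leq S_n < A + (k+1)\log 2]$ by a single quantity independent of $k$ and $A$.

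Next, I would establish such a uniform slab bound from Berry-Esseen. Writing $M_n \triangleq \sum_{i=1}^n \mathbb{E}[Z_i]$ and applying Theorem~\ref{thm-berry-esseen} separately at each of the two endpoints yields, for every $a \leq b$,
\begin{equation*}
\mathbb{P}[a \leq S_n \leq b] \leq \Phi\Big(\frac{b - M_n}{\sqrt{nV}}\Big) - \Phi\Big(\frac{a - M_n}{\sqrt{nV}}\Big) + \frac{2 C_0 T}{V^{3/2} \sqrt{n}}.
\end{equation*}
Because $\Phi' = \phi \leq 1/\sqrt{2\pi}$, the Gaussian difference over any interval of length $\log 2$ is at most $\log 2/\sqrt{2\pi n V}$. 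Hence every slab probability is at most $\frac{\log 2}{\sqrt{2\pi n V}} + \frac{2 C_0 T}{V^{3/2} \sqrt{n}}$, independent of $k$ and $A$.

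Finally, substituting this uniform bound and summing the geometric series $\sum_{k=0}^{\infty} 2^{-k} = 2$ produces
\begin{equation*}
\mathbb{E}[e^{-S_n} \mathbf{1}\{S_n \geq A\}] \leq 2\Big(\frac{\log 2}{\sqrt{2\pi V}} + \frac{2 C_0 T}{V^{3/2}}\Big) \frac{1}{\sqrt{n}} e^{-A},
\end{equation*}
which is \eqref{eq-lem-p}. The proof has no deep obstacle: once the dyadic partition is in place, every step is a one-line estimate. The only real design choice is the slab width; taking width $\log 2$ makes the pointwise decay $2^{-k}$ sum in closed form to $2$ and makes the Gaussian-difference coefficient equal to $\log 2$, matching the constants in the statement. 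Any other width $w$ would replace the pair $(\log 2,\, 2)$ by $(w,\, 1/(1-e^{-w}))$, trading off the two contributions without qualitatively changing the argument.
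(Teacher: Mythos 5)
Your proof is correct and is essentially the argument behind \cite[Lemma~47]{pol-poo-ver}, which the paper cites rather than reproves: the dyadic partition of $\{S_n \geq A\}$ into width-$\log 2$ slabs, two applications of the Berry--Esseen bound per slab combined with $\phi \leq 1/\sqrt{2\pi}$, and the geometric sum $\sum_k 2^{-k} = 2$ reproduce the constant in \eqref{eq-lem-p} exactly. No gaps.
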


\begin{proof}[Proof of Theorem~\ref{thm-rcu-asymp}] 
We analyze the RCU bound of Theorem~\ref{thm-rcu-bound} 
for random variable $X^{n}$.  
For notational brevity, define
\begin{equation} I_n \triangleq \imath(X^n) = \sum_{i=1}^{n}\imath(X_i), \;\;
	\bar{I}_n \triangleq \imath(\bar{X}^n) = \sum_{i=1}^{n}\imath(\bar{X_i}).
\end{equation} 
By Theorem~\ref{thm-rcu-bound}, 
there exists an $(n,M,\epsilon')$ code such that 
\begin{equation} \label{eq-rcu-asymp} 
\epsilon' \leq \mathbb{E}\left[\min \left\{1, \, \frac{1}{M}\mathbb{E}
\left[\exp\left(\bar{I}_n\right)1\left\{\bar{I}_n \leq I_n\right\}|X^n\right] \right\}\right],
\end{equation} 
where $P_{X^n\bar{X}^n}= P_{X}^nP_{X}^n$, 
and each of $I_n$ and $\bar{I}_n$ is a sum of i.i.d. random variables. 
Applying Lemma~\ref{lem-p} 
with $Z_{i} = -\imath(\bar{X}_i)$ and $A = -I_n$ 
gives 
\begin{IEEEeqnarray}{rCl} 
\mathbb{E}\left[\exp\left(\bar{I}_n\right)
	1\left\{\bar{I}_n \leq I_n\right\}|X^n\right]
&\leq& 
\frac{C}{\sqrt{n}}\exp\left(I_n\right). \label{eq-rcu-asymp-1}
\end{IEEEeqnarray} 
 Plugging \eqref{eq-rcu-asymp-1} in~\eqref{eq-rcu-asymp}, we find 
\begin{IEEEeqnarray}{rCl}
\epsilon' 
&\leq& \mathbb{E}\left[\min \left\{1,
 \, \frac{C}{M\sqrt{n}}\exp\left(I_n\right) \right\}\right] \label{eq-rcu-asymp-2}\\
&=& \mathbb{P}\left[I_n > \log \frac{M\sqrt{n}}{C}\right] \notag \\
&& + \frac{C}{M\sqrt{n}} \mathbb{E}\left[\exp\left(I_n\right)
	1\left\{I_n \leq \log \frac{M\sqrt{n}}{C} \right\}\right] 
	\label{eq-rcu-asymp-3}\\
&\leq& \mathbb{P}\left[I_n > \log M + \frac{1}{2}\log n - \log C \right] 
	+ \frac{C}{\sqrt{n}}, \IEEEeqnarraynumspace \label{eq-rcu-asymp-4}
\end{IEEEeqnarray} 
where~\eqref{eq-rcu-asymp-2} 
plugs~\eqref{eq-rcu-asymp-1} into~\eqref{eq-rcu-asymp},
\eqref{eq-rcu-asymp-3} separates the cases 
$I_n > \log \left(M\sqrt{n}/C\right)$ 
and $I_n \leq \log \left(M\sqrt{n}/C\right)$, 
and \eqref{eq-rcu-asymp-4} applies Lemma~\ref{lem-p} 
to the second term in \eqref{eq-rcu-asymp-3}. 

Denote for brevity
\begin{align}
\delta_n \triangleq \frac{B+C}{\sqrt{n}}.
\end{align}
	
We now choose 
\begin{IEEEeqnarray}{rCl} 
\log M 
&=& nH(X) + \sqrt{nV(X)}Q^{-1}\left(\epsilon - \delta_n \right) - \frac{1}{2}\log n
\nonumber \\
&& 
 + \log C \label{eq-rcu-asymp-5}
\end{IEEEeqnarray} 
and apply the Berry-Esseen inequality (Theorem~\ref{thm-berry-esseen}) 
to~\eqref{eq-rcu-asymp-4}, 
giving $\epsilon' \leq \epsilon$ and proving achievability bound 
\begin{IEEEeqnarray}{rCl} R^{*}(n,\epsilon) \leq \frac{\log M}{n}.
\end{IEEEeqnarray}
To obtain \eqref{eq-thm-rcu-asymp} from \eqref{eq-rcu-asymp-5}, 
we note that as long as $\delta_n < \epsilon$,
\begin{IEEEeqnarray}{rCl}
Q^{-1}\left(\epsilon - \delta_n\right) 
&=& 
\Phi^{-1}\left(\Phi(Q^{-1}(\epsilon)) + \delta_n\right) \label{eq-rcu-asymp-11}\\
&=& Q^{-1}(\epsilon) + \delta_n (\Phi^{-1})'(\xi_n) 
	\label{eq-rcu-asymp-7}\\
&=& Q^{-1}(\epsilon) +  \frac{\delta_n}{\phi(\Phi^{-1}(\xi_n))}, 
\label{eq-rcu-asymp-8}
\end{IEEEeqnarray}
where~\eqref{eq-rcu-asymp-11} 
applies the definition of the Gaussian 
cumulative distribution function $\Phi(\cdot)$ 
and its complement $Q(\cdot)$ 
from~\eqref{def-gaussian-cdf} and~\eqref{def-qfunc}, 
\eqref{eq-rcu-asymp-7} holds by a first-order Taylor bound 
for some $\xi_n \in \left[\Phi(Q^{-1}(\epsilon)), 
\Phi(Q^{-1}(\epsilon))+\delta_n \right]$, 
and \eqref{eq-rcu-asymp-8} holds by the inverse function theorem.

	1) For $\epsilon \leq \frac{1}{2}$ and $\delta_n < \epsilon$, 
	$\xi_n \geq \frac{1}{2}$ and 
	$\phi(\Phi^{-1}(\xi_n))$ is decreasing in $\xi_n$. 
	We can further bound the right-hand side 
	of~\eqref{eq-rcu-asymp-8} and conclude that 
	\begin{IEEEeqnarray}{rCl}
	Q^{-1}\left(\epsilon - \delta_n \right) 
	&\leq&
	Q^{-1}(\epsilon) 
	+ \frac{\delta_n}{\phi\left(\Phi^{-1}\left(\Phi(Q^{-1}(\epsilon))
	+ \delta_n \right)\right)}. \IEEEeqnarraynumspace 
	\label{eq-rcu-asymp-9}
	\end{IEEEeqnarray}

	2) For $\epsilon > \frac{1}{2}$ and 
	$\delta_n < \epsilon-\frac{1}{2}$, 
	we have $\xi_n \leq \frac{1}{2}$ and 
	$\phi(\Phi^{-1}(\xi_n))$ is increasing in $\xi_n$. 
	We conclude that
	\begin{IEEEeqnarray}{rCl}
	Q^{-1}\left(\epsilon - \delta_n \right) \leq
	Q^{-1}(\epsilon) + \frac{\delta_n}{\phi(Q^{-1}(\epsilon))}. 
	\IEEEeqnarraynumspace \label{eq-rcu-asymp-10}
	\end{IEEEeqnarray}
	Plugging~\eqref{eq-rcu-asymp-9} and~\eqref{eq-rcu-asymp-10} 
	into~\eqref{eq-rcu-asymp-5} gives~\eqref{eq-thm-rcu-asymp-1} 
	and~\eqref{eq-thm-rcu-asymp-2}.
\end{proof}

\section{Composite Hypothesis Testing}
\label{sec-cht}

The meta-converse for channel coding~\cite[Th.~26]{pol-poo-ver}\footnote{The 
quantum information theory literature contains 
an earlier approach to channel coding converses 
using binary hypothesis testing~\cite{nagaoka2005strong},~\cite[Ch. 4.6]{hayashi2006quantum}.}
and its generalizations to lossy source coding~\cite{kostina-verdu} 
and joint source-channel coding~\cite{campo, kostina-jscc} 
apply binary hypothesis testing to derive converses 
in point-to-point communication problems.
To extend this approach to multi-terminal coding 
(see, e.g., Section~\ref{sec-SW} Theorem~\ref{thm-sw-cht-conv}, below),
we develop a corresponding method using composite hypothesis testing.
We first develop non-asymptotic tools 
and then analyze the asymptotics.  

A composite hypothesis test $P_{Z|X}: \mathcal{X} \rightarrow \{0, 1\}$ tests a simple hypothesis against a composite hypothesis:   
\begin{equation}
\begin{IEEEeqnarraybox}{l.l.l}
H_0: & X \sim P, & \text{ selected if } Z = 1 \\
H_1: & X \sim Q_j \text{ for some } j \in [k], & \text{ selected if } Z = 0, \nonumber
\end{IEEEeqnarraybox}
\end{equation}
where $X$ is the observation, 
$P$ is the distribution under the simple hypothesis, 
and $\{Q_j\}_{j=1}^{k}$ is the collection of possible distributions 
under the composite hypothesis. 
The following definition 
generalizes the optimal $\beta$-function 
from binary to composite hypothesis testing. 
(See, for example,~\cite[Def.~1]{huang-moulin}.)

\begin{defn} \label{def-beta-alpha}
The set of achievable false-positive errors 
for power-$\alpha$ tests 
between distribution $P$ 
and collection of distributions $\{Q_j\}_{j=1}^{k}$ 
is the subset of $[0,1]^k$ defined as
\begin{IEEEeqnarray}{rCl}
\lefteqn{\beta_\alpha \left(P, \{Q_j\}_{j=1}^{k}\right) 
\triangleq \left\{\boldsymbol{\beta} = (\beta_1,\ldots,\beta_k): 
\vphantom{\mathbb{Q}_j}\right.}  \nonumber \\
&&\;\; \left. \exists\, \text{test s.t. } 
\mathbb{P}\left[Z=1\right] \geq \alpha,
 \, \mathbb{Q}_j\left[Z=1\right] \leq \beta_j, \, \forall \, j \in [k] \right\}, 
 \IEEEeqnarraynumspace
\end{IEEEeqnarray} 
where $\mathbb{P}\left[\cdot\right]$ denotes a probability with respect to $P$, 
and for each $j \in [k]$, $\mathbb{Q}_j\left[\cdot\right]$ 
denotes a probability with respect to $Q_j$.
\end{defn}
Like binary hypothesis tests
(see \cite[Remark~5]{kostina-verdu}), 
composite hypothesis tests 
can be generalized to allow 
$P$ and $\{Q_j\}_{j=1}^{k}$ 
to be $\sigma$-finite measures; 
in such cases, $\beta_\alpha (P, \{Q_j\}_{j=1}^{k})$ 
may not be a subset of $[0,1]^k$. 
We apply this generalization in Section~\ref{sec-sw-ht-conv}
to derive our new MASC converse.

In \cite{huang-moulin}, 
Huang and Moulin study the asymptotics 
of the set $\beta_\alpha (P, \{Q_j\}_{j=1}^{k})$, 
giving a third-order-optimal characterization~\cite[Th.~1]{huang-moulin}. 
As noted in~\cite[Appendix~D]{recep-arxiv}, 
there is a gap in their converse proof 
(see also Remark~\ref{remark-hm}, below). 
We here present a comprehensive analysis 
of composite hypothesis testing, 
starting with non-asymptotic characterizations 
and then particularizing them to give a new proof of~\cite[Th.~1]{huang-moulin}.

\subsection{Non-Asymptotic Bounds}
The analysis of~$\beta_\alpha \left(P, \{Q_j\}_{j=1}^{k}\right)$  
in~\cite{huang-moulin} 
uses the test that achieves the minimal (boundary) points of that set. 
For each minimal point $\boldsymbol{\beta}$, 
there exists a vector 
$\mathbf{a} = (a_1,\ldots,a_k) \geq \mathbf{0}$, 
$\mathbf{a} \neq \mathbf{0}$, 
such that the generalized Neyman-Pearson test 
\begin{equation}
P_{Z|X}(1|x) =
\begin{cases}
1, &\text{ for } x \text{ s.t. } P(x) > \sum\limits_{j=1}^{k} a_jQ_j(x) \\
0, &\text{ for } x \text{ s.t. } P(x) < \sum\limits_{j=1}^{k} a_jQ_j(x) \\
\lambda, &\text{ for } x \text{ s.t. } P(x) = \sum\limits_{j=1}^{k} a_jQ_j(x),
\end{cases} 
\end{equation} 
achieves $\boldsymbol{\beta}$; 
here $\lambda \in [0,1]$ is chosen so that 
$\mathbb{P}\left[Z=1\right]=~\alpha$. 
While the above test is optimal, 
the achievability and converse bounds that follow 
simplify the asymptotic analysis.

\begin{lem}[Achievability] \label{lem-cht-achiev}
For any $\gamma_j \geq 0$, $j \in [k]$, 
there exists a composite hypothesis test $P_{Z|X}$ for which
\begin{IEEEeqnarray}{rCl}
\mathbb{P}\left[Z = 1 \right] 
&=& \mathbb{P}\left[\bigcap\limits_{j \in [k]} \left\{\frac{P(X)}{Q_j(X)} 
\geq \gamma_j \right\} \right] \label{eq-cht-achiv-1} \\
\mathbb{Q}_j\left[Z = 1 \right] 
&\leq& \mathbb{E}_{P}\left[\frac{Q_j(X)}{P(X)} 
1\left\{ \frac{P(X)}{Q_j(X)} \geq \gamma_j \right\} \right], \, \forall \, j \in [k]. 
\label{eq-cht-achiv-2} \nonumber \\*
\end{IEEEeqnarray}
\end{lem}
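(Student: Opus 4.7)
The plan is to construct an explicit deterministic test and verify both claims by a direct change-of-measure argument, so neither randomization nor the general Neyman--Pearson machinery discussed in the paragraph preceding the lemma is required.

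First I would set
\[
P_{Z|X}(1 \mid x) \triangleq \prod_{j=1}^{k} 1\!\left\{\frac{P(x)}{Q_j(x)} \geq \gamma_j\right\},
\]
with the usual convention that $P(x)/Q_j(x) = +\infty$ when $Q_j(x) = 0$. This rule accepts $H_0$ precisely when every one of the $k$ likelihood ratios meets its threshold, so \eqref{eq-cht-achiv-1} holds as an identity by the definition of the event $\{Z = 1\}$ under $P$; no estimation is involved at this step.

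For each $j \in [k]$, I would then weaken the $k$-way intersection in $\{Z = 1\}$ to the single $j$-th threshold and perform a standard change of measure from $Q_j$ to $P$:
\begin{align*}
\mathbb{Q}_j[Z = 1]
&\leq \mathbb{Q}_j\!\left[\frac{P(X)}{Q_j(X)} \geq \gamma_j\right] \\
&= \sum_{x:\, P(x)/Q_j(x) \geq \gamma_j} \frac{Q_j(x)}{P(x)}\, P(x) \\
&= \mathbb{E}_{P}\!\left[\frac{Q_j(X)}{P(X)} 1\!\left\{\frac{P(X)}{Q_j(X)} \geq \gamma_j\right\}\right],
\end{align*}
which is exactly \eqref{eq-cht-achiv-2}. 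The change of measure is legitimate on the event of summation because $\gamma_j > 0$ forces $P(x) > 0$; when $\gamma_j = 0$ the inequality is automatic, since both sides are then at most $1$.

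There is no substantial obstacle: the test is written down in closed form, the equality in \eqref{eq-cht-achiv-1} is a tautology, and \eqref{eq-cht-achiv-2} is the textbook single-ratio argument applied after dropping $k-1$ of the $k$ constraints. The only mildly delicate point is the handling of atoms with $P(x) = 0$ or $Q_j(x) = 0$, absorbed into the standard likelihood-ratio conventions; this construction also extends unchanged to $\sigma$-finite $P$ and $Q_j$, which is the form in which the lemma is invoked in the MASC converse of Section~\ref{sec-sw-ht-conv}.
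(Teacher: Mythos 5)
Your construction and argument coincide with the paper's own proof: you use the identical likelihood-ratio threshold test, note that \eqref{eq-cht-achiv-1} holds by definition, and establish \eqref{eq-cht-achiv-2} by dropping the $k-1$ constraints for $j' \neq j$ and changing measure from $Q_j$ to $P$. One small caveat: your parenthetical claim that the $\gamma_j = 0$ case is ``automatic because both sides are at most $1$'' is not actually an argument (two quantities each bounded by $1$ need not satisfy the stated inequality, and if $Q_j$ puts mass off $\mathrm{supp}(P)$ the bound can in fact fail at $\gamma_j=0$), but the paper's own change-of-measure step has the same implicit restriction and the lemma is only invoked with strictly positive thresholds.
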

\begin{proof}
Fix any $\gamma_j \geq 0$, $j \in [k]$. 
Consider the (sub-optimal) likelihood-ratio 
threshold test\footnote{In \cite{huang-moulin}, 
Huang and Moulin also use this sub-optimal likelihood-ratio threshold 
in their asymptotic achievability analysis.}:
\begin{equation}
P_{Z|X}(1|x) =
\begin{cases}
1, &\text{ if } \frac{P(X)}{Q_j(X)} \geq \gamma_j, \, \forall \, j \in [k] \\
0, &\text{ otherwise.}
\end{cases}
\end{equation}
Under this test,~\eqref{eq-cht-achiv-1} 
follows immediately, and \eqref{eq-cht-achiv-2} holds by
\begin{IEEEeqnarray}{rCl}
\mathbb{Q}_j\left[Z = 1 \right] 
&=& \mathbb{Q}_j\left[\bigcap\limits_{j \in [k]} 
	\left\{\frac{P(X)}{Q_j(X)} \geq \gamma_j \right\} \right] \\
&\leq& \mathbb{Q}_j\left[\frac{P(X)}{Q_j(X)} \geq \gamma_j\right] \\
&=& \sum\limits_{x \in \mathcal{X}} P(x) \cdot 
	\frac{Q_j(x)}{P(x)} 1\left\{\frac{P(x)}{Q_j(x)} \geq \gamma_j \right\} 
	\IEEEeqnarraynumspace \\
&=& \mathbb{E}_{P}\left[\frac{Q_j(X)}{P(X)} 1\left\{ \frac{P(X)}{Q_j(X)} 
	\geq \gamma_j \right\} \right]. 
\end{IEEEeqnarray}
\end{proof}

The following converse bound extends~\cite[Eq.~(102)]{pol-poo-ver} 
from binary hypothesis testing to composite hypothesis testing. 
\begin{lem}[Converse] \label{lem-cht-conv}
	For any $\alpha$, 
	if $\boldsymbol{\beta} = (\beta_1,\ldots,\beta_k) 
	\in \beta_\alpha \left(P, \{Q_j\}_{j=1}^{k}\right)$, 
	then
	\begin{equation}
	\alpha - \sum\limits_{j=1}^{k}
	\gamma_j \beta_j \leq \mathbb{P}\left[\bigcap\limits_{j \in [k]} 
	\left\{\frac{P(X)}{Q_j(X)} \geq \gamma_j\right\}\right], \label{eq-sw-cht-4}
	\end{equation} where $\gamma_j \geq 0$, $j \in [k]$ 
	are arbitrary constants. 
\end{lem}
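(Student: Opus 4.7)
The plan is to extend the standard binary-hypothesis-testing converse argument (cf.\ \cite[Eq.~(102)]{pol-poo-ver}) by taking a nonnegative linear combination of the $k$ constraints before invoking the pointwise optimization. Concretely, suppose $\boldsymbol{\beta} = (\beta_1,\ldots,\beta_k) \in \beta_\alpha\bigl(P,\{Q_j\}_{j=1}^k\bigr)$ and fix a witnessing test $P_{Z|X}$. Then $\mathbb{P}[Z=1] \geq \alpha$ and $\mathbb{Q}_j[Z=1] \leq \beta_j$ for each $j$. Multiplying the $j$-th inequality by $\gamma_j \geq 0$, summing, and subtracting from the first gives
\begin{equation}
\alpha - \sum_{j=1}^k \gamma_j \beta_j \;\leq\; \mathbb{P}[Z=1] - \sum_{j=1}^k \gamma_j \mathbb{Q}_j[Z=1] \;=\; \sum_{x \in \mathcal{X}} P_{Z|X}(1|x)\left(P(x) - \sum_{j=1}^k \gamma_j Q_j(x)\right). \nonumber
\end{equation}

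Next, I would observe that the right-hand side is maximized over all $[0,1]$-valued functions $P_{Z|X}(1|\cdot)$ by the indicator $\mathbf{1}\{P(x) \geq \sum_j \gamma_j Q_j(x)\}$, so the sum is bounded by $\sum_x \bigl(P(x) - \sum_j \gamma_j Q_j(x)\bigr)_+$. Dropping the nonpositive correction $-\sum_j \gamma_j Q_j(x)$ in each surviving term yields the further upper bound $\sum_{x:\, P(x) \geq \sum_j \gamma_j Q_j(x)} P(x) = \mathbb{P}\bigl[P(X) \geq \sum_j \gamma_j Q_j(X)\bigr]$.

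The final step is a set-inclusion observation: because each $\gamma_j Q_j(x) \geq 0$, the event $\{P(X) \geq \sum_j \gamma_j Q_j(X)\}$ is contained in the event $\bigcap_{j \in [k]} \{P(X) \geq \gamma_j Q_j(X)\} = \bigcap_{j \in [k]} \{P(X)/Q_j(X) \geq \gamma_j\}$ (with the usual convention on the ratio when $Q_j(X) = 0$). Monotonicity of $\mathbb{P}$ then delivers~\eqref{eq-sw-cht-4}.

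Most of the steps are standard once the linear-combination reformulation is in place; the one step that deserves care is the passage from the composite pointwise bound $\{P \geq \sum_j \gamma_j Q_j\}$ to the intersection $\bigcap_j \{P \geq \gamma_j Q_j\}$, since one might initially expect the converse to involve the sum rather than the intersection. That containment is what makes the bound clean and directly comparable with the achievability bound of Lemma~\ref{lem-cht-achiev}. I would also note that, as in the binary case, the argument extends verbatim to $\sigma$-finite $P$ and $Q_j$, which is what we will need in Section~\ref{sec-sw-ht-conv} when applying this converse to derive the MASC hypothesis-testing converse.
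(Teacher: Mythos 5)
Your proposal is correct and follows essentially the same route as the paper's proof in Appendix~\ref{append-lem-cht-conv}: both start from $\mathbb{P}[Z=1] - \sum_j \gamma_j \mathbb{Q}_j[Z=1] = \sum_x P_{Z|X}(1|x)\bigl(P(x) - \sum_j \gamma_j Q_j(x)\bigr)$, restrict to the terms where the bracket is nonnegative, drop the nonpositive $-\sum_j \gamma_j Q_j(x)$ correction, and then pass from $\{P(X) \geq \sum_j \gamma_j Q_j(X)\}$ to the larger intersection event via $\gamma_j Q_j \geq 0$. Your ``maximize over $[0,1]$-valued tests'' phrasing is a cosmetic repackaging of the paper's pointwise upper bound; the closing remark that the argument extends verbatim to $\sigma$-finite measures is correct and worth stating.
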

\begin{proof}
Appendix~\ref{append-lem-cht-conv}.
\end{proof}

Lemma~\ref{lem-cht-var} extends the argument 
of~\cite[Lemma~1]{elkayam-feder} 
from binary to composite hypothesis testing.

\begin{lem}[Variational lemma] \label{lem-cht-var}
	For any $\alpha$, 
	if $\boldsymbol{\beta} = (\beta_1,\ldots,\beta_k) 
	\in \beta_\alpha \left(P, \{Q_j\}_{j=1}^{k}\right)$, then 
	\begin{equation}
	\alpha - \sum\limits_{j=1}^{k}\gamma_j \beta_j 
	\leq 1 - \sum\limits_{x \in \mathcal{X}} 
	\min \left\{P(x),\, \sum\limits_{j=1}^{k}\gamma_j Q_j(x) \right\}, 
	\label{eq-sw-cht-7}
	\end{equation} 
	where $\gamma_j \geq 0$, $j \in [k]$, are arbitrary constants 
	and equality is achieved by a generalized Neyman-Pearson test.
\end{lem}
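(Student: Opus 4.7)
The plan is to reduce the statement to a pointwise linear maximization over the test's acceptance probabilities and evaluate the resulting maximum in closed form. I would start by unpacking the definition: since $\boldsymbol{\beta} \in \beta_\alpha(P,\{Q_j\}_{j=1}^{k})$, there exists a test $P_{Z|X}$ with $\mathbb{P}[Z=1] \geq \alpha$ and $\mathbb{Q}_j[Z=1] \leq \beta_j$ for every $j \in [k]$. Because $\gamma_j \geq 0$, this gives
\begin{equation}
\alpha - \sum_{j=1}^k \gamma_j \beta_j \;\leq\; \mathbb{P}[Z=1] - \sum_{j=1}^k \gamma_j \mathbb{Q}_j[Z=1] \;=\; \sum_{x \in \mathcal{X}} P_{Z|X}(1|x)\left(P(x) - \sum_{j=1}^k \gamma_j Q_j(x)\right).
\end{equation}

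Second, I would observe that each summand on the right is linear in $P_{Z|X}(1|x) \in [0,1]$, so the sum is maximized pointwise by the threshold rule that sets $P_{Z|X}(1|x) = 1$ when $P(x) > \sum_j \gamma_j Q_j(x)$ and $P_{Z|X}(1|x) = 0$ otherwise. This yields
\begin{equation}
\alpha - \sum_{j=1}^k \gamma_j \beta_j \;\leq\; \sum_{x \in \mathcal{X}} \left(P(x) - \sum_{j=1}^k \gamma_j Q_j(x)\right)_+.
\end{equation}
Applying the elementary identity $(a-b)_+ = a - \min\{a,b\}$ together with $\sum_x P(x)=1$ then converts the right-hand side into $1 - \sum_{x \in \mathcal{X}} \min\{P(x),\, \sum_j \gamma_j Q_j(x)\}$, which is exactly \eqref{eq-sw-cht-7}.

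For the equality claim, I would note that the pointwise maximizer used above is precisely the generalized Neyman-Pearson test displayed just before Lemma~\ref{lem-cht-achiev} (with the weights $a_j$ replaced by $\gamma_j$); the contribution of the tie set $\{x : P(x) = \sum_j \gamma_j Q_j(x)\}$ to the objective is zero regardless of the randomization $\lambda$, so this test attains the bound with equality. I do not anticipate a serious obstacle, since the argument is essentially a one-shot water-filling / LP-duality calculation; the only care required is in preserving the direction of the first inequality (which is where $\gamma_j \geq 0$ enters) and in observing that the same pointwise argument survives the $\sigma$-finite generalization mentioned after Definition~\ref{def-beta-alpha}, as it is purely algebraic and does not rely on finiteness of the measures.
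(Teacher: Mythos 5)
Your proof is correct and follows essentially the same route as the paper's Appendix~\ref{append-lem-cht-var}: both arguments start from $\alpha - \sum_j \gamma_j \beta_j \leq \mathbb{P}[Z=1] - \sum_j \gamma_j \mathbb{Q}_j[Z=1]$ and then bound the right-hand side by pointwise linear maximization over $P_{Z|X}(1|x)\in[0,1]$, with the paper splitting $\mathcal{X}$ explicitly into the sets $\mathcal{X}^{(<)},\mathcal{X}^{(=)},\mathcal{X}^{(>)}$ while you compress that case analysis into the identity $(a-b)_+ = a - \min\{a,b\}$ together with $\sum_x P(x) = 1$. The only small omission is in the equality claim: you observe that the generalized Neyman--Pearson test attains $\mathbb{P}[Z=1]-\sum_j\gamma_j\mathbb{Q}_j[Z=1]=1-\sum_x\min\{\cdot\}$ for any tie-set randomization $\lambda$, but to conclude equality in \eqref{eq-sw-cht-7} one must also pick $\lambda$ so that $\mathbb{P}[Z=1]=\alpha$ exactly (with $\beta_j$ then set to the corresponding $\mathbb{Q}_j[Z=1]$), which is what the paper makes explicit in its final sentence.
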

\begin{proof}
Appendix \ref{append-lem-cht-var}.
\end{proof}

Given any $\boldsymbol{\beta} = (\beta_1, \ldots, \beta_k)$, 
define
\begin{IEEEeqnarray}{rCl}
&&\epsilon^*(\boldsymbol{\beta}) \triangleq \inf 
	\left\{\epsilon \in [0, 1]: \exists \, \text{test s.t. } 
	\vphantom{\mathbb{Q}_j} \right.\nonumber \\
&&\qquad \left. \mathbb{P}\left[Z=1\right] \geq 1 - \epsilon, \, 
	\mathbb{Q}_j\left[Z=1\right] \leq \beta_j, \, \forall \, j \in [k] \right\}. 
	\IEEEeqnarraynumspace
\end{IEEEeqnarray}
Then Lemma~\ref{lem-cht-var} gives
\begin{IEEEeqnarray}{rCl}
\lefteqn{
\epsilon^*(\boldsymbol{\beta}) =}   \label{eq-sw-cht-8} \\ 
&& \sup\limits_{\gamma_1, \ldots, \gamma_k \geq 0} 
\left\{\sum\limits_{x \in \mathcal{X}} \min \left\{P(x), \, 
\sum\limits_{j=1}^{k}\gamma_j Q_j(x) \right\} - 
\sum\limits_{j=1}^{k}\gamma_j \beta_j \right\}. \nonumber 
\end{IEEEeqnarray}

\begin{remark}
We can derive Lemma~\ref{lem-cht-conv} 
from the variational characterization in Lemma~\ref{lem-cht-var} by noting that
\begin{IEEEeqnarray}{rCl}
\lefteqn{1 - \sum\limits_{x \in \mathcal{X}} 
	\min \left\{P(x), \, \sum\limits_{j=1}^{k}\gamma_j Q_j(x) \right\}} \nonumber \\
&\leq& 1 - \sum\limits_{x \in \mathcal{X}} P(x) 
	1\left\{P(x) < \sum\limits_{j=1}^{k}\gamma_j Q_j(x) \right\}
	 \IEEEeqnarraynumspace \\
&=& \mathbb{P}\left[P(X) \geq \sum\limits_{j=1}^{k}\gamma_j Q_j(X)\right] \\
&\leq& \mathbb{P}\left[\bigcap\limits_{j \in [k]} 
	\left\{\frac{P(X)}{Q_j(X)} \geq \gamma_j\right\}\right].
\end{IEEEeqnarray}
\end{remark}

Lemmas~\ref{lem-cht-conv} and~\ref{lem-cht-var} are useful 
beyond the asymptotic analysis of composite hypothesis testing. 
They also make it possible 
to recover previous converse bounds 
from our new MASC meta-converse, 
as presented in Section~\ref{sec-sw-ht-conv} below.

\subsection{Asymptotics for I.I.D. Distributions}
We here characterize the asymptotics 
of $\beta_\alpha \left(P, \{Q_j\}_{j=1}^{k}\right)$ 
when each of $P$ and $\{Q_j\}_{j=1}^{k}$ 
is a product of $n$ identical single-shot distributions, 
i.e., $P(X^n) = \prod_{i=1}^{n}P(X_i)$ 
and $Q_j(X^n) = \prod_{i=1}^{n}Q_j(X_i)$, $j \in [k]$.

We begin with notation.
For each $j \in [k]$, define 
\begin{IEEEeqnarray}{rCl}
D_j &\triangleq& \mathbb{E}_P \left[\log \frac{P(X)}{Q_j(X)} \right] \\
V_j &\triangleq& \text{Var}_P\left[\log \frac{P(X)}{Q_j(X)}\right] \\
T_j &\triangleq& \mathbb{E}_P\left[\left|\log \frac{P(X)}{Q_j(X)} - D_j\right|^3\right].
\end{IEEEeqnarray} 
Define vector $\mathbf{D}$ and matrix $\mathsf{V}$ as
\begin{IEEEeqnarray}{rCl}
\mathbf{D} & \triangleq & \left(D_j, \, j \in [k] \right) \label{eq-def-cht-D} \\
\mathsf{V} & \triangleq & 
\text{Cov}_P \left[\left(\log \frac{P(X)}{Q_j(X)}, \, j \in [k] \right) \right]. 
\label{eq-def-cht-v}
\end{IEEEeqnarray}

Let $\mathbf{Z}\in\mathbb{R}^d$ 
be a Gaussian random vector with mean zero 
and covariance matrix $\mathsf{V}$. 
Define the multidimensional counterpart 
of the function $Q^{-1}(\cdot)$ as
\begin{equation}
\mathscr{Q}_{\rm inv}(\mathsf{V},\epsilon) 
\triangleq \left\{\mathbf{z} \in 
\mathbb{R}^d: \mathbb{P}[\mathbf{Z} \leq \mathbf{z}] \geq 1-\epsilon \right\}. \label{eq-def-sve}
\end{equation} 
The set $\mathscr{Q}_{\rm inv}(\mathsf{V},\epsilon)$ 
appears in characterizations such as~\cite{tan-kosut, huang-moulin}. 
When $\mathsf{V}$ is non-singular, 
the boundary of $\mathscr{Q}_{\rm inv}(\mathsf{V},\epsilon)$ 
approaches $z_i = \sqrt{[\mathsf{V}]_{i,i}}Q^{-1}(\epsilon)$ 
in each dimension $i \in [d]$, 
as illustrated in Figure \ref{fig-sve}\protect\subref{fig-sve-1}. 
For $\epsilon \leq 1/2$, 
$\mathscr{Q}_{\rm inv}(\mathsf{V},\epsilon)$ 
lies in the positive orthant of $\mathbb{R}^d$; 
for $\epsilon > 1/2$, 
$\mathscr{Q}_{\rm inv}(\mathsf{V},\epsilon)$ extends 
outside of the positive orthant.
If $\epsilon' < \epsilon$, 
then $\mathscr{Q}_{\rm inv}(\mathsf{V},\epsilon')
\subset \mathscr{Q}_{\rm inv}(\mathsf{V},\epsilon)$.
See Figure \ref{fig-sve}\protect\subref{fig-sve-2} 
for plots of the boundaries of $\mathscr{Q}_{\rm inv}(\mathsf{V},\epsilon)$ 
in $\mathbb{R}^2$. 
If $\mathsf{V}$ is singular with rank $r < d$, 
then $\mathscr{Q}_{\rm inv}(\mathsf{V},\epsilon)$ 
lies in an $r$-dimensional subspace of $\mathbb{R}^d$.
\begin{figure}[!t]
	\centering
	\subfloat[]{\includegraphics[width=0.33\textwidth]{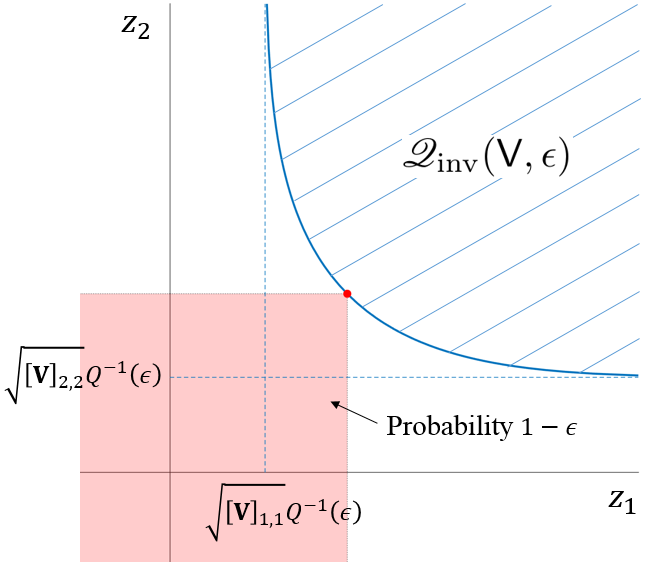}
		\label{fig-sve-1}}
	\hfil
	\subfloat[]{\includegraphics[width=0.3\textwidth]{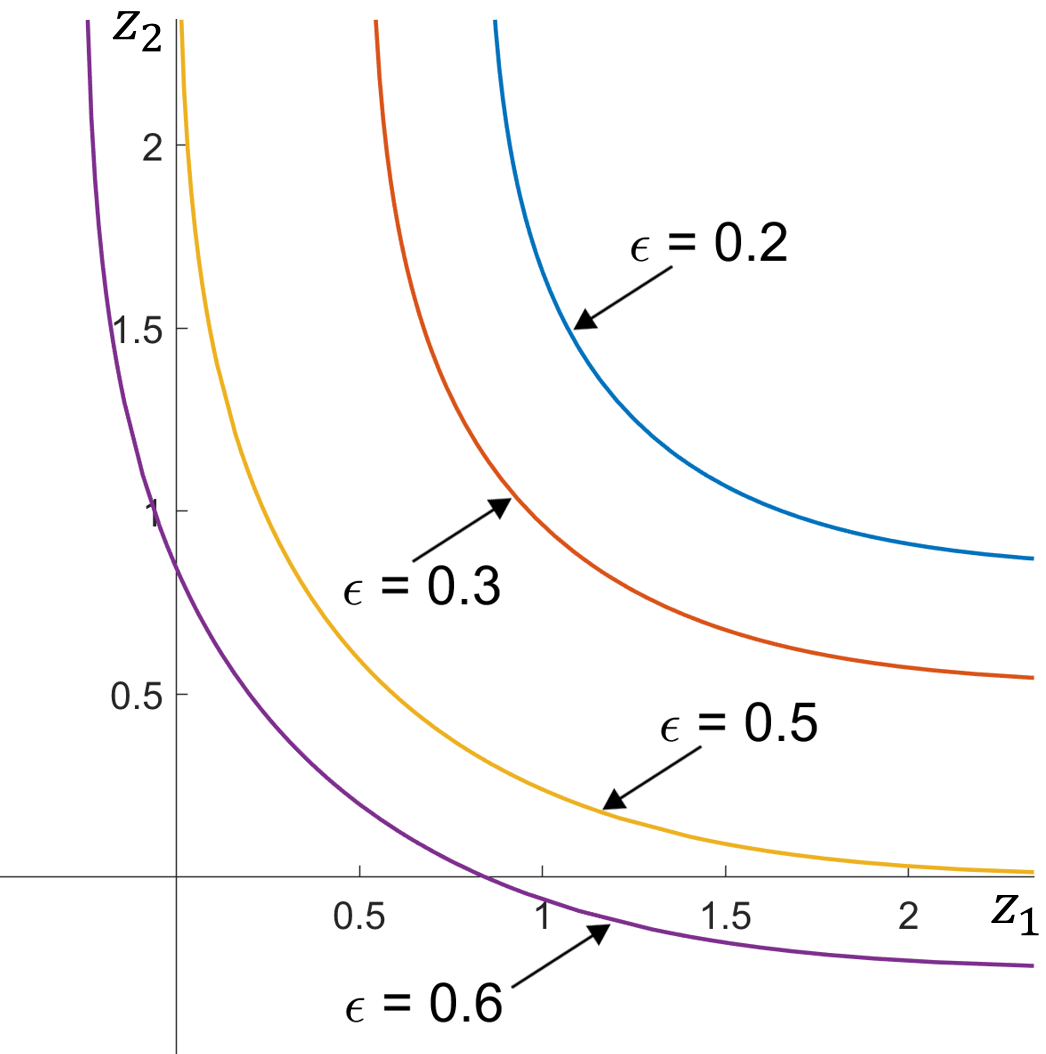}
		\label{fig-sve-2}}
	\caption{Illustrations of 
	$\mathscr{Q}_{\rm inv}(\mathsf{V},\epsilon)\subset\mathbb{R}^2$. 
	\protect\subref{fig-sve-1} A schematic drawing of 
	$\mathscr{Q}_{\rm inv}(\mathsf{V},\epsilon)$. 
	\protect\subref{fig-sve-2} A graph plotting the boundaries 
	of $\mathscr{Q}_{\rm inv}(\mathsf{V},\epsilon)\subset\mathbb{R}^2$ 
	with various values of $\epsilon$ when $\mathsf{V}$ is the identity matrix.}
	\label{fig-sve}
\end{figure}

Theorem~\ref{thm-cht-third-order} derives 
a third-order-optimal characterization 
of $\beta_\alpha \left(P, \{Q_j\}_{j=1}^{k}\right)$ 
under assumptions 
\begin{IEEEeqnarray}{rCl}
V_j &>& 0, \, \forall \, j \in [k] \label{assum-ht-1}\\
T_j &<& \infty, \, \forall \, j \in [k]. \label{assum-ht-2}
\end{IEEEeqnarray}
Define the inner and outer bounding sets 
\begin{IEEEeqnarray}{rCl}
\mathscr{B}^*_{\rm in}(n,\epsilon) 
&\triangleq& \exp \left\{ -n \mathbf{D} 
	+ \sqrt{n}\mathscr{Q}_{\rm inv}(\mathsf{V},\epsilon) 
	- \frac{\log n}{2} \mathbf{1} + O(1)\mathbf{1} \right\}  \nonumber \\
\mathscr{B}^*_{\rm out}(n,\epsilon)
&\triangleq&\exp \left\{ -n\mathbf{D} 
	+ \sqrt{n}\mathscr{Q}_{\rm inv}(\mathsf{V},\epsilon) 
	- \frac{\log n}{2} \mathbf{1} - O(1)\mathbf{1} \right\}, 
	\nonumber
\end{IEEEeqnarray} where vector $\mathbf{D}$ and matrix $\mathsf{V}$ are defined in \eqref{eq-def-cht-D} and \eqref{eq-def-cht-v}.

\begin{thm}[Third-order-optimal asymptotics] \label{thm-cht-third-order}
Assume that $P$ and $\left\{Q_j\right\}_{j=1}^{k}$ 
are product distributions composed of $n$ identical single-shot distributions 
that satisfy \eqref{assum-ht-1} and \eqref{assum-ht-2}. 
For any $\alpha\in(0,1)$, 
the set $\beta_\alpha \left(P, \{Q_j\}_{j=1}^{k}\right)$ satisfies
\begin{equation}
\mathscr{B}^*_{\rm in}(n,\epsilon) 
\subseteq \beta_\alpha \left(P, \{Q_j\}_{j=1}^{k}\right) 
\subseteq \mathscr{B}^*_{\rm out}(n,\epsilon),
\end{equation} where $\epsilon = 1 - \alpha$.
\end{thm}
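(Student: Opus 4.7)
The plan is to establish the inner and outer bounds by specializing Lemmas~\ref{lem-cht-achiev} and~\ref{lem-cht-conv} to the $n$-fold product setting, with thresholds chosen to expose the $-\tfrac{1}{2}\log n$ third-order term. The two technical ingredients I would rely on are Lemma~\ref{lem-p} applied to the i.i.d.\ log-likelihood ratios $\log\frac{P(X_i)}{Q_j(X_i)}$ under $P$, and a $k$-dimensional Berry-Esseen bound for orthant probabilities of the sum-of-i.i.d.-vectors $\mathbf{I}_n \triangleq (I_n^{(1)}, \ldots, I_n^{(k)})$, where $I_n^{(j)} \triangleq \log\frac{P(X^n)}{Q_j(X^n)} = \sum_{i=1}^n \log\frac{P(X_i)}{Q_j(X_i)}$ has mean $nD_j$ and joint covariance matrix $n\mathsf{V}$.

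For achievability ($\mathscr{B}^*_{\rm in} \subseteq \beta_\alpha$), I would fix any boundary point of $\mathscr{B}^*_{\rm in}(n,\epsilon)$, written as $\log\beta_j = -nD_j + \sqrt{n}z_j - \tfrac{1}{2}\log n + c$ for $\mathbf{z} \in \mathscr{Q}_{\rm inv}(\mathsf{V},\epsilon)$ and a suitable positive constant $c$, and instantiate Lemma~\ref{lem-cht-achiev} with $\log\gamma_j = nD_j - \sqrt{n} z_j^\star$, where $\mathbf{z}^\star = \mathbf{z} - O(1/\sqrt{n})\mathbf{1}$ absorbs the Berry-Esseen correction. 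Combining~\eqref{eq-cht-achiv-2} with Lemma~\ref{lem-p} applied to $Z_i = \log(P(X_i)/Q_j(X_i))$ (whose variance $V_j$ and third absolute moment $T_j$ are finite and positive by~\eqref{assum-ht-1}--\eqref{assum-ht-2}) yields $\mathbb{Q}_j[Z=1] \leq \frac{C_j'}{\sqrt{n}}\exp(-\log\gamma_j) \leq \beta_j$ for $c$ large enough. The condition $\mathbb{P}[Z=1] \geq \alpha$ in~\eqref{eq-cht-achiv-1} rewrites as the orthant probability $\mathbb{P}[(\mathbf{I}_n - n\mathbf{D})/\sqrt{n} \geq -\mathbf{z}^\star]$, which by the multidimensional Berry-Esseen bound equals $\mathbb{P}[\mathbf{Z} \leq \mathbf{z}^\star] + O(1/\sqrt{n}) \geq 1 - \epsilon$ by the definition of $\mathbf{z}^\star$.

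For the converse ($\beta_\alpha \subseteq \mathscr{B}^*_{\rm out}$), I would pick any $\boldsymbol{\beta} \in \beta_\alpha(P,\{Q_j\}_{j=1}^{k})$ and parametrize $\log\beta_j = -nD_j + \sqrt{n}z_j - \tfrac{1}{2}\log n$ (absorbing any $O(1)$ correction into $z_j$). Applying Lemma~\ref{lem-cht-conv} with the conjugate choice $\log\gamma_j = nD_j - \sqrt{n}z_j$ gives $\gamma_j\beta_j = 1/\sqrt{n}$, hence $\sum_{j}\gamma_j\beta_j = k/\sqrt{n}$, and therefore
\begin{equation*}
\alpha - \frac{k}{\sqrt{n}} \leq \mathbb{P}\!\left[\bigcap_{j\in[k]} \left\{\frac{I_n^{(j)}-nD_j}{\sqrt{n}} \geq -z_j\right\}\right] = \mathbb{P}[\mathbf{Z} \leq \mathbf{z}] + O\!\left(\tfrac{1}{\sqrt{n}}\right),
\end{equation*}
by the same multidimensional Berry-Esseen estimate. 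This exhibits $\mathbf{z}$ as an element of $\mathscr{Q}_{\rm inv}(\mathsf{V}, \epsilon + O(1/\sqrt{n}))$, and a first-order Taylor expansion of the multivariate Gaussian cdf normal to its $(1-\epsilon)$-level set rewrites $\mathbf{z}$ as $\mathbf{z}^\star - O(1/\sqrt{n})\mathbf{1}$ for some $\mathbf{z}^\star \in \mathscr{Q}_{\rm inv}(\mathsf{V},\epsilon)$, so $\log\boldsymbol{\beta} = -n\mathbf{D} + \sqrt{n}\mathbf{z}^\star - \tfrac{1}{2}\log n\,\mathbf{1} - O(1)\mathbf{1}$, proving $\boldsymbol{\beta} \in \mathscr{B}^*_{\rm out}(n,\epsilon)$.

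The main technical hurdle is the multidimensional Berry-Esseen step and its post-processing. For non-singular $\mathsf{V}$, Bentkus-type bounds on convex sets deliver the $O(1/\sqrt{n})$ rate uniformly over orthants for fixed $k$; for singular $\mathsf{V}$, one must first project onto the range of $\mathsf{V}$, where both $\mathbf{I}_n - n\mathbf{D}$ and $\mathscr{Q}_{\rm inv}(\mathsf{V},\epsilon)$ reside, before invoking the bound. A secondary subtlety, and the precise location of the gap in \cite{huang-moulin} noted in \cite[Appendix~D]{recep-arxiv}, is the \emph{uniform} conversion of $O(1/\sqrt{n})$ slack in $\epsilon$ into $O(1)\mathbf{1}$ slack in $\sqrt{n}\mathbf{z}$ along the (possibly non-smooth) boundary of $\mathscr{Q}_{\rm inv}(\mathsf{V},\epsilon)$; I would close it via the inverse function theorem on the Gaussian cdf away from the coordinate hyperplanes (where the density is bounded below) and a direct monotonicity argument as $z_j \to -\infty$, where the $j$-th constraint becomes vacuous and the problem reduces in dimension.
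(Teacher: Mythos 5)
Your proposal is correct and essentially coincides with the paper's own proof: both specialize Lemmas~\ref{lem-cht-achiev} and~\ref{lem-cht-conv} to the i.i.d.\ product case, bound each $\mathbb{Q}_j[Z=1]$ via Lemma~\ref{lem-p}, control the acceptance probability with a singular-covariance multidimensional Berry--Esseen estimate (the paper's Lemma~\ref{lem-b-e}, obtained by projecting onto the range of $\mathsf{V}$ exactly as you suggest), choose $\gamma_j\beta_j=1/\sqrt{n}$ in the converse, and then convert the $O(1/\sqrt{n})$ slack in $\epsilon$ into an $O(1)\mathbf{1}$ shift of $\sqrt{n}\mathbf{z}$ (the paper packages this as Lemma~\ref{lem-sve}, proved by a Taylor expansion of $\Phi(\mathsf{V};\cdot)$, which is the content of your inverse-function-theorem argument). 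One minor slip: the $j$-th constraint becomes vacuous as $z_j\to+\infty$, not $-\infty$, since $\mathbb{P}[Z_j\le z_j]\to 1$ there and the level set degenerates to the $(k-1)$-dimensional problem; this does not affect your argument.
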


\begin{remark} \label{remark-hm}
In \cite[Th.~1]{huang-moulin}, 
Huang and Moulin claim the third-order-optimal result 
in Theorem~\ref{thm-cht-third-order} when $\mathsf{V}$ is non-singular. 
Unfortunately, there is a gap in their converse proof. 
Applying~\cite[Lemma~2]{huang-moulin} to get~\cite[Eq.~(13)]{huang-moulin} 
requires that vector $\mathbf{b}$ is independent of $n$. 
However, they consider 
any $\mathbf{b} \in \mathscr{Q}_{\rm inv}\left(\mathsf{V}, \epsilon\right)$, 
which may grow with $n$ 
because set $\mathscr{Q}_{\rm inv}\left(\mathsf{V}, \epsilon\right)$ 
is unbounded. 
Thus, \cite[Eq.~(13)]{huang-moulin} does not always hold.
\end{remark}

We resolve this issue with a new proof of Theorem~\ref{thm-cht-third-order}
that leverages Lemmas~\ref{lem-cht-achiev} and \ref{lem-cht-conv}. 
We first show two auxiliary results. 

The multidimensional Berry-Esseen theorem 
bounds the probability of a sum of i.i.d. random vectors. 
Bentkus' theorem~\cite[Th.~1.1]{bentkus} 
for the case with mean zero and identity covariance 
achieves the best known dependence on dimension. 
Tan and Kosut extend~\cite[Th.~1.1]{bentkus} 
to non-singular covariance matrices~\cite[Cor.~8]{tan-kosut}.
We here extend~\cite[Cor.~8]{tan-kosut} 
to covariance matrices with non-zero rank. 

\begin{lem} \label{lem-b-e}
	Let $\mathbf{U}_1,\ldots,\mathbf{U}_n$ be i.i.d. random vectors 
	in $\mathbb{R}^d$ with mean zero and covariance matrix $\mathsf{V}$. 
	Let $\mathbf{Z} \sim \mathcal{N}(\mathbf{0},\mathsf{V})$ 
	be a Gaussian vector in $\mathbb{R}^d$. 
	Define $r \triangleq \emph{rank}(\mathsf{V})$. 
	Let $\mathsf{T}$ be a $d\times r$ matrix 
	whose columns are the $r$ normalized eigenvectors of $\mathsf{V}$ 
	with non-zero eigenvalues. 
	Define i.i.d. random vectors 
	$\mathbf{W}_1,\ldots,\mathbf{W}_n \in \mathbb{R}^r$ 
	such that $\mathbf{U}_i = \mathsf{T}\mathbf{W}_i$ for $i\in[n]$. 
	Let $\mathsf{V}_r \triangleq \emph{Cov}[\mathbf{W}_1]$ 
	and $\beta_r \triangleq \mathbb{E}[\|\mathbf{W}_1\|_2^3]$. 
	If $r \geq 1$, then for all $n$,
	\begin{equation}
	\underset{\mathbf{z} \in \mathbb{R}^d}{\sup} 
	\left|\mathbb{P}\left[\frac{1}{\sqrt{n}}
	\sum\limits_{i=1}^{n}\mathbf{U}_i \leq \mathbf{z}\right] - 
	\mathbb{P}[\mathbf{Z} \leq \mathbf{z}]\right| 
	\leq \frac{400 d^{1/4}\beta_r}{\lambda_{\min}(\mathsf{V}_r)^{3/2}\sqrt{n}}, 
	\label{eq-lem-b-e}
	\end{equation} 
	where $\lambda_{\min}(\mathsf{V}_r) > 0$ 
	is the smallest eigenvalue of matrix $\mathsf{V}_r$.
\end{lem}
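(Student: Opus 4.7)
The plan is to reduce the degenerate (rank-$r$) case to the non-singular multidimensional Berry--Esseen bound of Tan and Kosut's Corollary~8 by pulling the problem back to the $r$-dimensional column space of $\mathsf{T}$, where the Gaussian limit is non-degenerate and the known bound applies.

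First, I would set up the structural identifications. Because the columns of $\mathsf{T}$ are orthonormal eigenvectors of the symmetric matrix $\mathsf{V}$ associated with its non-zero eigenvalues, $\mathsf{T}^\top \mathsf{T} = \mathsf{I}_r$ and $\mathsf{V} = \mathsf{T}\Lambda_r \mathsf{T}^\top$ is the thin spectral decomposition, with $\Lambda_r$ diagonal and positive. Since $\mathrm{Cov}[\mathbf{U}_1] = \mathsf{V}$ has column space equal to $\mathrm{range}(\mathsf{T})$ and $\mathbb{E}[\mathbf{U}_1] = \mathbf{0}$, $\mathbf{U}_i \in \mathrm{range}(\mathsf{T})$ almost surely, so $\mathbf{W}_i = \mathsf{T}^\top \mathbf{U}_i$ a.s.\ and $\mathsf{V}_r = \mathsf{T}^\top \mathsf{V}\mathsf{T} = \Lambda_r$ is positive definite, giving $\lambda_{\min}(\mathsf{V}_r) > 0$. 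Similarly, $\mathbf{Z} \stackrel{d}{=} \mathsf{T}\mathbf{Y}$ where $\mathbf{Y} \sim \mathcal{N}(\mathbf{0},\mathsf{V}_r)$ is a non-degenerate Gaussian vector in $\mathbb{R}^r$.

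Next, I would translate the half-space events on $\mathbb{R}^d$ into convex events on $\mathbb{R}^r$. For each $\mathbf{z}\in\mathbb{R}^d$, the set $C(\mathbf{z})\triangleq\{\mathbf{w}\in\mathbb{R}^r:\mathsf{T}\mathbf{w}\leq\mathbf{z}\}$ is an intersection of at most $d$ half-spaces in $\mathbb{R}^r$ and is therefore convex. Using $\mathbf{U}_i = \mathsf{T}\mathbf{W}_i$ and $\mathbf{Z} = \mathsf{T}\mathbf{Y}$, this gives
\begin{equation}
\sup_{\mathbf{z}\in\mathbb{R}^d}\left|\mathbb{P}\!\left[\frac{1}{\sqrt n}\sum_{i=1}^{n}\mathbf{U}_i\leq \mathbf{z}\right]-\mathbb{P}[\mathbf{Z}\leq\mathbf{z}]\right|\leq \sup_{C\text{ convex in }\mathbb{R}^r}\left|\mathbb{P}\!\left[\frac{1}{\sqrt n}\sum_{i=1}^{n}\mathbf{W}_i\in C\right]-\mathbb{P}[\mathbf{Y}\in C]\right|.
\end{equation}
Applying Tan and Kosut's Corollary~8 (Bentkus' inequality with non-singular covariance) in $\mathbb{R}^r$ to the i.i.d.\ vectors $\mathbf{W}_i$, which have mean zero, positive-definite covariance $\mathsf{V}_r$, and third absolute moment $\beta_r=\mathbb{E}[\|\mathbf{W}_1\|_2^3]$, bounds the right-hand side by $400\,r^{1/4}\beta_r/(\lambda_{\min}(\mathsf{V}_r)^{3/2}\sqrt n)$. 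Since $r\leq d$, the monotonicity of $x\mapsto x^{1/4}$ yields the desired $d^{1/4}$ factor.

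The main obstacle is the careful bookkeeping at the dimension-reduction step: I need to justify that the ambient rectangular event $\{\mathsf{T}\mathbf{w}\leq\mathbf{z}\}$ truly corresponds to a convex event on $\mathbb{R}^r$ rather than something lower-dimensional for degenerate choices of $\mathbf{z}$, and that $\mathbf{Z}$ and the partial sums remain supported on $\mathrm{range}(\mathsf{T})$ a.s., so that no probability mass is lost in the translation. Once this is in hand, the remaining steps are essentially an invocation of the existing non-singular bound composed with the bound $r^{1/4}\leq d^{1/4}$.
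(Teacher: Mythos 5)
Your proposal is correct and takes essentially the same route as the paper's proof: identify the thin spectral decomposition $\mathsf{V}=\mathsf{T}\mathsf{V}_r\mathsf{T}^\top$, pull the rectangular event in $\mathbb{R}^d$ back to the convex set $\{\mathbf{w}\in\mathbb{R}^r:\mathsf{T}\mathbf{w}\leq\mathbf{z}\}$, and invoke Tan--Kosut's Corollary~8 on the non-singular $\mathbf{W}_i$, yielding the $r^{1/4}$ constant, which is then relaxed to $d^{1/4}$. (Your remark about "degenerate" choices of $\mathbf{z}$ is a non-issue: an intersection of half-spaces in $\mathbb{R}^r$ is always convex, possibly empty or lower-dimensional, and Bentkus' bound is over all convex sets, so nothing further needs checking.)
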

\begin{proof}
	Appendix~\ref{append-lem-b-e}.
\end{proof}
If $r = d$, then $\mathsf{V}_r = \mathsf{V}$ 
and Lemma~\ref{lem-b-e} recovers \cite[Cor.~8]{tan-kosut}.

The following lemma is useful for our asymptotic analysis.
\begin{lem} \label{lem-sve} 
Fix an arbitrary $d \times d$ positive-semidefinite matrix~$\mathsf{V}$ 
and $0 < \epsilon < 1$. Then, the following results hold.  
\begin{enumerate}[leftmargin=1.3\parindent]
\item There exist constants $D_1$ and $\delta_1 > 0$ 
such that for all $0 \leq \delta < \delta_1$,
\begin{equation}
\mathscr{Q}_{\rm inv}(\mathsf{V},\epsilon) \subseteq 
\mathscr{Q}_{\rm inv}\left(\mathsf{V},\epsilon - \delta \right)
 - D_1\delta\mathbf{1}. \label{eq-lem-sve-1}
\end{equation}
\label{part1}
\item There exist constants $D_2$ and $\delta_2 > 0$ 
such that for all $0 \leq \delta < \delta_2$,
\begin{equation}
\mathscr{Q}_{\rm inv}(\mathsf{V},\epsilon) 
\supseteq \mathscr{Q}_{\rm inv}\left(\mathsf{V},\epsilon + \delta \right) 
+ D_2\delta\mathbf{1}. \label{eq-lem-sve-2}
\end{equation}
\label{part2}
\end{enumerate}
\end{lem}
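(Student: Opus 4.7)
The plan is to show that both containments follow from a single uniform linear-growth estimate for the Gaussian CDF $F(\mathbf{z}) \triangleq \mathbb{P}[\mathbf{Z} \leq \mathbf{z}]$ with $\mathbf{Z} \sim \mathcal{N}(\mathbf{0},\mathsf{V})$, along the direction $\mathbf{1}$. After unpacking the Minkowski sums, Part~\ref{part1} is equivalent to the implication $F(\mathbf{z}) \geq 1-\epsilon \Rightarrow F(\mathbf{z} + D_1\delta\mathbf{1}) \geq 1-\epsilon+\delta$, and Part~\ref{part2} to $F(\mathbf{z}) \geq 1-\epsilon-\delta \Rightarrow F(\mathbf{z} + D_2\delta\mathbf{1}) \geq 1-\epsilon$. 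Both follow at once from the existence of constants $c,\eta>0$ (depending on $\mathsf{V}$ and $\epsilon$) such that
\[
F(\mathbf{z} + t\mathbf{1}) - F(\mathbf{z}) \;\geq\; c\,t \quad \text{whenever } F(\mathbf{z}) \in [1-\epsilon-\eta,\,1-\epsilon+\eta] \text{ and } t \in [0,\eta],
\]
since one may then set $D_1 = D_2 = 1/c$ and $\delta_1 = \delta_2 = c\eta$. The complementary regimes, where $F(\mathbf{z})$ is already strictly above the target level, are handled for free by monotonicity of $F$ along $\mathbf{1}$.

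Next I would reduce the linear-growth estimate to a uniform positive lower bound on the directional derivative of $F$. In the non-singular case, $F$ is smooth on $\mathbb{R}^d$ and
\[
(\partial_{\mathbf{1}} F)(\mathbf{y}) \;=\; \sum_{i=1}^{d} f_{Z_i}(y_i)\,\mathbb{P}[Z_j \leq y_j,\; j \neq i \mid Z_i = y_i],
\]
which is continuous on $\mathbb{R}^d$ and strictly positive at every finite $\mathbf{y}$. By the fundamental theorem of calculus, the required estimate would then follow from $c \triangleq \inf_{\mathbf{y} \in \mathcal{S}_\eta}(\partial_{\mathbf{1}} F)(\mathbf{y}) > 0$, where $\mathcal{S}_\eta \triangleq \{\mathbf{y}: F(\mathbf{y}) \in [1-\epsilon-\eta,\,1-\epsilon+\eta]\}$. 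The singular case reduces to the non-singular one via the rank-$r$ factorization $\mathbf{Z} = \mathsf{T}\mathbf{W}$ of Lemma~\ref{lem-b-e}: the event $\{\mathbf{Z}\leq\mathbf{z}\}$ becomes the polytope $\{\mathbf{w}:\mathsf{T}\mathbf{w}\leq\mathbf{z}\}$ in $\mathbf{W}$-space, shifting $\mathbf{z}$ along $\mathbf{1}$ relaxes each defining inequality by $t$, and the analogous derivative expression in terms of the active faces retains continuity and strict positivity.

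The main obstacle I anticipate is the non-compactness of $\mathcal{S}_\eta$, on account of which pointwise positivity of $(\partial_{\mathbf{1}} F)$ does not immediately give a positive infimum. I would resolve this by induction on $d$. No coordinate of a divergent sequence $\mathbf{y}^{(n)} \in \mathcal{S}_\eta$ can tend to $-\infty$, since then $F(\mathbf{y}^{(n)}) \leq \mathbb{P}[Z_i \leq y_i^{(n)}] \to 0$, contradicting $F(\mathbf{y}^{(n)}) \geq 1-\epsilon-\eta$ for $\eta$ sufficiently small; moreover, not all coordinates can tend to $+\infty$, since $F(\mathbf{y}^{(n)}) \to 1$ would then violate the upper bound of the strip. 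Passing to a subsequence, there is a proper nonempty subset $\mathcal{S}\subsetneq[d]$ with $y_i^{(n)} \to +\infty$ for $i\in\mathcal{S}$ and $y_j^{(n)} \to y_j^\ast\in\mathbb{R}$ for $j\notin\mathcal{S}$. In the limit the constraints indexed by $\mathcal{S}$ become vacuous, so $F(\mathbf{y}^{(n)})$ converges to the marginal CDF $\widetilde F$ of $(Z_j)_{j\notin\mathcal{S}}$ at $(y_j^\ast)_{j\notin\mathcal{S}}$, the summands of $(\partial_{\mathbf{1}} F)$ with $i\in\mathcal{S}$ vanish because $f_{Z_i}(y_i^{(n)}) \to 0$, and the remaining summands converge to $(\partial_{\mathbf{1}} \widetilde F)$ at the same point. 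The marginal covariance is a principal submatrix of $\mathsf{V}$ and hence still positive-definite, so the inductive hypothesis yields a strictly positive lower bound on the limit; the base case $d=1$ is immediate since $\mathcal{S}_\eta$ is then a compact interval on which $\phi(\cdot)/\sqrt{[\mathsf{V}]_{1,1}}$ is bounded away from zero. A standard subsequence-extraction argument combining this with continuity on bounded subsets of $\mathcal{S}_\eta$ then delivers $c>0$, completing the proof.
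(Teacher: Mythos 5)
Your proof is correct, and it takes a genuinely different and arguably more complete route than the paper's. The paper's argument applies a second-order Taylor expansion at a fixed $\mathbf{z}$ and chooses $D_1$ relative to the pointwise directional derivative $D'(\mathbf{z}) = \sum_i \partial_{z_i}\Phi(\mathsf{V};\mathbf{z})$ and a pointwise Hessian bound $\xi_{\max}(\mathbf{z})$; it does not make explicit why these choices can be made uniform over the unbounded set $\mathscr{Q}_{\rm inv}(\mathsf{V},\epsilon)$, which is what the lemma statement requires. Your argument addresses exactly this issue: you reduce both inclusions to a linear-growth estimate for $F(\mathbf{z}+t\mathbf{1})$ on a neighborhood of the level set $\{F=1-\epsilon\}$, obtain it by the fundamental theorem of calculus from a uniform positive lower bound on $(\partial_{\mathbf{1}}F)$, and establish uniformity by a subsequence extraction plus induction on $d$ — observing that along a divergent sequence in the strip, the coordinates that escape must escape to $+\infty$, the corresponding summands of $(\partial_{\mathbf{1}}F)$ vanish, and the remaining terms converge to the same functional for a lower-dimensional Gaussian marginal (whose covariance is a principal submatrix of $\mathsf{V}$, hence again positive-definite), so the inductive hypothesis gives a positive limit. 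This avoids any second-order control entirely and supplies the compactness-at-infinity argument that the paper leaves implicit. Both proofs handle the singular $\mathsf{V}$ case by the rank-$r$ factorization of Lemma~\ref{lem-b-e}, and neither spells out the edge cases of that reduction in full; this is an acceptable omission in both. One small caveat: your reduction requires choosing $\eta$ small enough that $1-\epsilon-\eta > 0$ and $1-\epsilon+\eta < 1$, so that the two divergence dichotomies you invoke (no coordinate can go to $-\infty$; not all can go to $+\infty$) actually produce a contradiction with membership in the strip; you should state this explicitly.
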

\begin{proof}
	Appendix \ref{append-sve}.
\end{proof}

\begin{proof}[Proof of Theorem~\ref{thm-cht-third-order}]
Define random variables
\begin{equation}
I_{j} \triangleq 
\sum\limits_{i=1}^{n} \log \frac{P(X_i)}{Q_j(X_i)}, \, j \in [k]
\end{equation} and random vector
\begin{equation}
\mathbf{I} \triangleq \left(I_{j}, \, j \in [k] \right).
\end{equation} For brevity, denote 
\begin{equation}
\boldsymbol{\gamma} \triangleq \left(\gamma_j ,\, j \in [k]\right).
\end{equation}

To prove the achievability part of Theorem~\ref{thm-cht-third-order}, 
we particularize Lemma~\ref{lem-cht-achiev} 
to product distributions $P^{\otimes n}$ and $\left\{Q_j^{\otimes n}\right\}_{j=1}^k$  
to obtain that for any $\boldsymbol{\gamma} \geq \mathbf{0}$, 
there exists a test $P_{Z|X^n}$ for which 
\begin{IEEEeqnarray}{rCl}
\!\!\!\!\!\!\!\!\!\! \mathbb{P} \left[Z=1 \right] 
&=& \mathbb{P} \left[\mathbf{I} \geq \log \boldsymbol{\gamma} \right] 
\label{eq-cht-asymp-1} \\
\!\!\!\!\!\!\!\!\!\! \mathbb{Q}_j \left[Z=1 \right] 
&\leq& \mathbb{E}_P \left[\exp\left(-I_{j}\right)
1\left\{I_{j} \geq \log \gamma_j \right\} \right], \, \forall \, j \in [k]. 
\label{eq-cht-asymp-2} 
\end{IEEEeqnarray} 
Take any $\boldsymbol{\gamma}$ such that 
\begin{equation}
\log \boldsymbol{\gamma} \in n\mathbf{D} 
- \sqrt{n}\mathscr{Q}_{\rm inv}\left(\mathsf{V}, \epsilon - \frac{B}{\sqrt{n}}\right), \label{eq:gammachoice}
\end{equation} 
where $B$ is the constant 
on the right side of \eqref{eq-lem-b-e} for $\mathbf{I}_{n}$, 
which is finite under assumptions~\eqref{assum-ht-1} and~\eqref{assum-ht-2}. Applying Lemma~\ref{lem-b-e} to \eqref{eq-cht-asymp-1} gives 
\begin{IEEEeqnarray}{rCl}
\mathbb{P} \left[Z=1 \right] 
&=& \mathbb{P} \left[\frac{1}{\sqrt{n}}\left(-\mathbf{I}
+ n\mathbf{D}\right) \leq \frac{1}{\sqrt{n}}\left(-\log \boldsymbol{\gamma} 
+ n\mathbf{D}\right) \right] \IEEEeqnarraynumspace \\
&\geq& \mathbb{P} \left[\mathbf{Z} 
\leq \frac{1}{\sqrt{n}}\left(-\log \boldsymbol{\gamma} 
+ n\mathbf{D}\right) \right] - \frac{B}{\sqrt{n}} \\
&\geq& 1 - \epsilon,
\label{eq-cht-asymp-3}
\end{IEEEeqnarray} 
where $\mathbf{Z} \sim \mathcal{N}(\mathbf{0}, \mathsf{V})$ and matrix $\mathsf{V}$ is defined in \eqref{eq-def-cht-v}. 
Applying Lemma~\ref{lem-p} to \eqref{eq-cht-asymp-2} gives 
\begin{IEEEeqnarray}{rCl}
\mathbb{Q}_j \left[Z=1 \right] &\leq& \frac{K_j}{\sqrt{n}}\exp\left(
-\log \gamma_j \right), 
\label{eq-cht-asymp-5} \IEEEeqnarraynumspace
\end{IEEEeqnarray} 
where 
\begin{equation}
K_j 
\triangleq 2\left(\frac{\log 2}{\sqrt{2\pi V_j}} + 2C_0\frac{T_j}{V_j^{3/2}} \right)
\end{equation} 
is a finite positive constant 
by the assumptions in \eqref{assum-ht-1} and \eqref{assum-ht-2}. 
Plugging~\eqref{eq:gammachoice} into~\eqref{eq-cht-asymp-5} 
and noting~\eqref{eq-cht-asymp-3} gives 
\begin{IEEEeqnarray}{rCl}
\lefteqn{\beta_\alpha \left(P, \{Q_j\}_{j=1}^{k}\right)}  \\
&\supseteq& \exp \left\{ -n \mathbf{D} 
+ \sqrt{n}\mathscr{Q}_{\rm inv}\left(\mathsf{V},\epsilon - \frac{C}{\sqrt{n}}\right) 
- \frac{\log n}{2} \mathbf{1} + O(1)\mathbf{1} \right\} \nonumber \\
&\supseteq& \exp \left\{ -n \mathbf{D} 
+ \sqrt{n}\mathscr{Q}_{\rm inv}\left(\mathsf{V},\epsilon\right) 
- \frac{\log n}{2} \mathbf{1} + O(1)\mathbf{1} \right\}, \label{eq-cht-asymp-6}
\end{IEEEeqnarray} 
where \eqref{eq-cht-asymp-6} follows from Lemma~\ref{lem-sve}-\ref{part1}.

For the converse, 
recall from Lemma~\ref{lem-cht-conv} that if $\epsilon = 1 - \alpha$, 
then any $\boldsymbol{\beta}\in \beta_\alpha \left(P, \{Q_j\}_{j=1}^{k}\right)$ 
must satisfy 
\begin{equation}
\epsilon \geq 1 - \mathbb{P} \left[\bigcap\limits_{j=1}^k \left\{-I_{j} \leq \log \frac{1}{\gamma_j}\right\} \right] 
- \sum\limits_{j=1}^k \gamma_j \beta_j \label{eq-cht-asymp-7}
\end{equation} 
for all $\gamma_j\geq 0$,  $j \in [k]$.
Take 
\begin{equation}
\gamma_j = \frac{1}{\beta_j \sqrt{n}}, \, j \in [k].
\end{equation} Then, \eqref{eq-cht-asymp-7} becomes
\begin{IEEEeqnarray}{rCl}
\!\!\!\!\!\!\!\! \epsilon &\geq& 1 - \mathbb{P} \left[-\boldsymbol{I} \leq \log \boldsymbol{\beta} + \frac{\log n}{2} \mathbf{1} \right] - \frac{k}{\sqrt{n}} \\
\!\!\!\!\!\!\!\! &\geq& 1 - \mathbb{P} \left[\mathbf{Z} \leq \frac{1}{\sqrt{n}} \left(\log \boldsymbol{\beta} + n \mathbf{D} + \frac{\log n}{2} \mathbf{1} \right)\right] - \frac{B + k}{\sqrt{n}}, \label{eq-cht-asymp-8} 
\end{IEEEeqnarray} 
where~\eqref{eq-cht-asymp-8} applies Lemma~\ref{lem-b-e} and 
$B$ is the constant in the right side of \eqref{eq-lem-b-e}. 
By the definition of $\mathscr{Q}_{\rm inv}\left(\mathsf{V}, \epsilon\right)$ in \eqref{eq-def-sve}, \eqref{eq-cht-asymp-8} implies that
\begin{equation}
\! \boldsymbol{\beta} \in \exp \left\{ -n \mathbf{D} + \sqrt{n}\mathscr{Q}_{\rm inv}\left(\mathsf{V},\epsilon + \frac{B + k}{\sqrt{n}}\right) - \frac{\log n}{2} \mathbf{1} \right\}. \label{eq-cht-asymp-10}
\end{equation} Applying Lemma~\ref{lem-sve}-\ref{part2}, we conclude from \eqref{eq-cht-asymp-10} that 
\begin{IEEEeqnarray}{rCl}
\IEEEeqnarraymulticol{3}{l}{\beta_\alpha \left(P, \{Q_j\}_{j=1}^{k}\right)} \nonumber \\
&\subseteq& \exp \left\{ -n \mathbf{D} + \sqrt{n}\mathscr{Q}_{\rm inv}\left(\mathsf{V},\epsilon\right) - \frac{\log n}{2} \mathbf{1} - O(1)\mathbf{1} \right\}. \IEEEeqnarraynumspace
\end{IEEEeqnarray}
\end{proof}

\section{Multiple Access Source Coding}
\label{sec-SW}
To simplify notation, 
we focus on MASCs with two encoders. 
Our definitions and results generalize 
to more than two encoders, 
as briefly noted in Remark~\ref{rem-sw-general} below.

\subsection{Definitions} \label{sec-def-sw}
In a MASC~\cite{slepian-wolf}, 
also known as a Slepian-Wolf source code, 
independent encoders compress 
a pair of random variables $(X_{1}, X_{2})$ 
with discrete alphabets $\mathcal{X}_1$ and $\mathcal{X}_2$. 
Encoder $i$, $i\in[2]$, observes only $X_{i}$, 
which it maps to a codeword in $[M_i]$; 
a single decoder jointly decodes the pair of codewords 
to reconstruct $(X_{1}, X_{2})$. 
We first define codes for abstract random objects 
and then particularize to 
random objects that live in an alphabet 
endowed with a Cartesian product structure.

\begin{defn}[MASC] \label{def-sw} 
An $(M_1,M_2,\epsilon)$ MASC 
for random variables $(X_1,X_2)$ 
with discrete alphabets $\mathcal{X}_1$ and $\mathcal{X}_2$ 
comprises two encoding functions 
$\mathsf{f}_1\colon \mathcal{X}_1 \rightarrow [M_1]$ and 
$\mathsf{f}_2\colon \mathcal{X}_2 \rightarrow [M_2]$ 
and a decoding function, 
$\mathsf{g}\colon [M_1] \times [M_2] \rightarrow \mathcal{X}_1 \times \mathcal{X}_2$ 
with error probability 
$\mathbb{P}\left[\mathsf{g}(\mathsf{f}_1(X_1),\mathsf{f}_2(X_2)) \neq (X_1,X_2)\right] 
\leq \epsilon$.
\end{defn}

In block coding, 
encoders individually observe $X_{1}^n$ and $X_{2}^n$ 
drawn from distribution $P_{X_{1}^n X_{2}^n}$ 
on $\mathcal{X}_1^n \times \mathcal{X}_2^n$. 
Our block MASC definition
is similar to those in~\cite{tan-kosut} and~\cite{nomura-han}.

\begin{defn}[Block MASC] \label{def-sw-block} 
An $(n,M_1,M_2,\epsilon)$ MASC 
is an $(M_1,M_2,\epsilon)$ MASC 
for random vectors $(X_1^n, X_2^n)$ 
on $\mathcal{X}_1^n \times \mathcal{X}_2^n$. 
The code rate $\mathbf{R}=(R_1,R_2)$ is given by 
\begin{equation}
R_1 \triangleq \frac{1}{n}\log M_1, \, R_2 \triangleq \frac{1}{n}\log M_2.
\end{equation} 
\end{defn}

\begin{defn}[$(n,\epsilon)$-rate region] \label{def-sw-rate-region} 
Rate $\mathbf{R} = (R_1,R_2)$ is $(n,\epsilon)$-achievable 
if there exists an $(n,M_1,M_2,\epsilon)$ MASC 
with $R_1 \leq \frac{1}{n}\log M_1$ and $R_2 \leq \frac{1}{n}\log M_2$. 
The $(n,\epsilon)$-rate region $\mathscr{R}^{*}(n,\epsilon)$ 
is the closure of the set of $(n,\epsilon)$-achievable rate pairs. 
\end{defn}

While definitions~\ref{def-sw-block} and \ref{def-sw-rate-region} 
apply to arbitrary discrete random variables 
$(X_{1i},X_{2i})$, $i = 1, 2, \ldots$, 
with transition probability kernels 
$P_{(X_{1} X_{2})_{i}|(X_{1}X_{2})^{i-1}}$, 
our asymptotic analysis focuses on stationary, memoryless sources, 
where $P_{(X_{1} X_{2})_{i}|(X_{1}X_{2})^{i-1}} = P_{X_1 X_2}$ 
for all $i = 1, 2, \ldots$ 

For any rate $\mathbf{R} = (R_1, R_2)$ 
and distribution $P_{X_1X_2}$, define 
\begin{equation}
\overline{\mathbf{R}} \triangleq
\begin{bmatrix}
R_1 \\
R_2 \\
R_1 + R_2
\end{bmatrix}, \,
\overline{\mathbf{H}} \triangleq
\begin{bmatrix}
H(X_{1}|X_2) \\
H(X_{2}|X_1) \\
H(X_{1},X_2)
\end{bmatrix}. \label{def-R-H}
\end{equation}

\subsection{Background}
\label{sec-b-SW}
In \cite{slepian-wolf}, Slepian and Wolf prove that 
if $(X_{1}^{n},X_{2}^{n})$ are stationary and memoryless, 
then for every $\epsilon\in(0,1)$, 
\begin{IEEEeqnarray}{rCl}
\lim_{n \rightarrow \infty} \mathscr{R}^{*}(n,\epsilon) 
= \{(R_1,R_2): R_1 &\geq& H(X_1|X_2)\nonumber \\
R_2 &\geq& H(X_2|X_1) \nonumber \\
R_1 + R_2 &\geq& H(X_1,X_2)\}, \IEEEeqnarraynumspace \label{eq-asymp-sw}
\end{IEEEeqnarray} 
(i.e., the strong converse holds). 
We call this region the asymptotic MASC rate region. 

In~\cite{miyake-kanaya}, Miyake and Kanaya 
give achievability and converse bounds 
for finite-blocklength coding on finite-alphabet sources.  
In~\cite{han}, Han gives corresponding results 
for sources with countable alphabets. 
While these results are stated in~\cite{han} for general sources 
whose alphabets adopt $n$-fold Cartesian product structures, 
we here describe them in an abstract form.
 
\begin{thm}[Achievability, Han {\cite[Lemma~7.2.1]{han}}]
\label{lem-han-achiev}
Given discrete random variables $(X_1, X_2)$, 
there exists an $(M_1,M_2,\epsilon)$ MASC satisfying
\begin{IEEEeqnarray}{rCl}
\epsilon &\leq& \mathbb{P} \left[\left\{\imath(X_1|X_2) 
\geq \log M_1 - \gamma \right\} \right. \nonumber \\
&& \left. \cup  \left\{\imath(X_2|X_1) \geq \log M_2 - \gamma\right\} \right. 
\nonumber \\ 
&&\left. \cup \left\{\imath(X_1,X_2)\geq \log M_1M_2 - \gamma \right\} \right] 
+ 3\exp\left(-\gamma\right), \IEEEeqnarraynumspace
\end{IEEEeqnarray} where $\gamma >0$ is an arbitrary constant.
\end{thm}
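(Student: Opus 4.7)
\textbf{Proof proposal for Theorem~\ref{lem-han-achiev}.}
The plan is to use random binning at each encoder together with a joint threshold decoder in the spirit of Shannon's random coding argument, adapted to the two-encoder setting. Specifically, I would construct each encoder independently: draw $\mathsf{f}_1(x_1)$ uniformly on $[M_1]$ independently for every $x_1\in\mathcal{X}_1$, and analogously for $\mathsf{f}_2(x_2)$ on $[M_2]$. Define the decoding set
\[
\mathcal{D}(\gamma)\triangleq\bigl\{(x_1,x_2):\;\imath(x_1|x_2)<\log M_1-\gamma,\;\imath(x_2|x_1)<\log M_2-\gamma,\;\imath(x_1,x_2)<\log M_1M_2-\gamma\bigr\},
\]
and let $\mathsf{g}(m_1,m_2)$ output the unique pair $(x_1,x_2)\in\mathcal{D}(\gamma)$ with $\mathsf{f}_1(x_1)=m_1$ and $\mathsf{f}_2(x_2)=m_2$ if one exists, and declare an error otherwise.

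Next I would bound the ensemble-average error probability by a union bound over four disjoint error events: (i) the true pair $(X_1,X_2)$ falls outside $\mathcal{D}(\gamma)$, which yields exactly the first probability in the claimed bound; and three collision events indexed by which coordinates a spurious candidate $(\bar x_1,\bar x_2)\in\mathcal{D}(\gamma)$ shares with $(X_1,X_2)$, namely $\bar x_1\neq X_1,\bar x_2=X_2$; $\bar x_1=X_1,\bar x_2\neq X_2$; and $\bar x_1\neq X_1,\bar x_2\neq X_2$. Conditioning on $(X_1,X_2)=(x_1,x_2)$ and on the bin indices, each collision probability factors because the random bin assignments are independent across source letters and across the two encoders.

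The key calculation, which is the main technical step, controls each collision term via a counting argument. For the first type, by the union bound and the uniform binning at encoder~1,
\[
\mathbb{P}\bigl[\exists\,\bar x_1\neq x_1:\mathsf{f}_1(\bar x_1)=\mathsf{f}_1(x_1),\,\imath(\bar x_1|x_2)<\log M_1-\gamma\bigr]\le\frac{1}{M_1}\bigl|\{\bar x_1:P_{X_1|X_2}(\bar x_1|x_2)>M_1^{-1}\exp(\gamma)\}\bigr|.
\]
Because $P_{X_1|X_2}(\cdot|x_2)$ is a probability measure, the cardinality on the right is at most $M_1\exp(-\gamma)$, leaving $\exp(-\gamma)$. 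The second type is symmetric with $(X_1,X_2)$ swapped. The third type uses independence of $\mathsf{f}_1,\mathsf{f}_2$, giving a collision probability of $1/(M_1M_2)$ per candidate pair, and the analogous Markov-style count applied to the joint distribution $P_{X_1X_2}$ bounds the number of pairs with $\imath(\bar x_1,\bar x_2)<\log M_1M_2-\gamma$ by $M_1M_2\exp(-\gamma)$. Summing yields the $3\exp(-\gamma)$ residual.

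The main obstacle is bookkeeping: one must take care that the three collision events are analyzed under the correct conditional distributions (the first two conditional on one component of $(X_1,X_2)$, the third under the joint law), and that the information thresholds used in the decoder are exactly those that make the counting bounds align with the three terms in the union inside the probability on the right-hand side. Once those bookkeeping choices are made consistently, the bound follows directly, and the existence of a single deterministic $(M_1,M_2,\epsilon)$ MASC achieving it follows by the standard extraction argument (there exists a realization of the random encoders whose error probability does not exceed the ensemble average).
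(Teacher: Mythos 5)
Your proposal is correct, and it reproduces the standard random-binning plus threshold-decoding argument. Note that the paper itself does not prove this statement: it is quoted from Han~\cite[Lemma~7.2.2]{han} as known background, and the paper's own new achievability result for the MASC (Theorem~\ref{thm-sw-rcu}) deliberately replaces the threshold decoder with a maximum-likelihood decoder to improve the third-order term. Your argument is the one Han (and Miyake--Kanaya) give, and your bookkeeping is sound: the three collision cases partition the spurious candidates by which coordinates they share with the true pair, and the ``Markov-style'' counting bound (a set of atoms each exceeding mass $\exp(\gamma)/M_1$ under a probability measure has at most $M_1\exp(-\gamma)$ elements) gives the $\exp(-\gamma)$ residual per case. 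Two trivial wording nits: the four error events are not in fact disjoint (the true pair being outside $\mathcal{D}(\gamma)$ can coexist with a collision), but the union bound does not need disjointness so nothing breaks; and when bounding a given collision type you use only one of the three membership conditions in $\mathcal{D}(\gamma)$, which is fine since dropping the other constraints only enlarges the event. The extraction step at the end is the standard existence-by-averaging argument and is correct.
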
 

\begin{thm}[Converse, Han {\cite[Lemma~7.2.2]{han}}] \label{lem-han-conv}
Any $(M_1,M_2,\epsilon)$ MASC on discrete random variables $(X_1, X_2)$ satisfies 
\begin{IEEEeqnarray}{rCl}
\epsilon &\geq& \mathbb{P} \left[\left\{\imath(X_1|X_2) \geq \log M_1 
+ \gamma \right\} \right. \nonumber \\
&& \left. \cup  \left\{\imath(X_2|X_1) \geq \log M_2 
+ \gamma\right\} \right. \nonumber \\ 
&&\left. \cup \left\{\imath(X_1,X_2)\geq \log M_1M_2 
+\gamma \right\}\right] - 3\exp\left(-\gamma\right), \label{eq-lem-h} 
\IEEEeqnarraynumspace
\end{IEEEeqnarray} where $\gamma >0$ is an arbitrary constant.
\end{thm}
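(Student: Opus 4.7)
The plan is to work directly with the ``success set''
\[
\mathcal{S} \triangleq \{(x_1,x_2) \in \mathcal{X}_1 \times \mathcal{X}_2 : \mathsf{g}(\mathsf{f}_1(x_1),\mathsf{f}_2(x_2)) = (x_1,x_2)\},
\]
so that $1-\epsilon \leq \mathbb{P}[(X_1,X_2)\in\mathcal{S}]$. Equivalently, writing $\mathcal{B}$ for the union of the three threshold events appearing in \eqref{eq-lem-h}, the statement to be proved reduces to the bound
\[
\mathbb{P}[(X_1,X_2)\in\mathcal{S}\cap\mathcal{B}] \leq 3\exp(-\gamma),
\]
since $\mathbb{P}[\mathcal{B}] \leq \mathbb{P}[\mathcal{S}\cap\mathcal{B}] + \mathbb{P}[\mathcal{S}^c] \leq \mathbb{P}[\mathcal{S}\cap\mathcal{B}] + \epsilon$.

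The first step is to record the key combinatorial cardinality bounds that follow purely from the fact that the encoders and the decoder are deterministic mappings. Because $\mathsf{g}$ is a function, each message pair $(m_1,m_2)\in[M_1]\times[M_2]$ has at most one preimage in $\mathcal{S}$, giving $|\mathcal{S}|\leq M_1M_2$. More refined: for each fixed $x_2\in\mathcal{X}_2$, the set $\mathcal{S}(x_2)\triangleq\{x_1:(x_1,x_2)\in\mathcal{S}\}$ satisfies $|\mathcal{S}(x_2)|\leq M_1$, since on $\mathcal{S}(x_2)$ the first encoder $\mathsf{f}_1$ must be injective (otherwise two distinct values of $x_1$ would map to identical messages, forcing identical decoder output). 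Symmetrically $|\mathcal{S}(x_1)|\leq M_2$ for each $x_1$.

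The second step applies the union bound, splitting $\mathbb{P}[\mathcal{S}\cap\mathcal{B}]$ into three summands and bounding each by $\exp(-\gamma)$. For the joint-information term, sum $P_{X_1X_2}(x_1,x_2)\leq \exp(-\gamma)/(M_1M_2)$ over the at-most-$M_1M_2$ pairs in $\mathcal{S}$. For the conditional-information term $\{\imath(X_1|X_2)\geq\log M_1+\gamma\}\cap\mathcal{S}$, condition on $X_2=x_2$ and use $P_{X_1|X_2}(x_1|x_2)\leq\exp(-\gamma)/M_1$ on the event together with $|\mathcal{S}(x_2)|\leq M_1$, then average over $x_2$. The third term is handled symmetrically by swapping the roles of $X_1$ and $X_2$.

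The argument is essentially routine once the correct cardinality bounds are in hand. The only mild subtlety, which is the step deserving the most care, is the conditional cardinality bound $|\mathcal{S}(x_2)|\leq M_1$: it requires observing that although decoding errors may exist, restricted to $\mathcal{S}$ the induced map $x_1\mapsto \mathsf{f}_1(x_1)$ must be injective for each fixed $x_2$. Once this is in place, the three union-bound terms match the three information quantities exactly, yielding the claimed $3\exp(-\gamma)$ slack.
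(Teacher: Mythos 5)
Your proof is correct, and the key ideas are exactly right: the success-set $\mathcal{S}$, the counting bounds $|\mathcal{S}|\le M_1M_2$ and $|\mathcal{S}(x_2)|\le M_1$ (by injectivity of $\mathsf{f}_1$ restricted to each column of $\mathcal{S}$), and the three-way union bound with the threshold events each contributing at most $\exp(-\gamma)$.

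This direct combinatorial argument is the classical route (it is essentially Han's original proof), and the paper does use it — but for the strictly more general CF-MASC converse in Appendix~\ref{append-cf-rasc-conv} (Theorem~\ref{thm-cf-rasc-conv}), where the cross-section bounds become $|\mathcal{S}_1(x_2)|\le LM_1$ and $|\mathcal{S}_2(x_1)|\le LM_2$; setting $L=1$ recovers your argument verbatim. Within the MASC section itself, however, the paper derives Theorem~\ref{lem-han-conv} by a \emph{different} route: it first proves the hypothesis-testing converse (Theorem~\ref{thm-sw-cht-conv}) showing $(\beta_1^*,\beta_2^*,\beta_3^*)\in\beta_{1-\epsilon}(P_{X_1X_2},\{Q^{(j)}\})$, then specializes to counting measures and applies the composite-HT converse bound of Lemma~\ref{lem-cht-conv} with $\gamma_1=\exp(-\gamma)/M_1$, etc. The trade-off: your argument is shorter and self-contained, whereas the paper's HT route produces Theorem~\ref{thm-sw-cht-conv} as a genuinely stronger intermediate result (shown equivalent to the LP-based converse of Jose--Kulkarni), of which Han's bound is a weakening obtained by discarding the freedom in the choice of auxiliary measures $Q^{(j)}$ and constants $\gamma_j$. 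Note also that your cross-section argument is the step that does \emph{not} carry over to stochastic encoders/decoders, whereas the HT proof in the paper is stated for stochastic codes; this is immaterial for the MASC (stochastic codes cannot help) but is worth knowing when adapting the argument.
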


In \cite{jose-k}, Jose and Kulkarni 
derive a new linear programming (LP) finite-blocklength converse, 
tightening the bound in Theorem~\ref{lem-han-conv} 
with an extra non-negative term (see \cite[Cor.~13]{jose-k}).

\begin{thm}[LP-based converse, {\cite[Th.~12]{jose-k}}] \label{thm-lp-conv}
Any $(M_1,M_2,\epsilon)$ MASC 
on discrete random variables $(X_1, X_2)$ satisfies
\begin{IEEEeqnarray}{rCl}
\epsilon 
&\geq& \sup\limits_{\phi_1, \phi_2, \phi_3} 
\left\{\sum\limits_{\substack{{x_1 \in \mathcal{X}_1} \\ {x_2 \in \mathcal{X}_2}}} 
\min \left\{P_{X_1X_2}(x_1, x_2), \, \sum\limits_{j=1}^{3}\phi_j(x_1,x_2) 
\right\} \right. \nonumber \\
&& - M_1 \sum\limits_{x_2 \in \mathcal{X}_2} 
\max\limits_{\hat{x}_1 \in \mathcal{X}_1}\phi_1(\hat{x}_1,x_2) 
- M_2 \sum\limits_{x_1 \in \mathcal{X}_1} 
\max\limits_{\hat{x}_2 \in \mathcal{X}_2}\phi_2(x_1,\hat{x}_2) \nonumber \\
&& \left. \vphantom{\sum\limits_{x_1}} - M_1M_2 
\max\limits_{\hat{x}_1 \in \mathcal{X}_1, \, \hat{x}_2 \in \mathcal{X}_2} 
\phi_3(\hat{x}_1,\hat{x}_2) \vphantom{
\sum\limits_{\substack{{x_1 \in \mathcal{X}_1} \\ {x_2 \in \mathcal{X}_2}}}}\right\}, 
\label{eq-lem-jk} \IEEEeqnarraynumspace
\end{IEEEeqnarray} 
where the supremum is over 
$\phi_1, \phi_2, \phi_3: \mathcal{X}_1 \times \mathcal{X}_2 \rightarrow [0, 1]$ 
such that $0 \leq \phi_1(x_1,x_2), \, \phi_2(x_1,x_2), \, \phi_3(x_1,x_2) 
\leq P_{X_1X_2}(x_1,x_2)$ 
for all $(x_1,x_2) \in \mathcal{X}_1 \times \mathcal{X}_2$.
\end{thm}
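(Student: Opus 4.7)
My plan is to derive~\eqref{eq-lem-jk} by a linear-programming relaxation that discards the identity of the encoders and decoder while preserving three combinatorial size constraints satisfied by every MASC. Fix an arbitrary $(M_1,M_2,\epsilon)$ code $(\mathsf{f}_1,\mathsf{f}_2,\mathsf{g})$ on $(X_1,X_2)$, and introduce the success indicator $s(x_1,x_2) \triangleq \mathbf{1}\{\mathsf{g}(\mathsf{f}_1(x_1),\mathsf{f}_2(x_2)) = (x_1,x_2)\}$, so that $1-\epsilon \leq \sum_{x_1,x_2}P_{X_1X_2}(x_1,x_2)\,s(x_1,x_2)$. It suffices to upper bound this sum by one minus the right-hand side of~\eqref{eq-lem-jk}.

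First I would establish three constraints that $s$ must obey for any MASC: (i) $\sum_{x_1} s(x_1,x_2) \leq M_1$ for every $x_2$; (ii) $\sum_{x_2} s(x_1,x_2) \leq M_2$ for every $x_1$; and (iii) $\sum_{x_1,x_2} s(x_1,x_2) \leq M_1 M_2$. For (i), if $x_1 \neq x_1'$ both satisfy $s(x_1,x_2) = s(x_1',x_2) = 1$, then $\mathsf{g}(\mathsf{f}_1(x_1),\mathsf{f}_2(x_2)) = (x_1,x_2) \neq (x_1',x_2) = \mathsf{g}(\mathsf{f}_1(x_1'),\mathsf{f}_2(x_2))$, which forces $\mathsf{f}_1(x_1) \neq \mathsf{f}_1(x_1')$, so at most $|\mathsf{f}_1(\mathcal{X}_1)| \leq M_1$ such $x_1$'s exist. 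Constraint (ii) is symmetric, and (iii) follows because at most $M_1 M_2$ index pairs $(m_1,m_2)$ are available and $\mathsf{g}$ is single-valued.

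Next, for any $\phi_1,\phi_2,\phi_3\colon \mathcal{X}_1\times\mathcal{X}_2 \to \mathbb{R}$ with $0 \leq \phi_j \leq P_{X_1X_2}$, I would use the pointwise inequality
\begin{equation*}
P_{X_1X_2}(x_1,x_2)\,s(x_1,x_2) \leq \sum_{j=1}^{3}\phi_j(x_1,x_2)\,s(x_1,x_2) + \Bigl(P_{X_1X_2}(x_1,x_2) - \sum_{j=1}^{3}\phi_j(x_1,x_2)\Bigr)^{+},
\end{equation*}
which is immediate from $0 \leq s \leq 1$. Summing over $(x_1,x_2)$ and recognizing that $\sum_{x_1,x_2}\bigl(P_{X_1X_2} - \sum_{j}\phi_j\bigr)^{+} = 1 - \sum_{x_1,x_2}\min\bigl\{P_{X_1X_2},\sum_{j=1}^{3}\phi_j\bigr\}$ turns the residual into the first summation appearing in~\eqref{eq-lem-jk}. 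To bound each weighted sum $\sum \phi_j\,s$, I would apply the structural constraints: for $\phi_1$, fix $x_2$, note that the inner sum has at most $M_1$ nonzero terms by (i), each bounded by $\max_{\hat{x}_1}\phi_1(\hat{x}_1,x_2)$, yielding $\sum_{x_1,x_2}\phi_1\,s \leq M_1\sum_{x_2}\max_{\hat{x}_1}\phi_1(\hat{x}_1,x_2)$; symmetric estimates for $\phi_2$ and $\phi_3$ produce the remaining $M_2$ and $M_1 M_2$ penalty terms. Taking the complement and then the supremum over admissible $(\phi_1,\phi_2,\phi_3)$ finishes the argument.

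The main obstacle is recognizing the sharp form of constraint (i): one must exploit the fact that $\mathsf{f}_1$ is a \emph{function} on $\mathcal{X}_1$ shared across all values of $x_2$, rather than merely that $[M_1]$ has cardinality $M_1$. The upper restriction $\phi_j \leq P_{X_1X_2}$ is not needed for the bound to be valid, but it tightens it: replacing $\phi_j$ by $\min\{\phi_j,P_{X_1X_2}\}$ leaves $\min\{P_{X_1X_2},\sum_j\phi_j\}$ unchanged while weakly decreasing each $\max_{\hat{x}_i}$ term, so the supremum is attained within the stated feasible region.
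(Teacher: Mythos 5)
Your proof is correct, and it is a genuine alternative to the route taken by the paper. Note that the paper does not prove Theorem~\ref{thm-lp-conv} itself; it is imported from Jose and Kulkarni~\cite{jose-k}, and the paper's contribution is to prove the HT converse (Theorem~\ref{thm-sw-cht-conv}) and then establish in Appendix~\ref{append-equivalence}, using the variational Lemma~\ref{lem-cht-var}, that the two converses are equivalent. Your argument is a direct, self-contained derivation that dispenses with the hypothesis-testing abstraction. The parallel between the two is worth noting: your packing constraints (i)--(iii) on the success indicator $s(x_1,x_2)$ are exactly the structural facts the paper exploits when bounding $\mathbb{Q}^{(j)}[Z=1]$ in the proof of Theorem~\ref{thm-sw-cht-conv} (the chain \eqref{eq-sw-cht-1}--\eqref{eq-sw-cht-3} is bounding the inner sum over $x_1$ by $1$, which is a probabilistic restatement of your (i)), and your pointwise decomposition $P\,s \le (\sum_j \phi_j)s + (P-\sum_j\phi_j)^+$ followed by summation plays the role of Lemma~\ref{lem-cht-var}. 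What the paper's indirection buys is reuse: the $\beta_\alpha$ object also recovers Han's converse (Theorem~\ref{lem-han-conv}) and feeds directly into the asymptotic analysis (Theorem~\ref{thm-cht-third-order}), whereas your route reaches the LP bound faster but leaves those connections to be built separately. Two small remarks: your constraints (i)--(iii) use that $\mathsf{f}_1,\mathsf{f}_2,\mathsf{g}$ are deterministic functions, which is what Definition~\ref{def-sw} stipulates (for stochastic codes one conditions on the local randomness and averages at the end, which the paper's Theorem~\ref{thm-sw-cht-conv} handles implicitly by working with the conditional kernels directly); and your closing observation that clipping $\phi_j \le P_{X_1X_2}$ is without loss of optimality is the same manipulation that appears around \eqref{eq-append-equiv-1} in Appendix~\ref{append-equivalence}.
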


The best prior asymptotic expansion of the MASC rate region 
is the second-order characterization 
developed independently in~\cite{tan-kosut, nomura-han}. 
In~\cite{tan-kosut}, Tan and Kosut introduce 
an \emph{entropy dispersion matrix}, 
which serves a role similar to the scalar dispersion 
in the point-to-point case~\cite{pol-poo-ver,kontoyiannis-verdu,kostina-verdu}. 
\begin{defn}[Tan and Kosut {\cite[Def.~7]{tan-kosut}}] \label{def-dis-matrix}
	The entropy dispersion matrix $\mathsf{V}$ 
	for random variables $(X_1,X_2)$ 
	is the covariance matrix 
	$\mathsf{V} \triangleq 
	\emph{Cov}\left[\overline{\boldsymbol{\imath}}(X_1,X_2)\right]$ 
	of random vector 
	\begin{equation}
	\overline{\boldsymbol{\imath}}(X_1,X_2) \triangleq
	\begin{bmatrix}
	\imath(X_{1}|X_2) \\
	\imath(X_{2}|X_1) \\
	\imath(X_{1},X_2)
	\end{bmatrix}.
	\end{equation}
\end{defn}
Note that $\mathsf{V}$ is a $3 \times 3$ positive-semidefinite matrix 
with $V(X_{1}|X_2)$, $V(X_{2}|X_1)$, and $V(X_{1},X_2)$ on the diagonal. 

Tan and Kosut \cite{tan-kosut} give a second-order characterization 
of the MASC rate region for finite-alphabet stationary, memoryless sources 
in terms of the asymptotic rate region 
and the entropy dispersion matrix. 
Their result, reproduced below, 
exhibits an $O\big(\frac{\log n}{n}\big)$ gap in the third-order term. 

Define 
\begin{IEEEeqnarray}{rCl}
\mathscr{R}_{\rm in}(n,\epsilon) &\triangleq& \bigg\{\mathbf{R} \in \mathbb{R}^2: 
\overline{\mathbf{R}} \in \overline{\mathbf{H}} 
+ \frac{\mathscr{Q}_{\rm inv}(\mathsf{V},\epsilon)}{\sqrt{n}} 
+  \frac{\nu \log n}{n}\mathbf{1} \bigg\} \notag \\
\label{eq-sw-tk-1}\\
\mathscr{R}_{\rm out}(n,\epsilon) 
&\triangleq& \bigg\{\mathbf{R} \in \mathbb{R}^2:
\overline{\mathbf{R}} \in \overline{\mathbf{H}} 
+ \frac{\mathscr{Q}_{\rm inv}(\mathsf{V},\epsilon)}{\sqrt{n}} 
- \frac{\log n}{n}\mathbf{1} \bigg\}, \notag\\ \label{eq-sw-tk-2}
\end{IEEEeqnarray} 
where $\overline{\mathbf{R}}$ and $\overline{\mathbf{H}}$ 
are defined in \eqref{def-R-H}, 
$\mathsf{V}$ is the entropy dispersion matrix 
for $(X_1,X_2)$ (Definition \ref{def-dis-matrix}), 
$\nu \triangleq |\mathcal{X}_1||\mathcal{X}_2| + \kappa + \frac{3}{2}$, 
and $\kappa$ is the absolute finite positive constant 
from~\cite[Def.~6]{tan-kosut}.

\begin{thm}[{Tan and Kosut \cite[Th.~1]{tan-kosut}}] \label{thm-sw-t-k} 
Consider finite-alphabet, stationary, memoryless sources $(X_1,X_2)$ with 
$P_{X_1 X_2}(x_1,x_2) > 0$ 
for every $(x_1,x_2) \in \mathcal{X}_1 \times \mathcal{X}_2$. 
For any $0 < \epsilon < 1$ and all $n$ sufficiently large, 
	\begin{equation}
	\mathscr{R}_{\rm in}(n,\epsilon)  
	\subseteq \mathscr{R}^{*}(n,\epsilon) 
	\subseteq \mathscr{R}_{\rm out}(n,\epsilon).
	\end{equation} 
\end{thm}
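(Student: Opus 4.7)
The plan is to derive both the inner and outer bounds by particularizing Han's non-asymptotic achievability bound (Theorem~\ref{lem-han-achiev}) and converse bound (Theorem~\ref{lem-han-conv}) to the block distribution $P_{X_1X_2}^{\otimes n}$ and then invoking the multidimensional Berry--Esseen inequality (Lemma~\ref{lem-b-e}). For a stationary, memoryless source, $\overline{\boldsymbol{\imath}}(X_1^n,X_2^n)=\sum_{i=1}^{n}\overline{\boldsymbol{\imath}}(X_{1i},X_{2i})$ is a sum of i.i.d.\ vectors with mean $n\overline{\mathbf{H}}$ and covariance $n\mathsf{V}$, so the three-event probability appearing in both Han-type bounds can be approximated by a Gaussian tail up to an $O(1/\sqrt{n})$ error.

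For the converse, I would apply Theorem~\ref{lem-han-conv} with slack $\gamma=\frac{1}{2}\log n$, so that $3\exp(-\gamma)=3/\sqrt{n}$ is of the same order as the Berry--Esseen remainder. An $(n,M_1,M_2,\epsilon)$ code with rate $\mathbf{R}$ must then satisfy
\begin{equation}
\mathbb{P}\!\left[\overline{\boldsymbol{\imath}}(X_1^n,X_2^n)\leq n\overline{\mathbf{R}}+\gamma\mathbf{1}\right]\geq 1-\epsilon-\frac{3}{\sqrt{n}}.
\end{equation}
Centering by $n\overline{\mathbf{H}}$ and normalizing by $\sqrt{n}$, Lemma~\ref{lem-b-e} replaces this probability by the Gaussian measure of the same orthant, forcing $\sqrt{n}(\overline{\mathbf{R}}-\overline{\mathbf{H}})+\gamma n^{-1/2}\mathbf{1}\in\mathscr{Q}_{\rm inv}(\mathsf{V},\epsilon+O(1/\sqrt{n}))$. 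Applying Lemma~\ref{lem-sve} to absorb the $\epsilon$-perturbation as a translation of $\mathscr{Q}_{\rm inv}(\mathsf{V},\epsilon)$ by $-O(1/\sqrt{n})\mathbf{1}$ and rearranging yields $\overline{\mathbf{R}}\in\overline{\mathbf{H}}+\mathscr{Q}_{\rm inv}(\mathsf{V},\epsilon)/\sqrt{n}-(\log n/n)\mathbf{1}$, i.e., $\mathscr{R}^{*}(n,\epsilon)\subseteq\mathscr{R}_{\rm out}(n,\epsilon)$.

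For achievability, I would run the dual argument on Theorem~\ref{lem-han-achiev}. Selecting $\log M_1$ and $\log M_2$ so that the vector $n\overline{\mathbf{R}}-\gamma\mathbf{1}$ lies strictly inside the complementary Gaussian orthant, i.e., $\sqrt{n}(\overline{\mathbf{R}}-\overline{\mathbf{H}})-\gamma n^{-1/2}\mathbf{1}\in\mathscr{Q}_{\rm inv}(\mathsf{V},\epsilon-O(1/\sqrt{n}))$, and using Lemma~\ref{lem-b-e} together with Lemma~\ref{lem-sve} shows that the three-event probability plus $3\exp(-\gamma)$ is at most $\epsilon$, giving $\mathscr{R}_{\rm in}(n,\epsilon)\subseteq\mathscr{R}^{*}(n,\epsilon)$ provided that the extra slack of $\gamma/n$ in the rate is absorbed by the $\nu\log n/n$ term.

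The main obstacle is pinning down the explicit alphabet-dependent constant $\nu=|\mathcal{X}_1||\mathcal{X}_2|+\kappa+\frac{3}{2}$ in the third-order term of $\mathscr{R}_{\rm in}(n,\epsilon)$. A crude multidimensional Berry--Esseen bound contributes only an $O(1)$ Gaussian-approximation constant that merges into the $O(1/n)$ remainder in the rate, not into a coefficient that scales with $|\mathcal{X}_1||\mathcal{X}_2|$. To obtain the precise constant one must upgrade the Gaussian approximation to a strong large-deviation or Edgeworth-type expansion for lattice-valued i.i.d.\ sums; such a refinement is available under the positivity hypothesis $P_{X_1X_2}(x_1,x_2)>0$ and effectively plays the role of \cite[Def.~6]{tan-kosut}'s constant $\kappa$. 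Tracking the alphabet size through this refinement, rather than through the blunt central limit step, is the technically delicate part of the argument.
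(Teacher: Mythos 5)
Theorem~\ref{thm-sw-t-k} is cited from Tan and Kosut; the paper does not prove it. The paper only notes (in the remark following the statement and in Remark~\ref{rem-t-k-conv}) that the Tan--Kosut inner bound is achieved by a \emph{universal} coding scheme, so the $|\mathcal{X}_1||\mathcal{X}_2|$ in $\nu$ comes from type counting, and that their converse bounds the Gaussian orthant probability by a multivariate Taylor approximation.

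Your route is therefore valid but is not Tan and Kosut's. Applying Han's converse with $\gamma=\frac{1}{2}\log n$ together with Lemma~\ref{lem-sve} in place of a Taylor expansion gives the sharper offset $-\frac{\log n}{2n}\mathbf{1}-O\big(\frac{1}{n}\big)\mathbf{1}$, which strictly implies Tan--Kosut's $-\frac{\log n}{n}\mathbf{1}$ outer bound; the paper explicitly records this very combination as an alternative proof of the converse of its own sharper Theorem~\ref{thm-sw}. On the achievability side, Han's bound (threshold decoding, not universal) with the same $\gamma$ and Lemma~\ref{lem-sve} proves achievability of all $\overline{\mathbf{R}}$ with offset $+\frac{\log n}{2n}\mathbf{1}+O\big(\frac{1}{n}\big)\mathbf{1}$. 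Because $\mathscr{Q}_{\rm inv}(\mathsf{V},\epsilon)$ is an upper set and $\nu=|\mathcal{X}_1||\mathcal{X}_2|+\kappa+\frac{3}{2}>\frac{1}{2}$, the set $\mathscr{R}_{\rm in}(n,\epsilon)$ is \emph{contained} in your achievable set for all $n$ sufficiently large, and the inclusion $\mathscr{R}_{\rm in}(n,\epsilon)\subseteq\mathscr{R}^*(n,\epsilon)$ follows a fortiori. Your last paragraph, which tries to reproduce $\nu$ through an Edgeworth or strong-large-deviation refinement, is thus unnecessary: you only need to establish achievability at a \emph{smaller} additive offset than $\nu\log n/n$, for which the crude Berry--Esseen constant already suffices. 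It also misdiagnoses the source of the constant, which is type-class combinatorics in Tan--Kosut's universal decoder rather than a higher-order correction to the central limit theorem.
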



\begin{remark}
The inner boundary defined in \eqref{eq-sw-tk-1} 
is achievable by a universal coding scheme \cite[Sec.~VI]{tan-kosut}. 
The outer bounding region in \eqref{eq-sw-tk-2} 
is based on \cite[Lemma~7.2.2]{han}.
\end{remark}

In~\cite{nomura-han}, Nomura and Han 
use~\cite[Lemma~7.2.1]{han} and~\cite[Lemma~7.2.2]{han} 
to derive a second-order MASC coding theorem 
for stationary, memoryless, dependent sources. 
Their result is equivalent to Theorem~\ref{thm-sw-t-k}
up to the second-order term 
and applies also for countable alphabets. 
Neither~\cite{tan-kosut} nor~\cite{nomura-han} 
finds the precise third-order term.  
In Sections~\ref{sec-sw-nonasymp} and~\ref{sec-result-SW}, below, 
we give new non-asymptotic MASC bounds 
and then apply them to precisely characterize
the third-order asymptotics.  

\subsection{New Non-Asymptotic Bounds} 
\label{sec-sw-nonasymp}
\subsubsection{Achievability}
We present a MASC RCU bound, 
extending Theorem~\ref{thm-rcu-bound} to the multiple-encoder case. 
\begin{thm}[MASC RCU bound] \label{thm-sw-rcu}
	Given discrete random variables $(X_1,X_2)$, 
	there exists an $(M_1,M_2,\epsilon)$ MASC with 
	\begin{equation}
	\epsilon \leq \mathbb{E}\left[ \min \left\{1, \, A_1 + A_2 + A_{12} \right\} \right]
	\label{eq-sw-rcu}
	\end{equation} where
	\begin{IEEEeqnarray}{rCl}
	A_1 &\triangleq& \frac 1 {M_1}
	\mathbb{E}\left[ \exp\left(\imath(\bar{X}_{1}^\prime|X_2)\right) \right.
	\notag \\ 
	&&\left. \qquad 1\left\{\imath(\bar{X}_{1}^\prime |X_2) 
	\leq \imath(X_{1}|X_2)\right\} |X_1,X_2\right] \label{eq-sw-rcu-1} \\ 
	A_2 &\triangleq& \frac 1 {M_2}
	\mathbb{E}\left[ \exp\left(\imath(\bar{X}_{2}^\prime|X_1)\right)\right. \notag \\ 
	&&\left. \qquad 1\left\{\imath(\bar{X}_{2}^\prime|X_1) \leq \imath(X_{2}|X_1)\right\}
	|X_1,X_2 \right] \label{eq-sw-rcu-2}\\
	A_{12} &\triangleq& \frac{1}{M_1 M_2} 
	\mathbb{E}\left[ \exp\left(\imath(\bar{X}_{1},\bar{X}_2)\right) \right. \notag \\ 
	&&\left. \qquad 1\left\{\imath(\bar{X}_1,\bar{X}_2) 
	\leq \imath(X_1,X_2)\right\} |X_1,X_2 \right]  
	\IEEEeqnarraynumspace \label{eq-sw-rcu-3} \\
\lefteqn{
	P_{X_1X_{2}\bar{X}_1\bar{X}_2 \bar{X}_1^\prime \bar{X}_2^\prime}\left(a, b, \bar a, \bar b, \bar a^\prime, \bar b^\prime\right)} \label{eq-sw-rcu-4}\\
	& = &  P_{X_1X_2}(a,b)P_{X_1X_2}(\bar a, \bar b) P_{ X_1 | X_2 }(\bar a^\prime | b) P_{ X_2 | X_1 }(\bar b^\prime | a). \notag
	\end{IEEEeqnarray}
\end{thm}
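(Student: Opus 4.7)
The plan is to mimic the single-encoder random coding argument in Theorem~\ref{thm-rcu-bound} but with a three-way decomposition of the error event. I would draw $\mathsf{F}_1(x_1)$ uniformly on $[M_1]$ independently for each $x_1 \in \mathcal{X}_1$ and $\mathsf{F}_2(x_2)$ uniformly on $[M_2]$ independently for each $x_2 \in \mathcal{X}_2$, with $\mathsf{F}_1$ and $\mathsf{F}_2$ independent of each other. The decoder is the joint maximum-likelihood rule
\[
\mathsf{g}(c_1,c_2)=\arg\max_{(x_1,x_2):\mathsf{F}_1(x_1)=c_1,\,\mathsf{F}_2(x_2)=c_2} P_{X_1X_2}(x_1,x_2),
\]
with ties broken uniformly at random. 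Averaging over the random code and invoking the usual existence argument will produce the desired deterministic $(M_1,M_2,\epsilon)$ MASC.

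Conditional on $(X_1,X_2)$, the decoding error event is contained in the union of three subevents indexed by \emph{which} coordinates of the competing pair differ from $(X_1,X_2)$:
\begin{IEEEeqnarray*}{rCl}
\mathcal{E}_1 &\triangleq&\{\exists\,\bar{x}_1\neq X_1:\imath(\bar{x}_1|X_2)\leq \imath(X_1|X_2),\,\mathsf{F}_1(\bar{x}_1)=\mathsf{F}_1(X_1)\}\\
\mathcal{E}_2 &\triangleq&\{\exists\,\bar{x}_2\neq X_2:\imath(\bar{x}_2|X_1)\leq \imath(X_2|X_1),\,\mathsf{F}_2(\bar{x}_2)=\mathsf{F}_2(X_2)\}\\
\mathcal{E}_{12}&\triangleq&\{\exists\,\bar{x}_1\neq X_1,\bar{x}_2\neq X_2:\imath(\bar{x}_1,\bar{x}_2)\leq \imath(X_1,X_2),\\
&&\qquad \mathsf{F}_1(\bar{x}_1)=\mathsf{F}_1(X_1),\,\mathsf{F}_2(\bar{x}_2)=\mathsf{F}_2(X_2)\}.
\end{IEEEeqnarray*}
Bounding $\mathbb{P}[\mathcal{E}_1\cup\mathcal{E}_2\cup\mathcal{E}_{12}\mid X_1,X_2]$ by $\min\{1,\mathbb{P}[\mathcal{E}_1\mid X_1,X_2]+\mathbb{P}[\mathcal{E}_2\mid X_1,X_2]+\mathbb{P}[\mathcal{E}_{12}\mid X_1,X_2]\}$ and then taking the outer expectation over $(X_1,X_2)$ will produce the structure of \eqref{eq-sw-rcu}; it remains to identify the three summands with $A_1$, $A_2$, $A_{12}$.

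For $\mathcal{E}_1$, the union bound gives $\mathbb{P}[\mathcal{E}_1\mid X_1,X_2]\leq \frac{1}{M_1}\sum_{\bar{x}_1\neq X_1}\mathbf{1}\{\imath(\bar{x}_1|X_2)\leq \imath(X_1|X_2)\}$, since $\mathsf{F}_1(\bar{x}_1)$ and $\mathsf{F}_1(X_1)$ are independent and uniform on $[M_1]$ when $\bar{x}_1\neq X_1$. Relaxing the constraint $\bar{x}_1\neq X_1$ to all of $\mathcal{X}_1$ (which only increases the bound) and using $1/P_{X_1|X_2}(\bar{x}_1|X_2)=\exp(\imath(\bar{x}_1|X_2))$ to rewrite the sum as a conditional expectation over $\bar{X}_1^\prime\sim P_{X_1|X_2}(\cdot|X_2)$ produces exactly $A_1$. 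The treatment of $\mathcal{E}_2$ is symmetric and yields $A_2$. For $\mathcal{E}_{12}$, the codeword-collision probability factorises as $1/(M_1M_2)$ by the independence of $\mathsf{F}_1$ and $\mathsf{F}_2$, and the same trick with $\bar{X}_1,\bar{X}_2\sim P_{X_1X_2}$ independent of $(X_1,X_2)$ yields $A_{12}$. The joint law \eqref{eq-sw-rcu-4} exactly encodes these three independent draws of competitor pairs conditional on $(X_1,X_2)$.

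The only obstacle worth flagging is the bookkeeping around the ``not equal'' restrictions: the union bound is over strict inequalities $\bar{x}_i\neq X_i$, but the final expectations in $A_1,A_2,A_{12}$ range over the full alphabets. This is benign because including the equality terms only loosens each bound, so no correction is needed; one simply has to note this relaxation once per term. Everything else is routine manipulation, and existence of a deterministic code achieving \eqref{eq-sw-rcu} follows from the standard averaging argument.
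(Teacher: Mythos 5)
Your proposal is correct and follows essentially the same route as the paper: the same random encoders, the same joint maximum-likelihood decoder, the same three-way decomposition into $\mathcal{E}_1,\mathcal{E}_2,\mathcal{E}_{12}$, the same union bound inside $\min\{1,\cdot\}$, the same relaxation from $\bar{x}_i\neq X_i$ to all of $\mathcal{X}_i$, and the same rewriting of each sum as a conditional expectation under the auxiliary law \eqref{eq-sw-rcu-4}. The one subtlety you flag (dropping the ``not equal'' restriction only loosens the bound) is exactly the observation the paper relies on implicitly when passing from its equation analogous to \eqref{eq-rcu-2} to the form with sums over the full alphabets.
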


\begin{proof}
	For every $x_i \in \mathcal{X}_i$, $i\in[2]$, 
	draw $\mathsf{F}_i(x_i)$ 
	i.i.d. uniformly at random from $[M_i]$. 
	The maximum likelihood decoder is 
	defined for each $(c_1,c_2) \in [M_1] \times [M_2]$ by
	\begin{IEEEeqnarray}{rcl}
	\mathsf{g}(c_1,c_2) 
	&=& \arg 
	\underset{\substack{{(x_1,x_2)\in \mathcal{X}_1\times \mathcal{X}_2:} \\
	\mathsf{F}_1(x_1)=c_1,\, \mathsf{F}_2(x_2)=c_2}}
	{\min} \imath(x_1,x_2),
	\end{IEEEeqnarray}
	where ties are broken equiprobably at random in the code design. 
	This decoder is optimal for the given encoder. 
	
	We bound the random code's expected error probability 
	by the probability of the union of events 
	\begin{IEEEeqnarray}{rCl}
	\mathcal{E}_1 
	&\triangleq& 
	\{\exists \, \bar{x}_1 \in \mathcal{X}_1\backslash\{X_1\}: \nonumber \\
	&& \imath(\bar{x}_1|X_2) \leq \imath(X_1|X_2), \, 
	\mathsf{F}_1(\bar{x}_1) = \mathsf{F}_1(X_1) \} \\
	\mathcal{E}_2 
	&\triangleq& 
	 	\{\exists \, \bar{x}_2 \in \mathcal{X}_2\backslash\{X_2\}: \nonumber \\
	&& \imath(\bar{x}_2|X_1) \leq \imath(X_2|X_1), \, 
	\mathsf{F}_2(\bar{x}_2) = \mathsf{F}_2(X_2) \} \\
	\mathcal{E}_{12} &\triangleq& \{\exists \, \bar{x}_1 
	\in \mathcal{X}_1\backslash\{X_1\}, \, \bar{x}_2 \in 
	\mathcal{X}_2\backslash\{X_2\}: \nonumber \\ 
	&& \imath(\bar{x}_1,\bar{x}_2) \leq \imath(X_1,X_2), \nonumber \\
	&&\mathsf{F}_1(\bar{x}_1) = \mathsf{F}_1(X_1), \, 
	\mathsf{F}_2(\bar{x}_2) = \mathsf{F}_2(X_2) \}.
	\IEEEeqnarraynumspace \IEEEeqnarraynumspace
	\end{IEEEeqnarray}
	By a derivation similar 
	to that in the proof of Theorem~\ref{thm-rcu-bound}, 
	\begin{IEEEeqnarray}{rCl}
	\lefteqn{\mathbb{E}[\mathbb{P}\left[\{g(F_1(X_1),F_2(X_2))
		\neq(X_1,X_2)\right]}
	\nonumber \\
	& \leq & \mathbb{P}\left[
	\mathcal{E}_1\cup\mathcal{E}_2\cup\mathcal{E}_{12}\right] \label{eq-sw-rcu-11} \\
	&\leq& \mathbb{E} \Bigg[\min \Bigg\{1, \nonumber \\ 
	&& \hphantom{+}\underset{\qquad \mathclap{
	\bar{x}_1\in\mathcal{X}_1\backslash\{X_1\}}}{\sum \,} 
	\mathbb{P}\left[\imath(\bar{x}_1 |X_2) \leq \imath(X_1 |X_2), \, 
	\mathsf{F}_1(\bar{x}_1) = \mathsf{F}_1(X_1)|X_1,X_2\right] \nonumber \\ 
	&&+ \underset{\qquad \mathclap{
	\bar{x}_2\in\mathcal{X}_2\backslash\{X_2\}}}{\sum \,} 
	\mathbb{P}\left[\imath(\bar{x}_2 |X_1) \leq 
	\imath(X_2 |X_1), \, \mathsf{F}_2(\bar{x}_2) = 
	\mathsf{F}_2(X_2)|X_1,X_2\right] \nonumber \\
	&&+ \underset{\qquad \mathclap{\substack{
	{\bar{x}_1 \in \mathcal{X}_1\backslash\{X_1\}}\\
	{\bar{x}_2 \in \mathcal{X}_2\backslash\{X_2\}}} }}{\sum\;} 
	\mathbb{P}\left[\imath(\bar{x}_1,\bar{x}_2) \leq \imath(X_1,X_2),  
	\right. \nonumber \\
	&& \left. \mathsf{F}_1(\bar{x}_1) = \mathsf{F}_1(X_1), \, 
	\mathsf{F}_2(\bar{x}_2) = \mathsf{F}_2(X_2)|X_1,X_2\right] \Bigg\} \Bigg] 
	 \\ 
	&\leq& \mathbb{E} \Bigg[\min \Bigg\{1, 
	\frac{1}{M_1} \sum\limits_{\bar{x}_1\in\mathcal{X}_1}  
	1\{\imath(\bar{x}_1 |X_2) \leq \imath(X_1 |X_2)\} \nonumber \\
	&& \hphantom{\mathbb{E}\min\Bigg\{\}}+ \frac{1}{M_2}
	\sum\limits_{\bar{x}_2\in\mathcal{X}_2} 
	1\{\imath(\bar{x}_2 |X_1) \leq \imath(X_2 |X_1)\} \nonumber \\
	&& + \frac{1}{M_1 M_2} 
	\sum\limits_{ \bar{x}_1\in\mathcal{X}_1, \, \bar{x}_2\in\mathcal{X}_2} 
	1\{\imath(\bar{x}_1,\bar{x}_2) \leq \imath(X_1,X_2)\} \Bigg\}\Bigg], 
	  \label{eq-sw-rcu-7} 
\end{IEEEeqnarray}
	and \eqref{eq-sw-rcu-7} is equal to the right side of \eqref{eq-sw-rcu} as desired. 
\end{proof}

Figure~\ref{fig-bounds} in Section~\ref{sec-sw-comp} 
plots the point-to-point (Theorem~\ref{thm-rcu-bound}) 
and MASC (Theorem~\ref{thm-sw-rcu}) RCU bounds.  

\subsubsection{Converse} \label{sec-sw-ht-conv}
The MASC composite hypothesis testing converse 
employs the set $\beta_\alpha \left(P, \{Q_j\}_{j=1}^{k}\right)$ 
(see Definition~\ref{def-beta-alpha}) 
and its generalization to $\sigma$-finite measures.

\begin{thm}[Hypothesis testing (HT) converse]\label{thm-sw-cht-conv} 
Let $P_{X_1X_2}$ be the source distribution 
defined on $\mathcal{X}_1 \times \mathcal{X}_2$. 
Let $Q^{(1)}_{X_1X_2}$, $Q^{(2)}_{X_1X_2}$, and $Q^{(3)}_{X_1X_2}$ 
be any $\sigma$-finite measures 
defined on $\mathcal{X}_1 \times \mathcal{X}_2$. 
Any $(M_1, M_2,\epsilon)$ MASC satisfies
\begin{IEEEeqnarray}{rCl}
(\beta_1^*, \beta_2^*, \beta_{3}^*) \in 
\beta_{1-\epsilon}\left(P_{X_1X_2}, 
\left\{Q^{(1)}_{X_1X_2},  Q^{(2)}_{X_1X_2}, Q^{(3)}_{X_1X_2} \right\} \right), 
\label{eq-sw-cht} \nonumber \\*
\end{IEEEeqnarray} where
\begin{IEEEeqnarray}{rCl}
\beta_1^* &\triangleq& M_1 \sum\limits_{x_2 \in \mathcal{X}_2} 
\max\limits_{\hat{x}_1 \in \mathcal{X}_1} Q_{X_1X_2}^{(1)}(\hat{x}_1,x_2), \label{eq:beta1star} \\
\beta_2^* &\triangleq& M_2 \sum\limits_{x_1 \in \mathcal{X}_1} 
\max\limits_{\hat{x}_2 \in \mathcal{X}_2} Q_{X_1X_2}^{(2)}(x_1,\hat{x}_2), \label{eq:beta2star} \\
\beta_{3}^* &\triangleq& M_1M_2
\max\limits_{\hat{x}_1 \in \mathcal{X}_1, \, \hat{x}_2 \in \mathcal{X}_2} 
Q_{X_1X_2}^{(3)}(\hat{x}_1,\hat{x}_2).\label{eq:beta3star}
\end{IEEEeqnarray}
\end{thm}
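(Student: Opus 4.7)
The plan is to construct, from any given $(M_1, M_2, \epsilon)$ MASC, a composite hypothesis test whose power and false-positive errors realize the claimed membership in $\beta_{1-\epsilon}$. The test is essentially the indicator of successful decoding, and the three bounds $\beta_1^*, \beta_2^*, \beta_3^*$ emerge from three different ways of counting the size of the success set.

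Concretely, let the MASC have encoders $\mathsf{f}_1, \mathsf{f}_2$ and decoder $\mathsf{g}$, and define the success set
\begin{equation}
\mathcal{S} \triangleq \left\{(x_1,x_2) \in \mathcal{X}_1\times\mathcal{X}_2 : \mathsf{g}(\mathsf{f}_1(x_1), \mathsf{f}_2(x_2)) = (x_1,x_2)\right\}.
\end{equation}
Consider the deterministic test $P_{Z|X_1X_2}(1|x_1,x_2) \triangleq 1\{(x_1,x_2)\in\mathcal{S}\}$. By the MASC error criterion, $\mathbb{P}[Z=1] = P_{X_1X_2}(\mathcal{S}) \geq 1-\epsilon$, which matches the required power level $\alpha = 1-\epsilon$ in Definition~\ref{def-beta-alpha}. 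It then suffices to show $Q^{(j)}_{X_1X_2}(\mathcal{S}) \leq \beta_j^*$ for $j \in [3]$.

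The core combinatorial step is to exploit the decoder's limited range in three different ways. For the joint bound, $|\mathcal{S}| \leq M_1 M_2$ because $\mathsf{g}$ takes at most $M_1 M_2$ distinct values, which immediately yields
\begin{equation}
Q^{(3)}_{X_1X_2}(\mathcal{S}) \leq M_1 M_2 \max_{\hat{x}_1,\hat{x}_2} Q^{(3)}_{X_1X_2}(\hat{x}_1,\hat{x}_2) = \beta_3^*.
\end{equation}
For $\beta_1^*$, I would fix $x_2$ and observe that the map $c_1 \mapsto \mathsf{g}(c_1, \mathsf{f}_2(x_2))$ has image of size at most $M_1$, so the slice $\mathcal{S}_{x_2} \triangleq \{x_1 : (x_1,x_2)\in\mathcal{S}\}$ has at most $M_1$ elements; summing the per-slice bound $\sum_{x_1 \in \mathcal{S}_{x_2}} Q^{(1)}_{X_1X_2}(x_1,x_2) \leq M_1 \max_{\hat{x}_1} Q^{(1)}_{X_1X_2}(\hat{x}_1,x_2)$ over $x_2$ gives $\beta_1^*$. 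The bound for $\beta_2^*$ is symmetric.

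This argument is quite clean; there is no real analytic obstacle, only the bookkeeping step of verifying that the per-slice cardinality bound $|\mathcal{S}_{x_2}| \leq M_1$ holds even when the encoding/decoding functions are allowed to be arbitrary, and the observation that the derivation goes through verbatim when $Q^{(j)}_{X_1X_2}$ are only $\sigma$-finite measures rather than probability measures (so that $\beta_j^*$ need not lie in $[0,1]$), which is why the generalized version of Definition~\ref{def-beta-alpha} to $\sigma$-finite measures noted after that definition is invoked. The main conceptual point worth emphasizing is that the three test-independent quantities $\beta_1^*, \beta_2^*, \beta_3^*$ correspond precisely to the three rate constraints $R_1, R_2, R_1+R_2$ that appear in the Slepian-Wolf region, which is what will later allow this meta-converse to recover Han's bound (Theorem~\ref{lem-han-conv}) and the LP converse of Jose-Kulkarni (Theorem~\ref{thm-lp-conv}) via Lemmas~\ref{lem-cht-conv} and~\ref{lem-cht-var}.
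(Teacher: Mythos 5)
Your proof is correct and follows essentially the same route as the paper: build the test from the decoding-success event and bound the composite false-positives by exploiting the decoder's limited range in three ways. The only difference is that the paper carries the argument through with stochastic encoders and decoders (so the cardinality counts become sums of conditional probabilities bounded by $M_1$, $M_2$, $M_1M_2$), while you work directly with deterministic maps as in Definition~\ref{def-sw}; the key combinatorial observation is identical.
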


\begin{proof}
Consider an $(M_1,M_2,\epsilon)$ MASC 
with stochastic encoders $P_{F_1|X_1}$ and $P_{F_2|X_2}$ 
and stochastic decoder $P_{\hat{X}_1\hat{X}_2|F_1F_2}$, 
where $F_1$ and $F_2$ are the encoder outputs,  
and $(\hat{X}_1,\hat{X}_2)$ is the decoder output. 
Fix distributions $\{Q^{(j)}_{X_1X_2}\}_{j=1}^3$ 
on $\mathcal{X}_1 \times \mathcal{X}_2$. 
Then $Z = 1\big\{(\hat{X}_1,\hat{X}_2) = ({X}_1,{X}_2) \big\}$ 
defines a (sub-optimal) composite HT 
for testing $P_{X_1X_2}$ against $\{Q^{(j)}_{X_1X_2}\}_{j = 1}^{3}$, 
for which $\mathbb{P} \left[Z = 1 \right] \geq 1 - \epsilon$ 
and 
\begin{IEEEeqnarray}{rCl}
\lefteqn{\mathbb{Q}^{(1)} \left[Z = 1\right]} \nonumber \\
&=& \sum\limits_{x_1 \in \mathcal{X}_1} 
\sum\limits_{x_2 \in \mathcal{X}_2} Q^{(1)}_{X_1X_2}(x_1,x_2) 
\cdot \sum\limits_{m_1=1}^{M_1}\sum\limits_{m_2=1}^{M_2}
P_{F_1|X_1}(m_1|x_1) \nonumber \\
&&  \cdot P_{F_2|X_2}(m_2|x_2) 
\cdot P_{\hat{X}_1\hat{X}_2|F_1F_2}(x_1,x_2|m_1,m_2) \\
&\leq& \sum\limits_{x_2 \in \mathcal{X}_2} \max\limits_{\hat{x}_1 \in \mathcal{X}_1} Q^{(1)}_{X_1X_2}(\hat{x}_1,x_2)
 \sum\limits_{m_1=1}^{M_1} 
\sum\limits_{m_2=1}^{M_2}
 P_{F_2|X_2}(m_2|x_2) \nonumber \\
&&\cdot \sum\limits_{x_1 \in \mathcal{X}_1} 
P_{F_1|X_1}(m_1|x_1) P_{\hat{X}_1\hat{X}_2|F_1F_2}(x_1,x_2|m_1,m_2)  \label{eq-sw-cht-1}\\
&\leq& M_1 \sum\limits_{x_2 \in \mathcal{X}_2}
\max\limits_{\hat{x}_1 \in \mathcal{X}_1}Q^{(1)}_{X_1X_2}(\hat{x}_1,x_2), \label{eq-sw-cht-3} 
\end{IEEEeqnarray} 
where \eqref{eq-sw-cht-1} follows since 
$\max\limits_{\hat{x}_1 \in \mathcal{X}_1}Q^{(1)}_{X_1X_2}(\hat{x}_1,x_2)$ 
is independent of $x_1$, and  \eqref{eq-sw-cht-3} follows 
by bounding the probability in the sum over $x_1 \in \mathcal{X}_1$ by 1. Similarly, 
\begin{equation}
\mathbb{Q}^{(2)} \left[Z = 1\right] \leq M_2 \sum\limits_{x_1 \in \mathcal{X}_1} \max\limits_{\hat{x}_2 \in \mathcal{X}_2} Q_{X_1X_2}^{(2)}(x_1,\hat{x}_2),
\end{equation} 
\begin{IEEEeqnarray}{rCl}
\IEEEeqnarraymulticol{3}{l}{\mathbb{Q}^{(3)}\left[Z = 1\right]} \nonumber \\
&=& \sum\limits_{x_1 \in \mathcal{X}_1} \sum\limits_{x_2 \in \mathcal{X}_2} 
Q^{(3)}_{X_1X_2}(x_1,x_2) 
\sum\limits_{m_1=1}^{M_1} \sum\limits_{m_2=1}^{M_2}
P_{F_1|X_1}(m_1|x_1) \cdot \nonumber \\
&& P_{F_2|X_2}(m_2|x_2)P_{\hat{X}_1\hat{X}_2|F_1F_2}(x_1,x_2|m_1,m_2) \\
&\leq& \max\limits_{\hat{x}_1 \in \mathcal{X}_1, \, \hat{x}_2 \in \mathcal{X}_2} 
Q_{X_1X_2}^{(3)}(\hat{x}_1,\hat{x}_2) \cdot  \nonumber \\ 
&&\sum\limits_{m_1=1}^{M_1} \sum\limits_{m_2=1}^{M_2} 
\sum\limits_{x_1 \in \mathcal{X}_1} \sum\limits_{x_2 \in \mathcal{X}_2} 
P_{\hat{X}_1\hat{X}_2|F_1F_2}(x_1,x_2|m_1,m_2) \label{eq-sw-cht-6} \IEEEeqnarraynumspace \\
&=& M_1M_2 \max\limits_{\hat{x}_1\in\mathcal{X}_1, \, 
\hat{x}_2\in \mathcal{X}_2} Q_{X_1X_2}^{(3)}(\hat{x}_1,\hat{x}_2).
\end{IEEEeqnarray} 
Thus~\eqref{eq-sw-cht} holds 
by the definition of $\beta_{1-\epsilon} \left(P, \{Q^{(j)}\}_{j=1}^{k}\right)$.
\end{proof}

To recover Han's converse (Theorem~\ref{lem-han-conv}) 
from Theorem~\ref{thm-sw-cht-conv}, 
let $P_{X_1}$ and $P_{X_2}$ be the marginals of $P_{X_1X_2}$ 
and let $U_{X_1}$, $U_{X_2}$, and $U_{X_1X_2}$ 
be the counting measures over $\mathcal{X}_1$, $\mathcal{X}_2$, 
and $\mathcal{X}_1 \times \mathcal{X}_2$. 
By Theorem~\ref{thm-sw-cht-conv}, 
any $(M_1,M_2,\epsilon)$ MASC  satisfies 
\begin{IEEEeqnarray}{rCl}
\IEEEeqnarraymulticol{3}{l}{(M_1,M_2,M_1M_2)} \nonumber \\
&\in& \beta_{1-\epsilon}\left(P_{X_1X_2}, \left\{U_{X_1}P_{X_2}, P_{X_1}U_{X_2}, U_{X_1X_2} \right\} \right). \IEEEeqnarraynumspace \label{eq-sw-cht-5}
\end{IEEEeqnarray} 
Applying Lemma~\ref{lem-cht-conv} 
to~\eqref{eq-sw-cht-5} with $k = 3$ gives 
\begin{IEEEeqnarray}{rCl}
\epsilon &\geq& \mathbb{P}\left[ \left\{
\imath(X_1|X_2) \geq \log \frac{1}{\gamma_1}\right\} 
\cup \left\{\imath(X_2|X_1) \geq \log \frac{1}{\gamma_2}\right\} \right. 
\nonumber \\
&& \left. \cup \left\{\imath(X_1,X_2) \geq \log \frac{1}{\gamma_3}\right\}\right] 
- \gamma_1 M_1 - \gamma_2 M_2 - \gamma_3 M_1M_2. \nonumber
\end{IEEEeqnarray} 
Setting $\gamma_1 = \frac{\exp\left(-\gamma\right)}{M_1}$, $\gamma_2 = \frac{\exp\left(-\gamma\right)}{M_2}$, and $\gamma_3 = \frac{\exp\left(-\gamma\right)}{M_1M_2}$ for an arbitrary $\gamma > 0$ recovers Theorem~\ref{lem-han-conv}.

To show that Theorem~\ref{thm-sw-cht-conv} is equivalent 
to the LP-based converse (Theorem~\ref{thm-lp-conv}), 
we apply~\eqref{eq-sw-cht-8} to Theorem~\ref{thm-sw-cht-conv},  
showing that any $(M_1, M_2, \epsilon)$ MASC  satisfies
\begin{IEEEeqnarray}{rCl}
\epsilon &\geq& 
\sup\limits_{Q^{(1)}_{X_1X_2}, \, Q^{(2)}_{X_1X_2}, \, Q^{(3)}_{X_1X_2}} 
\epsilon^*(\beta_1^*, \beta_2^*, \beta_{3}^*) \\
&=& \sup\limits_{Q^{(1)}_{X_1X_2}, \, Q^{(2)}_{X_1X_2}, \, Q^{(3)}_{X_1X_2}} 
\sup\limits_{\gamma_1, \gamma_2, \gamma_3 \geq 0} \left\{
\vphantom{\sum\limits_{x_1 \in \mathcal{X}_1,x_2 \in \mathcal{X}_2}} \right.
 \nonumber \\
&&\sum\limits_{\substack{{x_1 \in \mathcal{X}_1} \\ {x_2 \in \mathcal{X}_2}}} 
\min \left\{P_{X_1X_2}(x_1,x_2), \, \sum\limits_{j=1}^{3} \gamma_j 
Q^{(j)}_{X_1X_2}(x_1, x_2) \right\}  \nonumber \\ 
&&-\gamma_1 M_1 \sum\limits_{x_2 \in \mathcal{X}_2} 
\max\limits_{\hat{x}_1 \in \mathcal{X}_1} Q^{(1)}_{X_1X_2}(\hat{x}_1, x_2) 
\nonumber \\
&& - \gamma_2 M_2 \sum\limits_{x_1 \in \mathcal{X}_1} 
\max\limits_{\hat{x}_2 \in \mathcal{X}_2} Q^{(2)}_{X_1X_2}(x_1, \hat{x}_2) 
\nonumber \\
&& \left. - \gamma_3 M_1M_2 
\max\limits_{\hat{x}_1 \in \mathcal{X}_1, \, \hat{x}_2 \in \mathcal{X}_2} 
Q_{X_1X_2}^{(3)}(\hat{x}_1,\hat{x}_2) 
\vphantom{\sum\limits_{x_1 \in \mathcal{X}_1, x_2 \in \mathcal{X}_2}} \right\}.
 \label{eq-sw-cht-9}
\end{IEEEeqnarray} 
The outer supremum is over 
$\sigma$-finite measures $Q^{(1)}_{X_1X_2}$, $Q^{(2)}_{X_1X_2}$, 
and $Q^{(3)}_{X_1X_2}$.
In Appendix~\ref{append-equivalence}, 
we show that the bounds in~\eqref{eq-sw-cht-9} and~\eqref{eq-lem-jk} 
are equivalent, 
establishing the equivalence between 
the MASC HT (Theorem~\ref{thm-sw-cht-conv}) 
and LP (Theorem~\ref{thm-lp-conv}) converses. 

\begin{remark}
When one of the sources is deterministic, 
the MASC HT converse reduces to 
the point-to-point HT converse~\cite[Eq.~(64)]{kostina-verdu}. 
For example, if $X_2$ is deterministic, then \eqref{eq-sw-cht-5} reduces to 
\begin{IEEEeqnarray}{rCl}
(M_1, 1, M_1) \in \beta_{1-\epsilon}\left(P_{X_1X_2}, \left\{U_{X_1}P_{X_2}, P_{X_1}U_{X_2}, U_{X_1X_2} \right\} \right), \nonumber 
\end{IEEEeqnarray}
which further reduces to
\begin{equation}
M_1 \geq \beta_{1-\epsilon}\left(P_{X_1}, U_{X_1}\right),\nonumber
\end{equation} 
where $\beta_\alpha(P,Q)$ is the optimal $\beta$-function for binary hypothesis testing between distributions $P$ and $Q$.
\end{remark}

\subsection{Asymptotics: Third-Order MASC Rate Region}
\label{sec-result-SW}
The following third-order asymptotic characterization 
of the MASC rate region 
for stationary, memoryless sources 
closes the $O\big(\frac{\log n}{n}\big)$ gap 
between \eqref{eq-sw-tk-1} and \eqref{eq-sw-tk-2}.

Consider stationary, memoryless sources 
with single-letter joint distribution $P_{X_1 X_2}$ 
for which 
\begin{IEEEeqnarray}{rCl}
\!\!\!\!\!\!\!\!\!\!  &&V(X_1,X_2) \!>\! 0, \,\mathbb E \left[ V_c (X_1 | X_2) \right] \!>\! 0, \, \mathbb E \left[ V_c (X_2 | X_1) \right]  \!>\! 0, \label{assump-b1}\\
\!\!\!\!\!\!\!\!\!\!  && T(X_1,X_2)  < \infty,\, T(X_1|X_2) < \infty, \, T(X_2|X_1) < \infty, \label{assump-b2}  \\
\!\!\!\!\!\!\!\!\!\!  && \mathbb E \left[ T_c^2 (X_1 | X_2) \right] < \infty,\, \mathbb E \left[ T_c^2 (X_2 | X_1) \right] < \infty. \label{assump-b3}
 \IEEEeqnarraynumspace
\end{IEEEeqnarray}
When \eqref{assump-b1} holds\footnote{In fact, the weaker condition 
$V(X_1|X_2) > 0$, $V(X_2|X_1)>0$, $V(X_1,X_2) > 0$ suffices.}, 
$\text{rank}(\mathsf{V}) \geq 1$. 
Technical assumptions \eqref{assump-b1}, \eqref{assump-b2}, and \eqref{assump-b3} are required 
to ensure applicability of the multidimensional Berry-Esseen theorem 
and Lemma~\ref{lem-p} in our asymptotic analysis. 
Assumption \eqref{assump-b2} is satisfied automatically if the alphabets $\mathcal X_1$ and $\mathcal X_2$ are finite. 

Define the set 
\begin{equation}
 \overline{\mathscr{R}}^{*}(n,\epsilon) 
\triangleq \left\{\overline{\mathbf{R}} \in \mathbb R^3 \colon \overline{\mathbf{R}} =  \overline{\mathbf{H}}
+ \frac{\mathscr{Q}_{\rm inv}(\mathsf{V},\epsilon)}{\sqrt{n}} 
- \frac{\log n}{2n}\mathbf{1} \right\}, \label{eq-def-third-order}
\end{equation} 
where vector $\overline{\mathbf{H}}$ is defined in \eqref{def-R-H}, 
$\mathsf{V}$ is the entropy dispersion matrix for $(X_1,X_2)$, 
and $\mathscr{Q}_{\rm inv}(\mathsf{V},\epsilon)$ is defined in \eqref{eq-def-sve}. 
Note that $\mathscr{R}^*(n,\epsilon) \subset \mathbb{R}^2$ 
(see Definition \ref{def-sw-rate-region}) 
but  $\overline{\mathscr{R}}^{*}(n,\epsilon) \subset \mathbb{R}^3$. 
Define the inner and outer bounding sets
\begin{IEEEeqnarray}{rCl}
\mathscr{R}_{\rm in}^{*}(n,\epsilon) 
&\triangleq& \bigg\{\mathbf{R} \in \mathbb{R}^{2}: \overline{\mathbf{R}} \in 
\overline{\mathscr{R}}^{*}(n,\epsilon) + O\left(\frac{1}{n}\right)\mathbf{1} \bigg\} \\
\mathscr{R}_{\rm out}^{*}(n,\epsilon) 
&\triangleq& \bigg\{\mathbf{R} \in \mathbb{R}^{2}: \overline{\mathbf{R}} \in 
\overline{\mathscr{R}}^{*}(n,\epsilon) - O\left(\frac{1}{n}\right)\mathbf{1} \bigg\}. 
\label{eq-def-sw-2-out} \IEEEeqnarraynumspace
\end{IEEEeqnarray}

\begin{thm}[Third-order MASC rate region] \label{thm-sw}
Consider a pair of stationary, memoryless sources 
with single-letter joint distribution $P_{X_1 X_2}$ 
satisfying \eqref{assump-b1}--\eqref{assump-b3}. 
For any $0 < \epsilon < 1$, 
the $(n,\epsilon)$-rate region $\mathscr{R}^{*}(n,\epsilon)$ satisfies
	\begin{equation}
	\mathscr{R}_{\rm in}^{*}(n,\epsilon)
	\subseteq \mathscr{R}^{*}(n,\epsilon)
	\subseteq \mathscr{R}_{\rm out}^{*}(n,\epsilon).
	\end{equation}
\end{thm}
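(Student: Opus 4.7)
The plan is to prove the inclusion $\mathscr{R}^*_{\rm in}(n,\epsilon) \subseteq \mathscr{R}^*(n,\epsilon)$ (achievability) using the MASC RCU bound (Theorem~\ref{thm-sw-rcu}) and the reverse inclusion (converse) using the MASC hypothesis-testing converse (Theorem~\ref{thm-sw-cht-conv}), in each case lifting the one-dimensional analysis of Theorem~\ref{thm-rcu-asymp} and the composite-testing asymptotics in the proof of Theorem~\ref{thm-cht-third-order} to three dimensions through the multidimensional Berry--Esseen theorem (Lemma~\ref{lem-b-e}) and the rate-set manipulations of Lemma~\ref{lem-sve}.

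For achievability, I would particularize Theorem~\ref{thm-sw-rcu} to the $n$-block source $(X_1^n, X_2^n)$ and bound each of $A_1, A_2, A_{12}$ using Lemma~\ref{lem-p}. For $A_1$, conditionally on $(X_1^n, X_2^n)$ the sum $\imath(\bar{X}_1^{\prime n}|X_2^n)$ decomposes into $n$ independent per-coordinate terms whose conditional variance and third absolute moment are $V_c(X_1|X_2)(X_{2,i})$ and $T_c(X_1|X_2)(X_{2,i})$, so Lemma~\ref{lem-p} yields $A_1 \leq \frac{C_1(X_2^n)}{M_1\sqrt{n}}\exp(\imath(X_1^n|X_2^n))$, and analogously for $A_2$ (conditional on $X_1^n$) and $A_{12}$ (unconditional); assumptions~\eqref{assump-b1}--\eqref{assump-b3} ensure the random Berry--Esseen coefficients are finite in expectation, the second-moment assumption on $T_c$ being exactly what controls their fluctuations. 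Splitting the outer expectation in~\eqref{eq-sw-rcu} at the vector threshold $\overline{\boldsymbol{\imath}}(X_1^n, X_2^n) \leq n\overline{\mathbf{R}} + \tfrac12\log n \cdot \mathbf{1} - O(1)\mathbf{1}$, as in the analogous~\eqref{eq-rcu-asymp-3}, reduces the error bound to a threshold-crossing probability; Lemma~\ref{lem-b-e} approximates this by $1 - \mathbb{P}[\mathbf{Z} \leq \mathbf{z}]$ with $O(1/\sqrt{n})$ error, and choosing $n\overline{\mathbf{R}}$ on the $\mathscr{Q}_{\rm inv}(\mathsf{V}, \epsilon - O(1/\sqrt{n}))$-shell translated by $-\tfrac12\log n \cdot \mathbf{1}$ followed by Lemma~\ref{lem-sve}-\ref{part1} to absorb the $\epsilon$-perturbation into an $O(1/n)\mathbf{1}$ rate-vector perturbation yields $\mathscr{R}^*_{\rm in}(n,\epsilon)$.

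For the converse, I would apply Theorem~\ref{thm-sw-cht-conv} with the $\sigma$-finite measures $Q^{(1)} = U_{X_1^n} P_{X_2^n}$, $Q^{(2)} = P_{X_1^n} U_{X_2^n}$, and $Q^{(3)} = U_{X_1^n X_2^n}$, for which~\eqref{eq:beta1star}--\eqref{eq:beta3star} collapse to $(\beta^*_1, \beta^*_2, \beta^*_3) = (M_1, M_2, M_1 M_2)$ (the counting-measure maxima are unity) and the three log-likelihood ratios $\log(P/Q^{(j)})$ become exactly the components of $\overline{\boldsymbol{\imath}}(X_1^n, X_2^n)$, whose per-letter covariance matrix is $\mathsf{V}$. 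Then I would replicate the converse in the proof of Theorem~\ref{thm-cht-third-order} almost verbatim: invoke Lemma~\ref{lem-cht-conv} (extended to $\sigma$-finite $Q_j$'s as flagged in Section~\ref{sec-cht}), tune the thresholds $\gamma_j = 1/(\beta_j\sqrt{n})$, apply Lemma~\ref{lem-b-e} to pass to a Gaussian probability, and use Lemma~\ref{lem-sve}-\ref{part2} to convert the $O(1/\sqrt{n})$ $\epsilon$-slack into the $O(1/n)\mathbf{1}$ rate slack defining $\mathscr{R}^*_{\rm out}(n,\epsilon)$.

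The main obstacle is on the achievability side: controlling the three \emph{random} Berry--Esseen coefficients $C_1(X_2^n), C_2(X_1^n), C_3$---each an empirical average of conditional variances and third moments---uniformly enough that they merge into a \emph{single} $O(1)\mathbf{1}$ additive correction, so that the identical $-\tfrac{\log n}{2n}\mathbf{1}$ third-order term appears in all three rate constraints simultaneously and is independent of the number of encoders, in contrast to the $O(\log n / n)$ alphabet- and encoder-dependent corrections in~\cite{tan-kosut}. The degenerate case $\mathrm{rank}(\mathsf{V}) < 3$ (which arises for functionally related sources) is already handled by Lemma~\ref{lem-b-e}'s extension to singular covariance matrices and the automatic restriction of $\mathscr{Q}_{\rm inv}(\mathsf{V},\epsilon)$ to the corresponding lower-dimensional subspace.
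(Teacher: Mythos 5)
Your proposal is correct and mirrors the paper's proof almost step for step: the converse particularizes Theorem~\ref{thm-sw-cht-conv} with the counting-measure triple and invokes Theorem~\ref{thm-cht-third-order}, and the achievability chain (Theorem~\ref{thm-sw-rcu}, Lemma~\ref{lem-p} with conditional data-dependent Berry--Esseen coefficients, Lemma~\ref{lem-b-e}, Lemma~\ref{lem-sve}-\ref{part1}) is exactly what the paper executes. You also correctly flag the crux---concentration of the random conditional Berry--Esseen coefficients $K_1(X_2^n)$, $K_2(X_1^n)$ via the second-moment condition~\eqref{assump-b3}---which the paper handles concretely with the typicality events $\mathcal{S}_1,\mathcal{S}_2$ in~\eqref{eq:S1}--\eqref{eq:S2} and Chebyshev's inequality in~\eqref{eq:cheb}.
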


Since the upper and lower bounds in Theorem~\ref{thm-sw} 
agree up to their third-order terms, 
we call $\overline{\mathscr{R}}^{*}(n,\epsilon)$ 
the {\em third-order MASC rate region}. 
Figure~\ref{fig-sw-third-order} 
plots the boundaries of $\overline{\mathscr{R}}^{*}(n,\epsilon)$ 
at different values of $n$ 
for an example pair of sources. 

\begin{figure}[!t]
	\centering
	\includegraphics[width=0.42\textwidth]{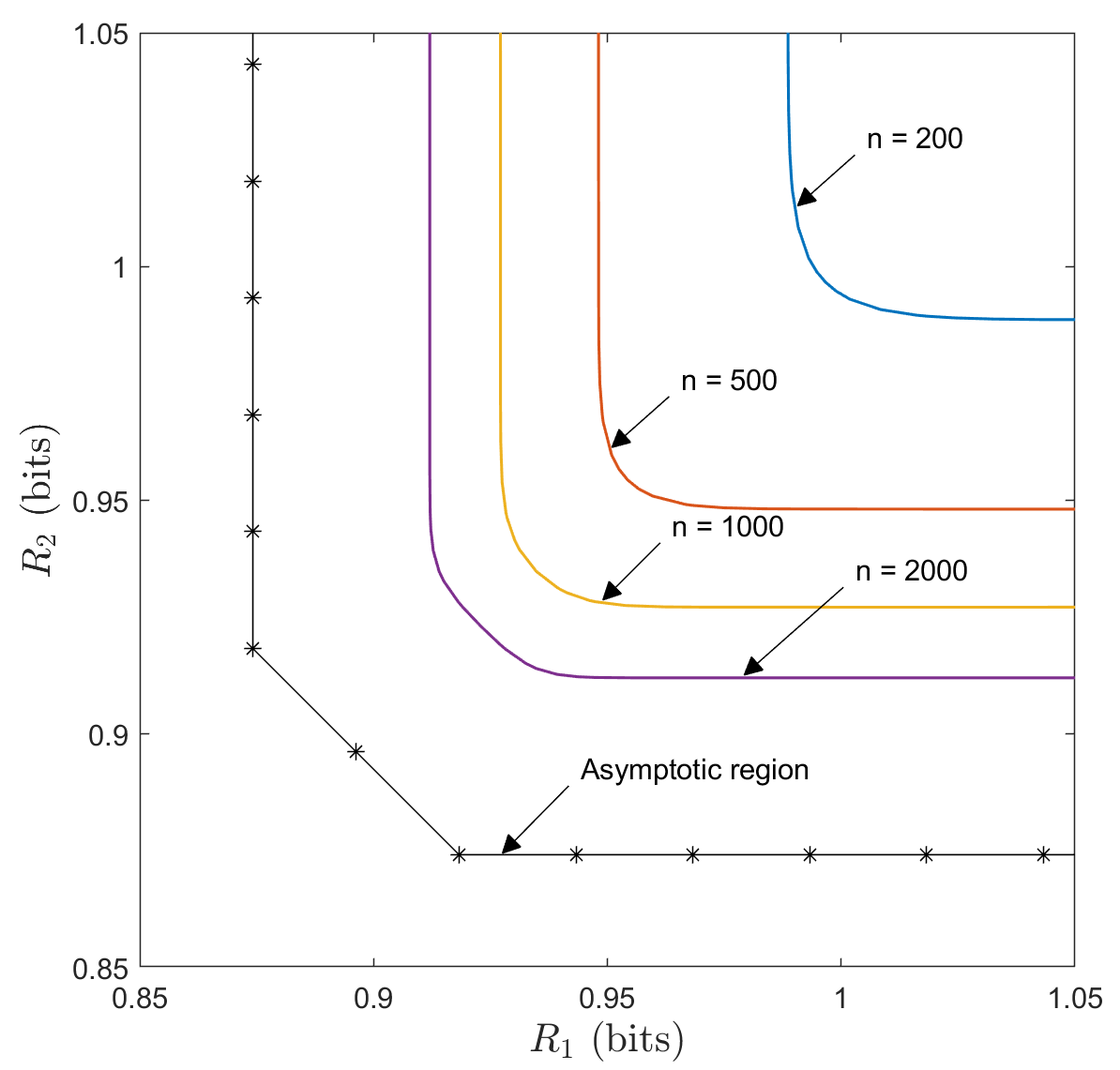}
	\caption{Third-order MASC rate regions $\overline{\mathscr{R}}^{*}(n,\epsilon)$ 
	at $\epsilon = 10^{-3}$ 
	for stationary, memoryless sources $(X_1,X_2)$ 
	with $p_{X_1,X_2}(0,0)=1/2$, 
	$p_{X_1,X_2}(0,1)=p_{X_1,X_2}(1,1)=p_{X_1,X_2}(1,1)=1/6$.}
	\label{fig-sw-third-order}
\end{figure} 

\begin{remark}
As noted in Remark~\ref{rem-zero-var},  
for point-to-point source coding, 
zero varentropy means that the source is uniform; 
the $-\frac{\log n}{2n}$ third-order term is absent in that case. 
While condition~\eqref{assump-b1} limits Theorem~\ref{thm-sw} 
to sources with positive varentropies, 
Appendix~\ref{append-zero-var} 
considers the case where one or more varentropies are zero.  
Roughly, each zero varentropy yields a zero dispersion, 
and the absence of a $-\frac{\log n}{2n}$ third-order term, 
similar to the point-to-point case. Furthermore, if $V(X_1 | X_2) > 0$ but $\mathbb E\left[ V_c(X_1 | X_2)\right] = 0$, the corresponding achievable third order term increases from $- \frac{\log n}{2n}$ to $0$.\footnote{This is seen by modifying the reasoning in \eqref{eq:S1}--\eqref{eq-sw-achiev-3} in the proof of Theorem~\ref{thm-sw} below.} This means that the optimal third order term lies in $[- \frac{\log n}{2n}, 0]$ in that case.
\end{remark}

\begin{proof}[Proof of Theorem \ref{thm-sw}: achievability]
We apply Theorem~\ref{thm-sw-rcu} 
to stationary, memoryless sources with 
$P_{X_{1}^{n}X_{2}^{n}} = P_{X_{1}X_{2}}^n$ and then apply Lemmas~\ref{lem-p} and~\ref{lem-b-e} to analyze the bound.
Let 
\begin{IEEEeqnarray}{rCl} 
I_{1} &\triangleq \imath(X_1^{n}|X_2^{n}) =& \sum_{i=1}^{n}\imath(X_{1i}|X_{2i}) 
\label{eq-def-I1n} \\
I_{2} &\triangleq \imath(X_2^{n}|X_1^{n}) =& \sum_{i=1}^{n}\imath(X_{2i}|X_{1i}) 
\label{eq-def-I2n} \\
I_{12} &\triangleq \imath(X_1^{n},X_2^{n}) =& \sum_{i=1}^{n}\imath(X_{1i},X_{2i}) 
\label{eq-def-In} \\
\bar{I}_{1} &\triangleq \imath(\bar{X}_1^{n\, \prime}|X_2^{n}) =& \sum_{i=1}^{n}
\imath(\bar{X}_{1i}^\prime |X_{2i}) \\
\bar{I}_{2} &\triangleq \imath(\bar{X}_2^{n\, \prime}|X_1^{n}) =& \sum_{i=1}^{n}
\imath(\bar{X}_{2i}^\prime|X_{1i})\\
\bar{I}_{12} &\triangleq \imath(\bar{X}_1^n,\bar{X}_2^n) =& \sum_{i=1}^{n}
\imath(\bar{X}_{1i},\bar{X}_{2i}) 
\end{IEEEeqnarray} 
where $(X_{1i}, X_{2i}, \bar{X}_{1i}, \bar{X}_{2i}, \bar{X}_{1i}^\prime, \bar{X}_{2i}^\prime )$, $i = 1, \ldots, n$, 
are drawn i.i.d. according to the joint distribution defined in \eqref{eq-sw-rcu-4}. With this notation, the random variables $A_1$, $A_2$, $A_{12}$ defined in \eqref{eq-sw-rcu-1}, \eqref{eq-sw-rcu-2}, \eqref{eq-sw-rcu-3} particularize as
\begin{align}
A_1 & = \frac 1 {M_1} \mathbb{E}\left[\exp \left(\bar{I}_{1}\right) 1\{\bar{I}_{1} \leq I_{1} \}|X_1^n,X_2^n \right] \\
A_2 &= \frac 1 {M_2} \mathbb{E}\left[\exp\left(\bar{I}_{2}\right) 1\{\bar{I}_{2} \leq I_{2} \}|X_1^n,X_2^n \right] \\
A_{12} &= \frac 1 {M_1 M_2} \mathbb{E}\left[\exp\left(\bar{I}_{12}\right) 1\{\bar{I}_{12} \leq I_{12} \}|X_1^n,X_2^n \right]
\end{align}
By Theorem~\ref{thm-sw-rcu}, there exists an $(n,M_1,M_2,\epsilon')$ 
MASC  such that  
\begin{align}
\epsilon' &\leq \mathbb{E}\left[ \min \left\{1, \, A_1 + A_2+ A_{12} \right\} \right] \label{eq-sw-achiev-1} \\
&= \mathbb E\left[ \left( A_1 + A_2 + A_{12} \right)  1 \left\{  A_1 +  A_2 +  A_{12}  \leq 1 \right\}  \right] \notag\\
&\phantom{=}+ \mathbb P \left[  A_1 +  A_2 +  A_{12}  > 1 \right] \label{eq-sw-achiev-2} \\
&\leq \mathbb E \left[   A_1 1 \left\{  A_1 \leq 1 \right\} \right] + \mathbb E \left[   A_2 1 \left\{  A_2 \leq 1 \right\} \right] \label{eq:second}\\
&\phantom{=}+ \mathbb E \left[   A_{12} 1 \left\{  A_{12} \leq 1 \right\} \right]  + \mathbb P \left[ 3  A_1 > 1 \cup 3  A_2 > 1 \cup 3  A_{12}  > 1 \right] \notag
\end{align}

To bound each of the terms in \eqref{eq:second}, we first bound the random variables $A_1$, $A_2$, and $A_{12}$ by random variables  $\bar A_1$, $\bar A_2$, and $\bar A_{12}$ that are easier to work with. 

Denote constants
\begin{align}
 K_1 &\triangleq \frac{2 \log 2}{\sqrt{2\pi V(X_1|X_2)}} 
+ \frac{2 C_0 T(X_1|X_2)  }{ V(X_1|X_2)^{3/2}} \label{eq-def-K1} \\
 K_2 &\triangleq \frac{2 \log 2}{\sqrt{2\pi V(X_2|X_1)}} 
+ \frac{2 C_0 T(X_2|X_1)  }{ V(X_2|X_1)^{3/2}} \label{eq-def-K2}\\
 K_{12} &\triangleq \frac{2 \log 2}{\sqrt{2\pi V(X_1,X_2)}} 
+ \frac{2 C_0  T(X_1,X_2)}{V(X_1,X_2)^{3/2}} \label{eq-def-K12} 
\end{align} 
that are finite by assumptions \eqref{assump-b1} and \eqref{assump-b2}. Define
\begin{align}
V_{1}(x_1^n) &\triangleq \frac 1 n \sum_{i = 1}^n \mathrm{Var} \left[ \imath(X_{2i} | X_{1i} = x_{1i})\right]\\
T_{1}(x_1^n) &\triangleq \frac 1 n \sum_{i = 1}^n \mathbb E \left[\! \left| \imath(X_{2i} | X_{1i} = x_{1i}) \!- \! \mathbb E \left[ \imath(X_{2i} | X_{1i} = x_{1i})\right] \right|^3 \!\right] 
\end{align}
for $x_1^n \in \mathcal X_1^n$.  Define $V_{2}(x_2^n)$ and $T_{2}(x_2^n)$ for $x_2^n \in \mathcal X_2^n$ analogously.

Applying Lemma~\ref{lem-p} to $A_{12}$ yields
\begin{align}
  A_{12} &\leq \bar A_{12} \triangleq \frac{ K_{12}\exp\left(I_{12}\right)}{M_{1}M_2\sqrt{n}}. \label{eq:A12bar} 
\end{align}

To bound $A_1$, we consider the cases $V_{2}(x_2^n) > 0$ and $V_{2}(x_2^n) = 0$ separately. If $V_{2}(x_2^n) > 0$, then 
\begin{align}
 K_1(x_2^n) &\triangleq \frac{2 \log 2}{\sqrt{2\pi V_2(x_2^n)}} 
+ \frac{2C_0 T_{2}(x_2^n) }{V_{2}(x_2^n)^{3/2}} \label{eq-def-K1x} 
\end{align}
is finite by assumption \eqref{assump-b2},  and Lemma~\ref{lem-p} yields 
\begin{align}
A_1 &\leq \frac{ K_1(X_2^n) \exp\left(I_{1}\right)}{M_1\sqrt{n}} , \quad \text{if } V_{2}(X_2^n) > 0 \label{eq:A1bar}. 
\end{align}
If $V_{2}(x_2^n) = 0$, then $I_1 = \bar I_1 = H(X_1^n | X_2^n = x_2^n)$ irrespective of the realization of $X_1^n$, and
\begin{align}
A_1 &= \frac{\exp\left(I_{1}\right)}{M_1}   , \quad \text{if } V_{2}(X_2^n) = 0 \label{eq:A1bar0}.
\end{align}
Putting \eqref{eq:A1bar} and \eqref{eq:A1bar0} together yields
\begin{align}
A_1 \leq \bar A_1 \triangleq 
\begin{cases}
 \frac{ K_1(X_2^n) \exp\left(I_{1}\right)}{M_1\sqrt{n}}, & V_{2}(X_2^n) > 0 \\
 \frac{\exp\left(I_{1}\right)}{M_1}, & V_{2}(X_2^n) = 0.
\end{cases}
\end{align}
Similarly, 
\begin{align}
A_2 \leq \bar A_2 \triangleq 
\begin{cases}
 \frac{ K_2(X_1^n) \exp\left(I_{2}\right)}{M_2\sqrt{n}}, & V_{1}(X_1^n) > 0 \\
 \frac{\exp\left(I_{2}\right)}{M_2}, & V_{1}(X_1^n) = 0,
\end{cases}
\end{align}
where $K_2(x_1^n)$ is defined analogously to \eqref{eq-def-K1x}.

Next, we apply Lemma~\ref{lem-p} again to further bound each of the first three terms in \eqref{eq:second}:
\begin{align}
\mathbb E \left[   A_1 1 \left\{  A_1 \leq 1 \right\} \right]  &\leq \mathbb E \left[   \bar A_1 1 \left\{  \bar A_1 \leq 1 \right\} \right] 
\leq \frac{K_1}{\sqrt n} \label{eq:sec1}\\
\mathbb E \left[   A_2 1 \left\{  A_2 \leq 1 \right\} \right]  &\leq \mathbb E \left[   \bar A_2 1 \left\{  \bar A_2 \leq 1 \right\} \right]  
\leq \frac{K_2}{\sqrt n} \label{eq:sec2}\\
\mathbb E \left[   A_{12} 1 \left\{  A_{12} \leq 1 \right\} \right]  &\leq \mathbb E \left[   \bar A_{12} 1 \left\{  \bar A_{12} \leq 1 \right\} \right]  
\leq \frac{K_{12}}{\sqrt n}. \label{eq:sec12}
\end{align}
 
We proceed to bound the last term in \eqref{eq:second}.
For fixed constants $s_{1} <  \mathbb E\left[V_c(X_2|X_1)\right]$ and $s_{2} < \mathbb E\left[V_c(X_1|X_2) \right]$, define the events $\mathcal  S_1$ and $\mathcal S_2$ that $X_1^n$ and $X_2^n$ are typical, respectively: 
\begin{align}
 \mathcal  S_1 \triangleq &~ \big\{ V_{1}(X_1^n) \geq \mathbb E[V_c(X_2 | X_1)] - s_1, \notag\\
  &~\phantom{\{}T_{1}(X_1^n) \leq \mathbb E\left[T_{c}(X_2 | X_1)\right] + s_{1} \big \}\label{eq:S1} \\
 \mathcal  S_2 \triangleq&~ \big\{ V_{2}(X_2^n) \geq \mathbb E[V_c(X_1 | X_2)] - s_2, \notag\\
   &~\phantom{\{} T_{2}(X_2^n) \leq \mathbb E[T_{c}(X_1 | X_2)] + s_2 \big\} \label{eq:S2}
\end{align}
Note that
\begin{align}
\bar A_1 &\leq  \bar{\bar A}_1 \triangleq \frac{ \bar K_1 \exp\left(I_{1}\right)}{M_1\sqrt{n}} , \quad \text{if } \mathcal S_2 \text{ occurs} \label{eq:A1baru}\\
\bar A_2 &\leq  \bar{\bar A}_2 \triangleq \frac{ \bar K_2 \exp\left(I_{2}\right)}{M_2\sqrt{n}} , \quad \text{if } \mathcal S_1 \text{ occurs,} \label{eq:A2baru}
\end{align}
where
\begin{align}
\bar K_1 &\triangleq \frac{2 \log 2}{\sqrt{2\pi (\mathbb E\left[V_c(X_1|X_2)\right] - s_2)}} 
+ \frac{2C_0 \left( \mathbb E\left[ T_c(X_1|X_2)\right]+s_2 \right)}{(\mathbb E \left[V_c(X_1|X_2) \right]- s_2)^{3/2}} \label{eq-def-K1bar} \\
\bar K_2 &\triangleq \frac{2 \log 2}{\sqrt{2\pi (\mathbb E\left[V_c(X_2|X_1) \right]- s_1)}} 
+ \frac{2 C_0 \left(\mathbb E\left[ T_c(X_2|X_1)\right] + s_1 \right)}{(\mathbb E\left[V_c(X_2|X_1)\right]-s_1)^{3/2}} \label{eq-def-K2bar}
\end{align} 
are both finite by the assumptions in \eqref{assump-b1} and \eqref{assump-b2}.

Applying the union bound to $\mathbb P \left[ \mathcal  S^c_k\right]$, $k \in \{1, 2\}$, and Chebyshev's inequality 
\begin{align}
\mathbb P \left[ |Z - \mathbb E\left[Z \right]| > \delta \right] \leq  \frac{ \mathrm {Var} \left[ Z  \right]  }{\delta^2} 
\end{align}
to both terms, we observe that for each $k \in \{1, 2\}$,  
\begin{align}
\mathbb P \left[ \mathcal  S_k^c\right] \leq \frac {S_k} {n}, \label{eq:cheb}
\end{align}
where 
\begin{align}
S_1 &\triangleq \frac 1 {s^2_1} \left( \mathbb E[V_c^2(X_2 | X_1)] + \mathbb E[T_{c}^2(X_2 | X_1)]  \right) \label{eq:T1}\\
S_2 &\triangleq \frac 1 {s^2_2}   \left( \mathbb E[V_c^2(X_1 | X_2)]  + \mathbb E[T_{c}^2(X_1 | X_2)] \right) \label{eq:T2}
\end{align}
are finite by assumption \eqref{assump-b3}.

 We are now prepared to apply Lemma~\ref{lem-b-e} to the last term in \eqref{eq:second}.
Pick any pair of rates $(R_1,R_2)$ satisfying 
\begin{align}
\overline {\mathbf R} \in  \overline{\mathscr{R}}^{*}\left(n,\epsilon - \frac{B}{\sqrt n} -  \frac{K_1 + K_2 + K_{12}}{\sqrt{n}} - \frac{S_1 + S_2}{n}\right) + \frac 1 n \mathbf C, \label{eq:MASCrates}
\end{align}
where the set $\overline{\mathscr{R}}^{*}(n,\epsilon)$ is defined in \eqref{eq-def-third-order}, $\mathbf C \triangleq  \left( \log (3 \bar K_1), \log (3 \bar K_2), \log (3 K_{12}) \right)^T$, and $B$ is the Bentkus constant in the right-side of \eqref{eq-lem-b-e} for zero-mean i.i.d. random vectors
\begin{align}
\mathbf{I}_i &\triangleq
\begin{bmatrix}
\imath(X_{1i}|X_{2i}) \\
\imath(X_{2i}|X_{1i}) \\
\imath(X_{1i},X_{2i})
\end{bmatrix} - \overline{\mathbf{H}}, \, \text{for } i = 1,\ldots,n. \label{vardef-ui}
\end{align} 
Note that $B < \infty$ by assumption~\eqref{assump-b2}. We have 
\begin{align}
&~\mathbb P \left[ 3  A_1 \leq 1 \cap 3  A_2 \leq 1 \cap 3  A_{12}  \leq 1 \right]  \notag\\
\geq&~ \mathbb P \left[ 3 \bar A_1 \leq 1 \cap 3 \bar A_2 \leq 1 \cap 3 \bar A_{12}  \leq 1 \cap \mathcal S_1 \cap \mathcal S_2 \right]   \\
\geq&~ \mathbb P \left[ 3 \bar {\bar A}_1 \leq 1 \cap 3 \bar {\bar A}_2 \leq 1 \cap 3 \bar A_{12}  \leq 1  \right]  - \mathbb P \left[\mathcal S_1^c \cup \mathcal S_2^c\right] \label{eq:intersect} \\
=&~ \mathbb{P} \Bigg[ \sum_{i = 1}^n \mathbf{I}_i \leq n \left(\overline{\mathbf{R}} 
- \overline{\mathbf{H}} + \frac{\log n}{2n}\mathbf{1} - \frac{1}{n} 
\mathbf{C}\right)  \Bigg]  - \mathbb P \left[\mathcal S_1^c \cup \mathcal S_2^c\right]  \label{eq:subR} \\
\geq&~ 1 - \epsilon + \frac{K_1 + K_2 + K_{12}}{\sqrt{n}}, \label{eq-sw-achiev-3} 
\end{align}
where \eqref{eq:intersect} applies \eqref{eq:A1baru} and $\mathbb P[\mathcal A \cap \mathcal B] \geq \mathbb P[\mathcal A] - \mathbb P[\mathcal B^c]$, and \eqref{eq-sw-achiev-3} applies \eqref{eq:MASCrates}, Lemma~\ref{lem-b-e} and \eqref{eq:cheb}.

Substituting \eqref{eq:sec1}, \eqref{eq:sec2}, \eqref{eq:sec12}, and \eqref{eq-sw-achiev-3} into \eqref{eq:second}
yields $\epsilon^\prime \leq \epsilon$, and the proof is complete since the set of $(R_1,R_2)$ satisfying  \eqref{eq:MASCrates} contains $\mathscr{R}_{\rm in}^{*}(n,\epsilon)$ by Lemma~\ref{lem-sve}-\ref{part1}.
\end{proof}
\begin{proof}[Proof of Theorem \ref{thm-sw}: converse]
We invoke Theorem~\ref{thm-sw-cht-conv} 
with $P_{X_1X_2} = P_{X_1X_2}^n$, 
$Q^{(1)}_{X_1X_2} = U_{X_1}^nP_{X_2}^n$, 
$Q^{(2)}_{X_1X_2} = P_{X_1}^nU_{X_2}^n$, and 
$Q^{(3)}_{X_1X_2} = U_{X_1X_2}^n$, 
where $P_{X_1}$ and $P_{X_2}$ are the marginals of $P_{X_1X_2}$, 
and $U_{X_1}$, $U_{X_2}$, and $U_{X_1X_2}$ are the counting measures 
over $\mathcal{X}_1$, $\mathcal{X}_2$, and $\mathcal{X}_1\times\mathcal{X}_2$.
Applying Theorem~\ref{thm-cht-third-order} 
to $\beta_{1-\epsilon}\left(P_{X_1X_2}, 
\left\{U_{X_1}P_{X_2}, P_{X_1}U_{X_2}, U_{X_1X_2} \right\} \right)$ 
under the assumptions in \eqref{assump-b1} and \eqref{assump-b2}, 
we conclude that in order to attain error probability $\epsilon$, $M_1$ and $M_2$ must satisfy
\begin{IEEEeqnarray}{rCl}
\lefteqn{(M_1,M_2,M_1M_2)} \notag\\
&&\in \exp\left\{n\overline{\mathbf{H}} + \sqrt{n}\mathscr{Q}_{\rm inv}(\mathsf{V},\epsilon) - \frac{\log n}{2}\mathbf{1} - O\left(1\right)\mathbf{1}\right\}, \IEEEeqnarraynumspace
\end{IEEEeqnarray} which is equivalent to $(R_1, R_2) \in \mathscr{R}_{\rm out}^{*}(n,\epsilon)$ \eqref{eq-def-sw-2-out}, as desired.
\end{proof}
\begin{remark}
The converse of Theorem~\ref{thm-sw} 
can also be proved using Han's converse (Theorem~\ref{lem-han-conv}) with $\gamma = \frac{\log n}{2}$ 
and Lemmas~\ref{lem-b-e} and~\ref{lem-sve} 
in a way similar to that in the achievability proof above, 
except that we would use Lemma~\ref{lem-sve} 
to bound $\mathscr{Q}_{\rm inv}\left(\mathsf{V},\epsilon 
+ O \left(\frac{1}{\sqrt{n}} \right)\big) \subseteq \mathscr{Q}_{\rm inv}(\mathsf{V},\epsilon) 
- O \left(\frac{1}{\sqrt{n}} \right) \right) \mathbf{1}$ 
instead of $\mathscr{Q}_{\rm inv}\left( \mathsf{V},\epsilon - O \left(\frac{1}{\sqrt{n}} \right) \right)$. 
Our HT converse (Theorem~\ref{thm-sw-cht-conv}) 
is stronger than Han's converse,  
but the gap is in the fourth- or higher-order terms, 
as illustrated through computation in Figure~\ref{fig-bounds}. Han's achievability bound (Theorem~\ref{lem-han-achiev}) with the third-order optimal choice of $\gamma = \frac{\log n}{2}$ leads to the third order term of $+ \frac {\log n}{2n}$ instead of $- \frac {\log n}{2n}$. Thus Han's achievability is weaker than our RCU bound (Theorem~\ref{thm-sw-rcu}) in the third order term. 
\end{remark} 

\begin{remark} \label{rem-t-k-conv}
Tan and Kosut's converse (Theorem~\ref{thm-sw-t-k}) 
is also based on Han's converse.  
Instead of deriving an outer bound on 
$\mathscr{Q}_{\rm inv}\left(\mathsf{V},\epsilon + O \left(\frac{1}{\sqrt{n}} \right) \right)$ 
as given in Lemma~\ref{lem-sve}, 
they apply the multivariate Taylor approximation 
to expand the probability, giving a bound that is loose in the third-order term.
\end{remark}

\begin{remark} \label{rem-sw-general}
Theorem~\ref{thm-sw} 
generalizes to any finite number of encoders. 
Let $\mathcal{T} \subset \mathbb{N}$ be a nonempty ordered set 
with a unique index for each encoder. 
For any vector $\mathbf{R}_\mathcal{T} \in \mathbb{R}^{|\mathcal{T}|}$, 
define the $\left(2^{|\mathcal{T}|}-1\right)$-dimensional vector 
of its partial sums as
\begin{equation}
\overline{\mathbf{R}}_{\mathcal{T}} 
\triangleq \left(
\sum\limits_{i \in \mathcal{A}}R_i, \, \hat{\mathcal{T}} 
\in \mathcal{P}(\mathcal{T})\right). \label{eq-def-RPT}
\end{equation} 
For any distribution $P_{\mathbf{X}_\mathcal{T}}$ 
defined on $\mathcal{X}_\mathcal{T}$ 
and any $\mathbf{x}_\mathcal{T} \in \mathcal{X}_\mathcal{T}$, 
define $\left(2^{|\mathcal{T}|}-1\right)$-dimensional vectors  
\begin{IEEEeqnarray}{rCl}
\overline{\boldsymbol{\imath}}_{\mathcal{T}}(\mathbf{x}_{\mathcal{T}}) 
&\triangleq& \left(\imath\big(
\mathbf{x}_{\hat{\mathcal{T}}}|\mathbf{x}_{\mathcal{T}\backslash\hat{\mathcal{T}}}
\big), \, \hat{\mathcal{T}} \in \mathcal{P}(\mathcal{T}) \right) \\
\overline{\mathbf{H}}_{\mathcal{T}} 
&\triangleq& \mathbb{E}\left[\overline{\boldsymbol{\imath}}_{\mathcal{T}}
(\mathbf{X}_{\mathcal{T}})\right], \label{eq-def-Hk}
\end{IEEEeqnarray} 
and $\left(2^{|\mathcal{T}|}-1\right) \times \left(2^{|\mathcal{T}|}-1\right)$ 
entropy dispersion matrix
\begin{equation}
\mathsf{V}_{\mathcal{T}} \triangleq 
{\rm Cov}\left[\overline{\boldsymbol{\imath}}_{\mathcal{T}}
(\mathbf{X}_{\mathcal{T}})\right] \label{eq-def-Vk}
\end{equation} 
for random vector $\mathbf{X}_\mathcal{T}$. 
Define set
\begin{equation}
\overline{\mathscr{R}}_{\mathcal{T}}^{*}(n,\epsilon) 
\triangleq \overline{\mathbf{H}}_{\mathcal{T}}
+ \frac{\mathscr{Q}_{\rm inv}(\mathsf{V}_{\mathcal{T}},\epsilon)}{\sqrt{n}} 
- \frac{\log n}{2n}\mathbf{1}. \label{eq-def-R3PT}
\end{equation} 
Thus, 
$\mathscr{R}_{\mathcal{T}}^{*}(n,\epsilon) \subset \mathbb{R}^{|\mathcal{T}|}$ 
while 
$\overline{\mathscr{R}}_{\mathcal{T}}^{*}(n,\epsilon) 
\subset \mathbb{R}^{2^{|\mathcal{T}|}-1}$.
Finally, 
\begin{IEEEeqnarray}{rCl}
\mathscr{R}_{{\rm in}, \mathcal{T}}^{*}(n,\epsilon)
&\triangleq& \bigg\{\mathbf{R}_\mathcal{T} \in \mathbb{R}^{|\mathcal{T}|}: 
\overline{\mathbf{R}}_{\mathcal{T}} 
\in \overline{\mathscr{R}}_{\mathcal{T}}^{*}(n,\epsilon) 
+ O\left(\frac{1}{n}\right)\mathbf{1} \bigg\} \nonumber \\ \label{eq-def-sw-in} \\ 
\mathscr{R}_{{\rm out}, \mathcal{T}}^{*}(n,\epsilon)
&\triangleq& \bigg\{\mathbf{R}_\mathcal{T} \in \mathbb{R}^{|\mathcal{T}|}: 
\overline{\mathbf{R}}_{\mathcal{T}} 
\in \overline{\mathscr{R}}_{\mathcal{T}}^{*}(n,\epsilon) 
- O\left(\frac{1}{n}\right)\mathbf{1} \bigg\}. \nonumber \\ \label{eq-def-sw-out} 
\end{IEEEeqnarray}
If every element of 
$\overline{\boldsymbol{\imath}}_{\mathcal{T}}(\mathbf{X}_{\mathcal{T}})$ 
has a positive variance and a finite third centered moment, 
then for any $0 < \epsilon < 1$, 
\begin{equation}
\mathscr{R}_{{\rm in}, \mathcal{T}}^{*}(n,\epsilon)
\subseteq \mathscr{R}_{\mathcal{T}}^{*}(n,\epsilon) \subseteq
\mathscr{R}_{{\rm out}, \mathcal{T}}^{*}(n,\epsilon).
\end{equation}
\end{remark}

\subsubsection{Comparison with Point-to-Point Source Coding}
\label{sec-sw-comp}
\begin{figure*}[!t]
	\centering
	\subfloat[]{\includegraphics[width=0.8\textwidth]{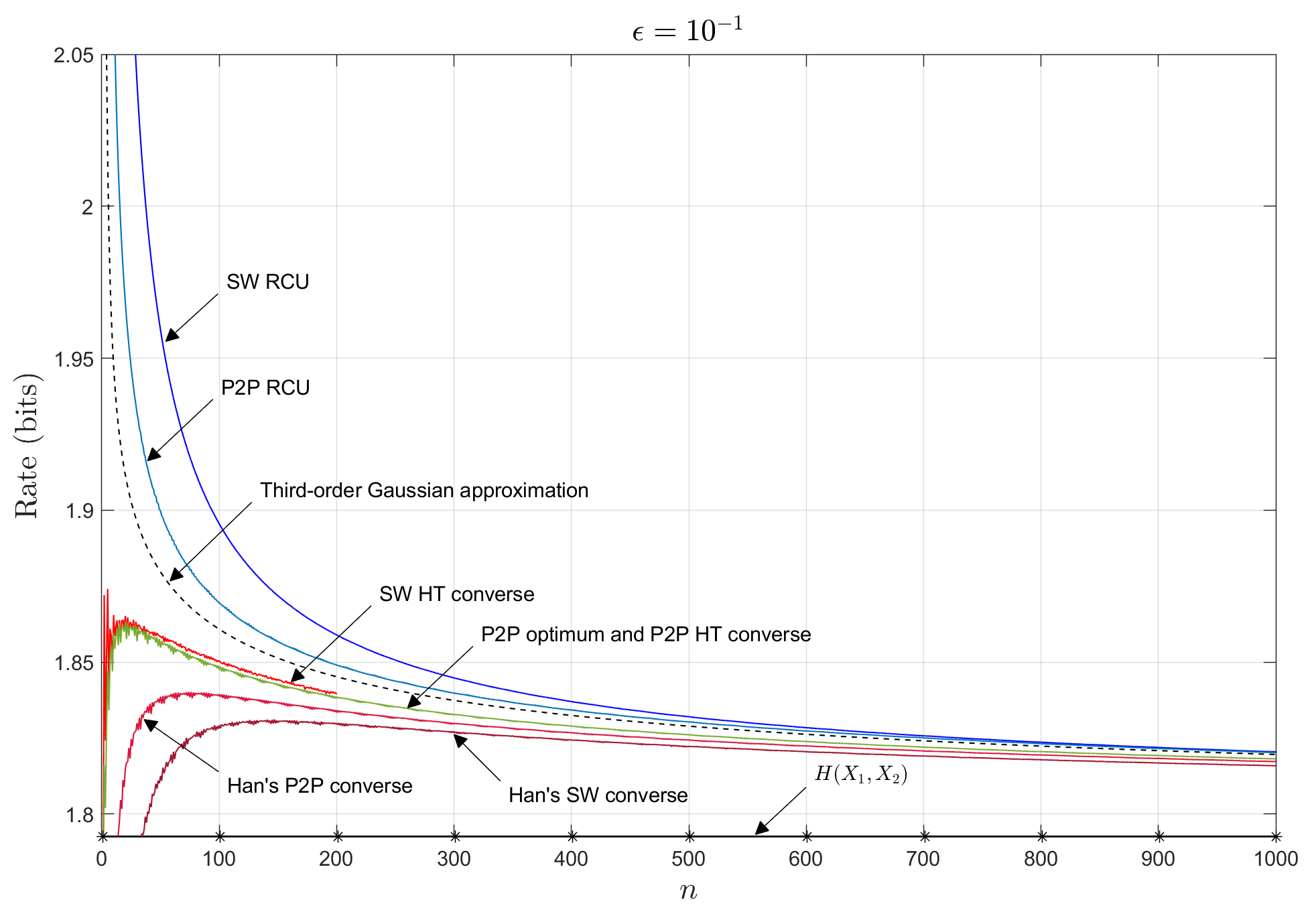}
		\label{fig-bounds-1}}
	\hfil
	\subfloat[]{\includegraphics[width=0.8\textwidth]{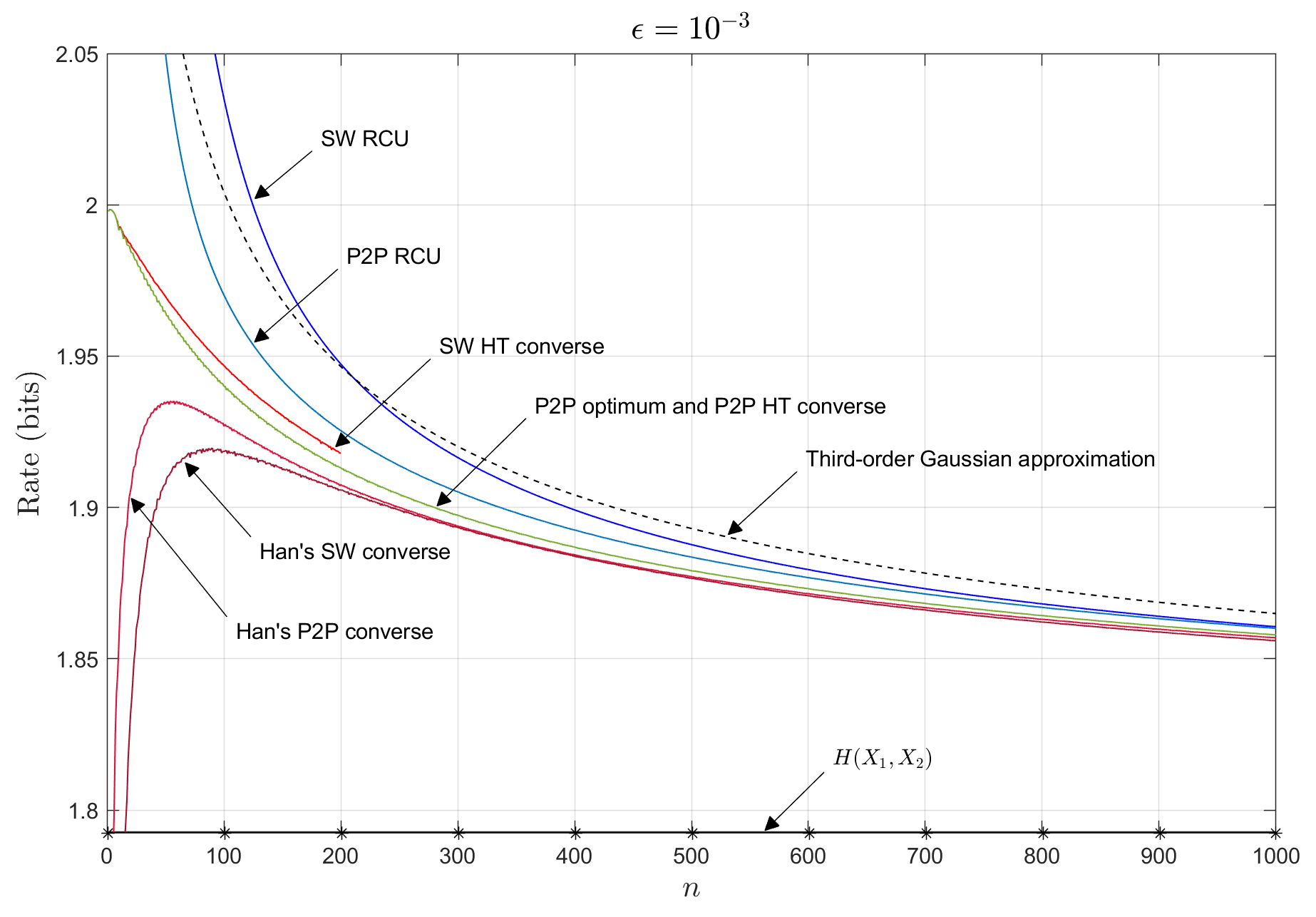}
		\label{fig-bounds-2}}
	\caption{Rate-blocklength trade-offs 
	at \protect\subref{fig-bounds-1} 	$\epsilon = 10^{-1}$ 
	and \protect\subref{fig-bounds-2} $\epsilon = 10^{-3}$ 
	for a pair of binary, stationary, memoryless sources 
	with joint distribution $p_{X_1X_2}(0,0)=1/2$, 
	$p_{X_1X_2}(0,1)=p_{X_1X_2}(1,0)=p_{X_1X_2}(1,1)=1/6$.  
	Due to computational limitations, 
	we only plot the MASC HT converse for small blocklengths ($n \leq 200$). 
	We evaluate the MASC HT converse with the sub-optimal choice of measures 
	in \eqref{eq-sw-cht-5}, 
	Han's point-to-point (P2P) converse is from \cite[Lemma~1.3.2]{han}, 
	Han's MASC converse is from Theorem~\ref{lem-han-conv} 
	(\cite[Lemma~7.2.2]{han}), and the P2P HT converse 
	is given in \cite[Appendix~A]{kostina-verdu}, 
	which coincides with the optimum $R^*(n, \epsilon)$.} 
	\label{fig-bounds}
\end{figure*}
Figure~\ref{fig-bounds} compares 
joint (point-to-point) compression of $(X_1^n,X_2^n)$ 
to the MASC sum-rate 
at the symmetrical rate point ($R_1 = R_2$). 
The gap between the MASC and point-to-point HT converses 
captures a penalty due to separate encoding. 
For small $n$, 
the third-order Gaussian approximation 
(without the $O\left(\frac{1}{n}\right)$ term) 
is more accurate at $\epsilon=10^{-1}$ than at $\epsilon=10^{-3}$ 
since the $O\left(\frac{1}{n}\right)$ term blows up as $\epsilon$ approaches $0$. 


It is well-known that optimal MASCs 
incur no first-order penalty in achievable sum rate 
when compared to joint coding~\cite{slepian-wolf, miyake-kanaya, han}. 
We next investigate the higher-order penalty 
of the MASC's independent encoders. 

Tan and Kosut introduce a quantity known as the 
\emph{local dispersion}~\cite[Def.~4]{tan-kosut} 
to characterize the second-order speed of convergence 
to any asymptotic MASC rate point from any direction. 
For any non-corner point on the diagonal boundary 
of the asymptotic MASC rate region,  
the sum rate's second-order coefficient is optimal 
when approached either vertically or horizontally.
Approaching corner points 
incurs a positive second-order penalty 
relative to point-to-point coding. 

Two corollaries of Theorem~\ref{thm-sw}, below, 
bound the MASC penalty 
by considering the achievable sum rate $R_1 + R_2$ 
for different choices of $R_1$ and $R_2$. 
We treat the cases where $X_1$ and $X_2$ are dependent 
and $X_1$ and $X_2$ are independent separately, 
assuming throughout 
that \eqref{assump-b1} and \eqref{assump-b2} hold.

When $X_1$ and $X_2$ are dependent, 
$H(X_1) + H(X_2) > H(X_1, X_2) > H(X_1|X_2) + H(X_2|X_1)$, 
and the asymptotic sum-rate boundary 
contains non-corner and corner points.
Corollary~\ref{cor-sw-dep}, below, shows that 
a MASC incurs no \mbox{first-,} \mbox{second-,} or third-order performance penalty 
relative to joint coding at non-corner points 
(i.e., when $R_1<H(X_1)$ and $R_2<H(X_2)$); 
in contrast, a MASC suffers a second-order performance penalty 
at corner points (i.e., when $R_1 = H(X_1)$ or $R_2 = H(X_2)$).  
See Figure~\ref{fig-sw-sumrate}\protect\subref{fig-sw-sumrate-1} for an illustration.

\begin{cor} \label{cor-sw-dep}
Suppose that $X_1$ and $X_2$ are dependent. 
\begin{enumerate}[leftmargin=1.3\parindent]
\item Fix constants $\delta_1,\delta_2,G>0$ and 
$\epsilon\in(0,1)$. 
Then there exists some constant $n(\delta_1,\delta_2,G)$ such that if 
\begin{IEEEeqnarray}{rCl}
R_1 &\leq& H(X_1) - \delta_1 \label{eq-sum-rate-1}\\
R_2 &\leq&  H(X_2) - \delta_2 \\
R_1 + R_2 &=& H(X_1,X_2) 
+ \sqrt{\frac{V(X_1,X_2)}{n}}Q^{-1}\left(\epsilon 
- \frac{G}{\sqrt{n}}\right) \nonumber \\
&&- \frac{\log n}{2n} \label{eq-sum-rate-2}
\end{IEEEeqnarray} 
then $\mathbf{R} = (R_1,R_2)\in\overline{\mathscr{R}}^{*}(n,\epsilon)$ 
for all $n > n(\delta_1,\delta_2,G)$.

\item Fix $\epsilon\in(0,1)$. If 
\begin{equation}
R_2 \geq H(X_2|X_1) + \frac{r^*}{\sqrt{n}} - \frac{\log n}{2n} 
+ \frac{G}{n} \label{eq:R2cor}
\end{equation} 
for some $G > 0$, then $\mathbf{R} = (H(X_1), R_2)\in\overline{\mathscr{R}}^{*}(n,\epsilon)$. 
Conversely, if $\mathbf{R} = (H(X_1), R_2)\in\overline{\mathscr{R}}^{*}(n,\epsilon)$, 
then 
\begin{equation}
R_2 \geq H(X_2|X_1) + \frac{r^*}{\sqrt{n}} - \frac{\log n}{2n}, \label{eq:R2corc}
\end{equation} 
where $r^*$ is the solution to equation
\begin{equation}
\Phi(\mathsf{V}_2; r, r) = 1 - \epsilon, \label{eq-def-r*}
\end{equation} 
and $\mathsf{V}_2$ is the covariance matrix 
for random vector $(\imath(X_2|X_1), \imath(X_1,X_2))$. 
\end{enumerate}
\end{cor}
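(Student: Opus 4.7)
The plan is to translate the condition $(R_1,R_2)\in\overline{\mathscr{R}}^{*}(n,\epsilon)$ into a single Gaussian-probability inequality and then exploit the geometry. Letting $\mathbf{Z}=(Z_1,Z_2,Z_3)\sim\mathcal{N}(\mathbf{0},\mathsf{V})$ and
\[
\mathbf{z}\triangleq\sqrt{n}\bigl(\overline{\mathbf{R}}-\overline{\mathbf{H}}\bigr)+\frac{\log n}{2\sqrt{n}}\mathbf{1},
\]
definitions \eqref{eq-def-third-order} and \eqref{eq-def-sve} show that $\overline{\mathbf{R}}\in\overline{\mathscr{R}}^{*}(n,\epsilon)$ is equivalent to $\mathbb{P}[\mathbf{Z}\leq\mathbf{z}]\geq 1-\epsilon$. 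The common thread in both parts is that, under the stated hypotheses, certain coordinates $z_i$ grow like $\Omega(\sqrt{n})$; the corresponding constraints $\{Z_i\leq z_i\}$ then fail only with Gaussian tail probability $O(e^{-cn})$, and the joint probability effectively collapses onto a lower-dimensional marginal of $\mathbf{Z}$.

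For Part~1, the sum-rate equation~\eqref{eq-sum-rate-2} combined with $R_1\leq H(X_1)-\delta_1$ and $R_2\leq H(X_2)-\delta_2$ forces the complementary lower bounds $R_1\geq H(X_1|X_2)+\delta_2-O(1/\sqrt{n})$ and $R_2\geq H(X_2|X_1)+\delta_1-O(1/\sqrt{n})$, so both $z_1$ and $z_2$ are $\Omega(\sqrt{n})$. Meanwhile the $\log n/(2n)$ term in~\eqref{eq-sum-rate-2} cancels the one in the definition of $\mathbf{z}$, leaving $z_3=\sqrt{V(X_1,X_2)}\,Q^{-1}(\epsilon-G/\sqrt{n})$ and hence $\mathbb{P}[Z_3\leq z_3]=1-\epsilon+G/\sqrt{n}$. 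The Fr\'echet-type bound
\[
\mathbb{P}[\mathbf{Z}\leq\mathbf{z}]\;\geq\;\mathbb{P}[Z_3\leq z_3]-\mathbb{P}[Z_1>z_1]-\mathbb{P}[Z_2>z_2],
\]
together with the one-sided Gaussian tail estimate $\mathbb{P}[Z_i>\Omega(\sqrt{n})]\leq e^{-cn}$, then yields $\mathbb{P}[\mathbf{Z}\leq\mathbf{z}]\geq 1-\epsilon$ for all $n$ larger than some $n(\delta_1,\delta_2,G)$, which is the claim.

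For Part~2, the corner choice $R_1=H(X_1)$ makes $z_1=\sqrt{n}\,I(X_1;X_2)+\tfrac{\log n}{2\sqrt{n}}=\Omega(\sqrt{n})$ (using $I(X_1;X_2)>0$ from dependence of $X_1,X_2$), while $z_2=z_3$ equals the common value $z\triangleq\sqrt{n}(R_2-H(X_2|X_1))+\tfrac{\log n}{2\sqrt{n}}$. Marginalizing out $Z_1$ gives the sandwich
\[
\Phi(\mathsf{V}_2;z,z)-O(e^{-cn})\;\leq\;\mathbb{P}[\mathbf{Z}\leq\mathbf{z}]\;\leq\;\Phi(\mathsf{V}_2;z,z).
\]
The converse direction is immediate from the right inequality: $\Phi(\mathsf{V}_2;z,z)\geq 1-\epsilon$, whence monotonicity of $r\mapsto\Phi(\mathsf{V}_2;r,r)$ and definition~\eqref{eq-def-r*} of $r^{*}$ force $z\geq r^{*}$, which rearranges to~\eqref{eq:R2corc}. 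For achievability, the hypothesis~\eqref{eq:R2cor} gives $z\geq r^{*}+G/\sqrt{n}$, and a first-order Taylor expansion of $r\mapsto\Phi(\mathsf{V}_2;r,r)$ at $r^{*}$ (whose derivative there is a strictly positive constant) yields $\Phi(\mathsf{V}_2;z,z)\geq 1-\epsilon+\Omega(1/\sqrt{n})$, which comfortably dominates the $O(e^{-cn})$ correction for $n$ sufficiently large.

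The main obstacle is the quantitative Taylor step in the Part~2 achievability: one needs $\frac{d}{dr}\Phi(\mathsf{V}_2;r,r)\bigr|_{r=r^{*}}$ bounded away from zero so that the $\Omega(1/\sqrt{n})$ margin genuinely beats the exponentially small $Z_1$-tail correction. This reduces to positivity of the diagonal slope of the bivariate Gaussian cdf at the finite point $r^{*}$, which follows from $\epsilon\in(0,1)$ and the nondegeneracy of $\mathsf{V}_2$ implied by \eqref{assump-b1}. The remaining manipulations are routine Gaussian bookkeeping.
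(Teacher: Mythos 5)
Your proposal is correct and follows essentially the same route as the paper's Appendix G: translate membership in $\overline{\mathscr{R}}^{*}(n,\epsilon)$ into $\mathbb{P}[\mathbf{Z}\le\sqrt{n}\mathbf{a}]\ge 1-\epsilon$, then in Part 1 invoke the union bound $\mathbb{P}[\mathbf{Z}\le\sqrt{n}\mathbf{a}]\ge \mathbb{P}[Z_3\le a_3\sqrt n]-\mathbb{P}[Z_1>a_1\sqrt n]-\mathbb{P}[Z_2>a_2\sqrt n]$ with Chernoff tails on the first two coordinates, and in Part 2 use $R_1=H(X_1)$ to make $a_1\sqrt n=\Omega(\sqrt n)$ so that the joint probability is sandwiched between $\Phi(\mathsf{V}_2;z,z)-O(e^{-cn})$ and $\Phi(\mathsf{V}_2;z,z)$, with $r^{*}$ and a first-order Taylor bound finishing both directions. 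The observation that the slope of $r\mapsto\Phi(\mathsf{V}_2;r,r)$ at $r^{*}$ is strictly positive (via nondegeneracy of $\mathsf{V}_2$) is exactly the implicit step the paper relies on when it says ``a first-order multivariate Taylor bound.''
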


\begin{proof}
Appendix \ref{append-cor-sw-dep}. 
\end{proof}


\begin{figure}[!t]
\centering
\subfloat[]{\includegraphics[width=0.48\textwidth]{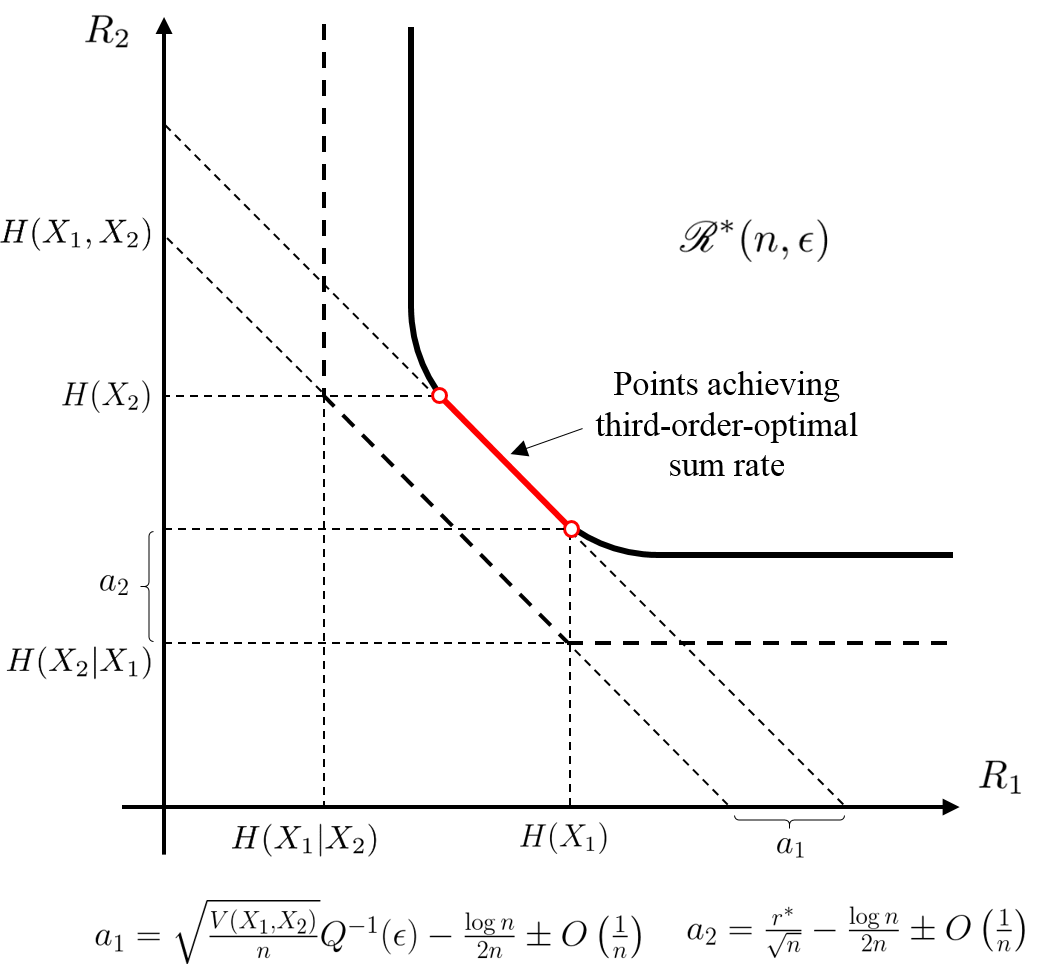}
\label{fig-sw-sumrate-1}}
\hfil
\subfloat[]{\includegraphics[width=0.46\textwidth]{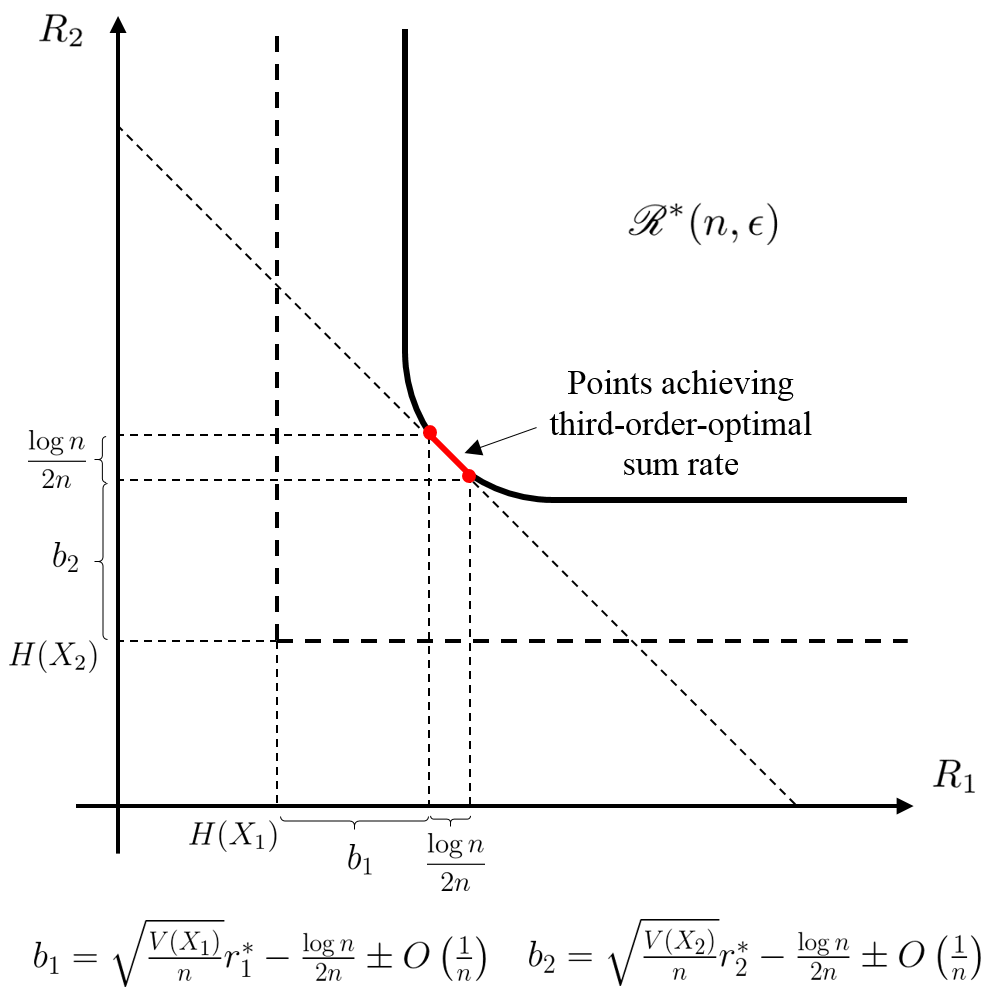}
\label{fig-sw-sumrate-2}}
\caption{Schematic plots of the $(n,\epsilon)$-rate region 
and the third-order-optimal sum rate when 
\protect\subref{fig-sw-sumrate-1} $X_1,X_2$ are dependent,  
\protect\subref{fig-sw-sumrate-2} $X_1,X_2$ are independent. 
In \protect\subref{fig-sw-sumrate-1}, the boundary of $\mathscr{R}^*(n,\epsilon)$ 
between $H(X_1)$ and $H(X_2)$ (excluding the end points) 
contains rate points that achieve the optimal point-to-point rate up to the third order, 
while the end points do not achieve that optimal rate. 
The value of $r^*$ in \protect\subref{fig-sw-sumrate-1} is defined in \eqref{eq-def-r*}; 
the values of $r_1^*, r_2^*$ in \protect\subref{fig-sw-sumrate-2} 
are defined in \eqref{eq-sw-sumrate-r1r2}.} 
\label{fig-sw-sumrate}
\end{figure}

For independent sources, 
the asymptotic sum-rate boundary contains only the single (corner) point 
$(R_1,R_2)=(H(X_1),H(X_2))$, 
and the entropy dispersion matrix
\begin{equation}
\begin{bmatrix}
V(X_1) &0 \\  0 &V(X_2)
\end{bmatrix}.\nonumber
\end{equation} 
is singular. 

The next result concerns the third-order-optimal sum rate 
\begin{align}
\overline{R}^*_{\rm sum}(n,\epsilon) \triangleq \min \Big\{R_1+R_2:
\ \exists \, &\mathbf{R} = (R_1, R_2) \notag\\
\text{ s.t. } &\overline{\mathbf{R}} \in \overline{\mathscr{R}}^{*}(n,\epsilon) \Big\}.
\end{align}
According to Theorem~\ref{thm-sw}, $\overline{R}^*_{\rm sum}(n,\epsilon)$ characterizes the best
achievable sum rate in SW source coding up to an $O \left( \frac 1 n \right)$
gap.

\begin{cor} \label{cor-sw-indep}
For $X_1,X_2$ independent and $\epsilon\in(0,1)$,  
\begin{IEEEeqnarray}{rCl}
\lefteqn{\overline{R}^*_{\rm sum}(n,\epsilon)}  \\ 
&=& H(X_1) + H(X_2) 
+ \frac{\sqrt{V(X_1)}r_1^* + \sqrt{V(X_2)}r_2^*}{\sqrt{n}} - \frac{\log n}{2n}, 
\nonumber 
\end{IEEEeqnarray} 
which is achieved by $\mathbf{R} = (R_1, R_2)$ with 
\begin{align}
R_1 &= H(X_1) + \sqrt{\frac{V(X_1)}{n}}r_1^* - \lambda\frac{\log n}{2n}  \\
R_2 &= H(X_2) + \sqrt{\frac{V(X_2)}{n}}r_2^* - (1-\lambda)\frac{\log n}{2n}, 
\end{align} 
for any $\lambda \in [0, 1]$ and 
\begin{IEEEeqnarray}{rCl}
(r_1^*, r_2^*) = \arg\min\limits_{\substack{{(r_1,r_2):} \\{\Phi(r_1)\Phi(r_2) \geq 1 - \epsilon} }} \left(\sqrt{V(X_1)}r_1 + \sqrt{V(X_2)}r_2\right). \label{eq-sw-sumrate-r1r2} \nonumber \\*
\end{IEEEeqnarray} 
\end{cor}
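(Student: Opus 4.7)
The plan is to specialize Theorem~\ref{thm-sw} to independent $X_1, X_2$ and then explicitly optimize the sum rate over the resulting simplified region. Under independence, $\imath(X_1|X_2) = \imath(X_1)$, $\imath(X_2|X_1) = \imath(X_2)$, and $\imath(X_1,X_2) = \imath(X_1) + \imath(X_2)$, so $\overline{\mathbf{H}} = (H(X_1), H(X_2), H(X_1)+H(X_2))^T$, and the entropy dispersion matrix $\mathsf{V}$ collapses to the rank-$2$ form
\begin{equation*}
\mathsf{V} = \begin{pmatrix} V(X_1) & 0 & V(X_1) \\ 0 & V(X_2) & V(X_2) \\ V(X_1) & V(X_2) & V(X_1)+V(X_2) \end{pmatrix}.
\end{equation*}
Consequently, the Gaussian vector defining $\mathscr{Q}_{\rm inv}(\mathsf{V},\epsilon)$ takes the form $\mathbf{Z} = (Z_1, Z_2, Z_1+Z_2)$ with $Z_1$ and $Z_2$ independent zero-mean Gaussians of variances $V(X_1)$ and $V(X_2)$, so (writing $\sigma_i = \sqrt{V(X_i)}$) we have $\mathbb{P}[Z_1 \leq z_1, Z_2 \leq z_2] = \Phi(z_1/\sigma_1)\Phi(z_2/\sigma_2)$.

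For achievability, I would substitute the explicit $R_1, R_2$ from the statement into the parametrization $\overline{\mathbf{R}} = \overline{\mathbf{H}} + \mathbf{z}/\sqrt{n} - \frac{\log n}{2n}\mathbf{1}$. A direct computation gives $z_1 = \sigma_1 r_1^* + (1-\lambda)\frac{\log n}{2\sqrt{n}}$, $z_2 = \sigma_2 r_2^* + \lambda\frac{\log n}{2\sqrt{n}}$, and $z_3 = \sigma_1 r_1^* + \sigma_2 r_2^*$. The crucial observation is that the event $\{Z_1 \leq \sigma_1 r_1^*,\, Z_2 \leq \sigma_2 r_2^*\}$ is contained in the event defining $\{\mathbf{Z} \leq \mathbf{z}\}$: the two axis-aligned inclusions hold because the $\lambda$-dependent shifts are nonnegative, while $Z_1+Z_2 \leq \sigma_1 r_1^* + \sigma_2 r_2^* = z_3$ follows automatically from $Z_i \leq \sigma_i r_i^*$. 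Hence $\mathbb{P}[\mathbf{Z} \leq \mathbf{z}] \geq \Phi(r_1^*)\Phi(r_2^*) = 1-\epsilon$ by the defining feasibility of $(r_1^*, r_2^*)$, so $\mathbf{z} \in \mathscr{Q}_{\rm inv}(\mathsf{V},\epsilon)$ and the proposed $(R_1,R_2)$ lies in $\overline{\mathscr{R}}^*(n,\epsilon)$.

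For the converse, I would take any $(R_1,R_2)$ with $\overline{\mathbf{R}} \in \overline{\mathscr{R}}^*(n,\epsilon)$ and let $\mathbf{z} \in \mathscr{Q}_{\rm inv}(\mathsf{V},\epsilon)$ realize the parametrization. Dropping the third constraint in the joint event yields the marginal inequality $\Phi(z_1/\sigma_1)\Phi(z_2/\sigma_2) \geq 1-\epsilon$, so $(z_1/\sigma_1,\, z_2/\sigma_2)$ is feasible for the optimization defining $(r_1^*, r_2^*)$; convexity of that feasible set (since $\log\Phi$ is concave) together with the definition of $(r_1^*, r_2^*)$ as the minimizer of $\sigma_1 r_1 + \sigma_2 r_2$ then forces $z_1 + z_2 \geq \sigma_1 r_1^* + \sigma_2 r_2^*$, which should translate via the third-coordinate identity $R_1+R_2 = H(X_1)+H(X_2) + z_3/\sqrt{n} - \frac{\log n}{2n}$ into the stated lower bound. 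The principal obstacle is matching the third-order term: a naive summation of the individual $R_1$ and $R_2$ marginal bounds produces a $-\frac{\log n}{n}$ correction, so obtaining the claimed $-\frac{\log n}{2n}$ requires carefully exploiting the degeneracy $Z_3 = Z_1 + Z_2$ together with the rate-consistency identity $z_3 = z_1 + z_2 - \frac{\log n}{2\sqrt{n}}$ to bound $z_3$ directly rather than routing through the individual $z_i$.
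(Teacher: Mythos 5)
Your achievability argument is essentially the paper's: under independence the relevant Gaussian becomes $\mathbf{Z}=(Z_1,Z_2,Z_1+Z_2)$, and one shows a product-form quadrant is contained in the event $\{\mathbf{Z}\leq\mathbf{z}\}$, giving $\mathbb{P}[\mathbf{Z}\leq\mathbf{z}]\geq\Phi(r_1^*)\Phi(r_2^*)\geq 1-\epsilon$. That half is fine.

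The converse is where you have a real gap, and you do flag it, but your proposed remedy does not close it. Dropping the third constraint and invoking the optimality of $(r_1^*,r_2^*)$ gives $z_1+z_2\geq\sqrt{V(X_1)}r_1^*+\sqrt{V(X_2)}r_2^*$, and combined with $z_3=z_1+z_2-\tfrac{\log n}{2\sqrt n}$ this yields only $z_3\geq\sqrt{V(X_1)}r_1^*+\sqrt{V(X_2)}r_2^*-\tfrac{\log n}{2\sqrt n}$, i.e.\ a $-\tfrac{\log n}{n}$ sum-rate term. Your suggested alternative, ``bound $z_3$ directly,'' amounts to using only $\mathbb{P}[Z_1+Z_2\leq z_3]\geq 1-\epsilon$, which gives $z_3\geq\sqrt{V(X_1)+V(X_2)}\,Q^{-1}(\epsilon)$; by the paper's own inequality \eqref{eq-sw-indep-r1r2-1} this is \emph{strictly weaker} than $\sqrt{V(X_1)}r_1^*+\sqrt{V(X_2)}r_2^*$, so it makes things worse, not better. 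What the paper actually does is redistribute the third-order correction: it subtracts $\tfrac{\log n}{2n}(\lambda,1-\lambda,1)^{T}$ rather than $\tfrac{\log n}{2n}\mathbf{1}$ when forming $\mathbf{a}=\overline{\mathbf{R}}-\overline{\mathbf{H}}+\tfrac{\log n}{2n}(\lambda,1-\lambda,1)^{T}$. Because independence gives $\overline{H}_3=\overline{H}_1+\overline{H}_2$, this choice of weights forces $a_3=a_1+a_2$ \emph{exactly}, so that together with $Z_3=Z_1+Z_2$ a.s.\ the third coordinate constraint in $\mathbb{P}[\mathbf{Z}\leq\sqrt n\,\mathbf{a}]$ is redundant and the CDF factors \emph{as an identity}, $\mathbb{P}[\mathbf{Z}\leq\sqrt n\,\mathbf{a}]=\Phi(r_1)\Phi(r_2)$, rather than as a one-sided containment bound. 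Since $\sqrt{V(X_1)}r_1+\sqrt{V(X_2)}r_2=\sqrt n\,a_3=z_3$ regardless of $\lambda$, the definition of $(r_1^*,r_2^*)$ as the minimizer then pins down $z_3\geq\sqrt{V(X_1)}r_1^*+\sqrt{V(X_2)}r_2^*$, delivering the $-\tfrac{\log n}{2n}$. The missing idea in your sketch is precisely this asymmetric splitting of the $-\tfrac{\log n}{2n}$ offset between the two individual-rate coordinates so that the linear combination lands exactly on the degenerate third coordinate; with the uniform $\mathbf{1}$ shift that you (and the definition of $\overline{\mathscr{R}}^*$) use, you can never make the factorization exact and you are stuck with the looser $-\tfrac{\log n}{n}$ summation.
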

\begin{proof}
Appendix \ref{append-cor-sw-indep}.
\end{proof}

By Corollary~\ref{cor-sw-indep}, for independent sources 
a unique $(r_1^*,r_2^*)$ captures the best MASC second-order sum-rate;  
the third-order term 
is achieved at all points on a segment of the rate region boundary.
See Figure~\ref{fig-sw-sumrate}\protect\subref{fig-sw-sumrate-2}. 
Under assumption~\eqref{assump-b1}, 
\begin{IEEEeqnarray}{rCl}
\IEEEeqnarraymulticol{3}{l}{ \min\limits_{\substack{{(r_1,r_2):} \\
{\Phi(r_1)\Phi(r_2) \geq 1 - \epsilon} }} 
\left(\sqrt{V(X_1)}r_1 + \sqrt{V(X_2)}r_2\right)} \nonumber \\
&>& \sqrt{V(X_1) + V(X_2)}Q^{-1}(\epsilon), \label{eq-sw-indep-r1r2-1}
\end{IEEEeqnarray} 
where $V(X_1)+V(X_2)=V(X_1,X_2)$ for $(X_1,X_2)$ independent.
Here \eqref{eq-sw-indep-r1r2-1} 
follows since its left-hand side solves 
\begin{align}
\min\limits_{(a_1,a_2)} & \left(a_1 + a_2\right) \notag \\
\text{ s.t. }  \Phi\left(\frac{a_1}{\sqrt{V(X_1)}}\right)
&\Phi\left(\frac{a_2}{\sqrt{V(X_2)}}\right) \geq 1 - \epsilon, \label{eq-sw-indep-r1r2-2}
\end{align} 
and the constraint in \eqref{eq-sw-indep-r1r2-2} requires 
$a_1 > \sqrt{V(X_1)}Q^{-1}(\epsilon)$ and 
$a_2 > \sqrt{V(X_2)}Q^{-1}(\epsilon)$, 
which gives the bound since 
\begin{equation}
\sqrt{V(X_1)} + \sqrt{V(X_2)}
> \sqrt{V(X_1)+V(X_2)}.
\end{equation}
Therefore, when $X_1$ and $X_2$ are independent, 
a MASC incurs a positive second-order sum-rate penalty relative to joint coding.
Closed-form expressions for this penalty are available in special cases. 
When $V(X_1) = V(X_2)$, 
$r_1^* = r_2^* = Q^{-1}\left(1-\sqrt{1-\epsilon}\right)$,  
and the penalty is 
\begin{align}
2\sqrt{\frac{V(X_1)}{n}}Q^{-1}\left(1-\sqrt{1-\epsilon}\right) 
- \sqrt{\frac{2V(X_1)}{n}}Q^{-1}(\epsilon).
\end{align}
When $X_1$ and $X_2$ are i.i.d., 
the penalty equals the penalty for coding a vector $X^{2n}$ 
of $2n$ i.i.d. outputs from $P_X$ 
by applying an independent $(n,\epsilon)$ (point-to-point) code 
with error probability $1-\sqrt{1-\epsilon}$ 
to each of $(X_1, \ldots, X_n)$ and $(X_{n+1}, \ldots, X_{2n})$ 
instead of a single $(2n,\epsilon)$ code to vector $X^{2n}$.

\subsection{Limited Feedback and Cooperation}
\label{sec-sw-feedback}
The RASC proposed in Section~\ref{sec-rasc} employs limited feedback. 
We here analyze the impact of feedback on the underlying MASC. 
In our feedback model, 
the decoder broadcasts the same $\ell$ bits of feedback to both encoders.  
A bit sent at time $i$ must be a function 
of the encoder outputs received in time steps $1, \ldots, i - 1$. 
(See Figure~\ref{fig-fb-cf}\protect\subref{fig-fb}.)
We bound the impact of feedback by studying a MASC with 
a \emph{cooperation facilitator (CF).}\footnote{The CF 
is introduced for multiple access channel coding in \cite{noorzad-ho} 
and extended to source and network coding in~\cite{langberg}.}  
The CF broadcasts the same $\ell$-bit function of the sources 
to both encoders prior to their encoding operations. 
(See Figure~\ref{fig-fb-cf}\protect\subref{fig-cf}.)
Since the MASC network has no channel noise, 
feedback from the decoder cannot convey more information 
than feedback from the CF. 
As a result, we bound the impact of feedback 
by bounding the impact of cooperation, 
which is easier to work with in our analysis.  

We begin by defining the CF-MASC and its rate region. 

\begin{figure}[!t]
	\centering
	\subfloat[]{\includegraphics[width=0.45\textwidth]{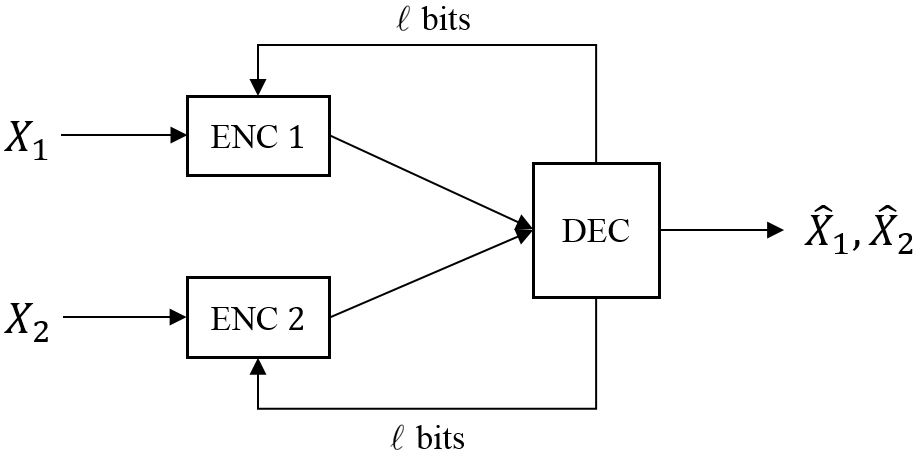}
		\label{fig-fb}}
	\hfil
	\subfloat[]{\includegraphics[width=0.45\textwidth]{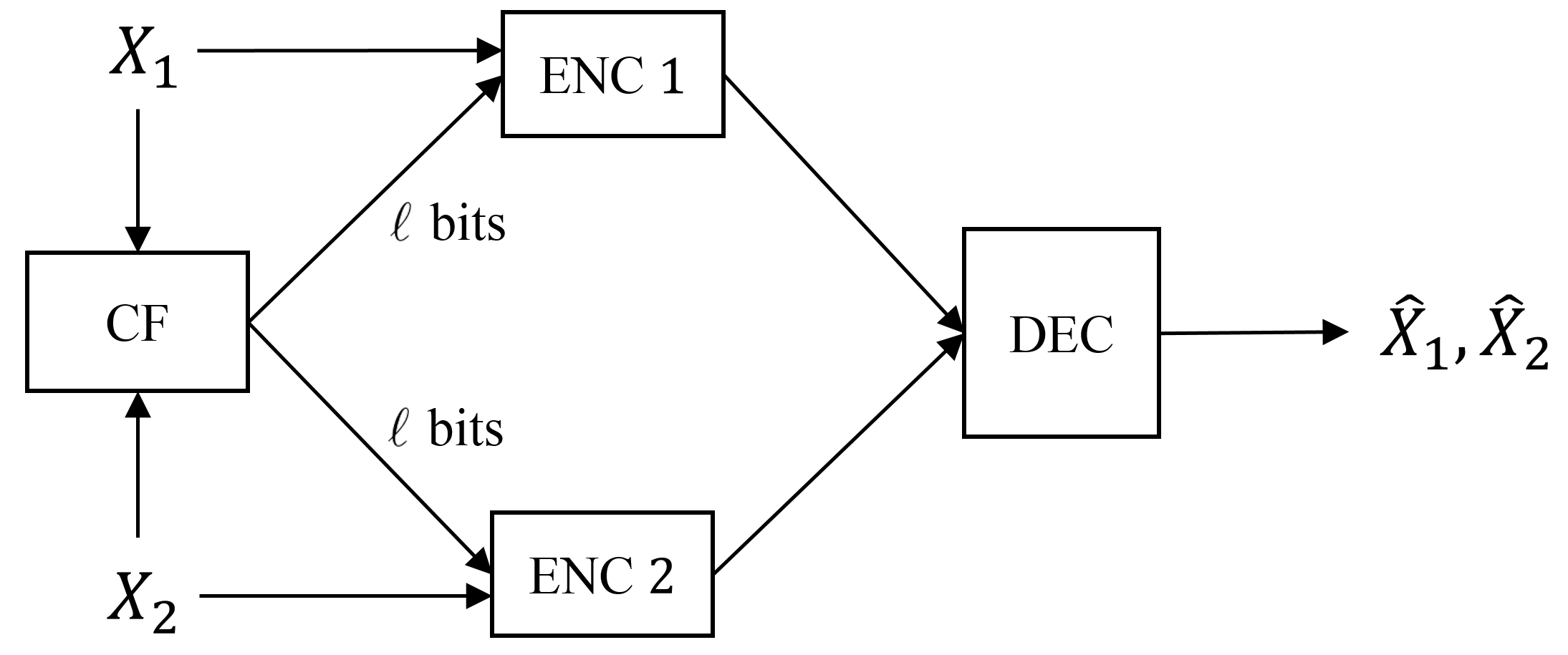}
		\label{fig-cf}}
	\caption{The \protect\subref{fig-fb} FB-MASC and \protect\subref{fig-cf} CF-MASC.}
	\label{fig-fb-cf}
\end{figure}

\begin{defn}[CF-MASC] \label{def-cs-sw} 
An $(L, M_1, M_2, \epsilon)$ CF-MASC 
for random variables $(X_1, X_2)$ on $\mathcal{X}_1 \times \mathcal{X}_2$ 
comprises a CF function $\mathsf{L}$, 
two encoding functions $\mathsf{f}_1$ and $\mathsf{f}_2$, 
and a decoding function $\mathsf{g}$ given by 
	\begin{IEEEeqnarray*}{rcl}
	\mathsf{L} & \ : \  &  \mathcal{X}_1 \times \mathcal{X}_2 \rightarrow [L] \\
	\mathsf{f}_1 & : & [L] \times \mathcal{X}_1 \rightarrow [M_1] \\
	\mathsf{f}_2 & : & [L] \times \mathcal{X}_2 \rightarrow [M_2] \\
	\mathsf{g} & : & [M_1]\times [M_2] \rightarrow \mathcal{X}_1 \times \mathcal{X}_2,
	\end{IEEEeqnarray*}
	with error probability 
	\begin{align*}
	\mathbb{P}\left[\mathsf{g} \left(\mathsf{f}_1\left(\mathsf{L}(X_1, X_2), X_1\right), \mathsf{f}_2\left(\mathsf{L}(X_1, X_2), X_2\right) \right) 
	\neq (X_1,X_2)\right] \leq \epsilon.
	\end{align*} 
\end{defn}

\begin{defn}[Block CF-MASC] \label{def-cf-sw-block}
An $(n, L, M_1, M_2, \epsilon)$ MASC 
is a CF-MASC for random variables $(X_1^n,X_2^n)$ 
on $\mathcal{X}_1^n \times \mathcal{X}_2^n$.
	
The code's finite blocklength rates are defined by 
	\begin{equation}
	R_1 = \frac{1}{n}\log M_1, \; R_2 = \frac{1}{n}\log M_2.
	\end{equation} 
\end{defn}

\begin{defn}[$(n,\ell,\epsilon)$-CF rate region] \label{def-cf-sw-rate-region} 
A rate pair $(R_1, R_2)$ is $(n, \ell,\epsilon)$-CF achievable 
if there exists an $(n,L,M_1,M_2,\epsilon)$ CF-MASC 
with $M_1 \leq \exp(nR_1)$, $M_2 \leq \exp(nR_2)$, and $L \leq \exp(\ell)$.
The $(n,\ell,\epsilon)$-CF rate region $\mathscr{R}_{\rm CF}^{*}(n,\ell,\epsilon)$ 
is defined as the closure of the set of all $(n,\ell,\epsilon)$-CF achievable rate pairs. 
\end{defn}

We use $\mathscr{R}_{\rm FB}^{*}(n,\ell,\epsilon)$ 
to denote the feedback-MASC (FB-MASC) rate region, 
which is defined as the closure of the set of all $(n,\epsilon)$-achievable rate pairs 
when the same $\ell$ bits of feedback from the decoder are available to both encoders.

Since the CF sees the source vectors 
while the decoder sees a coded description of those vectors 
(using a deterministic code),  
an $\ell$-bit CF can implement any function 
used to determine the decoder's $\ell$-bit feedback. 
As a result, any rate point that is achievable by an $\ell$-bit FB-MASC 
is also achievable by an $\ell$-bit CF-MASC. 
Therefore, for any $0 < \epsilon < 1$ and $\ell < \infty$,
\begin{equation}
\mathscr{R}_{\rm FB}^{*}(n,\ell,\epsilon) \subseteq \mathscr{R}_{\rm CF}^{*}(n,\ell,\epsilon). \label{eq-fb-cf-conv}
\end{equation}

Theorem~\ref{thm-cf-rasc-conv} bounds CF-MASC (and FB-MASC) performance, 
showing that for any $\ell < \infty$, 
the third-order rate region for $\ell$-bit CF-MASCs 
cannot exceed the corresponding MASC rate region. 
Hence finite feedback does not enlarge 
the third-order $(n,\epsilon)$ MASC rate region. 
This result generalizes to scenarios with more than two encoders.

\begin{thm}[CF-MASC Converse] \label{thm-cf-rasc-conv}
	Consider stationary, memoryless sources 
	with single-letter distribution $P_{X_1X_2}$ 
	satisfying \eqref{assump-b1} and \eqref{assump-b2}. 
	For any $0 < \epsilon < 1$ and $\ell < \infty$, 
	\begin{equation}
	\mathscr{R}_{\rm CF}^{*}(n,\ell,\epsilon)
	\subseteq \mathscr{R}_{\rm out}^{*}(n,\epsilon).  \label{eq-cf-rasc-conv}
	\end{equation} 
	Thus $\mathscr{R}_{\rm CF}^{*}(n,\ell,\epsilon)$ 
	and $\overline{\mathscr{R}}^{*}(n,\epsilon)$ 
	share the same outer bound.
\end{thm}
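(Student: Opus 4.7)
The plan is to extend the hypothesis testing converse of Theorem~\ref{thm-sw-cht-conv} to the CF-MASC setting and then invoke the third-order asymptotics of composite hypothesis testing in Theorem~\ref{thm-cht-third-order}. The guiding observation is that the CF's $L\le\exp(\ell)$ possible messages can inflate the effective codebook sizes by at most a factor of $L$, and since $\ell$ is a constant independent of $n$, $\log L = O(1)$ is absorbed into the $-O(1/n)\mathbf{1}$ slack defining $\mathscr{R}_{\rm out}^{*}(n,\epsilon)$. The containment~\eqref{eq-fb-cf-conv} then immediately yields the FB-MASC bound as a corollary.

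First, I would establish the following CF analog of Theorem~\ref{thm-sw-cht-conv}: for any $\sigma$-finite measures $\{Q^{(j)}_{X_1X_2}\}_{j=1}^{3}$ on $\mathcal{X}_1\times\mathcal{X}_2$, any $(L,M_1,M_2,\epsilon)$ CF-MASC satisfies
\begin{equation*}
(LM_1,\,LM_2,\,LM_1M_2)\in\beta_{1-\epsilon}\!\left(P_{X_1X_2},\bigl\{Q^{(1)}_{X_1X_2},Q^{(2)}_{X_1X_2},Q^{(3)}_{X_1X_2}\bigr\}\right).
\end{equation*}
Setting $Z=1\{(\hat X_1,\hat X_2)=(X_1,X_2)\}$ and expanding each $\mathbb{Q}^{(j)}[Z=1]$ via the (possibly randomized) CF kernel $P_{M|X_1X_2}$ and the conditional encoder kernels $P_{F_i|X_i M}$, the calculation mirrors the proof of Theorem~\ref{thm-sw-cht-conv} step by step: one pulls $\max_{\hat x_1}Q^{(1)}(\hat x_1,x_2)$ outside the $x_1$-sum, bounds $P_{M|X_1X_2}\le 1$ and $P_{F_1|X_1M}\le 1$ so that $\sum_{x_1}P_{\hat X_1\hat X_2|F_1F_2}(x_1,x_2|m_1,m_2)\le 1$, and then sums the surviving factor $P_{F_2|X_2 M}(m_2|x_2,m)$ over $(m,m_1,m_2)\in[L]\times[M_1]\times[M_2]$ to produce $LM_1$ in place of the MASC's $M_1$. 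The bounds for $\mathbb{Q}^{(2)}$ and $\mathbb{Q}^{(3)}$ are symmetric.

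Second, I would instantiate the containment with $P=P_{X_1X_2}^{n}$, $Q^{(1)}=U_{X_1}^{n}P_{X_2}^{n}$, $Q^{(2)}=P_{X_1}^{n}U_{X_2}^{n}$, $Q^{(3)}=U_{X_1X_2}^{n}$, exactly as in the converse proof of Theorem~\ref{thm-sw}. Under \eqref{assump-b1}--\eqref{assump-b2}, Theorem~\ref{thm-cht-third-order} then yields
\begin{equation*}
(LM_1,LM_2,LM_1M_2)\in\exp\!\left\{n\overline{\mathbf{H}}+\sqrt{n}\mathscr{Q}_{\rm inv}(\mathsf{V},\epsilon)-\tfrac{\log n}{2}\mathbf{1}-O(1)\mathbf{1}\right\}.
\end{equation*}
Taking logarithms and subtracting $\log L\le\ell=O(1)$ coordinatewise gives $\overline{\mathbf{R}}\in\overline{\mathscr{R}}^{*}(n,\epsilon)-O(1/n)\mathbf{1}$, which is precisely $\mathbf{R}\in\mathscr{R}_{\rm out}^{*}(n,\epsilon)$.

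The main obstacle I expect is the bookkeeping in the first step: because $P_{F_2|X_2M}$ depends on $m$ as well as $x_2$, one must keep it outside the $x_1$-sum and dispose of $P_{M|X_1X_2}$ using the trivial bound by $1$, rather than attempting to marginalize $M$ inside the encoder kernel. A looser treatment would introduce a factor like $L^{2}$ on some $\mathbb{Q}^{(j)}$ bound, which would still suffice here because $\ell=O(1)$, but would degrade the result in potential extensions where $\ell$ scales with $n$ and in the generalization to more than two encoders (cf.\ Remark~\ref{rem-sw-general}), where a careful accounting is essential to keep the CF-induced inflation linear in $L$ on every partial-sum constraint.
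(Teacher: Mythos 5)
Your proposal is correct, and it is essentially the alternative route that the paper itself relegates to the remark at the end of Appendix~I (``One could also prove Theorem~\ref{thm-cf-rasc-conv} by extending our HT converse\ldots''). The paper's \emph{primary} proof does not go through the HT machinery at all: it extends Han's MASC converse directly by a combinatorial argument on the decodable sets $\mathcal{S}_1(x_2)$ and $\mathcal{S}_2(x_1)$, observing that $|\mathcal{S}_1(x_2)| \le LM_1$, $|\mathcal{S}_2(x_1)| \le LM_2$, but $|\mathcal{S}|\le M_1M_2$, and then applies the same Berry-Esseen/$\mathscr{Q}_{\rm inv}$ bookkeeping. Both approaches land at the same Gaussian-approximation step (Lemmas~\ref{lem-b-e} and~\ref{lem-sve}). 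Your route has the virtue of paralleling the MASC converse proof (Theorem~\ref{thm-sw}) verbatim and keeping everything within the composite-HT framework, at the cost of needing Theorem~\ref{thm-cht-third-order}; the paper's route is more elementary.

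One small but worth-noting slip: your claimed containment has $LM_1M_2$ in the third coordinate, which is looser than necessary. For $\mathbb{Q}^{(3)}[Z=1]$, the right move is \emph{not} to bound $P_{M|X_1X_2}\le 1$ (which costs a factor of $L$) but to bound $P_{F_1|X_1M}\le 1$ and $P_{F_2|X_2M}\le 1$, and then sum $\sum_{m}P_{M|X_1X_2}(m\,|\,x_1,x_2)=1$ inside the $(m_1,m_2)$ sum, giving $\beta_3^* = M_1M_2\max_{\hat x_1,\hat x_2}Q^{(3)}$ with no $L$ at all. This matches the paper's own remark that ``$\beta_3^*$ remains unchanged,'' and is mirrored in the paper's primary proof by the fact that $\mathcal{U}$ uses the threshold $\log(M_1M_2)+\gamma$ without any $L$. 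The intuition is that the CF re-shuffles probability across $(F_1,F_2)$ but cannot increase the number of distinct decoder inputs $M_1M_2$, so it inflates only the per-marginal counts. Your looser $LM_1M_2$ still proves the theorem since $\log L\le\ell=O(1)$, but the tight $(LM_1,LM_2,M_1M_2)$ is the one that survives extensions with $\ell$ growing with $n$, which you yourself flag as the interesting regime; with your version the joint-sum-rate constraint would spuriously degrade by $\ell/n$.
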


\begin{proof}
Appendix \ref{append-cf-rasc-conv}.
\end{proof}

\begin{remark}
The same proof can be used to 
show that allowing $\ell$ to grow as $o\left(\log \log n\right)$ 
does not change the first three terms 
in the optimal characterization of the $(n,\epsilon)$-MASC.
\end{remark}

\begin{remark} \label{rem-feedback-dep}
For dependent sources, 
the optimal third-order MASC sum rate equals 
the optimal third-order sum rate with full cooperation. 
(See the discussion in Section \ref{sec-sw-comp}, above.). 
Since even an infinite amount of decoder feedback 
is weaker than full cooperation, 
an infinite amount of feedback does not improve 
the third-order sum rate in this case.   
\end{remark}

\section{Random Access Source Code (RASC)}
\label{sec-rasc}

An RASC is a generalization of an MASC 
for networks where the set of participating encoders 
is unknown to both the encoders and the decoder~\emph{a priori}. 
We begin by defining the problem 
and describing our proposed communication strategy.

\subsection{Definitions and Coding Strategy}
\label{sec-def-ra}
Let $K < \infty$ be the maximal number of active encoders.
We associate each encoder with a unique source 
from the set of sources indexed by $[K]$. 
Each encoder chooses whether to be \emph{active} or \emph{silent}.  
Only sources associated with active encoders 
are compressed and reconstructed.  
By assumption, the decision to remain silent 
is independent of the observed source instance.
Given the joint distribution $P_{\mathbf{X}_{[K]}}$ 
on countable alphabet $\mathcal{X}_{[K]}$, 
when ordered set $\mathcal{T}\in \mathcal{P}([K])$ 
of $[K]$ is active, the marginal on the transmitted sources is 
	\begin{equation}
	P_{\mathbf{X}_\mathcal{T}}(\mathbf{x}_\mathcal{T}) = \sum\limits_{\mathbf{x}_{[K]\backslash \mathcal{T}} \in  \mathcal{X}_{[K]\backslash \mathcal{T}}}  P_{\mathbf{X}_{[K]}}(\mathbf{x}_{[K]}), \; \forall \, \mathbf{x}_\mathcal{T} \in \mathcal{X}_\mathcal{T}. \label{eq-rasc-red}
	\end{equation}
Thus, 
each encoder's state 
has no effect on the statistical relationship among 
sources observed by other encoders. 

As in the random access {\em channel code} from~\cite{recep}, 
our proposed RASC organizes communication into epochs. 
At the beginning of each epoch, 
each encoder independently decides its activity state; 
that activity state remains unchanged until the end of the epoch. 
Thus, the active encoder set $\mathcal{T}$ 
is fixed in each epoch. 
Each active encoder $i \in \mathcal{T}$ 
observes source output $X_i\in\mathcal{X}_i$ 
and independently maps it to a codeword 
comprised of a sequence of \emph{code symbols} from alphabet $[Q_i]$. 
The $|\mathcal{T}|$ codewords are sent simultaneously to the decoder.
Since set $\mathcal{T}$ is unknown {\em a priori}, 
the encoder behavior cannot vary with $\mathcal{T}$. 
The decoder sees $\mathcal{T}$ 
and decides a time $m_\mathcal{T}$, called the \emph{decoding blocklength}, 
at which to jointly decode all received partial codewords.
The set of potential decoding blocklengths 
$\mathcal{M} \triangleq \left\{m_{\mathcal{T}}: \mathcal{T} \in \mathcal{P}([K]) \right\}$ 
is part of the code design; it is known to all encoders and to the decoder.

\begin{figure}[!t]
	\centering
	\includegraphics[width=0.46\textwidth]{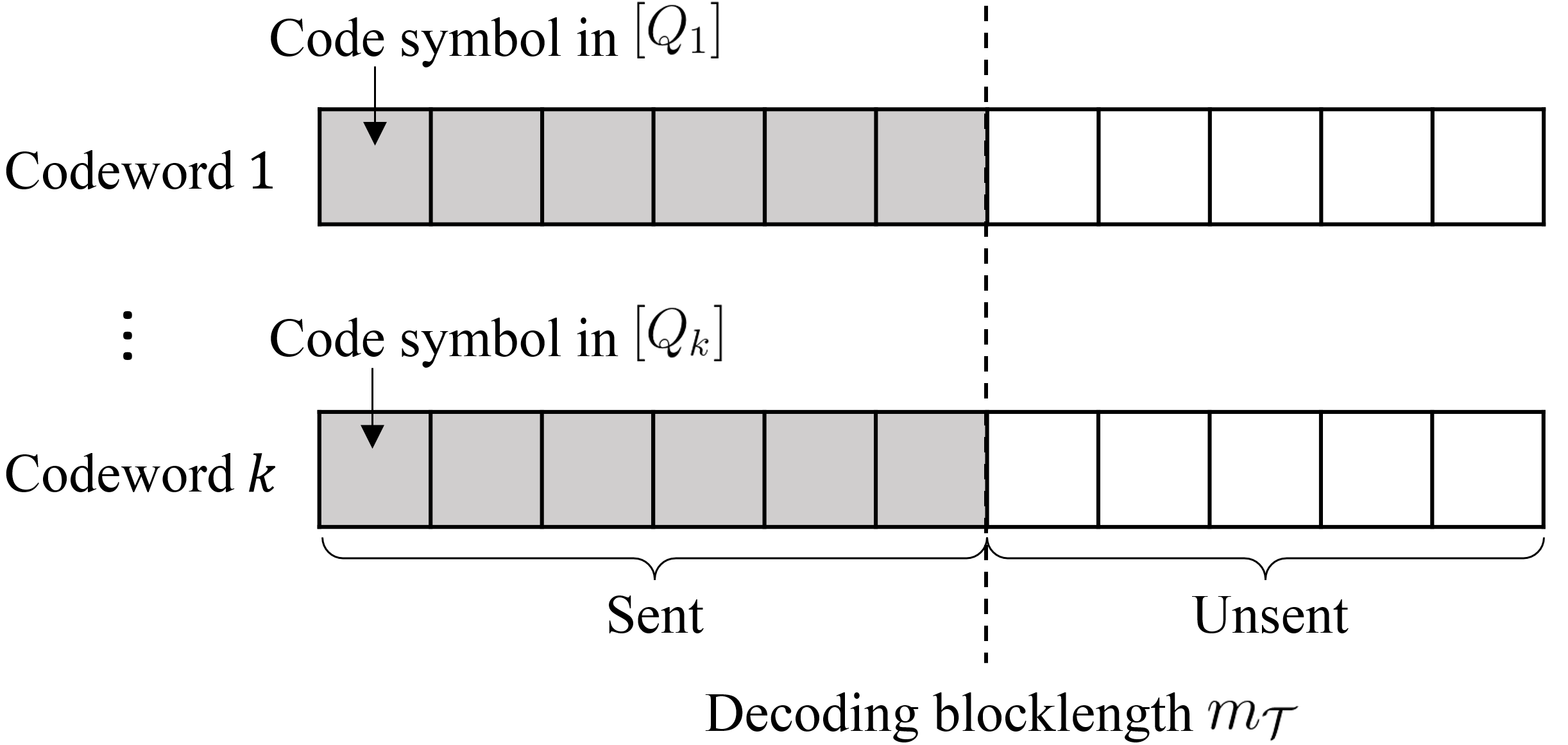}
	\caption{Coding scheme in one epoch with $\mathcal{T} = [k]$.}
	\label{fig-rasc-code}
\end{figure} 

Figure~\ref{fig-rasc-code} illustrates our coding scheme in one epoch 
when $\mathcal{T}=[k]$. 
Each encoder $i\in\mathcal{T}$ sends a single code symbol per time step.  
At each time $m\in\left\{m' \in \mathcal{M}: m' <m_{\mathcal{T}} \right\}$, 
the decoder sends a ``0'' to indicate that it is not yet ready to decode; 
at time $m=m_\mathcal{T}$, the decoder sends a ``1,'' 
ending one epoch and starting the next.
The decoder then reconstructs source vector $\mathbf{X}_{\mathcal{T}}$ 
using the first $m_{\mathcal{T}}$ code symbols from each active encoder.
To avoid wasting time in an epoch with no active encoders, 
we include decoding time $m_{\emptyset}=1$ in set $\mathcal{M}$.
The decoder sends at most $2^K$ bits of feedback, 
and encoders need only listen for decoder feedback 
at the times in set $\mathcal{M}$. 

To formalize the above strategy, fix $K\geq 1$.  Define vectors 
\begin{IEEEeqnarray}{rCl}
\overline{\boldsymbol{\epsilon}}_{K} 
&\triangleq& \left( \epsilon_{\mathcal{T}}, \,\mathcal{T} \in \mathcal{P}([K]) \right) \\
\overline{\mathbf{m}}_{K} 
&\triangleq& \left(m_{\mathcal{T}}, \,\mathcal{T} \in \mathcal{P}([K])\cup\{\emptyset\} \right)
\end{IEEEeqnarray} 
with $m_{\emptyset}=1$ and 
$m_{\max} \triangleq \max \left\{m_{\mathcal{T}}:\mathcal{T} \in \mathcal{P}([K])\right\}$.

\begin{defn}[RASC] \label{def-rasc} 
An $\left(\overline{\mathbf{m}}_{K}, 
\mathbf{Q}_{[K]}, \overline{\boldsymbol{\epsilon}}_{K}\right)$ 
RASC for sources $X_{[K]}$ on source alphabet $\mathcal{X}_{[K]}$ 
comprises a collection of encoding and decoding functions 
	\begin{IEEEeqnarray}{rcl}
	\mathsf{f}_i &\ :\ &  \mathcal{X}_i \rightarrow [Q_i]^{m_{\max}}, \, i \in [K], \\
	\mathsf{g}_{\mathcal{T}} & : &  \prod\limits_{i \in \mathcal{T}} [Q_i]^{m_{\mathcal{T}}}
	 \rightarrow \mathcal{X}_{\mathcal{T}}, \, \mathcal{T} \in \mathcal{P}([K]),
	\end{IEEEeqnarray}
	where $\mathsf{f}_i$ is the encoding function for source $X_i$ 
	and $\mathsf{g}_{\mathcal{T}}$ is the decoding function 
	for active coder set $\mathcal{T}$.
	For each $\mathcal{T} \in \mathcal{P}([K])$, 
	source vector $\mathbf{{X}}_{\mathcal{T}}$ 
	is decoded at time $m_{\mathcal{T}}$ with error probability 
	$\mathbb{P}\big[\mathsf{g}_{\mathcal{T}}\left(\mathsf{f}_i(X_i)_{[m_\mathcal{T}]}, 
	i \in \mathcal{T} \right) \neq \mathbf{{X}}_{\mathcal{T}}\big]
	\leq \epsilon_{\mathcal{T}}$, 
	where $\mathsf{f}_i(x_i)_{[m]}$ denotes the first $m$ code symbols 
	of $\mathsf{f}_i(x_i)$.
\end{defn}

Definition~\ref{def-rasc-block} particularizes Definition \ref{def-rasc} to the block setting.

\begin{defn}[Block RASC]\label{def-rasc-block}
	An $\left(n, \overline{\mathbf{m}}_{K}, \mathbf{Q}_{[K]},
	\overline{\boldsymbol{\epsilon}}_{K}\right)$ RASC 
	is an RASC for an $n$-block of source outcomes. 
	The parameter $n$, called the \emph{encoding blocklength} 
	does not vary with $\mathcal{T}$.
\end{defn}

An $\left(\overline{\mathbf{m}}_{K}, \mathbf{Q}_{[K]}, \overline{\boldsymbol{\epsilon}}_{K}
\right)$ RASC behaves, for each $\mathcal{T}$, like a 
$\big((Q_i^{m_{\mathcal{T}}},\, i \in \mathcal{T}), \epsilon_{\mathcal{T}}\big)$ 
MASC (see Definition~\ref{def-sw}) 
with a finite number $\left|\left\{m \in \mathcal{M}: m \leq m_{\mathcal{T}}\right\}\right|$ 
of feedback bits. 
However, the RASC is \emph{one code}.
Its descriptions are nested (i.e., for each $x_i \in \mathcal{X}_i$, 
if $m_{\mathcal{T}'} < m_{\mathcal{T}}$, then 
$\mathsf{f}_i(x_i)_{[m_{\mathcal{T}'}]}$ is a prefix of 
$\mathsf{f}_i(x_i)_{[m_{\mathcal{T}}]}$).
It simultaneously satisfies the error constraints 
for all $\mathcal{T} \in \mathcal{P}([K])$.
And, since the code symbol alphabet sizes $\mathbf{Q}_{[K]}$ are fixed, 
its rate vectors are coupled.  See Figure~\ref{fig-rasc-rate}.

\begin{figure}[!t]
	\centering
	\includegraphics[width=0.35\textwidth]{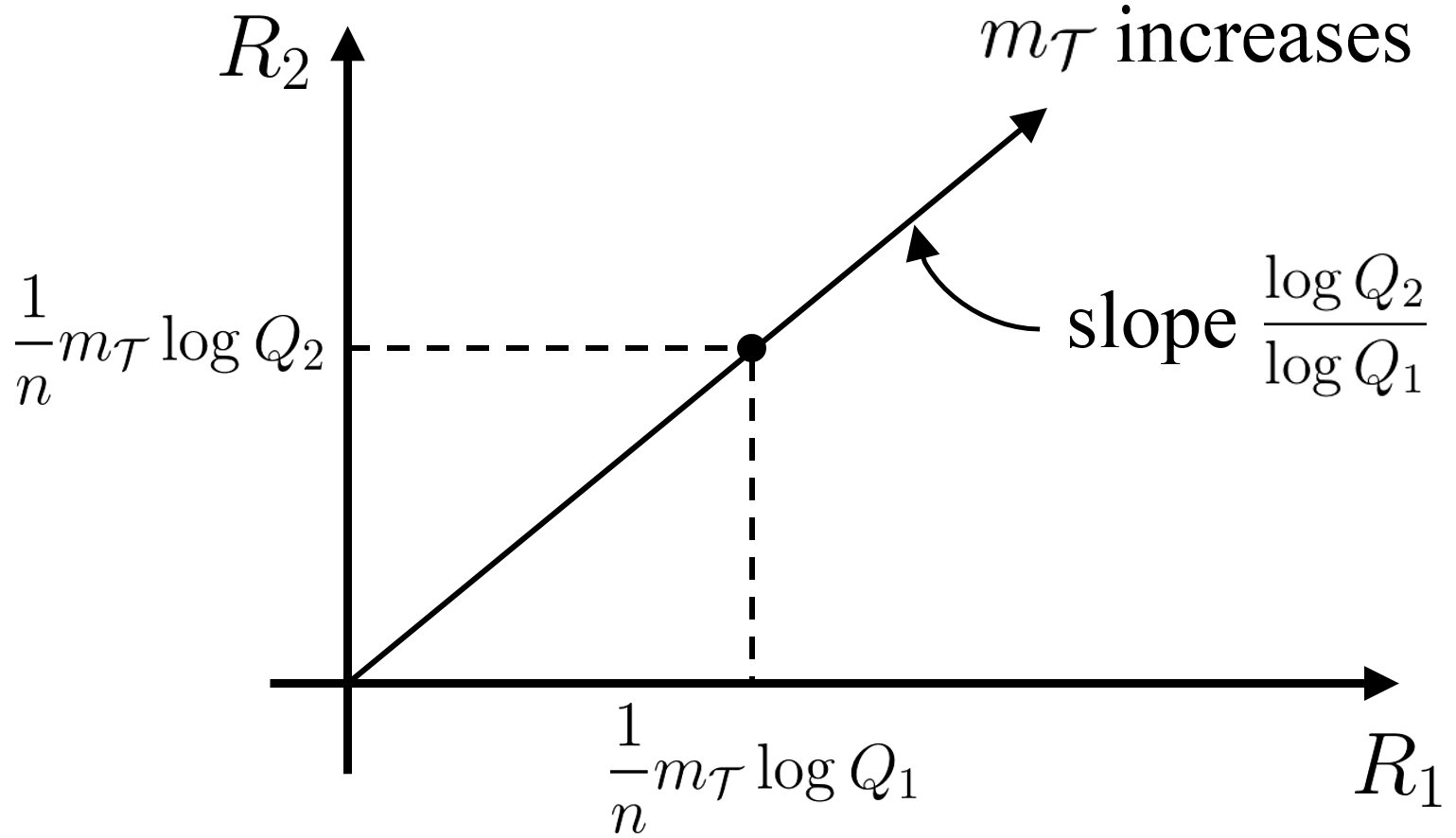}
	\caption{The relationship between decoding blocklength $m_{\mathcal{T}}$, 
	code symbol alphabet sizes $(Q_1, Q_2)$, 
	and source coding rate vector $\mathbf{R}_{\mathcal{T}}$, illustrated for $\mathcal{T} =\left\{1, 2\right\}$.}
	\label{fig-rasc-rate}
\end{figure} 

The following definitions build toward 
the non-asymptotic fundamental limit of RASCs.

\begin{defn}[$n$-Valid and $\left(n, \overline{\boldsymbol{\epsilon}}_{K}\right)$-Rate 
	sets] \label{def-valid-rate}
	A collection $\left(\mathbf{R}_{\mathcal{T}}\right)_{\mathcal{T} \in \mathcal{P}([K])}$ 
	of rate vectors is $n$-valid if $\exists$ 
	$\left(\overline{\mathbf{m}}_{K}, \mathbf{Q}_{[K]}\right)$ s.t.
	\begin{equation}
	\mathbf{R}_\mathcal{T} = \frac{1}{n}\left(m_\mathcal{T}\log Q_i, \, i \in \mathcal{T}\right),\, \forall \, \mathcal{T} \in \mathcal{P}([K]). \label{eq-def-rate-vector}
	\end{equation}
	The set $\mathcal{R}_{\rm valid}(n)$ is the set of $n$-valid rate collections.
	The collection is $\left(n, \overline{\boldsymbol{\epsilon}}_{K}\right)$-achievable 
	if there exists an $\left(n,\overline{\mathbf{m}}_{K}, \mathbf{Q}_{[K]}, 
	\overline{\boldsymbol{\epsilon}}_{K}\right)$ RASC. 
	The $\left(n, \overline{\boldsymbol{\epsilon}}_{K}\right)$-rate set 
	$\mathcal{R}^{*}\left(n, \overline{\boldsymbol{\epsilon}}_{K}\right)$ 
	is the set of $\left(n,\overline{\boldsymbol{\epsilon}}_{K}\right)$-achievable 
	rate collections.
\end{defn}

\subsection{Background} \label{sec-ra-b}
While the concept of an RASC is new, 
the RASC problem is related to the universal MASC problem. 
Like a universal MASC, 
the RASC is designed for an unknown distribution 
from a known collection of possible distributions. 
In this case, the possible distributions are 
$\left\{P_{\mathbf{X}_\mathcal{T}}: \mathcal{T} \in \mathcal{P}([K]) \right\}$.
The RASC differs, however, 
from universal MASCs 
since even the set of active encoders is unknown \emph{a priori}. 

A short summary of prior universal MASCs follows.
\begin{enumerate}
	\item For a fixed-rate MASC and finite source alphabets, 
	universal decoding can be realized using type methods. 
	(See \cite{tan-kosut}, \cite{oohama-han}, \cite{csiszar-korner}.) 
	Such strategies achieve optimal performance 
	only when the source's MASC rate region 
	matches the code's fixed rate. 
	\item Oohama~\cite{oohama} and Jaggi and Effros~\cite{jaggi-effros} 
	study the effect of limited encoder cooperation 
	on the asymptotically universally achievable rate region. 
	Rate-zero cooperation between encoders 
	suffices to achieve universality in the asymptotic regime. 
	Oohama characterizes the optimal error exponents in~\cite{oohama}.
	\item Yang et al.~\cite{yang-yeung} study a 
	block MASC with progressive encoding; 
	the code uses zero-rate feedback 
	to universally achieve the asymptotic MASC rate region.
	Sarvotham et al.~\cite{sarvotham-baraniuk} 
	propose a variable-rate block sequential coding scheme 
	with zero-rate feedback for binary symmetric sources, 
	showing that at blocklength $n$ and target error probability $\epsilon$, 
	the backoff from the asymptotic MASC rate due to universality 
	is $O\left(\frac 1 {\sqrt{n}}Q^{-1}(\epsilon)\right)$. 
	\item In~\cite{draper}, Draper introduces a rateless MASC 
	with single-bit feedback. 
	Draper's algorithm asymptotically achieves the optimal coding rates 
	for sources with unknown joint distributions but known finite alphabet sizes. 
	See~\cite{eckford-yu} for a practical rateless MASC. 
\end{enumerate}

\subsection{Asymptotics: Third-Order Performance of the RASC}
\label{sec-result-rasc}

In this section, we analyze the performance of an 
$\left(n, \overline{\mathbf{m}}_{K}, 
\mathbf{Q}_{[K]}, \overline{\boldsymbol{\epsilon}}_{K}\right)$ RASC 
for stationary, memoryless sources. 
Results include both achievability and converse characterizations 
of the $\left(n, \overline{\boldsymbol{\epsilon}}_{K}\right)$-rate set 
$\mathcal{R}^{*}\left(n, \overline{\boldsymbol{\epsilon}}_{K}\right)$
under the assumption that 
the single-letter joint source distribution $P_{\mathbf{X}_{[K]}}$ satisfies
\begin{IEEEeqnarray}{rCl}
&\mathbb E\left[V_c\big(\mathbf{X}_{\hat{\mathcal{T}}} |
 \mathbf{X}_{\mathcal{T}\backslash \hat{\mathcal{T}}}\big)\right] > 0 \ \ \ \forall \, 
 \hat{\mathcal{T}} \subseteq \mathcal{T} \subseteq [K], \, \hat{\mathcal{T}}, 
 \mathcal{T} \neq \emptyset&  \label{assump-c1} \\
&T\big(\mathbf{X}_{\hat{\mathcal{T}}} | 
\mathbf{X}_{\mathcal{T}\backslash \hat{\mathcal{T}}}\big) < \infty \ \ \ \ \ \ \ \ \forall \, 
\hat{\mathcal{T}} \subseteq \mathcal{T} \subseteq [K], \, \hat{\mathcal{T}}, 
\mathcal{T} \neq \emptyset. & \label{assump-c2} \\
&\mathbb E\left[T_c^2\big(\mathbf{X}_{\hat{\mathcal{T}}} |
 \mathbf{X}_{\mathcal{T}\backslash \hat{\mathcal{T}}}\big)\right] < \infty \ \ \forall \, 
 \hat{\mathcal{T}} \subset \mathcal{T} \subseteq [K], \, \hat{\mathcal{T}}, 
 \mathcal{T} \neq \emptyset.&  \label{assump-c3} 
 \IEEEeqnarraynumspace
\end{IEEEeqnarray}
Constraints \eqref{assump-c1}--\eqref{assump-c3} 
enable us to use Berry-Esseen bounds.  
The resulting characterization is tight up to the third-order term. 
While the existence of an 
$\left(n,\overline{\mathbf{m}}_{K}, \mathbf{Q}_{[K]}, 
\overline{\boldsymbol{\epsilon}}_{K}\right)$ RASC 
implies the existence of an 
$\big(n,(Q_i^{m_{\mathcal{T}}},\, i \in \mathcal{T}), \epsilon_{\mathcal{T}}\big)$ MASC 
for each $\mathcal{T} \in \mathcal{P}([K])$, 
the existence of individual MASCs 
does not imply the existence of a single RASC 
that simultaneously satisfies the error probability constraints 
for all possible configurations of active encoders. 
Indeed, the existence of a single RASC 
that simultaneously performs as well (up to the third-order term) 
as the optimal MASC for each $\mathcal{T} \in \mathcal{P}([K])$ 
is one of our most surprising results.

Define the inner and outer bounding sets
\begin{IEEEeqnarray}{rCl}
\lefteqn{\mathcal{R}_{\rm in}^{*}\left(n, \overline{\boldsymbol{\epsilon}}_{K}\right) 
\triangleq \Big\{\left(\mathbf{R}_{\mathcal{T}}\right)_{\mathcal{T} \in \mathcal{P}([K])} 
\in \mathcal{R}_{\rm valid}(n):} \nonumber \\
&&\qquad \qquad \mathbf{R}_\mathcal{T} \in \mathscr{R}_{{\rm in}, 
\mathcal{T}}^{*}(n,\epsilon_{\mathcal{T}}) \ \ \forall\, 
\mathcal{T} \in \mathcal{P}([K]) \Big\} \label{eq-rasc-in} \\
\lefteqn{\mathcal{R}_{\rm out}^{*}\left(n, \overline{\boldsymbol{\epsilon}}_{K}\right) 
\triangleq \Big\{\left(\mathbf{R}_{\mathcal{T}}\right)_{\mathcal{T} \in \mathcal{P}([K])} 
\in \mathcal{R}_{\rm valid}(n):} \nonumber \\
&&\qquad \qquad \mathbf{R}_\mathcal{T} \in \mathscr{R}_{{\rm out}, 
\mathcal{T}}^{*}(n,\epsilon_{\mathcal{T}}) \ \ \forall\, \mathcal{T} \in \mathcal{P}([K]) 
\Big\}, \label{eq-rasc-out} \IEEEeqnarraynumspace
\end{IEEEeqnarray} 
where $\mathscr{R}_{{\rm in}, \mathcal{T}}^{*}(n,\epsilon)$ 
and $\mathscr{R}_{{\rm out}, \mathcal{T}}^{*}(n,\epsilon)$ 
are the third-order MASC bounding sets for distribution $P_{\mathbf{X}_\mathcal{T}}$. 
(See \eqref{eq-def-sw-in} and \eqref{eq-def-sw-out}.)

\begin{thm}[Third-order RASC performance] \label{thm-rasc}
	For any $K < \infty$, consider stationary, memoryless sources 
	specified by a single-letter joint distribution $P_{\mathbf{X}_{[K]}}$ 
	satisfying~\eqref{assump-c1}--\eqref{assump-c3}. 
	For any $\mathbf{0} < \overline{\boldsymbol{\epsilon}}_{K} < \mathbf{1}$, 
	\begin{equation}
	\mathcal{R}_{\rm in}^{*}\left(n, \overline{\boldsymbol{\epsilon}}_{K}\right) 
	\subseteq \mathcal{R}^{*}\left(n, \overline{\boldsymbol{\epsilon}}_{K}\right) 
	\subseteq \mathcal{R}_{\rm out}^{*}\left(n, \overline{\boldsymbol{\epsilon}}_{K}\right).
	\end{equation}
\end{thm}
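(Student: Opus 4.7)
The theorem has two directions, and I will treat them separately.

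\textbf{Outer bound.} The converse is essentially a reduction to the MASC problem with finite feedback. For any $(n,\overline{\mathbf{m}}_K,\mathbf{Q}_{[K]},\overline{\boldsymbol{\epsilon}}_K)$ RASC and any $\mathcal{T}\in\mathcal{P}([K])$, I would observe that when $\mathcal{T}$ is the active set, the encoders and decoder effectively execute an $n$-blocklength MASC for sources $\mathbf{X}_\mathcal{T}$ with codebook sizes $(Q_i^{m_\mathcal{T}})_{i\in\mathcal{T}}$ and error probability at most $\epsilon_\mathcal{T}$, using at most $|\mathcal{M}|\leq 2^K$ bits of broadcast feedback from the decoder. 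Combining the FB-MASC to CF-MASC inclusion noted in \eqref{eq-fb-cf-conv} with Theorem~\ref{thm-cf-rasc-conv}, and using the multi-encoder generalization of Theorem~\ref{thm-sw} described in Remark~\ref{rem-sw-general}, yields $\mathbf{R}_\mathcal{T}\in\mathscr{R}^*_{\mathrm{out},\mathcal{T}}(n,\epsilon_\mathcal{T})$ for every $\mathcal{T}$, which is exactly the defining condition for $\mathcal{R}^*_{\mathrm{out}}(n,\overline{\boldsymbol{\epsilon}}_K)$ in \eqref{eq-rasc-out}.

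\textbf{Inner bound.} For achievability, I would use a single random ensemble in which the nested structure is built in: draw, for each $i\in[K]$ and each $x_i\in\mathcal{X}_i$, the length-$m_{\max}$ codeword $\mathsf{f}_i(x_i)$ uniformly at random over $[Q_i]^{m_{\max}}$, independently across $i$ and $x_i$. For each $\mathcal{T}$, the first $m_\mathcal{T}$ symbols of the codewords of encoders in $\mathcal{T}$ then automatically form an MASC codebook on which maximum likelihood decoding is used. The $|\mathcal{T}|$-encoder version of Theorem~\ref{thm-sw-rcu} bounds the ensemble-expected error of this induced MASC by an RCU expression $\bar\epsilon_\mathcal{T}$. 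Repeating the analysis of Lemmas~\ref{lem-b-e} and~\ref{lem-sve} exactly as in the achievability proof of Theorem~\ref{thm-sw} converts $\bar\epsilon_\mathcal{T}\leq\epsilon_\mathcal{T}$ into the condition $\mathbf{R}_\mathcal{T}\in\mathscr{R}^*_{\mathrm{in},\mathcal{T}}(n,\epsilon_\mathcal{T})$ up to an $O(1/n)$ rate slack.

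\textbf{Main obstacle.} What remains, and is the crux of the theorem, is to extract \emph{one} deterministic realization of the random ensemble whose induced error probabilities $E_\mathcal{T}$ satisfy $E_\mathcal{T}\leq\epsilon_\mathcal{T}$ for \emph{all} $2^K-1$ active sets simultaneously, while losing only $O(1/n)$ in each rate coordinate. A naive Markov-plus-union-bound argument forces the ensemble-expected errors down by a factor $2^K$; propagating this through the dispersion term $\sqrt{1/n}\,\mathscr{Q}_{\mathrm{inv}}(\mathsf{V}_\mathcal{T},\cdot)$ produces an $O(1/\sqrt n)$ second-order penalty that would destroy the third-order characterization. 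The refined random coding argument (Lemma~\ref{lem-bad-code}), tailored to the nested ensemble above, avoids this by showing that the probability (over the random code) that the code is bad for some $\mathcal{T}$ can be driven strictly below $1$ using only an $O(1/n)$ rate slack per coordinate, simultaneously for all $\mathcal{T}\in\mathcal{P}([K])$. Feeding this existence result back through Lemma~\ref{lem-sve} to absorb the slack into the $+O(1/n)\mathbf{1}$ term in the definition \eqref{eq-def-sw-in} of $\mathscr{R}^*_{\mathrm{in},\mathcal{T}}(n,\epsilon_\mathcal{T})$, and noting that the rate collection produced by this construction is $n$-valid by design, completes the proof of the inner bound.
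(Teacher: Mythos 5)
Your proposal matches the paper's proof in both structure and substance: the converse is exactly the reduction through the FB-MASC $\subseteq$ CF-MASC inclusion \eqref{eq-fb-cf-conv} and Theorem~\ref{thm-cf-rasc-conv}, while the achievability correctly identifies the nested uniform-random ensemble (Theorem~\ref{thm-rasc-achiev-bound}), the insufficiency of naive Markov-plus-union, and the role of Lemma~\ref{lem-bad-code} (Markov shifted by the converse-certified minimum error $\epsilon^*(\mathcal{C}_\mathcal{T})$) in keeping the derandomization cost to an $O(1/n)$ rate slack absorbed via Lemma~\ref{lem-sve}. This is the paper's argument.
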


The converse and achievability proofs follow.

\begin{proof}[Proof of Theorem \ref{thm-rasc}: converse]
As shown in Section~\ref{sec-sw-feedback} (Theorem~\ref{thm-cf-rasc-conv}), 
even with {\em a priori} knowledge of the encoder set 
$\mathcal{T}\in\mathcal{P}([K])$ 
and $2^K$ bits of feedback, 
a MASC for the encoders in set $\mathcal{T}$ 
cannot achieve performance outside of 
the third-order MASC outer bounding set 
$\mathscr{R}_{{\rm out},\mathcal{T}}^{*}(n,\epsilon_\mathcal{T})$.
\end{proof}
The achievability part of Theorem~\ref{thm-rasc} 
provides a sufficient condition 
for the existence of a \emph{single} RASC 
that is simultaneously good for all $\mathcal{T} \in \mathcal{P}([K])$. 
To prove this, 
we first derive an achievability result 
assuming that the encoders and decoder share the common randomness 
used to generate a random code (Theorem~\ref{thm-rasc-achiev-bound}). 
Unfortunately, 
the existence of a random code ensemble 
with expected error probability satisfying the error probability constraint 
for each $\mathcal{T}\in\mathcal{P}([K])$ 
does not guarantee the existence of a {\em single} deterministic code 
satisfying those constraints simultaneously. 
We therefore take a different approach, 
which, unexpectedly, combines a converse bound on error probability 
and a random coding argument to show achievability.


The following refinement of the random coding argument 
provides a bound on the probability (with respect to the random code choice) 
that the error probability of a randomly chosen code 
exceeds a certain threshold. 
The code of interest here can be any type of source or channel code.
\begin{lem} \label{lem-bad-code}
	Let $\mathcal{C}$ be any class of codes 
	with a corresponding error probability $P_e(\mathsf{c})$ 
	for each $\mathsf{c} \in \mathcal{C}$. 
	Let 
	\begin{equation}
	\epsilon^{*}(\mathcal{C}) = \min\limits_{\mathsf{c} \in \mathcal{C}} P_e(\mathsf{c})
	\end{equation} 
	denote the error probability of the best code in $\mathcal{C}$. 
	Then any random code ensemble\footnote{A random code ensemble 
	is a random variable $\mathsf{C}$ 
	defined on code set $\mathcal{C}$.} 
	$\mathsf{C}$ defined over $\mathcal{C}$ satisfies 
	\begin{equation}
	\mathbb{P}\left[P_e(\mathsf{C}) > \epsilon \right] 
	\leq \frac{\mathbb{E}\left[P_e(\mathsf{C})\right] - \epsilon^{*}(\mathcal{C})}
	{\epsilon - \epsilon^{*}(\mathcal{C})}, \; \forall \, \epsilon > \epsilon^{*}(\mathcal{C}). 
	\label{eq-rand-lem}
	\end{equation}
\end{lem}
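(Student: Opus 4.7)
The plan is to recognize this as a shifted Markov inequality applied to the nonnegative random variable $P_e(\mathsf{C}) - \epsilon^{*}(\mathcal{C})$. The standard Markov bound $\mathbb{P}[P_e(\mathsf{C}) > \epsilon] \leq \mathbb{E}[P_e(\mathsf{C})]/\epsilon$ would be too weak for our purposes, since in the regime where $\mathbb{E}[P_e(\mathsf{C})]$ and $\epsilon$ are both close to $\epsilon^{*}(\mathcal{C})$ we want the ratio on the right-hand side of \eqref{eq-rand-lem} to be small, whereas the naive Markov bound gives something close to $1$. The improvement comes from exploiting the deterministic lower bound $P_e(\mathsf{c}) \geq \epsilon^{*}(\mathcal{C})$ that holds for every $\mathsf{c} \in \mathcal{C}$ by definition of $\epsilon^{*}(\mathcal{C})$.

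First I would set $Y \triangleq P_e(\mathsf{C}) - \epsilon^{*}(\mathcal{C})$ and note that $Y \geq 0$ almost surely, since every realization of $\mathsf{C}$ lies in $\mathcal{C}$ and therefore achieves an error probability at least as large as $\epsilon^{*}(\mathcal{C})$. Next, for any $\epsilon > \epsilon^{*}(\mathcal{C})$, the event $\{P_e(\mathsf{C}) > \epsilon\}$ is identical to $\{Y > \epsilon - \epsilon^{*}(\mathcal{C})\}$, and $\epsilon - \epsilon^{*}(\mathcal{C}) > 0$ so Markov's inequality applies. This yields
\begin{equation*}
\mathbb{P}\left[P_e(\mathsf{C}) > \epsilon\right] = \mathbb{P}\left[Y > \epsilon - \epsilon^{*}(\mathcal{C})\right] \leq \frac{\mathbb{E}[Y]}{\epsilon - \epsilon^{*}(\mathcal{C})} = \frac{\mathbb{E}[P_e(\mathsf{C})] - \epsilon^{*}(\mathcal{C})}{\epsilon - \epsilon^{*}(\mathcal{C})},
\end{equation*}
which is exactly \eqref{eq-rand-lem}.

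There is essentially no obstacle here; the entire content of the lemma is the observation that one should apply Markov's inequality to $P_e(\mathsf{C})$ \emph{shifted by its essential infimum} rather than to $P_e(\mathsf{C})$ itself. The cleverness is conceptual rather than technical: this refined argument will later be invoked in the RASC achievability proof to ensure that a single deterministic code simultaneously satisfies multiple constraints, by combining a converse bound (which controls $\epsilon^{*}(\mathcal{C})$ from below) with a random-coding bound (which controls $\mathbb{E}[P_e(\mathsf{C})]$ from above) to force the bad-code probability below the reciprocal of the number of constraints via a union bound.
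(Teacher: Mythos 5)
Your proof is correct and takes essentially the same approach as the paper's: a shifted Markov inequality. The paper phrases it by first stating the inequality for a generic nonnegative $Y$ with $y_{\min} = \operatorname{ess\,inf} Y$ and then substituting $Y = P_e(\mathsf{C})$, whereas you shift directly by $\epsilon^{*}(\mathcal{C})$ and use the deterministic bound $P_e(\mathsf{c}) \geq \epsilon^{*}(\mathcal{C})$ for all $\mathsf{c} \in \mathcal{C}$; your version is actually slightly cleaner since it sidesteps the (minor) gap between $\operatorname{ess\,inf} P_e(\mathsf{C})$ and $\min_{\mathsf{c}} P_e(\mathsf{c})$ that the paper handles only implicitly via a remark on monotonicity after the lemma.
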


\begin{proof}
	Let $Y$ be any non-negative random variable 
	and define $y_{\min} \triangleq {\rm ess} \inf Y$; 
	that is, $y_{\min}$ is the largest constant $y \in \mathcal{Y}$ 
	for which $Y \geq y$ almost surely. By Markov's inequality, 
	\begin{IEEEeqnarray*}{rcl}
	\mathbb{P}\left[Y \geq y \right] 
	&\,=\,& \mathbb{P}\left[Y - y_{\rm min} \geq y - y_{\rm min} \right] \\
	&\,\leq\,& \frac{\mathbb{E}\left[Y\right] - y_{\min}}{y - y_{\min}}\ \ 
	\forall \, y > y_{\min}.
	\end{IEEEeqnarray*}
	Taking $Y = P_e(\mathsf{C})$ and $y = \epsilon$ 
	yields the desired result.
\end{proof}
In the regime of interest $\mathbb{E}\left[P_e(\mathsf{C})\right] < \epsilon$.
Therefore, the right side of \eqref{eq-rand-lem} is decreasing as a function of $\epsilon^{*}(\mathcal{C})$, and replacing  $\epsilon^{*}(\mathcal{C})$ by any converse on $\epsilon^{*}(\mathcal{C})$ yields a valid achievability bound. Thus Lemma~\ref{lem-bad-code} provides a means to leverage a converse to prove achievability. 

Given any RASC $\mathsf{c}$, for each $\mathcal{T} \in \mathcal{P}([K])$
let $P_{e,\mathcal{T}}(\mathsf{c})$ denote the error probability of code $\mathsf{c}$ 
under active encoder set $\mathcal{T}$. 
The RASC achievability proof 
applies Lemma~\ref{lem-bad-code} 
with error probability $P_{e,\mathcal{T}}(\mathsf{c})$ 
for each $\mathcal{T}\in\mathcal{P}([K])$. 
Before proceeding to that proof, 
we use Theorem~\ref{thm-rasc-achiev-bound}, below, 
to define a random code ensemble 
and calculate its expected error probability.

\begin{thm}[Random code] \label{thm-rasc-achiev-bound}
	For any $K < \infty$, 
	consider a source distribution $P_{\mathbf{X}_{[K]}}$ 
	defined on countable alphabet $\mathcal{X}_{[K]}$. 
	There exists a random code ensemble $\mathsf{C}$ 
	defined on the set of all RASCs 
	with decoding blocklengths $\overline{\mathbf{m}}_{K}$ 
	and code alphabets $\mathbf{Q}_{[K]}$ 
	for which the following inequalities hold simultaneously 
	for all $\mathcal{T} \in \mathcal{P}([K])$:
	\begin{IEEEeqnarray}{rCl}
	\mathbb{E}\left[P_{e,\mathcal{T}}(\mathsf{C})\right] 
	&\leq& \mathbb{E} \Bigg[\min \Bigg\{1, \, 
	\sum\limits_{{\hat{\mathcal{T}}} \in \mathcal{P}(\mathcal{T})} 
	\exp\big(-m_{\mathcal{T}}\cdot \overline{Q}({\hat{\mathcal{T}}}) \big)   A_{\hat {\mathcal{T}}} 
	\Bigg\} \Bigg], \label{eq-rasc-achiev-b} \nonumber \\*
	\end{IEEEeqnarray} 
	where 

\begin{IEEEeqnarray}{rCl}
	\overline{Q}({\hat{\mathcal{T}}}) & \triangleq & 
	\sum\limits_{i \in {\hat{\mathcal{T}}}} \log Q_i \\
 A_{\hat {\mathcal{T}}} &\triangleq& 
 	\mathbb{E} \big[\exp\big(\imath\big(\bar{\mathbf{X}}_{{\hat{\mathcal{T}}}}|
	\mathbf{X}_{\mathcal{T} \backslash {\hat{\mathcal{T}}}}\big)\big) \cdot \nonumber \\ 
	&&1\big\{\imath\big(\bar{\mathbf{X}}_{{\hat{\mathcal{T}}}}|
	\mathbf{X}_{\mathcal{T} \backslash {\hat{\mathcal{T}}}}\big) \leq 
	\imath\big(\mathbf{X}_{{\hat{\mathcal{T}}}}\big|\mathbf{X}_{\mathcal{T} 
	\backslash {\hat{\mathcal{T}}}}\big) \big\} \big| \mathbf{X}_{\mathcal{T}} \big]\label{eq-AhatT}
\end{IEEEeqnarray}
	and the expectation in \eqref{eq-AhatT} is with respect to the conditional distribution
	\begin{equation}
P_{\bar{\mathbf{X}}_{{\hat{\mathcal{T}}}} | \mathbf{X}_{\mathcal{T}}} =  P_{{\mathbf{X}}_{{\hat{\mathcal{T}}}} |  \mathbf{X}_{\mathcal{T} \backslash {\hat{\mathcal{T}}}}}.
	\label{eq-rasc-achiev-dis}
	\end{equation}
\end{thm}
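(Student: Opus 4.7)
The plan is to generalize the MASC RCU argument used to prove Theorem~\ref{thm-sw-rcu} from a fixed encoder set to the random access setting, exhibiting a single random code ensemble whose expected error probability satisfies the stated bound simultaneously for every active encoder set $\mathcal{T} \in \mathcal{P}([K])$.

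First I would define the random code: for every encoder $i \in [K]$ and every source symbol $x_i \in \mathcal{X}_i$, draw the full length-$m_{\max}$ codeword $\mathsf{F}_i(x_i) \in [Q_i]^{m_{\max}}$ uniformly at random and independently across $i$ and across $x_i$. The nested-prefix structure required by Definition~\ref{def-rasc} holds automatically, since each encoder simply transmits its codeword symbol by symbol. For each $\mathcal{T} \in \mathcal{P}([K])$, the decoder $\mathsf{g}_\mathcal{T}$ is the maximum likelihood decoder based on the first $m_\mathcal{T}$ transmitted symbols from every active encoder, with ties broken uniformly at random. A key structural observation is that because the $m_{\max}$-length codewords are drawn uniformly on a product alphabet, their length-$m_\mathcal{T}$ prefixes are themselves uniform on $[Q_i]^{m_\mathcal{T}}$ and independent across $i$ and $x_i$, so each conditional analysis can proceed as though we had independently designed codewords at length $m_\mathcal{T}$.

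Next, I would fix $\mathcal{T}$ and bound the expected conditional error probability. Arguing as in the proof of Theorem~\ref{thm-sw-rcu}, the error event is contained in the event that there exists $\bar{\mathbf{x}}_\mathcal{T} \neq \mathbf{X}_\mathcal{T}$ with $\imath(\bar{\mathbf{x}}_\mathcal{T}) \leq \imath(\mathbf{X}_\mathcal{T})$ whose length-$m_\mathcal{T}$ encoder outputs all match those of $\mathbf{X}_\mathcal{T}$. I would partition this event according to $\hat{\mathcal{T}} \triangleq \{i \in \mathcal{T} : \bar{x}_i \neq X_i\} \in \mathcal{P}(\mathcal{T})$, the set of coordinates where the confusable alternative differs. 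Conditional on $\mathbf{X}_\mathcal{T}$, the probability that a fixed $\bar{\mathbf{x}}_{\hat{\mathcal{T}}}$ agrees with $\mathbf{X}_{\hat{\mathcal{T}}}$ in its length-$m_\mathcal{T}$ codeword is exactly $\prod_{i \in \hat{\mathcal{T}}} Q_i^{-m_\mathcal{T}} = \exp(-m_\mathcal{T}\, \overline{Q}(\hat{\mathcal{T}}))$. Taking a union bound over $\hat{\mathcal{T}} \in \mathcal{P}(\mathcal{T})$ and over $\bar{\mathbf{x}}_{\hat{\mathcal{T}}}$, and truncating by $1$ inside the expectation, yields
\begin{equation*}
\mathbb{E}[P_{e,\mathcal{T}}(\mathsf{C})] \leq \mathbb{E}\!\left[\min\!\left\{1,\ \sum_{\hat{\mathcal{T}} \in \mathcal{P}(\mathcal{T})} \exp(-m_\mathcal{T}\, \overline{Q}(\hat{\mathcal{T}})) \!\!\sum_{\bar{\mathbf{x}}_{\hat{\mathcal{T}}}} 1\{\imath(\bar{\mathbf{x}}_{\hat{\mathcal{T}}}|\mathbf{X}_{\mathcal{T}\setminus\hat{\mathcal{T}}}) \leq \imath(\mathbf{X}_{\hat{\mathcal{T}}}|\mathbf{X}_{\mathcal{T}\setminus\hat{\mathcal{T}}})\}\right\}\right].
\end{equation*}
Finally, I would rewrite the inner sum over $\bar{\mathbf{x}}_{\hat{\mathcal{T}}}$ as the conditional expectation $A_{\hat{\mathcal{T}}}$ in \eqref{eq-AhatT} by multiplying and dividing by $P_{\mathbf{X}_{\hat{\mathcal{T}}}|\mathbf{X}_{\mathcal{T}\setminus\hat{\mathcal{T}}}}(\bar{\mathbf{x}}_{\hat{\mathcal{T}}}|\mathbf{X}_{\mathcal{T}\setminus\hat{\mathcal{T}}})$, exactly as in the step from \eqref{eq-rcu-3} to \eqref{eq-rcu-4} in the point-to-point proof; this produces the change of measure in \eqref{eq-rasc-achiev-dis}, yielding \eqref{eq-rasc-achiev-b}.

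The main obstacle is making sure the same random ensemble works for every $\mathcal{T}$ simultaneously; this requires care because, although the $m_{\max}$-length codewords are drawn uniformly, the various prefix lengths $m_\mathcal{T}$ are coupled through the single draw $\mathsf{F}_i(x_i)$. The needed fact — that for each $\mathcal{T}$ the length-$m_\mathcal{T}$ prefixes remain i.i.d.\ uniform and independent across $x_i$ — is the pivot on which the whole argument turns. Note that this theorem only asserts the existence of an ensemble whose \emph{expected} error satisfies \eqref{eq-rasc-achiev-b} for each $\mathcal{T}$; the stronger statement that one deterministic code meets all the constraints simultaneously is what later requires Lemma~\ref{lem-bad-code} and is not addressed here.
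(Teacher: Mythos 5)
Your proposal is correct and follows essentially the same route as the paper's proof: the same uniform random codewords of length $m_{\max}$ yielding nested prefixes, the same maximum-likelihood decoder per active set $\mathcal{T}$, the same decomposition of the error event by the set $\hat{\mathcal{T}}$ of coordinates where the confusable alternative differs, the same collision probability $\exp(-m_\mathcal{T}\,\overline{Q}(\hat{\mathcal{T}}))$, and the same change of measure that converts the counting sum over $\bar{\mathbf{x}}_{\hat{\mathcal{T}}}$ into the conditional expectation $A_{\hat{\mathcal{T}}}$. You also correctly delimit the scope of the theorem (bounding expected error only), deferring the passage to a single deterministic code to Lemma~\ref{lem-bad-code}.
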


\begin{proof}
	We construct the random code ensemble $\mathsf{C}$ as follows. 
	
	\textit{Random Encoding Map}: 
	For every $i \in [K]$, 	
	draw encoder outputs $\mathsf{F}_{i}(x_i)$ for all $x_i \in \mathcal{X}_i$ 
	i.i.d. uniformly at random from $[Q_i]^{m_{\rm max}}$, 
	where $m_{\max} \triangleq \max \left\{m_\mathcal{T}: 
	\mathcal{T} \in \mathcal{P}([K])\right\}$. 
	
	\textit{Maximum Likelihood Decoder}: 
	For any $m \in [m_{\max}]$, $x_i \in \mathcal{X}_i$, and $i \in [K]$, 
	denote the first $m$ symbols of $\mathsf{F}_i(x_i)$
	by $\mathsf{F}_i(x_i)_{[m]}$.  
	For each $\mathcal{T}\in\mathcal{P}([K])$, 
	the maximum likelihood decoder $\mathsf{g}_\mathcal{T}$ 
	for $\mathcal{T}$ observes the first $m_{\mathcal{T}}$ symbols 
	from the encoders in $\mathcal{T}$, 
	here denoted by 
	\begin{equation}
	\mathsf{F}(\mathbf{x}_\mathcal{T})_{[m_{\mathcal{T}}]} 
	\triangleq \left(\mathsf{F}_i(x_i)_{[m_{\mathcal{T}}]}\right)_{i \in \mathcal{T}}, 
	\end{equation} 
	and, for each $\textbf{c}_{\mathcal{T}}=(\textbf{c}_i)_{i \in \mathcal{T}}
	\in \prod\limits_{i \in \mathcal{T}} [Q_i]^{m_{\mathcal{T}}}$, 
	produces the output
	\begin{equation}
	\mathsf{g}_\mathcal{T}(\textbf{c}_{\mathcal{T}}) = 
	\arg \underset{\substack{{\mathbf{x}_\mathcal{T} \in \mathcal{X}_\mathcal{T}:} \\ 
	{\mathsf{F}(\mathbf{x}_\mathcal{T})_{[m_\mathcal{T}]} 
	= \textbf{c}_{\mathcal{T}}}}}{\min} \imath(\mathbf{x}_\mathcal{T}).
	\end{equation} 
	
	\textit{Expected Error Analysis}: 
	The expected error probability $\mathbb{E}\left[P_{e,\mathcal{T}}(\mathsf{C})\right]$ 
	over the random code ensemble is bounded above by the probability of event 
	\begin{IEEEeqnarray}{rCl}
	\mathcal{E}_\mathcal{T} 
	&\triangleq& \left\{\exists \, \bar{\mathbf{x}}_{\mathcal{T}} \in 
	\mathcal{X}_\mathcal{T} \backslash \{\mathbf{X}_{\mathcal{T}}\}: \right. \nonumber \\
	&& \left. \imath\big(\bar{\mathbf{x}}_\mathcal{T}\big) 
	\leq \imath\big(\mathbf{X}_\mathcal{T}\big), \,
	\mathsf{F}(\bar{\mathbf{x}}_\mathcal{T})_{[m_\mathcal{T}]} = 
	\mathsf{F}(\mathbf{X}_\mathcal{T})_{[m_\mathcal{T}]}  \right\}. 
	\IEEEeqnarraynumspace
	\end{IEEEeqnarray}
	It follows that
	\begin{IEEEeqnarray}{rCl}
	\lefteqn{\mathbb{P} \left[\mathcal{E}_{\mathcal{T}}\right]} \nonumber \\
	&=& \mathbb{P} \left[\bigcup\limits_{\bar{\mathbf{x}}_{\mathcal{T}} 
	\in \mathcal{X}_\mathcal{T} \backslash \{\mathbf{X}_{\mathcal{T}}\}}  
	\left\{ \imath(\bar{\mathbf{x}}_\mathcal{T}) \leq \imath(\mathbf{X}_\mathcal{T}), 
	\right. \right. \nonumber \\
	&& \left. \left. 
	\phantom{\mathbb P \big[ }
	\mathsf{F}(\bar{\mathbf{x}}_\mathcal{T})_{[m_\mathcal{T}]} 
	= \mathsf{F}(\mathbf{X}_\mathcal{T})_{[m_\mathcal{T}]} \right\} 
	\vphantom{\bigcup\limits_{\mathcal{T} \in \mathcal{P}(\mathcal{T})}}  \right] \\
	&=& \mathbb{P} \left[\bigcup\limits_{{\hat{\mathcal{T}}} \in \mathcal{P}(\mathcal{T})} 
	\left\{\bigcup\limits_{
	\substack{{\bar{\mathbf{x}}_{{\hat{\mathcal{T}}}}\in\mathcal{X}_{\hat{\mathcal{T}}}:}\\
	{\bar{\mathbf{x}}_i\neq\mathbf{X}_i\ \forall i\in\hat{\mathcal{T}}}}}
	\left\{ \imath\big(\bar{\mathbf{x}}_{\hat{\mathcal{T}}}, \mathbf{X}_{\mathcal{T} 
	\backslash {\hat{\mathcal{T}}}}\big) \leq \imath(\mathbf{X}_\mathcal{T}), 
	\right.\right. \right. \nonumber \\
	&& \left. \left. \left. 
	\phantom{\mathbb P \big[ }
	\mathsf{F}(\bar{\mathbf{x}}_{\hat{\mathcal{T}}})_{[m_\mathcal{T}]} 
	= \mathsf{F}(\mathbf{X}_{\hat{\mathcal{T}}})_{[m_\mathcal{T}]} \right\} 
	\vphantom{\bigcup\limits_{{\hat{\mathcal{T}}} \in \mathcal{P}(\mathcal{T})}} \right\} 
	\right] \label{eq-rasc-achiev-b1} \\
	&=& \mathbb{P} \left[\bigcup\limits_{{\hat{\mathcal{T}}} 
	\in \mathcal{P}(\mathcal{T})} \left\{\qquad \;\; 
	\mathclap{\bigcup\limits_{
	\substack{{\bar{\mathbf{x}}_{{\hat{\mathcal{T}}}}\in\mathcal{X}_{\hat{\mathcal{T}}}:}\\
	{\bar{\mathbf{x}}_i\neq\mathbf{X}_i\ \forall i\in\hat{\mathcal{T}}}}}}
	\qquad \left\{ \imath\big(\bar{\mathbf{x}}_{\hat{\mathcal{T}}}|
	\mathbf{X}_{\mathcal{T} \backslash {\hat{\mathcal{T}}}}\big) \leq 
	\imath\big(\mathbf{X}_{\hat{\mathcal{T}}}|\mathbf{X}_{\mathcal{T} 
	\backslash {\hat{\mathcal{T}}}}\big), \right. \right. \right. \nonumber \\
	&& \left. \left. \left.
	\phantom{\mathbb P \big[ }
	 \mathsf{F}(\bar{\mathbf{x}}_{\hat{\mathcal{T}}})_{[m_\mathcal{T}]} 
	= \mathsf{F}(\mathbf{X}_{\hat{\mathcal{T}}})_{[m_\mathcal{T}]} \right\} 
	\vphantom{\bigcup\limits_{{\hat{\mathcal{T}}} \in \mathcal{P}(\mathcal{T})}} \right\} 
	\right] \label{eq-rasc-achiev-b2} \\
	&\leq& \mathbb{E} \Bigg[\min \Bigg\{1, \, \sum\limits_{{\hat{\mathcal{T}}} \in 
	\mathcal{P}(\mathcal{T})} \exp\big(-m_{\mathcal{T}} \cdot \overline{Q}
	({\hat{\mathcal{T}}}) \big) \nonumber \\
	&&  \left.\left. \phantom{\mathbb P \big[ }
	\cdot \sum_{\bar{\mathbf{x}}_{{\hat{\mathcal{T}}}}\in\mathcal{X}_{\hat{\mathcal{T}}}} 1 \left\{ \imath\big(\bar{\mathbf{x}}_{\hat{\mathcal{T}}}|
	\mathbf{X}_{\mathcal{T} \backslash {\hat{\mathcal{T}}}}\big) \leq 
	\imath\big(\mathbf{X}_{\hat{\mathcal{T}}}|\mathbf{X}_{\mathcal{T} 
	\backslash {\hat{\mathcal{T}}}}\big) \right\}
	 \right\} \right], 
	\label{eq-rasc-achiev-b3}
	\end{IEEEeqnarray} 
and \eqref{eq-rasc-achiev-b3} is equal to the right-hand-side of \eqref{eq-rasc-achiev-b}. 
	Here, \eqref{eq-rasc-achiev-b1} considers the case 
	where source symbols in set $\hat{\mathcal{T}}$ are decoded incorrectly 
	for each $\hat{\mathcal{T}} \in \mathcal{P}(\mathcal{T})$.  
	The derivation of \eqref{eq-rasc-achiev-b3} from \eqref{eq-rasc-achiev-b2} 
	follows the argument in \eqref{eq-sw-rcu-11}--\eqref{eq-sw-rcu-7}. 
	Specifically, since each component of $\bar{\mathbf{x}}_{{\hat{\mathcal{T}}}}$ 
	differs from the corresponding component of $\mathbf{X}_{{\hat{\mathcal{T}}}}$ 
	and since the encoder output for each is drawn independently and uniformly at random 
	from $[Q_i]^{m_{\max}}$, 
	\begin{IEEEeqnarray}{rCl}
	\lefteqn{\left. \mathbb{P}\left[ 
	\mathsf{F}(\bar{\mathbf{x}}_{\hat{\mathcal{T}}})_{[m_\mathcal{T}]} = 
	\mathsf{F}(\mathbf{X}_{\hat{\mathcal{T}}})_{[m_\mathcal{T}]} \right| 
	\mathbf{X}_{\mathcal{T}}\right]} \nonumber \\
	&=& \prod\limits_{i \in {\hat{\mathcal{T}}}} \frac{1}{Q_i^{m_\mathcal{T}}} \\
	&=& \exp\big(-m_\mathcal{T}\cdot \overline{Q}({\hat{\mathcal{T}}}) \big) 
	\label{eq-rasc-achiev-rep}
	\end{IEEEeqnarray} 
	for any $\bar{\mathbf{x}}_{{\hat{\mathcal{T}}}} \in 
	\mathcal{X}_{{\hat{\mathcal{T}}}}\backslash \{\mathbf{X}_{{\hat{\mathcal{T}}}}\}$. 
\end{proof}

We now prove the achievability part of Theorem~\ref{thm-rasc} 
by applying Lemma~\ref{lem-bad-code} 
to the random code in Theorem~\ref{thm-rasc-achiev-bound}.
\begin{proof}[Proof of Theorem \ref{thm-rasc}: achievability]
The probability that random RASC $\mathsf{C}$ 
has error probability $P_{e,\mathcal{T}}(\mathsf{C})$ 
greater than $\epsilon_{\mathcal{T}}$ for some possible set $\mathcal{T}\in\mathcal{P}([K])$ of active encoders
is 
\begin{equation}
\mathbb{P}\left[\bigcup\limits_{\mathcal{T} \in \mathcal{P}([K])} 
\left\{P_{e,\mathcal{T}}(\mathsf{C}) > \epsilon_\mathcal{T}\right\} \right] 
\leq \sum\limits_{\mathcal{T} \in \mathcal{P}([K])} \mathbb{P} 
\left[P_{e,\mathcal{T}}(\mathsf{C}) > \epsilon_\mathcal{T} \right]. \label{eq-rand-code}
\end{equation}
To bound each term 
$\mathbb{P} \left[P_{e,\mathcal{T}}(\mathsf{C}) > \epsilon_\mathcal{T} \right]$ 
using Lemma~\ref{lem-bad-code}, 
we next bound the expected error probability 
$\mathbb{E}\left[P_{e, \mathcal T}(\mathsf{C})\right]$ 
and the error probability $\epsilon^{*}(\mathcal{C}_{\mathcal T})$ 
for the best code in $\mathcal{C}_\mathcal{T}$, where $\mathcal{C}_\mathcal{T}$  is the set of 
$\big(n,(Q_i^{m_{\mathcal{T}}},\, i \in \mathcal{T}), \epsilon_{\mathcal{T}}\big)$ 
MASCs with $m_\mathcal{T}$ set as in \eqref{eq-def-mt} below. 

To find $\mathbb{E}\left[P_{e, \mathcal T}(\mathsf{C})\right]$, 
we apply Theorem~\ref{thm-rasc-achiev-bound} to our 
stationary, memoryless sources 
with $n$-symbol distribution $P_{\mathbf{X}_{[K]}^n} = P_{\mathbf{X}_{[K]}}^{n}$. 
Given any $\mathcal{T} \in \mathcal{P}([K])$ 
and ${\hat{\mathcal{T}}} \in \mathcal{P}(\mathcal{T})$, let 
	\begin{equation}
	I_{\mathcal{T}, {\hat{\mathcal{T}}}} 
	\triangleq \imath\big(\mathbf{X}^n_{{\hat{\mathcal{T}}}}|\mathbf{X}^n_{\mathcal{T} 
	\backslash {\hat{\mathcal{T}}}}\big). \label{Itau}
	\end{equation} 
	Under moment assumptions \eqref{assump-c1}--\eqref{assump-c3}, one can generalize the argument in \eqref{eq-def-I1n}--\eqref{eq:subR}	to $|\mathcal T|$ active encoders to obtain 
	\begin{align}
\!\!\! & \mathbb{E}\left[P_{e, \mathcal{T}}(\mathsf{C})\right]  
 	\leq \mathbb{P}\left[\bigcup\limits_{{\hat{\mathcal{T}}} \in \mathcal{P}
 	(\mathcal{T})} \left\{I_{\mathcal{T},{\hat{\mathcal{T}}}} > m_\mathcal{T}
 	\overline{Q}({\hat{\mathcal{T}}}) + \frac{\log n}{2} \right.\right.\label{eq-rasc-achiev-b7}  \\
\!\!\!	&  \left.\left. - \log \left( \bar K_{\mathcal{T},{\hat{\mathcal{T}}}} 
	\left(2^{|\mathcal{T}|}-1\right) \right) \vphantom{\frac{\log n}{2}}\right\} 
	\vphantom{\bigcap\limits_{{\hat{\mathcal{T}}}}} \right] + 
	\sum_{\hat {\mathcal T} \in \mathcal P(\mathcal T)} \left( \frac{K_{\mathcal{T}, \hat{\mathcal{T}} }}{\sqrt{n}} + \frac{S_{\mathcal{T}, \hat{\mathcal{T}} }}{{n}}\right), \notag
	\end{align} 
	where $\bar K_{\mathcal{T},{\hat{\mathcal{T}}}}$, $K_{\mathcal{T},{\hat{\mathcal{T}}}}$ and  $S_{\mathcal{T},{\hat{\mathcal{T}}}}$
	are finite positive constants. 

	Fix any $\mathbf{Q}_{[K]}$. 
	By the definition of $\overline{\mathbf{R}}_{\mathcal{T}}$ 
	in \eqref{eq-def-RPT} and the relation in \eqref{eq-def-rate-vector}, 
	we see that
	\begin{equation}
	\overline{\mathbf{R}}_{\mathcal{T}} 
	= \frac{1}{n}\big(m_\mathcal{T}\, \overline{Q}({\hat{\mathcal{T}}}), \, 
	{\hat{\mathcal{T}}} \in \mathcal{P}(\mathcal{T})\big). \label{eq:RQ}
	\end{equation} 
	For brevity, define constant vector
	\begin{equation}
	{\mathbf{C}}_{\mathcal{T}} 
	\triangleq \left(\log\left(\bar K_{\mathcal{T},{\hat{\mathcal{T}}}}\left(2^{|\mathcal{T}|}
	-1 \right)\right), \, {\hat{\mathcal{T}}} \in \mathcal{P}(\mathcal{T}) \right).
	\end{equation} 
	and the almost-constant error thresholds
\begin{equation}
\epsilon_\mathcal{T}^\prime \triangleq  \epsilon_\mathcal{T} - \frac {B}{\sqrt n}- \sum_{\hat {\mathcal T} \in \mathcal P(\mathcal T)} \left( \frac{K_{\mathcal{T}, \hat{\mathcal{T}} }}{\sqrt{n}} + \frac{S_{\mathcal{T}, \hat{\mathcal{T}} }}{{n}} \right)
\end{equation}
where $B$ is the Bentkus constant \eqref{eq-lem-b-e} for the vector of information densities \eqref{Itau}.
	We choose the decoding blocklength $m_\mathcal{T}$ as
	\begin{align}
 	& m_\mathcal{T} = \min \left\{m_\mathcal{T}: \overline{\mathbf{R}}_{\mathcal{T}} \in 
    \overline{\mathscr{R}}_{\mathcal{T}}^{*}\left(n, \epsilon_\mathcal{T}^\prime 
    - \delta_\mathcal{T}  \right) 
    + \frac{1}{n} {\mathbf{C}}_{\mathcal{T}} \right\} , \label{eq-def-mt}
    \end{align} 
    where $\delta_\mathcal{T}$ (which may be a function of $n$) satisfying
    $0 \leq \delta_\mathcal{T} < \epsilon_\mathcal{T}$ will be determined in the sequel, and $\overline{\mathscr{R}}_{\mathcal{T}}^{*}(n, \epsilon)$ 
    is defined in \eqref{eq-def-R3PT}. 
    Applying Lemma \ref{lem-b-e} to \eqref{eq-rasc-achiev-b7} with $m_\mathcal{T}$ in \eqref{eq-def-mt} yields
	\begin{equation}
	\mathbb{E}\left[P_{e, \mathcal{T}}(\mathsf{C})\right] 
	\leq  \epsilon_\mathcal{T} - \delta_\mathcal{T}. \label{eq-rasc-ave-err}
	\end{equation} 	
	To lower-bound $\epsilon^{*}(\mathcal{C})$, 
	for each $n$ and $\epsilon$ define
	\begin{align}
	m_{\mathcal{T}}^{*}(n, \epsilon) &\triangleq 
	\min\left\{m_\mathcal{T}: \mathbf{R}_\mathcal{T} 
	\in \mathscr{R}_{\mathcal{T}}^{*}(n, \epsilon) \right\} \\
	&\geq \min\left\{m_\mathcal{T}: \mathbf{R}_\mathcal{T} 
	\in \mathscr{R}_{{\rm out}, \mathcal{T}}^{*}\left(n, \epsilon \right) \right\}, \label{eq:convm}
	\end{align} 
	where $\mathscr{R}_{\mathcal{T}}^{*}(n, \epsilon)$ 
	is the $(n,\epsilon)$-MASC rate region (see Remark~\ref{rem-sw-general}), $\mathscr{R}_{{\rm out}, \mathcal{T}}^{*}\left(n, \epsilon \right)$ is defined in \eqref{eq-def-sw-out}, and \eqref{eq:convm} is by the converse (Theorem~\ref{thm-cf-rasc-conv}).  By Lemma~\ref{lem-sve}-\ref{part2}, one can always choose $\Delta_\mathcal{T} = O \left( \frac 1 {\sqrt n} \right) $ such that for $n$ sufficiently large
	\begin{align}
	\!\!\!\!\!\!\!\! \mathscr{R}_{{\rm out}, \mathcal{T}}^{*}\left(n, \epsilon_\mathcal{T}^\prime - \delta_{\mathcal T}
	- \Delta_\mathcal{T} \right) 
	\subseteq&~ \overline{\mathscr{R}}_{\mathcal{T}}^{*}\left(n, \epsilon_\mathcal{T}^\prime - \delta_{\mathcal T}\right) 
+ \frac 1 n {\mathbf{C}}_{\mathcal{T}} 
. \!\!\!\!\!\!
\label{eq-rasc-achiev-11}
	\end{align} 
It follows that
\begin{equation}
m_{\mathcal{T}}^{*}\left(n, \epsilon_\mathcal{T}^\prime - \delta_\mathcal{T} 
- \Delta_\mathcal{T} \right) 
\geq m_\mathcal{T}. \label{eq-rasc-mt-relation}
\end{equation}

Equation~\eqref{eq-rasc-mt-relation} and the converse (Theorem~\ref{thm-cf-rasc-conv}) imply 
that the minimal error probability over $\mathcal{C}_\mathcal{T}$ satisfies
\begin{equation}
\epsilon^{*}(\mathcal{C}_\mathcal{T}) 
\geq \epsilon_\mathcal{T}^\prime - \delta_\mathcal{T} 
- \Delta_\mathcal{T}. \label{eq-rasc-mk-1}
\end{equation}

Plugging \eqref{eq-rasc-ave-err} and \eqref{eq-rasc-mk-1} 
into Lemma~\ref{lem-bad-code} and noting the monotonicity of the bound in Lemma~\ref{lem-bad-code}  gives 
\begin{IEEEeqnarray}{rCl}
\mathbb{P}\left[P_{e,\mathcal{T}}(\mathsf{C}) > \epsilon_\mathcal{T} \right] 
&\leq& \frac{\mathbb{E}\left[P_{e,\mathcal{T}}(\mathsf{C})\right] 
- \epsilon^{*}(\mathcal{C}_\mathcal{T})}{\epsilon_\mathcal{T} 
- \epsilon^{*}(\mathcal{C}_\mathcal{T})} \\
&\leq& \frac{ \epsilon_\mathcal{T} - \epsilon_\mathcal{T}^\prime 
+ \Delta_\mathcal{T}}{\epsilon_\mathcal{T} - \epsilon_\mathcal{T}^\prime  + \delta_{\mathcal T}}, 
  \label{eq-rasc-mk-2}
\end{IEEEeqnarray} 
We may choose $\delta_\mathcal{T} = O\left(\frac{1}{\sqrt{n}}\right)$ 
to ensure that the right-hand side of \eqref{eq-rasc-mk-2} 
is as small a constant as desired. Specifically, we choose constants
$(\lambda_\mathcal{T})_{\mathcal{T} \in \mathcal{P}([K])}$ to satisfy 
\begin{equation}
\sum\limits_{\mathcal{T} \in \mathcal{P}([K])} \frac{1}{\lambda_\mathcal{T} + 1} < 1,
\label{eq-rand-code-1}
\end{equation} 
and put
\begin{equation}
\delta_\mathcal{T} = \lambda_\mathcal{T} \left( \epsilon_\mathcal{T} - \epsilon_\mathcal{T}^\prime + \Delta_\mathcal{T} \right) \label{eq-rasc-deltak}.
\end{equation} 
With \eqref{eq-rasc-mk-2} and \eqref{eq-rasc-deltak}, 
we bound the right-hand side of \eqref{eq-rand-code} as
\begin{align}
\sum\limits_{\mathcal{T} \in \mathcal{P}([K])} \mathbb{P}\left[P_{e,\mathcal{T}}(\mathsf{C}) 
> \epsilon_\mathcal{T} \right] \leq \sum\limits_{\mathcal{T} 
\in \mathcal{P}([K])} \frac{1}{\lambda_\mathcal{T} + 1} < 1,
\end{align} 
which implies the existence of a deterministic 
$\left(n, \overline{\mathbf{m}}_{K}, \mathbf{Q}_{[K]}, 
\overline{\boldsymbol{\epsilon}}_{K}\right)$ 
RASC with $m_\mathcal{T}$ in \eqref{eq-def-mt}, $\overline{Q}({\hat{\mathcal{T}}})$ in \eqref{eq:RQ},  and $\mathbf{R}_\mathcal{T}$ in \eqref{eq-def-mt}. 
\end{proof}

\begin{remark} \label{rem-rasc-tradeoff}
When parameters 
$\left(n,\mathbf{Q}_{[K]}, \overline{\boldsymbol{\epsilon}}_{K}\right)$ are fixed, 
increasing $\lambda_\mathcal{T}$ 
yields larger decoding blocklengths $m_\mathcal{T}$. 
Therefore, 
the choice of $(\lambda_\mathcal{T})_{\mathcal{T} \in \mathcal{P}([K])}$ 
to satisfy \eqref{eq-rand-code-1} 
controls the RASC performance trade-off 
across different active encoder sets. 
This trade-off affects the performance of the RASC 
in the fourth- or higher-order terms.  
\end{remark}

\subsection{RASC for Permutation-Invariant Sources}
\label{sec-rasc-perminv}

A 
\emph{permutation-invariant}\footnote{Polyanskiy~\cite{ra-polyanskiy}
introduces a similar notion of permutation invariance 
for multiple access channel coding in~\cite{ra-polyanskiy}.} source 
is defined by the constraint 
\begin{equation}
P_{\mathbf{X}_{[K]}}\left(\mathbf{x}_{[K]}\right) 
= P_{\mathbf{X}_{[K]}}\left(\mathbf{x}_{\pi([K])}\right) \label{eq-rasc-perm}
\end{equation}
for all permutations $\pi$ on $[K]$ 
and all $\mathbf{x}_{[K]} \in \mathcal{X}_{[K]}$. 
For example, given any $P_S$ and $P_{X|S}$, 
the marginal $P_{\mathbf{X}_{[K]}}$ 
of $P_{\mathbf{X}_{[K]}S} = (P_{X|S})^K P_S$ 
satisfies \eqref{eq-rasc-perm}. 
Such ``hidden variable'' models have applications in statistics, science, and economics, 
where latent variables 
(e.g., the health of the world economy or the state of the atmosphere) 
influence observables (e.g., stock prices or climates).
Figure~\ref{fig-ras} shows an example 
with $K$ sensors reading measurements of a common hidden state $S$.

\begin{figure}[!t]
	\centering
	\includegraphics[width=0.25\textwidth]{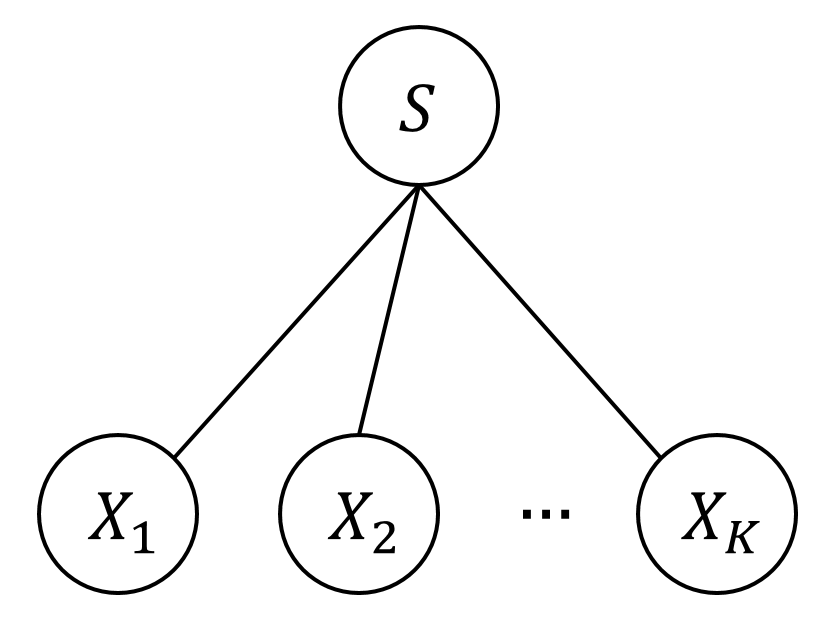}
	\caption{A graphical model of a common distributed sensing scenario.}
	\label{fig-ras}
\end{figure} 

Permutation-invariant source models interest us 
both because of their wide applicability 
and because they present an opportunity 
for code simplification through \emph{identical encoding}, 
where all encoders employ the same encoding map. 
For any permutation-invariant source, 
\eqref{eq-rasc-red} and \eqref{eq-rasc-perm} imply
that $\mathcal{X}_i=\mathcal{X}$ for all $i\in[K]$ and,
for any $\mathcal{T} \in \mathcal{P}([K])$ with $|\mathcal{T}| = k$,  
\begin{align}
P_{\mathbf{X}_\mathcal{T}} &= P_{\mathbf{X}_{[k]}}. \label{eq-rasc-p1}
\end{align}
Thus,  $P_{\mathbf{X}_\mathcal{T}}$ is permutation-invariant for every $\mathcal{T}$ 
and the joint source distribution depends on the number of active encoders 
but not their identities. 
Assuming that we further employ the same error probability $\epsilon_k$ 
for all $\mathcal{T}\in\mathcal{P}([K])$ with $|\mathcal{T}|=k$, 
we can fix a single decoding blocklength 
for each number $k\in[K]$ of active encoders 
and use identical encoders at all transmitters,  
allowing us to accommodate 
an arbitrarily large number of encoders 
without designing a unique encoder for each. 
A similar phenomenon arises for RA channel coding~\cite{recep}.



In analyzing RASC performance 
with identical encoders on a permutation-invariant source, 
we assume in addition to \eqref{assump-c1} and \eqref{assump-c2} 
that no two sources are identical, i.e., 
\begin{equation}
\mathbb{P}\left[ \bigcup\limits_{i,j\in[K], \, i \neq j} \{X_i = X_j\} \right] < 1. \label{eq-no-rep}
\end{equation}
This is important since using identical encoders on identical sources 
yields identical descriptions, 
in which case descriptions from multiple encoders 
are no better than descriptions from a single encoder. 
Under these assumptions, 
Theorem~\ref{thm-rasc} continues to hold. 
In the analysis, 
we modify the decoder to output the most probable source vector 
$\mathbf{x}_{\mathcal{T}} \in \mathcal{X}_\mathcal{T}$ 
that contains no repeated symbols 
(see the proof of Theorem~\ref{thm-rasc-achiev-bound}), 
treating the case where $\mathbf{X}_{\mathcal{T}}$ 
contains repeated symbols as an error. 
In the asymptotic analysis 
for stationary, memoryless sources, 
the probability of this error event is bounded by 
\begin{align}
&~\mathbb{P}\left[ \bigcup\limits_{i,j\in[K], \, i \neq j} \{\mathbf{X}_i^n 
= \mathbf{X}_j^n\} \right] \nonumber \\ 
\leq &~\left(\mathbb{P}\left[ \bigcup\limits_{i,j\in[K], \, i \neq j} \{X_i = X_j\} \right]\right)^n, 
\end{align} 
which decays exponentially in $n$ by \eqref{eq-no-rep}. 
Therefore, under the assumption in \eqref{eq-no-rep}, 
identical encoding does not incur a first-, second-, or third-order 
performance penalty.

\section{Concluding Remarks} \label{sec-conclusion}
This paper studies finite-blocklength lossless source coding 
in three scenarios. 
 
We derive a new non-asymptotic achievability (RCU) bound 
(Theorem \ref{thm-rcu-bound}) 
and use it to show that for point-to-point coding 
on stationary, memoryless sources, 
random code design with maximum likelihood decoding 
achieves the same coding rate 
up to the third-order as the optimal code from~\cite{kontoyiannis-verdu}. 
The RCU bound generalizes to the MASC scenario 
(Theorem~\ref{thm-sw-rcu}). 

A new HT converse (Theorem~\ref{thm-sw-cht-conv}) 
extends the channel coding meta-converse~\cite{pol-poo-ver} 
to an MASC 
and suggests the possibility of using composite hypothesis testing 
to derive converses for other multi-terminal scenarios. 
Our analysis of composite hypothesis testing provides general tools 
(Lemmas~\ref{lem-cht-achiev}, \ref{lem-cht-conv}, and \ref{lem-cht-var}) 
for use in other related problems. 
Just as the meta-converse for channel coding 
recovers previously known converses, 
our HT converse recovers Han's MASC converse~\cite[Lemma.~7.2.2]{han}.
Just as the HT converse for lossy source coding \cite[Th.~8]{kostina-verdu} 
is equivalent to the LP-based converse for that setting (see \cite[Cor.~3]{jose-k}), 
our MASC HT converse is equivalent 
to the MASC LP-based converse~\cite[Th.~12]{jose-k}. 

We give the first third-order characterization 
of the MASC rate region for stationary, memoryless sources, 
tightening prior second-order characterizations 
from~\cite{tan-kosut} and~\cite{nomura-han} 
and replacing the $2^k-1$ thresholds used there to decode for $k$ users 
by a maximum likelihood decoder 
that chooses the jointly most probable source realizations 
consistent with the received codewords. 
We show that 
for rate points converging to a non-corner point on the asymptotic sum-rate boundary, 
separate encoding does not compromise the performance 
in lossless data compression up to the third-order term. 
Numerical comparison of the new HT converse 
and the optimal performance of point-to-point source coding 
in Figure~\ref{fig-bounds} 
allows one to bound from below the small gap between joint and separate encoding, 
which is not captured in the first three terms of the asymptotic expansion. 
For independent sources, there are no non-corner points, 
and MASC separate encoding
incurs a positive penalty in the second-order term 
relative to joint encoding with a point-to-point code. 
When two sources have the same marginals, 
this penalty equals the penalty for using two independent blocklength-$n$ codes 
rather than a single blocklength-$2n$ point-to-point code 
for encoding $2n$ samples.

Our proposed RASC works universally 
for all possible encoder activity patterns. 
The nested structure of the RASC demonstrates 
that there is no need for the encoders to know the set of active encoders {\em a priori}.
The third-order-optimal MASC performance 
is achievable even when the only information the encoders receive 
is the acknowledgment that tells them when to stop transmitting (Theorem \ref{thm-rasc}). 
   
Our refinement of the traditional random coding argument 
(Lemma \ref{lem-bad-code} and \eqref{eq-rand-code}) 
uses bounds on the minimal (converse) and expected (achievability) error probabilities 
for {\em each} possible active encoder set 
to show the existence of a single code that is good 
for {\em all} possible active encoder sets. 
This argument is likely to be useful for other information-theoretic problems.


\appendices

\numberwithin{equation}{section}
\section{Proof of Theorem~\ref{thm-achiev-dt}}
\label{append-thm-achiev-dt}
\renewcommand{\theequation}{\thesection.\arabic{equation}}
Following \cite[Eq.~(68)]{pol-poo-ver}, 
note that for $z > 0$ and $\gamma > 0$
\begin{equation}
\exp \left\{-\left|\log \frac{\gamma}{z}\right|_{+} \right\} 
= 1\left\{z > \gamma \right\} + \frac{z}{\gamma}1\left\{z \leq \gamma \right\}. 
\label{eq-lem-dt-2}
\end{equation} 
Let $z = \frac{1}{P_X(X)}$ and $\gamma = M$. 
Then taking the expectation of both sides of \eqref{eq-lem-dt-2} 
with respect to $P_X$ gives
\begin{align}
&~{\mathbb{E}\left[\exp \left\{-\left|\log M - \imath(X)\right|_{+} \right\}\right]} 
\nonumber \\
=&~ \mathbb{P}\left[\imath(X) > \log M \right] + \frac{1}{M} \mathbb{U}
\left[\imath(X) \leq \log M\right], \label{eq-achiev-dt-1}
\end{align} 
where $\mathbb{P}\left[\cdot\right]$ denotes a probability 
with respect to $P_X$ 
and $\mathbb{U}\left[\cdot\right]$ denotes a mass 
with respect to the counting measure $U_X$ on $\mathcal{X}$, 
which assigns unit weight to each $x \in \mathcal{X}$. 
In light of \eqref{eq-achiev-dt-1}, 
we can prove \eqref{eq-achiev-dt} 
by demonstrating the existence of an $(M,\epsilon)$ code 
for which the right-hand side of \eqref{eq-achiev-dt-1} 
exceeds $\epsilon$. 
We prove a slightly stronger result, 
showing that there exists an $(M,\epsilon)$ code 
with a threshold decoder such that
\begin{equation} 
\epsilon \leq \mathbb{P}\left[\imath(X) > \log \gamma \right] 
+ \frac{1}{M} \mathbb{U}\left[\imath(X) \leq \log \gamma\right] 
\label{eq-lem-dt-1}
\end{equation} 
for all $\gamma > 0$. 
Setting $\gamma = M$ in \eqref{eq-lem-dt-1} yields the desired bound.

Fix $\gamma > 0$. For each $x \in \mathcal{X}$, 
randomly and independently draw 
each encoder output $\mathsf{F}(x)$ 
from the uniform distribution on $[M]$. 
Define the threshold decoder 
\begin{equation}
g(c) = \begin{cases}
x, &\text{if } \exists \, \text{unique } x \in \mathcal{X} \\
&\text{ s.t. } \mathsf{F}(x) = c, \, \imath(x) \leq \log \gamma \\
\text{error,} &\text{otherwise.}
\end{cases}
\end{equation}
We capture all errors 
using a union of error events 
\begin{IEEEeqnarray}{rCl}
\mathcal{E}_1 &\triangleq& \left\{\imath(X) > \log \gamma \right\} \\
\mathcal{E}_2 &\triangleq& 
\left\{\exists\, \bar{x} \in \mathcal{X}\backslash\{X\} \text{ s.t. } 
\mathsf{F}(\bar{x}) = \mathsf{F}(X),\; \imath(\bar{x}) \leq \log \gamma \right\}. 
\IEEEeqnarraynumspace
\end{IEEEeqnarray}
By the random coding argument and the union bound, 
there exists an $(M,\epsilon)$ code such that 
\begin{equation}
\epsilon \leq \mathbb{P}\left[\mathcal{E}_1 \cup \mathcal{E}_2 \right] 
\leq \mathbb{P}\left[\mathcal{E}_1\right] +  \mathbb{P}\left[\mathcal{E}_2 \right].
\end{equation}	
Here,
\begin{IEEEeqnarray}{rCl}
\mathbb{P}\left[\mathcal{E}_1 \right] 
&=& \mathbb{P}\left[\imath(X) > \log \gamma \right]\\
\mathbb{P}\left[\mathcal{E}_2 \right] 
&=& \mathbb{P} \left[\underset{\mathclap{\quad\;\; 
\bar{x}\in \mathcal{X}\backslash\{X\}}}{\;\; 
\bigcup \;}\left\{\mathsf{F}(\bar{x}) = \mathsf{F}(X), \; 
\imath(\bar{x}) \leq \log \gamma \right\} \right] \\
&\leq& \sum\limits_{\bar{x}\in \mathcal{X}\backslash\{X\}} \mathbb{P}\left[\mathsf{F}(\bar{x}) = \mathsf{F}(X) \right]1\left\{\imath(\bar{x}) \leq \log \gamma \right\} \label{eq-achiev-dt-8} \IEEEeqnarraynumspace\\
&\leq& \frac{1}{M}\sum\limits_{\bar{x}\in \mathcal{X}}1\left\{\imath(\bar{x}) \leq \log \gamma \right\} \label{eq-achiev-dt-9} \\
&=& \frac{1}{M} \mathbb{U}\left[\imath(X) \leq \log \gamma \right],
\end{IEEEeqnarray}
where \eqref{eq-achiev-dt-8} applies the union bound 
and \eqref{eq-achiev-dt-9} holds since the encoder outputs 
are i.i.d.\ and uniformly distributed. \qed

\numberwithin{equation}{section}
\section{Proof of Lemma \ref{lem-cht-conv}}
\label{append-lem-cht-conv}
\renewcommand{\theequation}{\thesection.\arabic{equation}}
The proof extends the proof of \cite[Eq.~(102)]{pol-poo-ver} (e.g., \cite{verdu-notes}). We show that for any test $P_{Z|X}$ that decides between $P$ vs. $\{Q_j\}_{j=1}^{k}$, 
\begin{align}
&~{\mathbb{P}\left[Z=1\right] - \sum\limits_{j=1}^{k}\gamma_j\mathbb{Q}_j\left[Z=1\right]} \nonumber \\
\leq&~ \mathbb{P}\left[\bigcap\limits_{j \in [k]} \left\{\frac{P(X)}{Q_j(X)} > \gamma_j\right\}\right], \IEEEeqnarraynumspace \label{eq-app-cht-1}
\end{align} where $\gamma_j \geq 0$, $j \in [k]$ are arbitrary constants. Then Lemma~\ref{lem-cht-conv} follows immediately by definition of $\beta_\alpha \left(P, \{Q_j\}_{j=1}^{k}\right)$. 

To prove \eqref{eq-app-cht-1}, fix a $\gamma_j \geq 0$ for each $j \in [k]$. We then have
\begin{IEEEeqnarray}{rCl}
\IEEEeqnarraymulticol{3}{l}{\ \ \ \  \mathbb{P}\left[Z=1\right] - \sum\limits_{j=1}^{k}\gamma_j\mathbb{Q}_j\left[Z=1\right]} \nonumber \\
&=& \sum\limits_{x \in \mathcal{X}} P_{Z|X}(1|x) \left(P(x) - \sum\limits_{j=1}^{k}\gamma_j Q_j(x) \right) \\
&\leq& \sum\limits_{x \in \mathcal{X}} P_{Z|X}(1|x) \left(P(x) - \sum\limits_{j=1}^{k}\gamma_j Q_j(x) \right) \nonumber \\
&&1\left\{P(x) > \sum\limits_{j = 1}^{k} \left\{\gamma_j Q_j(x)\right\} \right\} \\
&\leq& \sum\limits_{x \in \mathcal{X}} P_{Z|X}(1|x) P(x) 1\left\{P(x) > \sum\limits_{j=1}^{k}\gamma_j Q_j(x) \right\} \label{eq-app-cht-2} \IEEEeqnarraynumspace \\
&=& \mathbb{P} \left[Z=1, \, P(X) > \sum\limits_{j=1}^{k}\gamma_j Q_j(X) \right] \\
&\leq& \mathbb{P} \left[P(X) > \sum\limits_{j=1}^{k}\gamma_j Q_j(X) \right] \\
&\leq& \mathbb{P} \left[\bigcap\limits_{j\in [k]} \left\{P(X) > \gamma_j Q_j(X)\right\}\right], \label{eq-app-cht-3}
\end{IEEEeqnarray} where \eqref{eq-app-cht-2} follows from the non-negativity of probability and each $\gamma_j$. The proof is complete since \eqref{eq-app-cht-3} equals the right-hand-side of \eqref{eq-app-cht-1}. \qed

\numberwithin{equation}{section}
\section{Proof of Lemma \ref{lem-cht-var}}
\label{append-lem-cht-var}
\renewcommand{\theequation}{\thesection.\arabic{equation}}
For any test $P_{Z|X}$ deciding between $P$ vs. $\{Q_j\}_{j=1}^{k}$, 
we show that 
\begin{IEEEeqnarray}{rCl}
\IEEEeqnarraymulticol{3}{l}{\ \ \ \  \mathbb{P}\left[Z=1\right] - \sum\limits_{j=1}^{k}\gamma_j\mathbb{Q}_j\left[Z=1\right]} \nonumber \\
&\leq& 1 - \sum\limits_{x \in \mathcal{X}} \min \left\{P(x),\, \sum\limits_{j=1}^{k}\gamma_j Q_j(x) \right\}, \IEEEeqnarraynumspace \label{eq-app-cht2-1}
\end{IEEEeqnarray} where $\gamma_j \geq 0$, $j \in [k]$ are arbitrary constants. Fix a $\gamma_j \geq 0$ for each $j \in [k]$. For notational brevity, define sets
\begin{IEEEeqnarray}{rCl}
\mathcal{X}^{(<)} &\triangleq& \left\{x \in \mathcal{X}: P(x) < \sum\limits_{j=1}^{k}\gamma_jQ_j(x) \right\} \\
\mathcal{X}^{(=)} &\triangleq& \left\{x \in \mathcal{X}: P(x) = \sum\limits_{j=1}^{k}\gamma_jQ_j(x) \right\} \\
\mathcal{X}^{(>)} &\triangleq& \left\{x \in \mathcal{X}: P(x) > \sum\limits_{j=1}^{k}\gamma_jQ_j(x) \right\}.
\end{IEEEeqnarray} For any test $P_{Z|X}$, we have
\begin{IEEEeqnarray}{rCl}
\IEEEeqnarraymulticol{3}{l}{\mathbb{P}\left[Z=1\right] - \sum\limits_{j=1}^{k}\gamma_j\mathbb{Q}_j\left[Z=1\right]} \nonumber \\
&+& \sum\limits_{x \in \mathcal{X}} \min \left\{P(x),\, \sum\limits_{j=1}^{k}\gamma_j Q_j(x) \right\} \nonumber \\
&=& \sum\limits_{x \in \mathcal{X}} \left(P_{Z|X}(1|x)\left(P(x) - \sum\limits_{j=1}^{k}\gamma_j Q_j(x)\right) \right. \nonumber \\
&& \left. + \min \left\{P(x),\, \sum\limits_{j=1}^{k}\gamma_j Q_j(x) \right\} \right) \\
&=& \sum\limits_{x \in \mathcal{X}^{(<)}} \left(P_{Z|X}(1|x)\left(P(x) - \sum\limits_{j=1}^{k}\gamma_j Q_j(x)\right) + P(x) \right) \nonumber \\
&&+ \sum\limits_{x \in \mathcal{X}^{(>)}} \left(P_{Z|X}(1|x)\left(P(x) - \sum\limits_{j=1}^{k}\gamma_j Q_j(x)\right) \right. \nonumber \\
&& \left. + \sum\limits_{j=1}^{k}\gamma_j Q_j(x)\right) + \sum\limits_{x \in \mathcal{X}^{(=)}} P(x) \\
&\leq& \sum\limits_{x \in \mathcal{X}^{(<)}} P(x) + \sum\limits_{x \in \mathcal{X}^{(>)}} P(x) + \sum\limits_{x \in \mathcal{X}^{(=)}} P(x) \label{eq-app-cht2-2} \\
&=& 1. \label{eq-app-cht2-3}
\end{IEEEeqnarray} 
The equality in \eqref{eq-app-cht2-2} is achieved by test 
\begin{equation}
P_{Z|X}(1|x) = 
\begin{cases}
1 &\text{ for } x \in \mathcal{X}^{(>)} \\
0 &\text{ for } x \in \mathcal{X}^{(<)} \\
\lambda &\text{ for } x \in \mathcal{X}^{(=)}
\end{cases}
\end{equation} for any $\lambda \in [0,1]$. Rearranging \eqref{eq-app-cht2-3} yields \eqref{eq-app-cht2-1}. Choosing the unique $\lambda \in [0,1]$ to satisfy $\mathbb{P}\left[Z=1\right] = \alpha$, we obtain Lemma~\ref{lem-cht-var} by the definition of $\beta_\alpha \left(P, \{Q_j\}_{j=1}^{k}\right)$. \qed

\numberwithin{equation}{section}
\section{Proof of Lemma \ref{lem-b-e}}
\label{append-lem-b-e}
\renewcommand{\theequation}{\thesection.\arabic{equation}}
Recall that $\mathsf{T}$ is composed of the $r$ normalized eigenvectors corresponding to the non-zero eigenvalues of covariance matrix $\textsf{V}$ and $\mathbf{U}_i = \mathsf{T}\mathbf{W}_i$, where $\mathbf{W}_i \in \mathbb{R}^r$ for $i = 1,\ldots,n$. Thus $\mathsf{V} = \mathsf{T}\mathsf{V}_r\mathsf{T}^T$, where $\mathsf{V}_r \triangleq \text{Cov}\left[\mathbf{W}_1\right]$ is non-singular.

For each $\mathbf{z} \in \mathbb{R}^d$, define
\begin{equation}
\mathscr{A}_r(\mathbf{z}) \triangleq \{\mathbf{x} \in \mathbb{R}^r: \mathsf{T}\mathbf{x} \leq \mathbf{z} \},
\end{equation}
which is a convex subset of $\mathbb{R}^r$. Let $\mathbf{Z}_r \sim \mathcal{N}(\mathbf{0},\mathsf{V}_r) \in \mathbb{R}^r$. Applying \cite[Cor.~8]{tan-kosut} to the i.i.d. random vectors $\mathbf{W}_1,\ldots,\mathbf{W}_n$, we obtain
\begin{align}
&~\underset{\mathbf{z} \in \mathbb{R}^d}{\sup} \left|\mathbb{P}\left[\frac{1}{\sqrt{n}}\sum\limits_{i=1}^{n}\mathbf{W}_i \in \mathscr{A}_r(\mathbf{z})\right] - \mathbb{P}[\mathbf{Z}_r \in \mathscr{A}_r(\mathbf{z})]\right| \notag \\
\leq&~ \frac{400 r^{1/4}\beta_r}{\lambda_{\min}(\mathsf{V}_r)^{3/2}\sqrt{n}},
\end{align} which is equivalent to \eqref{eq-lem-b-e} by the definition of $\mathscr{A}_r(\mathbf{z})$. \qed

\numberwithin{equation}{section}
\section{Proof of Lemma~\ref{lem-sve}}
\label{append-sve}
\renewcommand{\theequation}{\thesection.\arabic{equation}}
For simplicity, we assume that $\mathsf{V}$ is non-singular. When $\mathsf{V}$ is singular, a similar analysis can be applied with $\mathsf{V}$ replaced by $\mathsf{V}_r$ defined in Lemma \ref{lem-b-e}. 

Let $\mathbf{Z} \sim \mathcal{N}(\mathbf{0},\mathsf{V})$ be a $d$-dimensional multivariate Gaussian with covariance matrix $\mathsf{V}$. Recall from \eqref{eq-def-sve} that $\mathscr{Q}_{\rm inv}(\mathsf{V},\epsilon)$ is defined as \begin{equation}
\mathscr{Q}_{\rm inv}(\mathsf{V},\epsilon) \triangleq \{\mathbf{z} \in \mathbb{R}^d: \mathbb{P}[\mathbf{Z} \leq \mathbf{z}] \geq 1-\epsilon \}.
\end{equation}
By the definition of $\mathscr{Q}_{\rm inv}(\mathsf{V},\epsilon)$ and the definition of $\Phi(\mathsf{V}; \mathbf{z})$ in \eqref{eq-def-gaussian},
$\Phi(\mathsf{V}; \mathbf{z}) = 1-\epsilon$ if and only if $\mathbf{z}$ lies on the boundary of $\mathscr{Q}_{\rm inv}(\mathsf{V},\epsilon)$, and $\Phi(\mathsf{V}; \mathbf{z}) > 1-\epsilon$ if and only if $\mathbf{z}$ lies in the interior of $\mathscr{Q}_{\rm inv}(\mathsf{V},\epsilon)$.

\begin{proof}[Proof of Lemma \ref{lem-sve}] To prove \eqref{eq-lem-sve-1}, consider any $D_1 > 0$ and $\delta \geq 0$. Since $\Phi(\mathsf{V}; \mathbf{z})$ is continuously differentiable everywhere provided that $\mathsf{V}$ is non-singular, we can apply the multivariate Taylor's theorem to expand $\Phi(\mathsf{V}; \mathbf{z} + D_1\delta\mathbf{1})$ as
\begin{equation}
\Phi(\mathsf{V}; \mathbf{z} + D_1\delta\mathbf{1}) = \Phi(\mathsf{V}; \mathbf{z}) + D_1\delta\sum\limits_{i=1}^{d} \frac{\partial \Phi(\mathsf{V}; \mathbf{z})}{\partial z_i} + \xi(\mathbf{z}, D_1\delta). \label{eq-bound-s-achiev-2}
\end{equation} The second-order residual term $\xi(\mathbf{z}, D_1\delta)$ can be bounded as
\begin{equation}
|\xi(\mathbf{z}, D_1\delta)| \leq \frac{\xi_{\max}}{2}(d\cdot D_1\delta)^2,
\end{equation} where
\begin{equation}
\xi_{\max} \triangleq \max\limits_{\delta' \in [0,D_1\delta]} \left\|\nabla^2 \Phi(\mathsf{V}; \mathbf{z} + \delta'\mathbf{1}) \right\|_{\max}
\end{equation} and $\|\cdot\|_{\max}$ denotes the max norm of a matrix.

Denote 
\begin{equation}
D' \triangleq \sum\limits_{i=1}^{d} \frac{\partial \Phi(\mathsf{V}; \mathbf{z})}{\partial z_i}.
\end{equation}
Since $\Phi(\mathsf{V}; \mathbf{z})$ is increasing in any coordinate of $\mathbf{z}$, $D' > 0$. Then, for any $\mathbf{z} \in \mathscr{Q}_{\rm inv}(\mathsf{V},\epsilon)$, we have
\begin{IEEEeqnarray}{rCl}
\Phi\left(\mathsf{V}; \mathbf{z} + D_1\delta\mathbf{1}\right) &\geq& \Phi(\mathsf{V}; \mathbf{z}) + D'D_1\delta - \frac{\xi_{\max}}{2}(d\cdot D_1\delta)^2 \IEEEeqnarraynumspace \\
&\geq&
1-\epsilon + \delta\left(D'D_1 - \frac{\xi_{\max}}{2}d^2 D_1^2\delta\right).
\end{IEEEeqnarray} 
We note that for any finite positive $D_1$, $\xi_{\max}$ approaches $\|\nabla^2 \Phi(\mathsf{V}; \mathbf{z}) \|_{\max}$ as $\delta \rightarrow 0$. Thus, for any finite positive $D_1$ that satisfies $D'D_1 > 1$, there exists some $\delta_1 > 0$ such that for all $0 \leq \delta < \delta_1$,
\begin{equation}
D'D_1 - \frac{\xi_{\max}}{2}d^2 D_1^2\delta \geq 1,
\end{equation} which yields
\begin{align}
\Phi(\mathsf{V}; \mathbf{z} + D_1\delta\mathbf{1}) \geq 1-\epsilon + \delta. \label{eq-bound-s-achiev-3} 
\end{align} By the definitions of $\Phi(\mathsf{V}; \mathbf{z})$ and $\mathscr{Q}_{\rm inv}(\mathsf{V},\epsilon)$, \eqref{eq-bound-s-achiev-3} implies
\begin{equation}
\mathbf{z} + D_1\delta\mathbf{1} \in \mathscr{Q}_{\rm inv}(\mathsf{V},\epsilon - \delta),
\end{equation} and consequently
\begin{equation}
\mathscr{Q}_{\rm inv}(\mathsf{V},\epsilon) + D_1\delta\mathbf{1} \subseteq \mathscr{Q}_{\rm inv}(\mathsf{V},\epsilon - \delta),
\end{equation} which proves \eqref{eq-lem-sve-1}.

Eq. \eqref{eq-lem-sve-2} can be proved in a similar way.

\end{proof}


\numberwithin{equation}{section}
\section{Equivalence between HT and LP-Based Converses for the MASC}
\label{append-equivalence}
\renewcommand{\theequation}{\thesection.\arabic{equation}}
In this appendix, we establish the equivalence between the HT converse and the LP-based converse by showing that the bounds in \eqref{eq-lem-jk} and \eqref{eq-sw-cht-9} are equivalent. According to \cite[Eq.~(31)]{jose-k}, \eqref{eq-lem-jk} is equivalent to the following converse
\begin{IEEEeqnarray}{rCl}
\epsilon &\geq& \sup\limits_{\eta_1, \eta_2, \eta_3 \in \mathcal Z} \left\{\sum\limits_{\substack{{x_1 \in \mathcal{X}_1} \\ {x_2 \in \mathcal{X}_2}}} \min \left\{P_{X_1X_2}(x_1, x_2), \, \sum\limits_{j=1}^{3} \eta_j(x_1,x_2) \right\} \right.  \nonumber \\
&& - M_1 \sum\limits_{x_2 \in \mathcal{X}_2} \max\limits_{\hat{x}_1 \in \mathcal{X}_1} \min\left\{P_{X_1X_2}(\hat{x}_1,x_2), \, \eta_1(\hat{x}_1,x_2) \right\} \nonumber \\
&& - M_2 \sum\limits_{x_1 \in \mathcal{X}_1} \max\limits_{\hat{x}_2 \in \mathcal{X}_2} \min\left\{P_{X_1X_2}(x_1,\hat{x}_2), \, \eta_2(x_1,\hat{x}_2) \right\} \nonumber \\
&& \left. - M_1M_2 \max\limits_{\substack{{\hat{x}_1 \in \mathcal{X}_1} \\ {\hat{x}_2 \in \mathcal{X}_2}} } \min \left\{P_{X_1X_2}(\hat{x}_1,\hat{x}_2),\, \eta_3(\hat{x}_1,\hat{x}_2) \right\} \vphantom{\sum\limits_{\substack{{x_1 \in \mathcal{X}_1} \\ {x_2 \in \mathcal{X}_2}}}} \right\}, \label{eq-append-equiv-1} 
\end{IEEEeqnarray} where the supremum is over
\begin{equation}
\mathcal{Z} \triangleq \{z: \mathcal{X}_1 \times \mathcal{X}_2 \rightarrow [0, \infty)\}.
\end{equation} 
Therefore, we show that \eqref{eq-append-equiv-1} is equivalent to \eqref{eq-sw-cht-9}.

We first demonstrate that \eqref{eq-append-equiv-1} implies \eqref{eq-sw-cht-9}. 
Set 
\[
\eta_i = \gamma_i Q^{(i)}_{X_1X_2} 
\]
for any $\sigma$-finite $Q^{(i)}_{X_1X_2}$ and $\gamma_i\geq 0$, $i\in[3]$.  
Since 
\begin{IEEEeqnarray*}{rcl}
\min\left\{P_{X_1X_2}(\hat{x}_1,x_2), \, \eta_1(\hat{x}_1,x_2) \right\} & \,\leq\, & \eta_1(\hat{x}_1,x_2)\\
\min\left\{P_{X_1X_2}(x_1,\hat{x}_2), \, \eta_2(x_1,\hat{x}_2) \right\}&\,\leq\,&\eta_2(x_1,\hat{x}_2) \\
\min \left\{P_{X_1X_2}(\hat{x}_1,\hat{x}_2),\, \eta_3(\hat{x}_1,\hat{x}_2) \right\}&\,\leq\,&
\eta_3(\hat{x}_1,\hat{x}_2)
\end{IEEEeqnarray*}
in \eqref{eq-append-equiv-1}, we obtain \eqref{eq-sw-cht-9}.

To prove the other direction, we substitute $z_1 = \gamma_1Q^{(1)}_{X_1X_2}$, $z_2 = \gamma_2Q^{(2)}_{X_1X_2}$, and $z_3 = \gamma_3Q^{(3)}_{X_1X_2}$ in the right-hand side of \eqref{eq-sw-cht-9} to obtain
\begin{IEEEeqnarray}{rCl}
\epsilon &\geq& \sup\limits_{z_1, z_2, z_3 \in \mathcal{Z}} \left\{ \sum\limits_{\substack{{x_1 \in \mathcal{X}_1} \\ {x_2 \in \mathcal{X}_2}}} \min \left\{P_{X_1X_2}(x_1,x_2), \, \sum\limits_{j=1}^{3}z_j(x_1,x_2) \right\} \right. \nonumber \\
&& - M_1\sum\limits_{x_2 \in \mathcal{X}_2}\max\limits_{\hat{x}_1 \in \mathcal{X}_1} z_1(\hat{x}_1,x_2) - M_2\sum\limits_{x_1 \in \mathcal{X}_1}\max\limits_{\hat{x}_2 \in \mathcal{X}_2} z_2(x_1,\hat{x}_2) \nonumber \\ 
&& \left. - M_1M_2 \max\limits_{\substack{{\hat{x}_1 \in \mathcal{X}_1} \\ {\hat{x}_2 \in \mathcal{X}_2}}}z_3(\hat{x}_1, \hat{x}_2) \vphantom{\sum\limits_{\substack{{x_1 \in \mathcal{X}_1} \\ {x_2 \in \mathcal{X}_2}}}}\right\} \label{eq-append-equiv-2},
\end{IEEEeqnarray}  
Take a supremum of the right-hand side of \eqref{eq-append-equiv-2} over $\eta_1, \eta_2, \eta_3 \in \mathcal Z$. Since \eqref{eq-append-equiv-2} does not contain $\eta_1, \eta_2, \eta_3$, this does not change anything. Now, weaken (i.e., lower-bound) the inner supremum over $z_1, z_2, z_3 \in \mathcal{Z}$ by setting 
\begin{equation}
{z}_j(x_1,x_2) = \min \left\{P_{X_1X_2}(x_1,x_2),\, \eta_j(x_1,x_2) \right\}.
\end{equation}
Observing that
\begin{IEEEeqnarray}{rCl}
\IEEEeqnarraymulticol{3}{l}{\min \left\{P_{X_1X_2}(x_1,x_2), \, \sum\limits_{j=1}^{3}\min\left\{P_{X_1X_2}(x_1,x_2), \, \eta_j(x_1,x_2) \right\} \right\}} \nonumber \\
&=& \min \left\{P_{X_1X_2}(x_1,x_2), \, \sum\limits_{j=1}^{3} \eta_j(x_1,x_2) \right\},
\end{IEEEeqnarray} 
we see that the result of our weakening is exactly the right-hand side of \eqref{eq-append-equiv-1}, as desired.  \qed

\numberwithin{equation}{section}
\section{MASCs for Sources with Less Redundancy}
\label{append-zero-var}
\renewcommand{\theequation}{\thesection.\arabic{equation}}

Applying Lemma~\ref{lem-p} to get the asymptotic achievability result in Theorem~\ref{thm-sw} 
requires that all $V(X_1,X_2)$, $V(X_1 |X_2)$, and $V(X_2 |X_1)$ are strictly positive 
(as an implication of assumption \eqref{assump-b1}). 
Thus, the analysis in Section \ref{sec-result-SW} breaks down 
when any of these varentropies is equal to zero. (We refer to such a source as being \emph{less redundant}.) In this appendix, we analyze the performance of the MASC for less redundant sources. Specifically, we consider a pair of stationary, memoryless sources and analyze the following three cases: 
\begin{enumerate}
\item[1)] all \emph{three} varentropies are equal to zero;
\item[2)] exactly \emph{two} of the varentropies are equal to zero;
\item[3)] exactly \emph{one} of the varentropies is equal to zero. 
\end{enumerate}

We continue to assume that the joint distribution $P_{X_1 X_2}$ satisfies \eqref{assump-b2} and \eqref{assump-b3}. 
For those cases in which $V(X_2 |X_1) > 0$, we continue to assume $\mathbb E \left[ V_c (X_2 | X_1) \right]  \!>\! 0$. Likewise, if $V(X_1 |X_2) > 0$, we continue to assume $\mathbb E \left[ V_c (X_1 | X_2) \right]  \!>\! 0$.

In point-to-point almost-lossless source coding, the optimal code for a non-redundant source is easy to find (see Remark~\ref{rem-zero-var}). 
When the encoders are required to operate independently in a MASC, 
we know no easy way to find the optimal codes in general. 
In Section~A below, we give characterizations of the $(n,\epsilon)$-rate region in the three general cases listed above using the techniques developed in Section~\ref{sec-result-SW}. Then, in Section~B, we restrict attention to the case where $P_{X_1X_2}(x_1,x_2) > 0$ for every $(x_1,x_2) \in \mathcal{X}_1 \times \mathcal{X}_2$; under this condition, the optimal codes can be found and analyzed directly.

\subsection{General Characterizations of the $(n,\epsilon)$-Rate Region}
\label{append-zero-var-a}
We first list our results in the three general cases below.

\textit{Case 1)}: Suppose that $V(X_1|X_2) = 0$, $V(X_2|X_1) = 0$, and $V(X_1,X_2) = 0$. For any $\delta_1$, $\delta_2$, $\delta_{12} > 0$, let 
\begin{IEEEeqnarray}{rCl}
\hat{\mathscr{R}}_{\rm in}^{(1)}(n,\delta_1, \delta_2, \delta_{12}) &\triangleq& \bigg\{(R_1,R_2) \in \mathbb{R}^2: \nonumber \\
R_1 &\geq& H(X_1|X_2) + \frac{1}{n}\log \frac{1}{\delta_1} \nonumber \\
R_2 &\geq& H(X_2|X_1) + \frac{1}{n}\log \frac{1}{\delta_2} \nonumber \\
R_1+R_2 &\geq& H(X_1,X_2) + \frac{1}{n}\log \frac{1}{\delta_{12}} \bigg\}. \IEEEeqnarraynumspace
\end{IEEEeqnarray} Define
\begin{IEEEeqnarray}{rCl}
\mathscr{R}_{\rm in}^{(1)}(n,\epsilon) &\triangleq& \bigcup\limits_{\substack{{\delta_1, \delta_2, \delta_{12} > 0}\\{\delta_1+\delta_2+\delta_{12} = \epsilon}}} \hat{\mathscr{R}}_{\rm in}^{(1)}(n, \delta_1, \delta_2, \delta_{12}) \IEEEeqnarraynumspace \\
\mathscr{R}_{\rm out}^{(1)}(n,\epsilon) &\triangleq& \bigg\{(R_1,R_2) \in \mathbb{R}^2: \nonumber \\
R_1 &\geq& H(X_1|X_2) - \frac{1}{n}\log \frac{1}{1-\epsilon} \nonumber \\
R_2 &\geq& H(X_2|X_1) - \frac{1}{n}\log \frac{1}{1-\epsilon} \nonumber \\
R_1+R_2 &\geq& H(X_1, X_2) - \frac{1}{n}\log \frac{1}{1-\epsilon} \bigg\}. 
\label{case-1-conv-1}
\end{IEEEeqnarray}
\begin{thm} \label{thm-zero-var-1}
When $V(X_1|X_2) = 0$, $V(X_2|X_1) = 0$, and $V(X_1,X_2) = 0$, the $(n,\epsilon)$-rate region $\mathscr{R}^{*}(n,\epsilon)$ satisfies
\begin{equation}
\mathscr{R}_{\rm in}^{(1)}(n,\epsilon) \subseteq \mathscr{R}^{*}(n,\epsilon) \subseteq \mathscr{R}_{\rm out}^{(1)}(n,\epsilon).
\end{equation}
\end{thm}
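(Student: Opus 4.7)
The crux of the argument is the observation that when $V(X_1|X_2) = V(X_2|X_1) = V(X_1,X_2) = 0$, the corresponding information densities are deterministic constants almost surely under $P_{X_1 X_2}$, so for the $n$-fold product source $\imath(X_1^n|X_2^n) = nH(X_1|X_2)$, $\imath(X_2^n|X_1^n) = nH(X_2|X_1)$, and $\imath(X_1^n,X_2^n) = nH(X_1,X_2)$, with probability one. Consequently, for any $x_2^n$ in the support of $X_2^n$, the conditional distribution $P_{X_1^n|X_2^n=x_2^n}$ is uniform on a support of size exactly $\exp(nH(X_1|X_2))$, and symmetrically for $X_2^n$ given $X_1^n$; the joint support of $(X_1^n,X_2^n)$ has cardinality $\exp(nH(X_1,X_2))$.

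For achievability, I would apply the MASC RCU bound (Theorem~\ref{thm-sw-rcu}) to the product source. Because $\imath(\bar X_1^\prime{}^n|X_2^n) = \imath(X_1^n|X_2^n) = nH(X_1|X_2)$ almost surely under the joint distribution in \eqref{eq-sw-rcu-4}, and similarly for the other two densities, the indicator functions in $A_1$, $A_2$, $A_{12}$ are identically one and each of these random variables collapses to a deterministic quantity. The MASC RCU bound then reduces to
\[
\epsilon' \leq \min\left\{1,\, \frac{\exp(nH(X_1|X_2))}{M_1} + \frac{\exp(nH(X_2|X_1))}{M_2} + \frac{\exp(nH(X_1,X_2))}{M_1 M_2}\right\}.
\]
Given any splitting $\delta_1 + \delta_2 + \delta_{12} = \epsilon$, choosing $M_1, M_2$ so that $\log M_i \geq nH(X_i|X_{3-i}) + \log(1/\delta_i)$ for $i \in \{1,2\}$ and $\log(M_1 M_2) \geq nH(X_1,X_2) + \log(1/\delta_{12})$ makes each of the three terms at most $\delta_1, \delta_2, \delta_{12}$ respectively, so $\epsilon' \leq \epsilon$. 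Taking the union over all such splittings yields $\mathscr{R}_{\rm in}^{(1)}(n,\epsilon) \subseteq \mathscr{R}^{*}(n,\epsilon)$.

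For the converse, a direct combinatorial argument is cleanest. Fix any $(n, M_1, M_2, \epsilon)$ MASC with encoders $\mathsf{f}_1, \mathsf{f}_2$ and decoder $\mathsf{g}$. The decoder's output takes at most $M_1 M_2$ distinct values, so the success probability is at most $M_1 M_2 / \exp(nH(X_1,X_2))$ by uniformity on the joint support, giving $R_1 + R_2 \geq H(X_1,X_2) - \frac{1}{n}\log\frac{1}{1-\epsilon}$. For the individual bound on $R_1$, condition on $X_2^n = x_2^n$: once $x_2^n$ is fixed, $\mathsf{f}_2(x_2^n)$ is fixed and the decoder's first coordinate is a function of $\mathsf{f}_1(X_1^n) \in [M_1]$, taking at most $M_1$ values; since $X_1^n | X_2^n = x_2^n$ is uniform on $\exp(nH(X_1|X_2))$ points, the conditional success probability is at most $\min\{1, M_1/\exp(nH(X_1|X_2))\}$. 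Averaging over $x_2^n$ and using $\mathbb{P}[\text{no error}] \geq 1-\epsilon$ gives $M_1 \geq (1-\epsilon)\exp(nH(X_1|X_2))$, and the symmetric argument yields the bound on $R_2$. Together these establish $\mathscr{R}^{*}(n,\epsilon) \subseteq \mathscr{R}_{\rm out}^{(1)}(n,\epsilon)$. I do not anticipate any real obstacle; the degenerate structure of the source makes both directions essentially one-line arguments once the RCU bound and the uniformity observation are in place. The only minor care needed is to handle the case $M_i \geq \exp(nH(X_i|X_{3-i}))$, where the conditional-counting bound is vacuous but the desired rate inequality is automatically satisfied.
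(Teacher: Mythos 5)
Your achievability argument is essentially the same as the paper's: both apply the MASC RCU bound (Theorem~\ref{thm-sw-rcu}), observe that under the zero-varentropy hypotheses each of $A_1$, $A_2$, $A_{12}$ collapses to the deterministic ratio $\exp(nH(\cdot))/M_{(\cdot)}$, and then split the error budget $\epsilon$ among the three terms as $\delta_1 + \delta_2 + \delta_{12}$, matching $\hat{\mathscr{R}}_{\rm in}^{(1)}$ exactly.

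Your converse takes a genuinely different route. The paper invokes a modified version of Han's converse (Lemma~\ref{lem-h-modified}): it picks $\gamma_1 = H(X_1|X_2) - R_1$ so that the event $\{\frac{1}{n}I_1 \geq R_1 + \gamma_1\}$ has probability one (by uniformity), makes the other two events have probability zero by taking $\gamma_2, \gamma_{12}$ large, and extracts $\epsilon' \geq 1 - \exp(-n\gamma_1)$; it then repeats this for each of the three rate constraints. You instead argue directly by counting: the joint bound $R_1+R_2 \geq H(X_1,X_2) - \tfrac{1}{n}\log\tfrac{1}{1-\epsilon}$ follows because the decoder has at most $M_1 M_2$ possible outputs while the joint support is uniform of size $\exp(nH(X_1,X_2))$, and the marginal bound on $R_1$ follows by conditioning on $X_2^n=x_2^n$, noting that at most $M_1$ of the $\exp(nH(X_1|X_2))$ equiprobable conditional outcomes can be decoded correctly when $\mathsf{f}_2(x_2^n)$ is frozen. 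This is the image-size argument that underlies Han's converse, made explicit and elementary for this degenerate source. Both reach $\mathscr{R}_{\rm out}^{(1)}(n,\epsilon)$. The paper's route is uniform with the rest of Appendix~G (the same Lemma~\ref{lem-h-modified} handles Cases 2 and 3), whereas your route is self-contained and arguably more transparent about why, for the fully degenerate source, no Berry--Esseen machinery and no $-\tfrac{\log n}{2n}$ term appear. One small remark: your observation that you would need to ``handle the case $M_i \geq \exp(nH(X_i|X_{3-i}))$'' separately is the right instinct, but it causes no difficulty since the target inequality is then automatic; the paper sidesteps this by only considering rates that violate the bound and deriving a contradiction.
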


As in the point-to-point scenario, there are no second-order dispersion terms or $-\frac{\log n}{2n}$ third-order terms in the characterization of $\mathscr{R}^{*}(n,\epsilon)$ in this case. For any $n$ and $\epsilon$, the achievable region $\mathscr{R}_{\rm in}^{(1)}(n,\epsilon)$ has a curved boundary due to the trade-off in the $O\left(\frac{1}{n}\right)$ fourth-order terms, while the converse region $\mathscr{R}_{\rm out}^{(1)}(n,\epsilon)$ has three linear boundaries.

\textit{Case 2)}: There are three possible cases where exactly two of the three varentropies are equal to zero. Here, we suppose that $V(X_1|X_2) > 0$ while $V(X_2|X_1) = V(X_1,X_2) = 0$. The other two cases can be analyzed in the same way. Let $B_1$ denote the Berry-Esseen constant for the random variable $\imath(X_1|X_2)$, and let $S_2$, $K_1$, $\bar K_1$ 
be the finite positive constants defined in \eqref{eq:T2}, \eqref{eq-def-K1}, and \eqref{eq-def-K1bar}, respectively. For any $\delta_1$, $\delta_2$, $\delta_{12} > 0$, let
\begin{IEEEeqnarray}{rCl}
\IEEEeqnarraymulticol{3}{l}{\hat{\mathscr{R}}_{\rm in}^{(2)}(n,\delta_1, \delta_2, \delta_{12}) \triangleq \bigg\{(R_1,R_2) \in \mathbb{R}^2:} \nonumber \\
R_1 &\geq& H(X_1|X_2) \notag\\
&&+ \sqrt{\frac{V(X_1|X_2)}{n}}Q^{-1}\left(\delta_1 - \frac{B_1+K_1}{\sqrt{n}} - \frac{S_2}{n} \right) \nonumber \\
&& - \frac{\log n}{2n} + \frac{1}{n}\log \frac{\bar K_1}{1-\delta_2-\delta_{12}} \nonumber \\
R_2 &\geq& H(X_2|X_1) + \frac{1}{n}\log \frac{1}{\delta_2} \nonumber \\
R_1+R_2 &\geq& H(X_1,X_2) + \frac{1}{n}\log \frac{1}{\delta_{12}}  \bigg\}. \label{case-2-achiev-2}
\end{IEEEeqnarray}
Define
\begin{IEEEeqnarray}{rCl}
\mathscr{R}_{\rm in}^{(2)}(n,\epsilon) &\triangleq& \bigcup\limits_{\substack{{\delta_1, \delta_2, \delta_{12} > 0}\\{\delta_1+\delta_2+\delta_{12} = \epsilon}}} \hat{\mathscr{R}}_{\rm in}^{(2)}(n, \delta_1, \delta_2, \delta_{12}) \label{case-2-achiev-2in}\\
\mathscr{R}_{\rm out}^{(2)}(n,\epsilon) &\triangleq& \bigg\{(R_1,R_2) \in \mathbb{R}^2: \vphantom{\frac{1}{n}} \nonumber \\
R_1 &\geq& H(X_1|X_2) + \sqrt{\frac{V(X_1|X_2)}{n}}Q^{-1}\left(\epsilon + \frac{B_1+1}{\sqrt{n}}\right) \nonumber \\
&&- \frac{\log n}{2n} \nonumber \\
R_2 &\geq& H(X_2|X_1) - \frac{1}{n}\log \frac{1}{1-\epsilon} \nonumber \\
R_1+R_2 &\geq& H(X_1, X_2) - \frac{1}{n}\log \frac{1}{1-\epsilon} \bigg\}.
\label{case-2-conv-5}
\end{IEEEeqnarray}
\begin{thm} \label{thm-zero-var-2}
When $V(X_1|X_2) > 0$, $V(X_2|X_1) = 0$, and $V(X_1,X_2) = 0$, the $(n,\epsilon)$-rate region $\mathscr{R}^{*}(n,\epsilon)$ satisfies
\begin{equation}
\mathscr{R}_{\rm in}^{(2)}(n,\epsilon) \subseteq \mathscr{R}^{*}(n,\epsilon) \subseteq \mathscr{R}_{\rm out}^{(2)}(n,\epsilon).
\end{equation}
\end{thm}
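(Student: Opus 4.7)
The plan is to mirror the structure of the proof of Theorem~\ref{thm-sw}, but to exploit the degeneracies caused by the two vanishing varentropies. Note first that $V(X_2|X_1) = 0$ and $V(X_1,X_2) = 0$ force $\imath(X_2|X_1) = H(X_2|X_1)$ and $\imath(X_1,X_2) = H(X_1,X_2)$ to hold $P_{X_1X_2}$-almost surely, so the sums $I_2$ and $I_{12}$ defined in \eqref{eq-def-I2n}--\eqref{eq-def-In} are almost surely equal to $nH(X_2|X_1)$ and $nH(X_1,X_2)$ respectively. The same identities hold for $\bar{I}_2$ and $\bar{I}_{12}$ under the distribution in~\eqref{eq-sw-rcu-4}.

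For achievability, I would start from the MASC RCU bound (Theorem~\ref{thm-sw-rcu}) applied to $P_{X_1X_2}^n$. The degeneracies collapse the random variables $A_2$ and $A_{12}$ to the deterministic values $\exp(nH(X_2|X_1))/M_2$ and $\exp(nH(X_1,X_2))/(M_1M_2)$. Splitting $\epsilon = \delta_1 + \delta_2 + \delta_{12}$ and choosing $R_2 \geq H(X_2|X_1) + \frac{1}{n}\log(1/\delta_2)$ and $R_1 + R_2 \geq H(X_1,X_2) + \frac{1}{n}\log(1/\delta_{12})$ makes $A_2 \leq \delta_2$ and $A_{12} \leq \delta_{12}$ almost surely. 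Using the elementary inequality $\min\{1,a+b+c\} \leq \min\{1,a\} + b + c$ for $a,b,c \geq 0$, it suffices to show $\mathbb{E}[\min\{1,A_1\}] \leq \delta_1$. Since $V(X_1|X_2) > 0$ and $\mathbb{E}[V_c(X_1|X_2)] > 0$, the sum $\bar{I}_1$ is non-degenerate conditioned on $X_2^n$, so I can apply Lemma~\ref{lem-p}, the typicality splitting argument from \eqref{eq:S1}--\eqref{eq-sw-achiev-3}, and the one-dimensional Berry-Esseen inequality (Theorem~\ref{thm-berry-esseen}) to extract the $R_1$ constraint in~\eqref{case-2-achiev-2}. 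Taking a union over admissible $(\delta_1, \delta_2, \delta_{12})$ yields $\mathscr{R}_{\rm in}^{(2)}(n,\epsilon) \subseteq \mathscr{R}^*(n,\epsilon)$.

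For the converse, I would invoke the MASC HT converse (Theorem~\ref{thm-sw-cht-conv}) with the same measures $Q^{(1)} = U_{X_1}^n P_{X_2}^n$, $Q^{(2)} = P_{X_1}^n U_{X_2}^n$, $Q^{(3)} = U_{X_1X_2}^n$ used in the proof of Theorem~\ref{thm-sw}. This produces three simultaneous necessary conditions on $(M_1, M_2, M_1 M_2)$. The condition coming from $Q^{(1)}$ involves the information density $\imath(X_1^n|X_2^n)$, which has positive variance $nV(X_1|X_2)$, so Lemma~\ref{lem-cht-conv} paired with Theorem~\ref{thm-berry-esseen} gives the Berry-Esseen expansion for $R_1$ in~\eqref{case-2-conv-5}. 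The conditions coming from $Q^{(2)}$ and $Q^{(3)}$ involve the degenerate densities $\imath(X_2^n|X_1^n) \equiv nH(X_2|X_1)$ and $\imath(X_1^n,X_2^n) \equiv nH(X_1,X_2)$, in which case the composite hypothesis tests reduce to (essentially) point-to-point tests of a uniform-versus-uniform type, as in Remark~\ref{rem-zero-var}. Evaluating these directly via Lemma~\ref{lem-cht-conv} yields the $-\frac{1}{n}\log\frac{1}{1-\epsilon}$ corrections for $R_2$ and $R_1 + R_2$.

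The main obstacle is the precise bookkeeping in the achievability step: producing the $\frac{1}{n}\log\bigl(\bar{K}_1/(1-\delta_2-\delta_{12})\bigr)$ fourth-order term in~\eqref{case-2-achiev-2} requires separating the analysis on the typical set defined in~\eqref{eq:S1}--\eqref{eq:S2} and re-normalizing the tail probability by the corresponding mass, rather than naively bounding $\mathbb{E}[\min\{1,A_1\}]$. Once this normalization is set up, the remaining estimates are routine adaptations of the tools already used in the proof of Theorem~\ref{thm-sw} and Theorem~\ref{thm-k-v}.
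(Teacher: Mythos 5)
Your proposal is sound, and the achievability half tracks the paper's argument almost exactly. The only variation is the elementary inequality you pick to peel off $A_2$ and $A_{12}$: you use $\min\{1,a+b+c\}\leq \min\{1,a\}+b+c$, whereas the paper uses the slightly weaker decomposition from~\eqref{eq-sw-achiev-2} that leaves the term $\mathbb{P}[\bar{\bar{A}}_1>1-\delta_2-\delta_{12}]$, producing the specific fourth-order constant $\frac1n\log\bigl(\bar K_1/(1-\delta_2-\delta_{12})\bigr)$ in~\eqref{case-2-achiev-2}. Your version would give the constant $\frac{1}{n}\log\bar K_1$ instead, which is a \emph{smaller} lower bound on $R_1$, i.e.\ a larger achievable region; that still implies $\mathscr{R}_{\rm in}^{(2)}(n,\epsilon)\subseteq\mathscr{R}^*(n,\epsilon)$, so the "re-normalization" fix you worry about at the end is unnecessary — no gap there. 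The converse is where you genuinely diverge: the paper does \emph{not} use the HT converse here, but rather the Modified Han's converse stated as Lemma~\ref{lem-h-modified}, via a bootstrap — it first establishes $R_2\geq H(X_2|X_1)-\frac1n\log\frac1{1-\epsilon}$ and $R_1+R_2\geq H(X_1,X_2)-\frac1n\log\frac1{1-\epsilon}$ by choosing $\gamma_2,\gamma_{12}$ to exploit the degeneracy, and then, \emph{given} those constraints, re-applies the bound with $\gamma_1=\frac{\log n}{2n}$ so that the $I_2$ and $I_{12}$ probability terms vanish. You instead invoke Theorem~\ref{thm-sw-cht-conv} and project the composite $\beta$-region onto each coordinate to get three independent binary-hypothesis-test lower bounds $M_1\geq\beta_{1-\epsilon}(P^n,U_{X_1}^nP_{X_2}^n)$, $M_2\geq\beta_{1-\epsilon}(P^n,P_{X_1}^nU_{X_2}^n)=(1-\epsilon)\exp(nH(X_2|X_1))$, etc.; evaluating the first with $\gamma_1=\frac{1}{M_1\sqrt n}$ in Lemma~\ref{lem-cht-conv} and Berry--Esseen reproduces the $R_1$ constraint in~\eqref{case-2-conv-5} exactly, including the $Q^{-1}\bigl(\epsilon+\frac{B_1+1}{\sqrt n}\bigr)$ dependence, and the two degenerate tests give the $-\frac1n\log\frac1{1-\epsilon}$ terms directly. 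Your route is arguably cleaner since the constraints decouple and you avoid the sequential conditioning, and it reuses the composite-HT machinery already developed rather than needing the separate Lemma~\ref{lem-h-modified} with its min-structure. Both routes are correct and yield the same asymptotic region.
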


The achievable region $\mathscr{R}_{\rm in}^{(2)}(n,\epsilon)$ has a curved boundary due to the trade-off in $\delta_1$, $\delta_2$, and $\delta_{12}$. If we let
\begin{align}
\delta_1 = \epsilon - \frac{2}{\sqrt{n}}, \, \delta_2 = \frac{1}{\sqrt{n}}, \, \delta_{12} = \frac{1}{\sqrt{n}},
\end{align} then it is apparent that the dispersion corresponding to $R_1$ is $V(X_1|X_2)$ with a $-\frac{\log n}{2n}$ third-order term, while the dispersions of $R_2$ and $R_1 + R_2$ are zero. 

\textit{Case 3)}: Similar to Case 2), there are three possible cases where exactly one of the three varentropies is equal to zero. Here, we consider the case where $V(X_1|X_2) = 0$ while $V(X_2|X_1) > 0$ and $V(X_1,X_2) > 0$. Let $S_1$, $K_2$, $\bar K_2$, and $K_{12}$ be the finite positive constants defined in \eqref{eq:T1}, \eqref{eq-def-K2}, \eqref{eq-def-K2bar}, and \eqref{eq-def-K12}, respectively, and let $B$ be the Bentkus constant \eqref{eq-lem-b-e} for the vector $(I_2, I_{12})$. For any $\delta \in (0, \epsilon)$, let 
\begin{align}
& \hat{\mathscr{R}}_{\rm in}^{(3)}(n, \delta) \triangleq \bigg\{(R_1, R_2) \in \mathbb{R}^2:
R_1 \geq H(X_1|X_2) + \frac{1}{n}\log \frac{1}{\delta} \nonumber \\
& \begin{bmatrix}
R_2 \\
R_1 + R_2
\end{bmatrix} \in 
\begin{bmatrix}
H(X_2|X_1) \\
H(X_1,X_2)
\end{bmatrix}  + \frac{1}{\sqrt{n}}\mathscr{Q}_{\rm inv}\left(\mathsf{V}_2,\epsilon-\delta-\frac{C_{\rm in}}{\sqrt{n}}\right) \notag\\
&- \frac{\log n}{2n}\mathbf{1} + \frac{1}{n} \log \frac{1}{1-\delta}\mathbf{1} + \frac{1}{n}\begin{bmatrix}
\log 2\bar K_2 \\
\log 2K_{12}
\end{bmatrix} \bigg\},  \label{case-3-achiev-4} 
\end{align} where $C_{\rm in} \triangleq K_2 + K_{12} + B + \frac{S_1}{\sqrt n}$, and $\mathsf{V}_2$ is the covariance matrix of the random vector $(\imath(X_2|X_1), \imath(X_1, X_2))$. Define
\begin{IEEEeqnarray}{rCl}
\mathscr{R}_{\rm in}^{(3)}(n, \epsilon) &\triangleq& \bigcup\limits_{\delta \in (0, \epsilon)}\hat{\mathscr{R}}_{\rm in}^{(3)}(n, \delta) \label{case-3-achiev-7} \\
\mathscr{R}_{\rm out}^{(3)}(n, \epsilon) &\triangleq& \bigg\{(R_1, R_2) \in \mathbb{R}^2: \nonumber \\
R_1 &\geq& H(X_1|X_2) + \frac{1}{n}\log \frac{1}{1-\epsilon} \nonumber \\
\begin{bmatrix}
R_2 \\
R_1 + R_2
\end{bmatrix} &\in&
\begin{bmatrix}
H(X_2|X_1) \\
H(X_1,X_2)
\end{bmatrix} + \frac{1}{\sqrt{n}}\mathscr{Q}_{\rm inv}\left(\textsf{V}_2,\epsilon+\frac{B + 2}{\sqrt{n}}\right) \nonumber \\
&&- \frac{\log n}{2n}\mathbf{1}\bigg\}.
\label{case-3-conv-3}
\end{IEEEeqnarray} 
\begin{thm} \label{thm-zero-var-3}
When $V(X_1|X_2) = 0$, $V(X_2|X_1) > 0$, and $V(X_1,X_2) > 0$, the $(n,\epsilon)$-rate region $\mathscr{R}^{*}(n,\epsilon)$ satisfies
\begin{equation}
\mathscr{R}_{\rm in}^{(3)}(n,\epsilon) \subseteq \mathscr{R}^{*}(n,\epsilon) \subseteq \mathscr{R}_{\rm out}^{(3)}(n,\epsilon).
\end{equation}
\end{thm}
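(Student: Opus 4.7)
The strategy is to mirror the proof of Theorem~\ref{thm-sw} while exploiting the degeneracy induced by $V(X_1|X_2)=0$, which forces $\imath(X_1^n|X_2^n)$ to equal the deterministic constant $nH(X_1|X_2)$ almost surely. This collapses the first coordinate of the dispersion analysis and reduces the problem to a two-dimensional dispersion characterization governed by $\mathsf{V}_2$, the covariance of $(\imath(X_2|X_1),\imath(X_1,X_2))$.

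For achievability, I will apply the MASC RCU bound (Theorem~\ref{thm-sw-rcu}) to $P_{X_1X_2}^n$. Because $\imath(\bar X_1^{n\prime}|X_2^n)=\imath(X_1^n|X_2^n)=nH(X_1|X_2)$ almost surely under the joint law of \eqref{eq-sw-rcu-4} raised to the $n$-th product, the indicator in \eqref{eq-sw-rcu-1} is identically $1$ and $A_1=\exp(nH(X_1|X_2))/M_1$ is deterministic. Taking $M_1=\lceil\exp(nH(X_1|X_2))/\delta\rceil$ forces $A_1\leq\delta$ and yields the $R_1$-constraint of \eqref{case-3-achiev-4}. Then the identity $\min\{1,\delta+x\}=\delta+\min\{1-\delta,x\}$ reduces the remaining expectation to $\mathbb{E}[\min\{1-\delta,A_2+A_{12}\}]$, which I will split as $\min\{1-\delta,A_2+A_{12}\}\leq A_2\mathbf{1}\{A_2\leq(1-\delta)/2\}+A_{12}\mathbf{1}\{A_{12}\leq(1-\delta)/2\}+(1-\delta)\mathbf{1}\{2A_2>1-\delta\ \text{or}\ 2A_{12}>1-\delta\}$. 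The first two expectations will be bounded by $K_2/\sqrt{n}$ and $K_{12}/\sqrt{n}$ via Lemma~\ref{lem-p}, exactly as in the proof of Theorem~\ref{thm-sw}. The tail probability will be controlled by the two-dimensional Berry--Esseen bound of Lemma~\ref{lem-b-e} applied to the i.i.d.\ mean-zero vectors $(\imath(X_{2i}|X_{1i})-H(X_2|X_1),\imath(X_{1i},X_{2i})-H(X_1,X_2))$ with covariance $\mathsf{V}_2$, together with a Chebyshev typicality event for $X_1^n$ (which supplies the constant $S_1$ and upgrades $K_2(X_1^n)$ to the deterministic majorant $\bar K_2$, as in the proof of Theorem~\ref{thm-sw}). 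The $(1-\delta)/2$ threshold inside the two-event split will propagate to the $\frac{1}{n}\log\frac{1}{1-\delta}\mathbf{1}$ shift in \eqref{case-3-achiev-4}, while the factors $2\bar K_2$ and $2K_{12}$ there come from the factor of $2$ in the same splitting. A final application of Lemma~\ref{lem-sve}-\ref{part1} will convert the Bentkus region into the $\mathscr{Q}_{\rm inv}(\mathsf{V}_2,\cdot)$ form shown in \eqref{case-3-achiev-4}.

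For the converse, I will split the MASC hypothesis-testing converse (Theorem~\ref{thm-sw-cht-conv}) into two independent pieces. The bound $R_1\geq H(X_1|X_2)+\frac{1}{n}\log\frac{1}{1-\epsilon}$ will follow by invoking Theorem~\ref{thm-sw-cht-conv} with the single measure $Q^{(1)}_{X_1X_2}=U_{X_1}^n P_{X_2}^n$ and then applying Lemma~\ref{lem-cht-conv} with $k=1$ and $\gamma_1=\exp(-nH(X_1|X_2))$; since $\imath(X_1^n|X_2^n)=nH(X_1|X_2)$ almost surely, the probability on the right-hand side of \eqref{eq-sw-cht-4} equals $1$, giving directly $M_1\geq(1-\epsilon)\exp(nH(X_1|X_2))$. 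The two-dimensional constraint on $(R_2,R_1+R_2)$ will follow by invoking Theorem~\ref{thm-sw-cht-conv} with $(Q^{(2)},Q^{(3)})=(P_{X_1}^n U_{X_2}^n,U_{X_1X_2}^n)$ and then applying the converse half of the proof of Theorem~\ref{thm-cht-third-order} (with $k=2$) to the vector $(\imath(X_2^n|X_1^n),\imath(X_1^n,X_2^n))$, whose single-letter covariance is $\mathsf{V}_2$ of rank at least one under $V(X_2|X_1),V(X_1,X_2)>0$. Lemma~\ref{lem-b-e} will supply the two-dimensional Berry--Esseen estimate, and Lemma~\ref{lem-sve}-\ref{part2} will deliver the $\mathscr{Q}_{\rm inv}(\mathsf{V}_2,\epsilon+(B+2)/\sqrt{n})$ form appearing in \eqref{case-3-conv-3}.

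The main technical obstacle will be tracking the precise $\frac{1}{n}\log\frac{1}{1-\delta}$ third-order shift in \eqref{case-3-achiev-4}: the naive bound $\min\{1,A_1+A_2+A_{12}\}\leq A_1+\min\{1,A_2+A_{12}\}$ would absorb $(1-\delta)$ into a constant and miss this term, so the tighter identity $\min\{1,\delta+x\}=\delta+\min\{1-\delta,x\}$ must be used and the threshold $1-\delta$ carefully propagated through the two-event union bound. A secondary subtlety is that $\mathsf{V}_2$ may be singular (when $\imath(X_1)$ and $\imath(X_2)$ are perfectly correlated), but the rank-$r$ generality of Lemma~\ref{lem-b-e} already accommodates this case, and $\mathscr{Q}_{\rm inv}(\mathsf{V}_2,\cdot)$ is defined to lie in the appropriate subspace of $\mathbb{R}^2$, so no additional argument will be needed.
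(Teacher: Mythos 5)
Your achievability argument is essentially the paper's: apply the MASC RCU bound to $P_{X_1X_2}^n$, note that $\imath(X_1^n|X_2^n)=nH(X_1|X_2)$ almost surely so that $A_1$ is a deterministic constant no larger than $\delta$ under the $R_1$-constraint of \eqref{case-3-achiev-4}, bound $\mathbb{E}[A_2 1\{A_2\leq\cdot\}]$ and $\mathbb{E}[A_{12} 1\{A_{12}\leq\cdot\}]$ by $K_2/\sqrt{n}$ and $K_{12}/\sqrt{n}$ via Lemma~\ref{lem-p}, and control the tail event by the two-dimensional Berry--Esseen bound (Lemma~\ref{lem-b-e}) for the covariance $\mathsf{V}_2$ together with the Chebyshev typicality event for $X_1^n$. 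Your splitting identity $\min\{1,\delta+x\}=\delta+\min\{1-\delta,x\}$ followed by the three-term decomposition with thresholds $(1-\delta)/2$ and $1-\delta$ is a valid inequality, and it is structurally equivalent to the paper's decomposition in \eqref{eq:zero-var-3a}--\eqref{eq:zero-var-3b}; the $\tfrac{1}{n}\log\tfrac{1}{1-\delta}$ and $\tfrac{1}{n}\log 2\bar K_2$, $\tfrac{1}{n}\log 2K_{12}$ shifts in \eqref{case-3-achiev-4} come out exactly as you describe.

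Your converse, however, deviates from the paper and contains a concrete gap. The paper's converse uses a modified Han bound (Lemma~\ref{lem-h-modified}), whose $\min\{\mathbb{P}[\cdot],\exp(-n\gamma)\}$ correction term is precisely what makes the bound nontrivial when the probability inside equals $1$, as it does here because $\imath(X_1^n|X_2^n)$ is deterministic. You instead propose to project the HT converse (Theorem~\ref{thm-sw-cht-conv}) into a 1D constraint on $R_1$ and a 2D constraint on $(R_2,R_1+R_2)$, applying Lemma~\ref{lem-cht-conv} with $k=1$ and $\gamma_1=\exp(-nH(X_1|X_2))$ for the former and the converse half of Theorem~\ref{thm-cht-third-order} for the latter. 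The 2D piece is sound, as is the projection step. But in the 1D piece you observe (correctly) that the probability on the right side of \eqref{eq-sw-cht-4} equals $1$, and then claim this gives $M_1\geq(1-\epsilon)\exp(nH(X_1|X_2))$. That does not follow. With the probability equal to $1$, Lemma~\ref{lem-cht-conv} reads $1-\epsilon-\gamma_1 M_1\leq 1$, i.e.\ $M_1\geq-\epsilon/\gamma_1$, which is vacuous. For your conclusion you would need the probability to equal $0$, not $1$. This requires either (a) using the strict-inequality version $\mathbb{P}[\bigcap_j\{P(X)/Q_j(X)>\gamma_j\}]$, which the proof of Lemma~\ref{lem-cht-conv} in Appendix~\ref{append-lem-cht-conv} actually establishes (equation \eqref{eq-app-cht-3} has a strict inequality) but which is not what the stated lemma guarantees, or (b) a limiting argument with $\gamma_1=(1+\eta)\exp(-nH(X_1|X_2))$ for $\eta>0$ followed by $\eta\downarrow 0$. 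Either fix completes the argument and recovers $R_1\geq H(X_1|X_2)-\tfrac{1}{n}\log\tfrac{1}{1-\epsilon}$, matching \eqref{case-3-conv-1} in the paper's proof; note that the $+$ sign in \eqref{case-3-conv-3}, which you copied into your claim, appears to be a typo (both the paper's derivation and yours give $-$). As written, though, the step is an arithmetic non-sequitur and needs to be repaired before the converse is complete.
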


For any $n$ and $\epsilon$, the achievable region $\mathscr{R}_{\rm in}^{(3)}(n, \epsilon)$ has a curved boundary that is characterized by the trade-off between a separate bound on $R_1$ and a region in $\mathbb{R}^2$ that bounds $(R_2, R_1+R_2)$ jointly. The converse region $\mathscr{R}_{\rm out}^{(3)}(n, \epsilon)$ is the intersection of a region with a linear boundary that bounds $R_1$ only and a region with a curved boundary that bounds $(R_2, R_1+R_2)$ jointly. If we let
\begin{align}
\delta = \frac{1}{\sqrt{n}},
\end{align} then it is apparent that the dispersion corresponding to $R_2$ and $R_1 + R_2$ is given by $\mathsf{V}_2$ with a $-\frac{\log n}{2n}$ third-order term, while the dispersion of $R_1$ is zero.

A less redundant stationary, memoryless source has some useful properties. When $V(X_1,X_2) = 0$, \begin{equation}
P_{X_1^n X_2^n}(x_1^n, x_2^n) \in \left\{0, \, \exp\left(-nH(X_1,X_2)\right) \right\},
\end{equation} for every $(x_1^n, x_2^n) \in \mathcal{X}_1^n \times \mathcal{X}_2^n$; in other words, $(X_1,X_2)$ is uniformly distributed over its support in $\mathcal{X}_1 \times \mathcal{X}_2$. 
When $V(X_1|X_2) = 0$, 
\begin{equation}
P_{X_1^n |X_2^n}(x_1^n |x_2^n) = 
\begin{cases}
\exp\left(-nH(X_1|X_2)\right), \\
\qquad \; \text{ if } P_{X_1^n X_2^n}(x_1^n, x_2^n) > 0 \\
0, \quad\;\;\, \text{otherwise}; \label{eq-zero-var}
\end{cases}
\end{equation} in other words, $X_1$ is uniformly distributed over its conditional support for each $x_2 \in \mathcal{X}_2$. When $V(X_2|X_1) = 0$, a result analogous to \eqref{eq-zero-var} holds. These properties do not reduce the difficulty of characterizing the optimal MASCs in general. As a result, we continue to employ the random coding techniques from Section \ref{sec-result-SW} in our analysis here. For the achievability argument, we invoke the MASC RCU bound (Theorem \ref{thm-sw-rcu}); 
for the converse, we appeal to a modified version of \cite[Lemma~7.2.2]{han}, as stated below. 
\begin{lem}[Modified {\cite[Lemma~7.2.2]{han}}] \label{lem-h-modified}
	Any $\left(n,\exp\left(nR_1\right),\exp\left(nR_2\right),\epsilon'\right)$ MASC satisfies 
	\begin{IEEEeqnarray}{rCl}
	\epsilon' &\geq& \mathbb{P} \bigg[\left\{\frac{1}{n}I_{1} \geq R_1 + \gamma_1\right\} \cup \left\{\frac{1}{n}I_{2} \geq R_2 + \gamma_2 \right\} \cup \label{eq-lem-h-modified}  \\ 
	&& \phantom{\mathbb P \Bigg[}\left\{\frac{1}{n}I_{12} \geq R_1 + R_2 +\gamma_{12}\right\}\bigg] \nonumber \\
	&& - \min\left\{\mathbb{P}\left[\frac{1}{n}I_{1} \geq R_1 + \gamma_1\right], \, \exp\left(-n\gamma_1\right) \right\} \nonumber \\
	&& - \min\left\{\mathbb{P}\left[\frac{1}{n}I_{2} \geq R_2 + \gamma_2\right], \, \exp\left(-n\gamma_2\right) \right\} \nonumber\\
	&& - \min\left\{\mathbb{P}\left[\frac{1}{n}I_{12} \geq R_1 + R_2 + \gamma_{12}\right], \, \exp\left(-n\gamma_{12}\right) \right\},
	\nonumber
	\end{IEEEeqnarray} for any $\gamma_1$, $\gamma_2$, $\gamma_{12} > 0$, where $I_{1}$, $I_{2}$ and $I_{12}$ are defined in \eqref{eq-def-I1n}--\eqref{eq-def-In}.
\end{lem}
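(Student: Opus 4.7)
The plan is to mimic the proof of Han's MASC converse (Theorem~\ref{lem-han-conv}), but to refine the final step so that each of the three ``tail'' contributions is bounded by a minimum of two quantities rather than by the single exponential $3\exp(-\gamma)$. Concretely, given an $(n,\exp(nR_1),\exp(nR_2),\epsilon')$ MASC with encoders $\mathsf{f}_1,\mathsf{f}_2$ and decoder $\mathsf{g}$, I would let
\begin{equation*}
\mathcal{S}\triangleq\left\{(x_1^n,x_2^n)\in\mathcal{X}_1^n\times\mathcal{X}_2^n:
\mathsf{g}(\mathsf{f}_1(x_1^n),\mathsf{f}_2(x_2^n))=(x_1^n,x_2^n)\right\}
\end{equation*}
be the set of correctly decoded pairs, so that $\mathbb{P}[(X_1^n,X_2^n)\notin\mathcal{S}]\leq\epsilon'$. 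The three ``atypicality'' events in~\eqref{eq-lem-h-modified} are
$\mathcal{E}_1=\{\tfrac{1}{n}I_1\geq R_1+\gamma_1\}$,
$\mathcal{E}_2=\{\tfrac{1}{n}I_2\geq R_2+\gamma_2\}$,
$\mathcal{E}_{12}=\{\tfrac{1}{n}I_{12}\geq R_1+R_2+\gamma_{12}\}$.

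The first step would be the usual identity
\begin{equation*}
\epsilon'\geq \mathbb{P}[(X_1^n,X_2^n)\notin\mathcal{S}]
\geq \mathbb{P}[\mathcal{E}_1\cup\mathcal{E}_2\cup\mathcal{E}_{12}]
-\mathbb{P}[\mathcal{S}\cap(\mathcal{E}_1\cup\mathcal{E}_2\cup\mathcal{E}_{12})],
\end{equation*}
followed by the union bound $\mathbb{P}[\mathcal{S}\cap(\mathcal{E}_1\cup\mathcal{E}_2\cup\mathcal{E}_{12})]\leq\sum_j\mathbb{P}[\mathcal{S}\cap\mathcal{E}_j]$. The trivial inequality $\mathbb{P}[\mathcal{S}\cap\mathcal{E}_j]\leq\mathbb{P}[\mathcal{E}_j]$ gives one side of each minimum, so the real work is to establish the three exponential bounds $\mathbb{P}[\mathcal{S}\cap\mathcal{E}_j]\leq\exp(-n\gamma_j)$ for $j\in\{1,2,12\}$.

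For $\mathcal{E}_1$, I would exploit the key counting fact that for each fixed $x_2^n$, the set $\{x_1^n:(x_1^n,x_2^n)\in\mathcal{S}\}$ has at most $\exp(nR_1)$ elements, since distinct source vectors $x_1^n$ that are decoded correctly when paired with the same $x_2^n$ must be assigned distinct messages by $\mathsf{f}_1$. On the event $\mathcal{E}_1$ we have $P_{X_1^n|X_2^n}(x_1^n|x_2^n)\leq\exp(-n(R_1+\gamma_1))$, so
\begin{equation*}
\mathbb{P}[\mathcal{S}\cap\mathcal{E}_1]
\leq\sum_{x_2^n}P_{X_2^n}(x_2^n)\cdot\exp(nR_1)\cdot\exp(-n(R_1+\gamma_1))
=\exp(-n\gamma_1).
\end{equation*}
The bound for $\mathcal{E}_2$ follows by symmetry after swapping the roles of the two encoders. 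For $\mathcal{E}_{12}$, I would use the coarser fact $|\mathcal{S}|\leq\exp(n(R_1+R_2))$ together with $P_{X_1^nX_2^n}(x_1^n,x_2^n)\leq\exp(-n(R_1+R_2+\gamma_{12}))$ on $\mathcal{E}_{12}$, giving $\mathbb{P}[\mathcal{S}\cap\mathcal{E}_{12}]\leq\exp(-n\gamma_{12})$. Combining the three bounds with their respective trivial counterparts and plugging into the initial estimate yields~\eqref{eq-lem-h-modified}.

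I do not expect any genuine obstacle here: the structural step (isolating $\mathcal{S}$ from the union of atypicality events) and the counting bound on $|\{x_1^n:(x_1^n,x_2^n)\in\mathcal{S}\}|$ are exactly the ingredients of the original Han proof. The only substantive departure is simply refraining from replacing $\mathbb{P}[\mathcal{S}\cap\mathcal{E}_j]$ by the weaker exponential bound when the direct bound $\mathbb{P}[\mathcal{E}_j]$ happens to be smaller, which is the modification that makes the lemma useful in the less-redundant regimes of Appendix~\ref{append-zero-var}, where some of the information densities are almost surely constant and the probabilities $\mathbb{P}[\mathcal{E}_j]$ may vanish for appropriate choices of $\gamma_j$.
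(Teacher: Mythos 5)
Your proof is correct and follows the same route as the paper's argument: the paper does not write out a separate proof of Lemma~\ref{lem-h-modified}, but the decomposition through the correctly-decoded set $\mathcal{S}$, the bound $|\{x_1^n:(x_1^n,x_2^n)\in\mathcal{S}\}|\leq M_1$ for fixed $x_2^n$, and the resulting $\mathbb{P}[\mathcal{S}\cap\mathcal{E}_j]\leq\exp(-n\gamma_j)$ are exactly the steps used in Appendix~\ref{append-cf-rasc-conv} (with $L=1$), and the ``modification'' is precisely the trivial extra observation $\mathbb{P}[\mathcal{S}\cap\mathcal{E}_j]\leq\mathbb{P}[\mathcal{E}_j]$ that you identify.
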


We next prove Theorems~\ref{thm-zero-var-1}, \ref{thm-zero-var-2}, and \ref{thm-zero-var-3}.

\begin{proof}[Proof of Theorem~\ref{thm-zero-var-1}]
\emph{Achievability}: We employ the RCU bound in \eqref{eq-sw-achiev-1}. To evaluate the terms in \eqref{eq-sw-achiev-1}, note that the uniformity over the distribution's support that results from $V(X_1,X_2) = V(X_1|X_2) = V(X_2|X_1) = 0$ implies that for any $(x_1^n, x_2^n)$ such that $P_{X_1^nX_2^n}(x_1^n, x_2^n) > 0$,
\begin{align}
A_1 M_1 = \exp\left(nH(X_1|X_2)\right) \quad \text{a.s.} \label{case-1-achiev-4}
\end{align} 
Similar equalities hold for $A_2$ and $A_{12}$, and for any  $(R_1, R_2) \in \mathscr{R}_{\rm in}^{(1)}(n,\epsilon)$, \eqref{eq-sw-achiev-1} gives
\begin{equation}
\epsilon' \leq \delta_1 + \delta_2 + \delta_{12} = \epsilon,
\end{equation} implying that such a rate pair $(R_1, R_2)$ is achievable. Therefore, the $(n,\epsilon)$-rate region satisfies
\begin{equation}
\mathscr{R}^{*}(n,\epsilon) \supseteq \mathscr{R}_{\rm in}^{(1)}(n,\epsilon).
\end{equation} 

\emph{Converse}: Consider any $(R_1, R_2)$ with $R_1 < H(X_1|X_2) - \frac{1}{n}\log \frac{1}{1-\epsilon}$. Since the bound in \eqref{eq-lem-h-modified} holds for any $\gamma_1$, $\gamma_2$, $\gamma_{12} > 0$, we take 
\begin{equation}
\gamma_1 = H(X_1|X_2) - R_1 > \frac{1}{n}\log \frac{1}{1-\epsilon},
\end{equation} which, under the given uniformity, implies
\begin{equation}
\mathbb{P} \left[\frac{1}{n}I_{1} \geq R_1 + \gamma_1\right] = 1.
\end{equation} We take $\gamma_2$ and $\gamma_{12}$ sufficiently large so that 
\begin{IEEEeqnarray}{rCl}
R_2 + \gamma_2 &>& H(X_2|X_1) \\
R_1 + R_2 + \gamma_{12} &>& H(X_1, X_2)
\end{IEEEeqnarray} and hence
\begin{equation}
\mathbb{P}\left[\frac{1}{n}I_{2} \geq R_2 + \gamma_{2}\right] = \mathbb{P}\left[\frac{1}{n}I_{12} \geq R_1 + R_2 + \gamma_{12}\right] = 0. 
\end{equation} Under these conditions, \eqref{eq-lem-h-modified} gives 
\begin{equation}
\epsilon' \geq 1 - \exp\left(-n\gamma_1\right) > 1 - (1-\epsilon) = \epsilon.
\end{equation} Therefore, any achievable rate pair $(R_1, R_2)$ must satisfy \begin{equation}
R_1 \geq H(X_1|X_2) - \frac{1}{n}\log \frac{1}{1-\epsilon}.
\end{equation} The same analysis applies to $R_2$ and $R_1 + R_2$. We then conclude that any achievable rate pair $(R_1, R_2)$ must satisfy $(R_1, R_2) \in \mathscr{R}_{\rm out}^{(1)}(n,\epsilon)$. 
Thus, 
\begin{equation}
\mathscr{R}^{*}(n,\epsilon) \subseteq \mathscr{R}_{\rm out}^{(1)}(n,\epsilon).
\end{equation}
\end{proof}

\begin{proof}[Proof of Theorem~\ref{thm-zero-var-2}]
\emph{Achievability}: Take $(R_1,R_2) \in \mathscr{R}_{\rm in}^{(2)}(n,\epsilon)$ \eqref{case-2-achiev-2in} satisfying the inequalities in \eqref{case-2-achiev-2} with some $\delta_1$, $\delta_2$, $\delta_{12} > 0$ such that $\delta_1 + \delta_2 + \delta_{12} = \epsilon$.  We again employ the RCU bound from \eqref{eq-sw-achiev-1}. Since $\mathbb E \left[V_c(X_1|X_2)\right] > 0$, we use \eqref{eq:A1baru} to bound $A_1$. The terms $A_2$ and $A_{12}$ are constants (cf.~\eqref{case-1-achiev-4}). With these observations, we weaken \eqref{eq-sw-achiev-2} as 
\begin{align}
\epsilon' &\leq \mathbb E \left[   A_1 1 \left\{  A_1 \leq 1 \right\} \right] + A_2 + A_{12} +  \mathbb P \left[  \bar {\bar A}_1   > 1 -  A_2 -  A_{12} \right] \notag\\
&\phantom{=}+ \mathbb P \left[ \mathcal S_2^c \right] \\
&\leq \mathbb E \left[   A_1 1 \left\{  A_1 \leq 1 \right\} \right] + \delta_2 + \delta_{12} + \mathbb P \left[  \bar {\bar A}_1   > 1 -  \delta_2 -  \delta_{12} \right] \notag\\
&\phantom{=}+ \mathbb P \left[ \mathcal S_2^c \right] \label{eq:zero-var-2a}\\
&\leq \delta_1 +\delta_2 + \delta_{12} , \label{eq:zero-var-2b}
\end{align}
where \eqref{eq:zero-var-2a} is by our choice of $(R_1,R_2)$, and \eqref{eq:zero-var-2b} applies \eqref{eq:sec1}, the Berry-Esseen inequality (Theorem \ref{thm-berry-esseen}), and \eqref{eq:cheb} to bound the three probability terms. 
Therefore, $(R_1, R_2)$ is achievable at blocklength $n$ and error probability $\epsilon$, implying  
\begin{equation}
\mathscr{R}^{*}(n,\epsilon) \supseteq \mathscr{R}_{\rm in}^{(2)}(n,\epsilon).
\end{equation} 

\emph{Converse}: We next apply Lemma~\ref{lem-h-modified} 
to derive a converse result. Recall that under our assumptions $V(X_2|X_1) = V(X_1,X_2) = 0$, $\imath(X_2|X_1) = H(X_2|X_1)$ and $\imath(X_1,X_2) = H(X_1,X_2)$ almost surely. Consider any $(R_1, R_2)$ such that $R_2 < H(X_2|X_1) - \frac{1}{n}\log\frac{1}{1-\epsilon}$. Since the bound in \eqref{eq-lem-h-modified} holds for any $\gamma_1$, $\gamma_2$, $\gamma_{12} > 0$, we can take 
\begin{equation}
\gamma_2 = H(X_2|X_1) - R_2 > \frac{1}{n}\log\frac{1}{1-\epsilon}
\end{equation}
so that
\begin{equation}
\mathbb{P} \left[\frac{1}{n}I_{2} \geq R_2 + \gamma_2\right] = 0.
\end{equation} By this choice of $\gamma_2$, 
$1 - \epsilon - \exp\left(-n\gamma_2\right) > 0$. Thus, we can take $\gamma_1$ and $\gamma_{12}$ sufficiently large such that 
\begin{equation}
\exp\left(-n\gamma_1\right) + \exp\left(-n\gamma_{12}\right) < 1-\epsilon- \exp\left(-n\gamma_2\right).
\end{equation} 
By the above choices of $\gamma_1$, $\gamma_2$, and $\gamma_{12}$, \eqref{eq-lem-h-modified} gives
\begin{equation}
\epsilon' \geq 1 - \exp\left(-n\gamma_1\right) - \exp\left(-n\gamma_2\right) - \exp\left(-n\gamma_{12}\right) > \epsilon.
\end{equation} Therefore, any achievable rate pair $(R_1, R_2)$ must satisfy 
\begin{equation}
R_2 \geq H(X_2|X_1) - \frac{1}{n}\log \frac{1}{1-\epsilon}. \label{case-2-conv-2}
\end{equation} The same analysis applies to $R_1+R_2$, and we conclude that any achievable rate pair $(R_1, R_2)$ must also satisfy 
\begin{equation}
R_1+R_2 \geq H(X_1,X_2) - \frac{1}{n}\log \frac{1}{1-\epsilon}.\label{case-2-conv-3}
\end{equation}
Given \eqref{case-2-conv-2} and \eqref{case-2-conv-3}, 
we re-evaluate the bound in \eqref{eq-lem-h-modified} by taking 
\begin{IEEEeqnarray}{rCl}
\!\!\!\!\!\!\!\!\!\!\!\! \gamma_1 &=& \frac{\log n}{2n},~~ \gamma_2 > \frac{1}{n}\log \frac{1}{1-\epsilon},~~ \gamma_{12} > \frac{1}{n}\log \frac{1}{1-\epsilon}.
\end{IEEEeqnarray} Under these conditions, we have
\begin{equation}
\!\!\!\! \mathbb{P}\left[\frac{1}{n}I_{2} \geq R_2 + \gamma_2\right] = \mathbb{P}\left[\frac{1}{n}I_{12} \geq R_1 + R_2 + \gamma_{12}\right] = 0,
\end{equation} and the bound in \eqref{eq-lem-h-modified} becomes
\begin{equation}
\epsilon' \geq \mathbb{P}\left[\frac{1}{n}I_{1} \geq R_1 + \frac{\log n}{2n} \right] - \frac{1}{\sqrt{n}}. \label{case-2-conv-4}
\end{equation} Then, by the Berry-Esseen inequality (Theorem \ref{thm-berry-esseen}), taking 
\begin{equation}
R_1 = H(X_1|X_2) + \sqrt{\frac{V(X_2|X_1)}{n}}Q^{-1}\left(\epsilon + \frac{B_1+1}{\sqrt{n}}\right) - \frac{\log n}{2n}
\end{equation} in \eqref{case-2-conv-4} yields $\epsilon' \geq \epsilon$. 
Therefore, any achievable rate pair $(R_1, R_2)$ 
must satisfy $(R_1, R_2) \in \mathscr{R}_{\rm out}^{(2)}(n,\epsilon)$. Thus, 
\begin{equation}
\mathscr{R}^{*}(n,\epsilon) \subseteq \mathscr{R}_{\rm out}^{(2)}(n,\epsilon).
\end{equation} 
\end{proof}

\begin{proof}[Proof of Theorem~\ref{thm-zero-var-3}]
\emph{Achievability}: Take any $(R_1,R_2) \in \mathscr{R}_{\rm in}^{(3)}(n, \epsilon)$ satisfying the inequalities in \eqref{case-3-achiev-4} with some $\delta \leq \epsilon$. We employ the RCU bound in \eqref{eq-sw-achiev-1}. Since $\mathbb E \left[V_c(X_2|X_1)\right] > 0 $ and $V(X_1,X_2) > 0$, we use \eqref{eq:A2baru} and \eqref{eq:A12bar} to bound $A_2$ and $A_{12}$, respectively; $A_1$ is the constant in \eqref{case-1-achiev-4}. With these observations, we weaken \eqref{eq-sw-achiev-2} as 
\begin{align}
\epsilon' &\leq 
A_1 + \mathbb E \left[   A_2 1 \left\{  A_2 \leq 1 \right\} \right] + \mathbb E \left[   A_{12} 1 \left\{  A_{12} \leq 1 \right\} \right] \notag\\
&\phantom{=} + \mathbb P \left[ 2  \bar {\bar A}_2 > 1 - A_1 \cup  2  \bar A_{12}  > 1 - A_1 \right] + \mathbb P[\mathcal S_1^c] \\
&\leq \delta + \mathbb E \left[   A_2 1 \left\{  A_2 \leq 1 \right\} \right] + \mathbb E \left[   A_{12} 1 \left\{  A_{12} \leq 1 \right\} \right] \notag\\
&\phantom{=} + \mathbb P \left[ 2  \bar {\bar A}_2 > 1 - \delta \cup  2  A_{12}  > 1 - \delta \right]  + \mathbb P[\mathcal S_1^c] \label{eq:zero-var-3a}\\
&\leq \epsilon, \label{eq:zero-var-3b}
\end{align}
where \eqref{eq:zero-var-3a} is by our choice of $(R_1,R_2)$, and \eqref{eq:zero-var-3b} applies \eqref{eq:sec2}, \eqref{eq:sec12}, Lemma \ref{lem-b-e} (multidimensional Berry-Esseen Theorem), and \eqref{eq:cheb} to bound the four probability terms. Therefore,  $(R_1, R_2)$  is achievable at blocklength $n$ and error probability $\epsilon$, implying that 
\begin{equation}
\mathscr{R}^{*}(n, \epsilon) \supseteq \mathscr{R}_{\rm in}^{(3)}(n, \epsilon). 
\end{equation} 

\emph{Converse}:
We employ Lemma \ref{lem-h-modified} to derive a converse. 
Recall that in this case, $\imath(X_1|X_2) = H(X_1|X_2)$ almost surely. Consider any $(R_1, R_2)$ such that $R_1 < H(X_1|X_2) - \frac{1}{n}\log \frac{1}{1-\epsilon}$. Since the bound in \eqref{eq-lem-h-modified} holds for any $\gamma_1$, $\gamma_2$, $\gamma_{12} > 0$, we can set
\begin{equation}
\gamma_{1} = H(X_1|X_2) - R_1 >  \frac{1}{n}\log \frac{1}{1-\epsilon}
\end{equation} so that
\begin{equation}
\mathbb{P} \left[\frac{1}{n}I_{1} \geq R_1 + \gamma_{1} \right] = 1.
\end{equation} 
By this choice of $\gamma_1$, $1-\epsilon-\exp\left(-n\gamma_1\right) > 0$. 
Thus, we can take $\gamma_2$ and $\gamma_{12}$ sufficiently large such that
\begin{equation}
\exp\left(-n\gamma_2\right) + \exp\left(-n\gamma_{12}\right) < 1 - \epsilon - \exp\left(-n\gamma_1\right).
\end{equation} By the above choices of $\gamma_1$, $\gamma_2$, and $\gamma_{12}$, \eqref{eq-lem-h-modified} gives
\begin{equation}
\epsilon' \geq 1 - \exp\left(-n\gamma_1\right) - \exp\left(-n\gamma_2\right) - \exp\left(-n\gamma_{12}\right) > \epsilon. 
\end{equation} Therefore, any achievable rate pair $(R_1, R_2)$ must satisfy
\begin{equation}
R_1 \geq H(X_1|X_2) - \frac{1}{n}\log \frac{1}{1-\epsilon}. \label{case-3-conv-1}
\end{equation} Given that \eqref{case-3-conv-1} holds, we re-evaluate the bound in \eqref{eq-lem-h-modified} by taking
\begin{IEEEeqnarray}{rCl}
\gamma_1 &>& \frac{1}{n}\log \frac{1}{1-\epsilon}, ~~
\gamma_2 =\frac{\log n}{2n}, ~~
\gamma_{12} = \frac{\log n}{2n}.
\end{IEEEeqnarray} Under these conditions, the bound in \eqref{eq-lem-h-modified} becomes
\begin{IEEEeqnarray}{rCl}
\epsilon' &\geq& \mathbb{P}\bigg[\left\{\frac{1}{n}I_{2} \geq R_2 + \frac{\log n}{2n}\right\} \cup \nonumber \\
&& \phantom{\mathbb P \bigg[}\left\{\frac{1}{n}I_{12} \geq R_1 + R_2 + \frac{\log n}{2n}\right\} \bigg] - \frac{2}{\sqrt{n}}. \label{case-3-conv-2} 
\end{IEEEeqnarray} Applying Lemmas~\ref{lem-b-e} and \ref{lem-sve} to \eqref{case-3-conv-2}, we conclude that any $(R_1, R_2)$ in the $(n,\epsilon)$-rate region must satisfy $(R_1, R_2) \in \mathscr{R}_{\rm out}^{(3)}(n, \epsilon)$. Thus,
\begin{equation}
\mathscr{R}^{*}(n, \epsilon) \subseteq \mathscr{R}_{\rm out}^{(3)}(n, \epsilon).
\end{equation} 
\end{proof}

\subsection{Two Special Cases}
\label{append-zero-var-b}
The analysis in Section \ref{append-zero-var-a} above applies to any stationary, memoryless source with single-letter distribution $P_{X_1X_2}$ that satisfies \eqref{assump-b2}. In such a general setting, it is hard to find an optimal code. However, there are some special cases in which the optimal codes for a less redundant source can be characterized. 

To enable the following analysis on these special cases, we assume that $P_{X_1X_2}(x_1,x_2) > 0$ for every $(x_1,x_2) \in \mathcal{X}_1 \times \mathcal{X}_2$. 
Under this assumption, 
$V(X_1,X_2) = 0$ if and only if $V(X_1 |X_2) = V(X_2 |X_1) = 0$. As a result, the three cases discussed in Section A reduce to only \emph{two} possible scenarios:
\begin{enumerate}
\item[1)] $V(X_1,X_2) = V(X_1 |X_2) = V(X_2 |X_1) = 0$;
\item[2)] $V(X_1,X_2) >0$, and either $V(X_1 |X_2) = 0$ or $V(X_2 |X_1) = 0$.
\end{enumerate}
Note that $X_1$ and $X_2$ are independent in both of these scenarios.

We first summarize the results below.

\textit{Special Case 1)}:
\begin{thm}\label{thm-zero-var-s-1}
Suppose that $V(X_1 |X_2) = 0$, $V(X_2 |X_1) = 0$, and $V(X_1,X_2) = 0$. If $P_{X_1X_2}$ satisfies $P_{X_1X_2}(x_1,x_2) > 0$ for every $(x_1,x_2) \in \mathcal{X}_1 \times \mathcal{X}_2$, 
then
\begin{equation}
\mathscr{R}^{*}(n,\epsilon) = \mathscr{R}_{\rm out}^{(1)}(n,\epsilon),
\end{equation} where $\mathscr{R}_{\rm out}^{(1)}(n,\epsilon)$ is defined in \eqref{case-1-conv-1}.
\end{thm}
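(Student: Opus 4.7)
The converse half $\mathscr{R}^{*}(n,\epsilon)\subseteq\mathscr{R}_{\rm out}^{(1)}(n,\epsilon)$ is already supplied by Theorem~\ref{thm-zero-var-1} and does not rely on the extra positivity hypothesis, so the only thing left to do is to prove the reverse inclusion $\mathscr{R}_{\rm out}^{(1)}(n,\epsilon)\subseteq\mathscr{R}^{*}(n,\epsilon)$. The first step is to use the two standing hypotheses to pin down the source: since $V(X_1,X_2)=0$, the information density $\imath(X_1,X_2)$ is almost surely constant, so $P_{X_1X_2}$ is uniform on its support; combined with $P_{X_1X_2}(x_1,x_2)>0$ for every $(x_1,x_2)\in\mathcal{X}_1\times\mathcal{X}_2$, this forces $P_{X_1X_2}(x_1,x_2)=1/(|\mathcal{X}_1||\mathcal{X}_2|)$. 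In particular, $X_1$ and $X_2$ are independent and uniform on their respective alphabets, $X_1^n$ and $X_2^n$ are independent and uniform on $\mathcal{X}_1^n$ and $\mathcal{X}_2^n$, and setting $M_i\triangleq\lceil\exp(nR_i)\rceil$ the three inequalities defining $\mathscr{R}_{\rm out}^{(1)}(n,\epsilon)$ become $M_1\geq(1-\epsilon)|\mathcal{X}_1|^n$, $M_2\geq(1-\epsilon)|\mathcal{X}_2|^n$, and $M_1M_2\geq(1-\epsilon)|\mathcal{X}_1|^n|\mathcal{X}_2|^n$.

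For any such $(R_1,R_2)$ I will exhibit a matching deterministic $(n,M_1,M_2,\epsilon)$ MASC by a direct combinatorial design rather than by random coding: have each $\mathsf{f}_i$ realize a partition (or, if necessary, an injection) of $\mathcal{X}_i^n$, and let the decoder output any fixed representative of $\mathsf{f}_1^{-1}(m_1)\times\mathsf{f}_2^{-1}(m_2)$. When both $M_i\leq|\mathcal{X}_i|^n$, take each $\mathsf{f}_i$ to be a surjection onto $[M_i]$; then every message pair $(m_1,m_2)$ has nonempty preimage because the sources are independent, and the success probability is exactly $M_1M_2/(|\mathcal{X}_1|^n|\mathcal{X}_2|^n)\geq 1-\epsilon$ by the sum-rate inequality. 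When $M_1>|\mathcal{X}_1|^n$ (the case $M_2>|\mathcal{X}_2|^n$ is symmetric), take $\mathsf{f}_1$ to be injective so that $m_1$ determines $X_1^n$ exactly, and choose $\mathsf{f}_2$ to be a surjection or an injection according as $M_2\leq|\mathcal{X}_2|^n$ or $M_2>|\mathcal{X}_2|^n$; the success probability is then $\min(1,M_2/|\mathcal{X}_2|^n)\geq 1-\epsilon$ by the individual constraint on $M_2$.

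The proof is elementary once the structural reduction in the first paragraph is in place; the deepest step is really the observation that a zero joint varentropy together with full support collapses the problem to coding two independent uniform sources, after which optimal MASCs can be built by hand and both achievability and converse reduce to counting preimages in a Cartesian product. This is also why the $-\tfrac{\log n}{2n}$ third-order term that appears throughout the rest of the paper is absent here: the Berry-Esseen machinery that produces it is neither needed nor applicable when the information density is deterministic, exactly as in the point-to-point discussion of Remark~\ref{rem-zero-var}. The only slightly delicate point is ensuring that enough bins are used to get all $M_1M_2$ message pairs realized, which is precisely what the sub-case split on whether $M_i>|\mathcal{X}_i|^n$ is designed to handle.
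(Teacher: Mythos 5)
Your proposal is correct and takes essentially the same approach as the paper: both begin by observing that the zero-varentropy condition together with full support forces $P_{X_1X_2}$ to be uniform and product, and both then reduce the MASC to two independent point-to-point codes for uniform sources, computing the success probability by counting preimages. The paper phrases the achievability step slightly differently---it characterizes the error probability of the optimal code as $1-\min\{1,M_1/|\mathcal{X}_1|^n\}\cdot\min\{1,M_2/|\mathcal{X}_2|^n\}$ and then inverts this to get the rate region---whereas you explicitly exhibit a surjective/injective binning code and a case split on whether $M_i>|\mathcal{X}_i|^n$, but these are the same argument stated in two orders. One small slip in your exposition: when you say ``every message pair $(m_1,m_2)$ has nonempty preimage because the sources are independent,'' the correct justification is that $\mathsf{f}_1$ and $\mathsf{f}_2$ are surjections; source independence plays its role later, in making the success probability factor as $(M_1/|\mathcal{X}_1|^n)\cdot(M_2/|\mathcal{X}_2|^n)$. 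This does not affect the validity of the proof.
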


This scenario is a special example of Case 1) discussed in Section A above. The $(n,\epsilon)$-rate region here coincides with the converse region $\mathscr{R}_{\rm out}^{(1)}(n,\epsilon)$ presented in \eqref{case-1-conv-1} for general source distributions. See Figure \ref{fig-sw}\protect\subref{fig-sw-1} for a comparison among $\mathscr{R}_{\rm in}^{(1)}(n,\epsilon)$, $\mathscr{R}_{\rm out}^{(1)}(n,\epsilon)$, and $\mathscr{R}^{*}(n,\epsilon)$ in this special case.

\textit{Special Case 2)}: With $V(X_1, X_2) > 0$, we here assume that $V(X_1|X_2) = 0$ and $V(X_2|X_1) > 0$. The other case can be analyzed similarly. For any $\delta \in [0, \epsilon)$, we define
\begin{IEEEeqnarray}{rCl}
&& \hat{\mathscr{R}}_{\rm in}^{s}(n,\delta) \triangleq \bigg\{(R_1,R_2) \in \mathbb{R}^2:  \label{b-case-2-1} \\
R_1 &\geq& H(X_1) - \frac{1}{n}\log \frac{1}{1-\delta} \nonumber \\
 R_2 &\geq& H(X_2) + \sqrt{\frac{V(X_2)}{n}}Q^{-1}\left(\frac{\epsilon-\delta}{1-\delta}\right) - \frac{\log n}{2n} + \xi_{\rm in}(\epsilon,\delta,n) \bigg\} \notag\\
&&\hat{\mathscr{R}}_{\rm out}^{s}(n,\delta) \triangleq \bigg\{(R_1,R_2) \in \mathbb{R}^2: \\
R_1 &\geq& H(X_1) - \frac{1}{n}\log \frac{1}{1-\delta} \nonumber \\
R_2 &\geq& H(X_2) + \sqrt{\frac{V(X_2)}{n}}Q^{-1}\left(\frac{\epsilon-\delta}{1-\delta}\right)- \frac{\log n}{2n} \notag\\
&& - \xi_{\rm out}(\epsilon,\delta,n) \bigg\}, \notag
\end{IEEEeqnarray} 
where the functions $\xi_{\rm in}(\epsilon,\delta,n)$ and $\xi_{\rm out}(\epsilon,\delta,n)$ are characterized as follows.  
For any fixed $\delta$, $\xi_{\rm out}(\epsilon,\delta,n) = O(\frac{1}{n})$ 
and $\xi_{\rm in}(\epsilon,\delta,n) = O(\frac{1}{n})$. 
For any fixed $n$, both $\xi_{\rm out}(\epsilon,\delta,n)$ and $\xi_{\rm in}(\epsilon,\delta,n)$ 
blow up as $\delta$ approaches $\epsilon$. 
(These bounds are applications of the point-to-point results in Theorem~\ref{thm-k-v}.)
Also define
\begin{IEEEeqnarray}{rCl}
\mathscr{R}_{\rm in}^{s}(n,\epsilon) &\triangleq& \bigcup\limits_{\delta \in [0, \epsilon)} \hat{\mathscr{R}}_{\rm in}^{s}(n,\delta) \label{b-case-2-2} \\
\mathscr{R}_{\rm out}^{s}(n,\epsilon) &\triangleq& \bigcup\limits_{\delta \in [0, \epsilon)} \hat{\mathscr{R}}_{\rm out}^{s}(n,\delta).
\end{IEEEeqnarray} 
\begin{thm}\label{thm-zero-var-s-2}
Suppose that $V(X_1|X_2) = 0$, $V(X_2|X_1) > 0$, and $V(X_1,X_2) > 0$. If $P_{X_1X_2}$ satisfies $P_{X_1X_2}(x_1,x_2) > 0$ for every $(x_1,x_2) \in \mathcal{X}_1 \times \mathcal{X}_2$, then 
\begin{equation}
\mathscr{R}_{\rm in}^{s}(n,\epsilon) \subseteq \mathscr{R}^{*}(n,\epsilon) \subseteq \mathscr{R}_{\rm out}^{s}(n,\epsilon). \label{eq-case-2-2}
\end{equation}
\end{thm}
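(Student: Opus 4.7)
\textbf{Proof proposal for Theorem~\ref{thm-zero-var-s-2}.}

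The plan is to exploit the fact that the hypotheses $V(X_1|X_2) = 0$ and $P_{X_1X_2}(x_1,x_2) > 0$ for all $(x_1,x_2)$ together force $X_1$ and $X_2$ to be independent, with $X_1$ uniform on $\mathcal{X}_1$. Indeed, by the uniformity property noted in \eqref{eq-zero-var}, conditioned on any $x_2$, $X_1$ is uniform on its conditional support; since $P_{X_1X_2}$ has full support, this conditional support equals $\mathcal{X}_1$ for every $x_2$, forcing $P_{X_1|X_2}(x_1|x_2) = 1/|\mathcal{X}_1|$ independently of $x_2$. Thus $X_1 \perp X_2$, $X_1$ is uniform on $\mathcal{X}_1$ with $V(X_1) = 0$, and the marginal $X_2$ has $V(X_2) = V(X_2|X_1) > 0$.

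For achievability, I would pick any $\delta \in [0,\epsilon)$ and build the MASC as two independent point-to-point codes operating in parallel. The first encoder uses the optimal code for the uniform source $X_1^n$ at error probability $\delta$; by Remark~\ref{rem-zero-var}, this needs rate at most $H(X_1) - \frac{1}{n}\log\frac{1}{1-\delta} + O\!\left(\frac{1}{n}\exp(-nH(X_1))\right)$, which can be absorbed into $\xi_{\rm in}$. The second encoder uses the optimal point-to-point code for $X_2^n$ (with $V(X_2)>0$) at error probability $\epsilon_2 \triangleq (\epsilon-\delta)/(1-\delta)$; by Theorem~\ref{thm-k-v}, its rate is at most $H(X_2) + \sqrt{V(X_2)/n}\,Q^{-1}(\epsilon_2) - \frac{\log n}{2n} + \xi_{\rm in}(\epsilon,\delta,n)$, where the $O(1/n)$ term from \eqref{eq-achiev-k-v} absorbs into $\xi_{\rm in}$ (and blows up as $\delta\to\epsilon$ because $Q^{-1}(\epsilon_2)\to +\infty$). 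Because the two encoders act on independent sources and the joint decoder simply pairs the individual decoder outputs, the overall error event is $\{\hat X_1^n\neq X_1^n\}\cup\{\hat X_2^n\neq X_2^n\}$, whose probability equals $\delta + (1-\delta)\epsilon_2 = \epsilon$ by independence. Taking the union over $\delta \in [0,\epsilon)$ yields $\mathscr{R}_{\rm in}^{s}(n,\epsilon)\subseteq\mathscr{R}^*(n,\epsilon)$.

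For the converse, the key observation is that, when $X_1 \perp X_2$, the joint maximum likelihood decoder factorizes: given any deterministic encoders $\mathsf{f}_1,\mathsf{f}_2$ and received codewords $(c_1,c_2)$, the joint likelihood $P_{X_1}(x_1)P_{X_2}(x_2)$ over the preimage $\mathsf{f}_1^{-1}(c_1)\times \mathsf{f}_2^{-1}(c_2)$ is maximized by independently maximizing each factor. Hence the MASC reduces without loss to two \emph{independent} point-to-point codes with error probabilities $\epsilon_1,\epsilon_2$, and the total error equals $\epsilon_1 + \epsilon_2 - \epsilon_1\epsilon_2$. If a rate pair $(R_1,R_2)$ is $(n,\epsilon)$-achievable, then there must exist $\epsilon_1,\epsilon_2$ with $\epsilon_1+\epsilon_2-\epsilon_1\epsilon_2 \leq \epsilon$ such that $R_i \geq R^*(n,\epsilon_i)$ for the marginal $P_{X_i}$. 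Setting $\delta = \epsilon_1$, the constraint becomes $\epsilon_2 \leq (\epsilon-\delta)/(1-\delta)$; applying Remark~\ref{rem-zero-var} to $R_1$ and the converse part of Theorem~\ref{thm-k-v} to $R_2$ (using monotonicity of $Q^{-1}$) puts $(R_1,R_2)$ in $\hat{\mathscr{R}}_{\rm out}^{s}(n,\delta)$ for this $\delta$, establishing $\mathscr{R}^*(n,\epsilon)\subseteq\mathscr{R}_{\rm out}^{s}(n,\epsilon)$.

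The main obstacle is making the factorization argument in the converse fully rigorous for arbitrary (possibly stochastic) encoders and decoders, rather than just for deterministic encoders with the ML decoder; one handles this by noting that any stochastic scheme can be derandomized without worsening the expected error, and the remaining deterministic case reduces to the ML analysis above. A secondary technical point is to verify that the fourth-order residuals from Theorem~\ref{thm-k-v} and Remark~\ref{rem-zero-var} can be packaged into the advertised $\xi_{\rm in},\xi_{\rm out}$ uniformly in $\delta$ on compact subsets of $[0,\epsilon)$, with the advertised blow-up as $\delta\to\epsilon$ coming from $1/\phi(Q^{-1}(\epsilon_2))$ in \eqref{eq-achiev-k-v} and \eqref{eq-conv-k-v}.
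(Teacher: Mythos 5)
Your proof is correct and follows essentially the same route as the paper's: use independence and uniformity of $X_1$ (forced by $V(X_1|X_2)=0$ plus full support) to reduce the MASC to two independent point-to-point codes with a trade-off parameter $\delta$, then apply Remark~\ref{rem-zero-var} for the uniform $X_1$-component and Theorem~\ref{thm-k-v} for $X_2$. Your explicit ML-factorization argument for the converse (that the MAP decoder over the product preimage $\mathsf{f}_1^{-1}(c_1)\times\mathsf{f}_2^{-1}(c_2)$ factors, and that the component error events are independent) spells out a step the paper merely asserts; the one cosmetic slip is your suggestion to absorb an $O\!\left(\tfrac1n\exp(-nH(X_1))\right)$ residual into $\xi_{\rm in}$ for the $R_1$ bound, whereas in \eqref{b-case-2-1} the term $\xi_{\rm in}$ appears only in the $R_2$ constraint and no residual is needed on $R_1$, since taking $M_1=\lceil e^{nR_1}\rceil$ with $R_1\ge H(X_1)-\tfrac1n\log\tfrac1{1-\delta}$ already yields error at most $\delta$ exactly.
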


This scenario is a special example of Case 3) discussed in Section A of this appendix. The $(n,\epsilon)$-rate region characterized in \eqref{eq-case-2-2} is sandwiched between the achievable region presented in \eqref{case-3-achiev-7} and the converse region presented in \eqref{case-3-conv-3}. To compare these regions, we note that the bounds on $R_1 + R_2$ in \eqref{case-3-achiev-7} and \eqref{case-3-conv-3} become inactive in this special scenario where $X_1$ and $X_2$ are independent. As a result, the achievable region in \eqref{case-3-achiev-7} becomes
\begin{IEEEeqnarray}{rCl}
\mathscr{R}_{\rm in}^{(3)}(n,\epsilon) &=& \bigcup\limits_{\delta \in (0, \epsilon)} \bigg\{(R_1,R_2) \in \mathbb{R}^2:  \\
R_1 &\geq& H(X_1) + \frac{1}{n}\log \frac{1}{\delta} \nonumber \\
R_2 &\geq& H(X_2) + \sqrt{\frac{V(X_2)}{n}}Q^{-1}\left(\epsilon-\delta - \frac{C_{\rm in}}{\sqrt{n}}\right) \nonumber \\
&& - \frac{\log n}{2n} + \frac{1}{n}\log \frac{1}{1-\delta} \bigg\},\notag
\end{IEEEeqnarray} and the converse region in \eqref{case-3-conv-3} becomes
\begin{IEEEeqnarray}{rCl}
\mathscr{R}_{\rm out}^{(3)}(n,\epsilon) &=& \bigg\{(R_1,R_2) \in \mathbb{R}^2:  \\
R_1 &\geq& H(X_1) - \frac{1}{n}\log \frac{1}{1-\epsilon} \nonumber \\
R_2 &\geq& H(X_2) + \sqrt{\frac{V(X_2)}{n}}Q^{-1}(\epsilon) - \frac{\log n}{2n} - O\left(\frac{1}{n}\right) \bigg\}. \nonumber 
\end{IEEEeqnarray} 
As $\delta$ approaches $\epsilon$, 
the boundary of the $(n,\epsilon)$-rate region given in \eqref{b-case-2-2} 
approaches the line $R_1 = H(X_1) - \frac{1}{n}\log \frac{1}{1-\epsilon}$, 
which matches the vertical segment of the boundary of the converse region 
$\mathscr{R}_{\rm out}^{(3)}(n,\epsilon)$. 
See Figure \ref{fig-sw}\protect\subref{fig-sw-2} for a comparison 
of $\mathscr{R}_{\rm in}^{(3)}(n,\epsilon)$, $\mathscr{R}_{\rm out}^{(3)}(n,\epsilon)$, and $\mathscr{R}^{*}(n,\epsilon)$ in this case.

\begin{figure}[!t]
	\centering
	\subfloat[]{\includegraphics[width=0.48\textwidth]{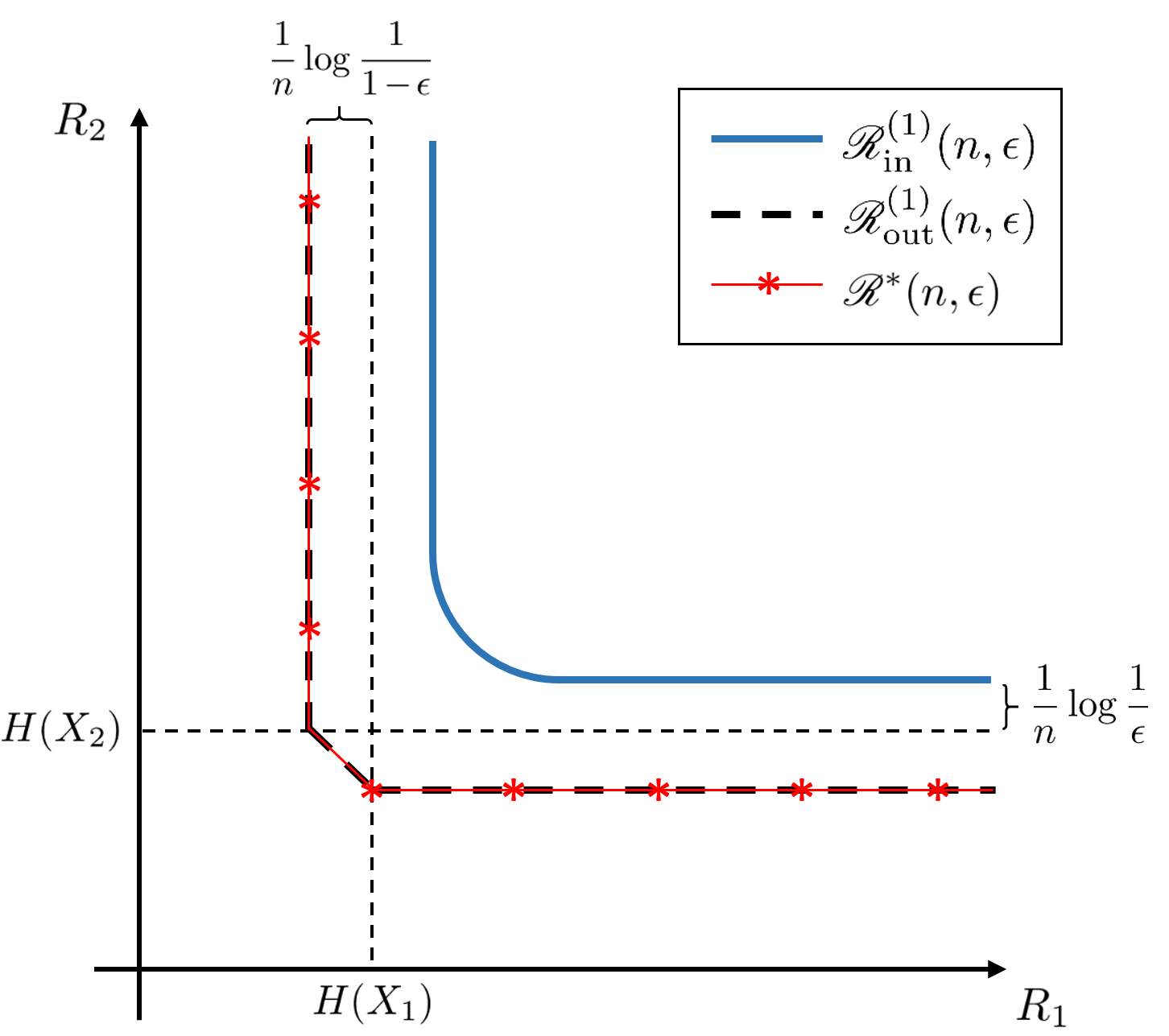}
		\label{fig-sw-1}}
	\hfil
	\subfloat[]{\includegraphics[width=0.48\textwidth]{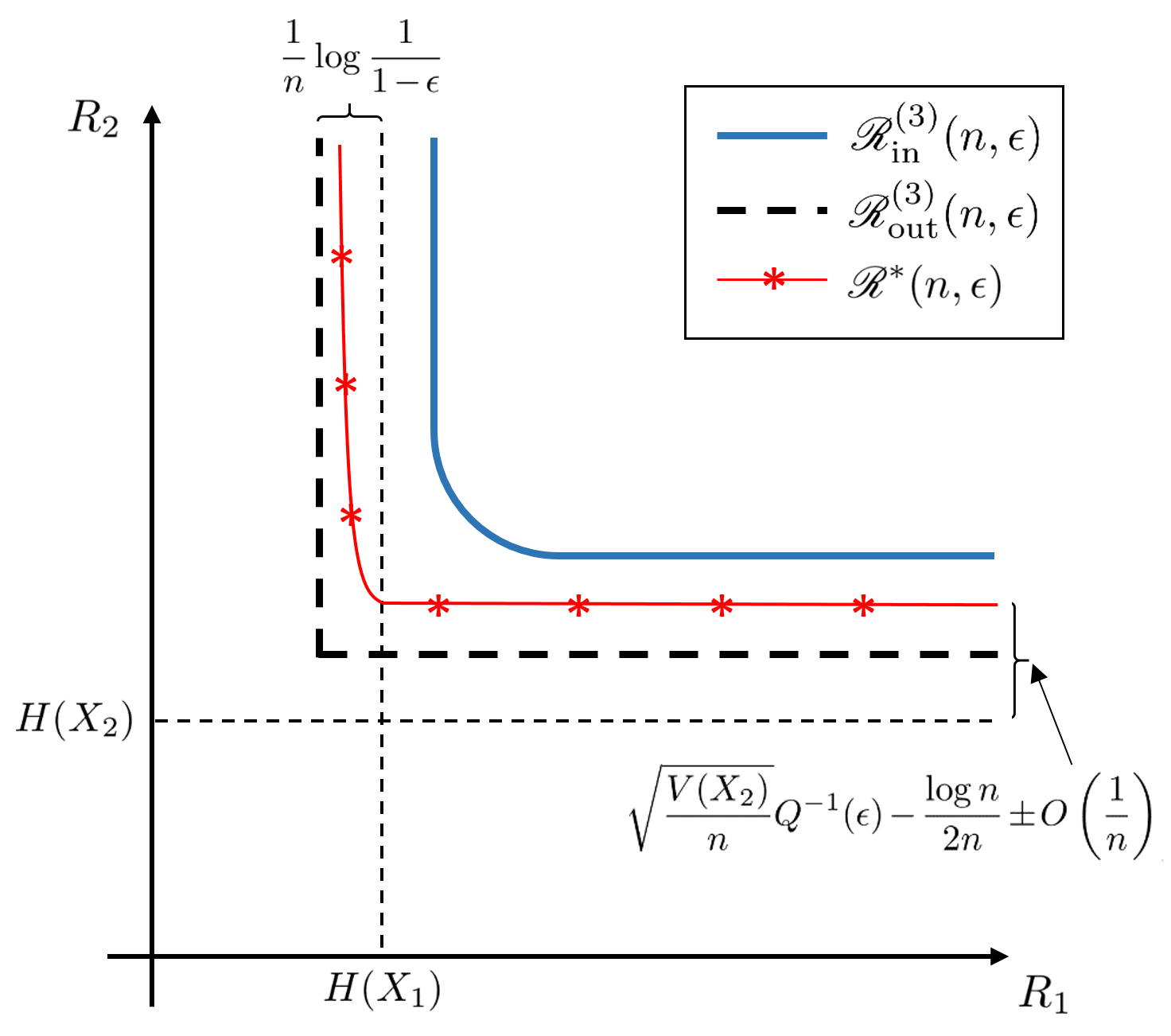}
		\label{fig-sw-2}}
	\caption{Schematic illustrations of the MASC rate regions for a less redundant source. The drawing in \protect\subref{fig-sw-1} illustrates the achievable and converse regions in Case 1) ($V(X_1,X_2) = V(X_1|X_2) = V(X_2|X_1) = 0$) and the $(n,\epsilon)$-rate region $\mathscr{R}^{*}(n,\epsilon)$ when $P_{X_1X_2}$ is assumed to have no zeros (Special Case 1)). 
	The drawing in \protect\subref{fig-sw-2} illustrates the achievable and converse regions in Case 3) ($V(X_1|X_2) = 0$, $V(X_1,X_2),V(X_2|X_1) > 0$) and the $(n,\epsilon)$-rate region $\mathscr{R}^{*}(n,\epsilon)$ when $P_{X_1X_2}$ is assumed to have no zeros (Special Case 2)). }
	\label{fig-sw}
\end{figure}

We next give proofs for Theorems~\ref{thm-zero-var-s-1} and \ref{thm-zero-var-s-2}.

\begin{proof}[Proof of Theorem~\ref{thm-zero-var-s-1}]
When $V(X_1 |X_2) = V(X_2 |X_1) = V(X_1,X_2) = 0$, $(X_1,X_2)$ is uniformly distributed over $\mathcal{X}_1 \times \mathcal{X}_2$, 
which restricts $\mathcal{X}_1$ and $\mathcal{X}_2$ to be finite and $X_1$ and $X_2$ to be independent. 
The MASC problem reduces to independent (point-to-point) almost-lossless source coding problems for the two sources with a single compound error probability. As a result, the optimal MASC with blocklength $n$ and code sizes $(M_1,M_2)$ has an error probability given by
\begin{equation}
1 - \min \left\{1, \, \frac{M_1}{|\mathcal{X}_1|^n} \right\} \cdot \min \left\{1, \, \frac{M_2}{|\mathcal{X}_2|^n} \right\}.
\end{equation}
Therefore, for any $0 < \epsilon < 1$, there exists an $(n,M_1,M_2,\epsilon)$ MASC if and only if 
\begin{align}
\min \left\{1, \, \frac{M_1}{|\mathcal{X}_1|^n} \right\} \cdot \min \left\{1, \, \frac{M_2}{|\mathcal{X}_2|^n} \right\} \geq 1 - \epsilon. \label{b-eq-zero-var-1}
\end{align}
In this case, $H(X_1) = \log |\mathcal{X}_1|$ 
and $H(X_2) = \log |\mathcal{X}_2|$.

\noindent $\bullet$  For $R_1 < H(X_1)$ and $R_2 < H(X_2)$, 
\eqref{b-eq-zero-var-1} becomes 
\begin{align}
M_1 M_2 \geq (1-\epsilon)|\mathcal{X}_1|^n |\mathcal{X}_2|^n, \label{b-eq-zero-var-2}
\end{align} 
which is equivalent to
\begin{equation}
R_1 + R_2 \geq H(X_1) + H(X_2) - \frac{1}{n}\log \frac{1}{1-\epsilon}.
\end{equation}

\noindent $\bullet$ For $R_1 \geq H(X_1)$, \eqref{b-eq-zero-var-1} becomes \begin{align}
M_2 \geq (1-\epsilon)|\mathcal{X}_2|^n,
\end{align} 
which is equivalent to
\begin{align}
R_2 \geq H(X_2) - \frac{1}{n}\log \frac{1}{1-\epsilon}.
\end{align}

\noindent $\bullet$ For $R_2 \geq H(X_2)$, \eqref{b-eq-zero-var-1} gives
\begin{equation}
R_1 \geq H(X_1) - \frac{1}{n}\log \frac{1}{1-\epsilon}.
\end{equation}
For all $0 < \epsilon < 1$ and $n \geq 1$, 
\begin{IEEEeqnarray}{rCl}
\mathscr{R}^{*}(n,\epsilon) &=& \bigg\{(R_1,R_2) \in \mathbb{R}^2:  \\
R_1 &\geq& H(X_1) - \frac{1}{n}\log \frac{1}{1-\epsilon} \nonumber \\
R_2 &\geq& H(X_2) - \frac{1}{n}\log \frac{1}{1-\epsilon} \nonumber \\
R_{1} + R_{2} &\geq& H(X_1) + H(X_2) - \frac{1}{n}\log \frac{1}{1-\epsilon} \bigg\} = \mathscr{R}_{\rm out}^{(1)}(n,\epsilon). \nonumber 
\end{IEEEeqnarray}
\end{proof}

\begin{proof}[Proof of Theorem~\ref{thm-zero-var-s-2}]
When $V(X_1|X_2) = 0$ and $V(X_2|X_1),V(X_1,X_2) > 0$, 
$X_1$ is uniformly distributed over $\mathcal{X}_1$, 
which implies that $\mathcal{X}_1$ is finite and $H(X_1) = \log |\mathcal{X}_1|$.  
In contrast, $X_2$ is not uniform over $\mathcal{X}_2$. 
Moreover, $X_1$ and $X_2$ are independent. 
The MASC problem in this case can also be resolved via independent point-to-point source coding for each of the two sources. The optimal code with blocklength $n$ and code sizes $(M_1,M_2)$ encodes $M_1$ arbitrary symbols in $\mathcal{X}_{1}^n$ and a cardinality-$M_2$ subset of $\mathcal{X}_{2}^n$ that has the largest probability with respect to $P_{X_{2}^n}$. As a result, for any $0 < \epsilon < 1$, there exists an $(M_1,M_2,\epsilon)$ MASC if and only if
\begin{equation}
(1-\delta)\delta' \geq 1-\epsilon, \label{eq-zero-var-2}
\end{equation}
where $\delta = 1- \min \left\{1, \, \frac{M_1}{|\mathcal{X}_1|^n} \right\}$ is the total marginal probability of symbols that are not encoded in $\mathcal{X}_1^n$, and $\delta'$ is the total marginal probability (with respect to $P_{X_{2}^n}$) of the encoded symbols in $\mathcal{X}_{2}^n$. Eq. \eqref{eq-zero-var-2} implicitly requires $\delta \in [0, \epsilon]$ and $\delta' \in [1-\epsilon, 1]$.

\noindent $\bullet$ For $\delta = 0$, we have 
\begin{equation}
R_1 \geq H(X_1).
\end{equation}
In this case, \eqref{eq-zero-var-2} gives 
\begin{equation}
1 - \delta' \leq \epsilon.
\end{equation}
We can apply the point-to-point almost-lossless source coding results 
from Theorem \ref{thm-k-v} to obtain
\begin{align}
&~H(X_2) + \sqrt{\frac{V(X_2)}{n}}Q^{-1}(\epsilon) - \frac{\log n}{2n} - O\left(\frac{1}{n}\right) \nonumber \\
\leq&~ R_2 \nonumber \\
\leq&~ H(X_2) + \sqrt{\frac{V(X_2)}{n}}Q^{-1}(\epsilon) - \frac{\log n}{2n} + O\left(\frac{1}{n}\right). 
\end{align}

\noindent $\bullet$ For  $0 < \delta \leq \epsilon$, we have 
\begin{equation}
R_1 = H(X_1)-\frac{1}{n}\log \frac{1}{\delta}.
\end{equation} In this case, \eqref{eq-zero-var-2} gives 
\begin{equation}
1 - \delta' \leq \frac{\epsilon-\delta}{1-\delta}.
\end{equation} We can also apply the point-to-point results to get 
\begin{align}
&~{H(X_2) + \sqrt{\frac{V(X_2)}{n}}Q^{-1}\left(\frac{\epsilon-\delta}{1-\delta}\right) - \frac{\log n}{2n} - \xi_{\rm out}(\epsilon,\delta,n)} \nonumber \\
\leq&~ R_2 \label{eq-zero-var-3} \\
\leq&~ H(X_2) + \sqrt{\frac{V(X_2)}{n}}Q^{-1}\left(\frac{\epsilon-\delta}{1-\delta}\right) - \frac{\log n}{2n} + \xi_{\rm in}(\epsilon,\delta,n), \nonumber 
\end{align} where for any fixed $\delta$, $\xi_{\rm out}(\epsilon,\delta,n) = O\left(\frac{1}{n}\right)$ and $\xi_{\rm in}(\epsilon,\delta,n) = O\left(\frac{1}{n}\right)$; for any fixed $n$, both $\xi_{\rm out}(\epsilon,\delta,n)$ and $\xi_{\rm in}(\epsilon,\delta,n)$ blow up as $\delta$ approaches $\epsilon$ (see Theorem \ref{thm-k-v} for the case where $\epsilon$ approaches $0$).
\end{proof}

\numberwithin{equation}{section}
\section{Proof of Corollary \ref{cor-sw-dep}}
\label{append-cor-sw-dep}
\renewcommand{\theequation}{\thesection.\arabic{equation}}
1) When $X_1$ and $X_2$ are dependent, our choice of $\mathbf{R} = (R_1, R_2)$ in \eqref{eq-sum-rate-1}--\eqref{eq-sum-rate-2} implies that 
\begin{align}
R_1 &\geq H(X_1|X_2) + \delta_2 - \frac{\log n}{2n} \\
R_2 &\geq H(X_2|X_1) + \delta_1 - \frac{\log n}{2n}.
\end{align} Define
\begin{align}
\mathbf{a} \triangleq 
\begin{bmatrix}
a_1 \\ a_2 \\ a_3
\end{bmatrix} \triangleq \overline{\mathbf{R}} - \overline{\mathbf{H}} + \frac{\log n}{2n}\mathbf{1}. \label{eq-def-a}
\end{align} We have
\begin{align}
a_1 &= R_1 - H(X_1|X_2) + \frac{\log n}{2n} \geq \delta_2 \\
a_2 &= R_2 - H(X_2|X_1) + \frac{\log n}{2n} \geq \delta_1 \\
a_3 &= \sqrt{\frac{V(X_1,X_2)}{n}}Q^{-1}\left(\epsilon - \frac{G}{\sqrt{n}}\right).
\end{align}
Let $\mathbf{Z} \triangleq (Z_1, Z_2, Z_3) \sim \mathcal{N}(\mathbf{0},\mathsf{V})$ be a multivariate Gaussian in $\mathbb{R}^3$, where $\mathsf{V}$ is the entropy dispersion matrix (see Definition~\ref{def-dis-matrix}). Then 
\begin{align}
&~{\mathbb{P} \left[\mathbf{Z} \leq \sqrt{n} \mathbf{a}\right]} \label{eq-append-dep-1}  \\
\geq&~ 1 - (\mathbb{P}\left[Z_1 > a_1\sqrt{n}\right] + \mathbb{P}\left[Z_2 > a_2\sqrt{n}\right] + \mathbb{P}\left[Z_3 > a_3\sqrt{n}\right]), \nonumber 
\end{align} where \eqref{eq-append-dep-1} holds by the union bound. It follows that
\begin{IEEEeqnarray}{rCl}
\mathbb{P}\left[Z_1 > a_1\sqrt{n}\right] &=& \mathbb{P}\left[Z_1 \geq a_1\sqrt{n}\right] \\
&=& Q\left(\frac{a_1\sqrt{n}}{\sqrt{V(X_1|X_2)}}\right) \\
&\leq& e^{-na_1^2/(2V(X_1|X_2))} \label{eq-append-dep-2}\\
&\leq& e^{-n\delta_2^2/(2V(X_1|X_2))}, \label{eq-append-dep-3} \IEEEeqnarraynumspace
\end{IEEEeqnarray} where \eqref{eq-append-dep-2} applies the Chernoff bound of the Q-function, and \eqref{eq-append-dep-3} holds since $a_1 \geq \delta_2 > 0$. Similarly,
\begin{equation}
\mathbb{P}\left[Z_2 > a_2\sqrt{n}\right] \leq e^{-n\delta_1^2/(2V(X_2|X_1))}.
\end{equation} In contrast,
\begin{align}
\mathbb{P}\left[Z_3 > a_3\sqrt{n}\right] = \epsilon - \frac{G}{\sqrt{n}}. \label{eq-append-dep-4}
\end{align} Plugging \eqref{eq-append-dep-3}--\eqref{eq-append-dep-4} into \eqref{eq-append-dep-1}, we conclude that for all $n$ sufficiently large such that
\begin{align}
e^{-n\delta_2^2/(2V(X_1|X_2))} + e^{-n\delta_1^2/(2V(X_2|X_1))} \leq \frac{G}{\sqrt{n}}, 
\end{align}
the bound
\begin{align}
\mathbb{P} \left[\mathbf{Z} \leq \sqrt{n} \mathbf{a}\right] \geq 1 - \epsilon 
\end{align} 
holds.  
Therefore, $\sqrt{n}\mathbf{a} \in \mathscr{Q}_{\rm inv}(\mathsf{V},\epsilon)$, 
and hence $\overline{\mathbf{R}} \in  \overline{\mathscr{R}}^{*}(n,\epsilon)$ \eqref{eq-def-third-order}.

2) Recall vector $\mathbf{a}$ defined in \eqref{eq-def-a}. 
With $R_1 = H(X_1)$, 
\begin{align}
a_1 &= H(X_1) - H(X_1|X_2) + \frac{\log n}{2n} \\
a_2 &= R_2 - H(X_2|X_1) + \frac{\log n}{2n} \\
a_3 &= R_2 - H(X_2|X_1) + \frac{\log n}{2n}.
\end{align} Note that 
\begin{align}
&~{\mathbb{P}\left[\mathbf{Z} \leq \sqrt{n}\mathbf{a} \right]} \label{eq-append-dep-5}
 \\
\geq&~ \mathbb{P} \left[Z_2 \leq a_2\sqrt{n}, \, Z_3 \leq a_3\sqrt{n}\right] - \mathbb{P} \left[ Z_1 > a_1\sqrt{n} \right]. \notag
\end{align} Since $H(X_1) - H(X_1|X_2) > 0$, $\mathbb{P} \left[ Z_1 > a_1\sqrt{n} \right]$ decays exponentially in $n$. Therefore, by the definition of $r^*$ in \eqref{eq-def-r*} and a first-order multivariate Taylor bound, $G > 0$ in \eqref{eq:R2cor} can be chosen so that the right side of \eqref{eq-append-dep-5} is equal to $1 - \epsilon$, which implies that $\overline{\mathbf{R}} \in  \overline{\mathscr{R}}^{*}(n,\epsilon)$ \eqref{eq-def-third-order}.

Conversely, for any $R_2$ such that 
$\overline{\mathbf{R}} \in \overline{\mathscr{R}}^{*}(n,\epsilon)$, 
\begin{align}
\mathbb{P}\left[\mathbf{Z} \leq \sqrt{n}\mathbf{a} \right] \geq 1 - \epsilon,
\end{align} which further implies
\begin{align}
\mathbb{P} \left[Z_2 \leq a_2\sqrt{n}, \, Z_3 \leq a_3\sqrt{n}\right] \geq 1 - \epsilon.
\end{align} Thus, by the definition of $r^*$, 
\begin{align}
\sqrt{n}\left(R_2 - H(X_2|X_1) + \frac{\log n}{2n}\right) \geq r^*,
\end{align} which is equivalent to \eqref{eq:R2corc}. \qed

\numberwithin{equation}{section}
\section{Proof of Corollary \ref{cor-sw-indep}}
\label{append-cor-sw-indep}
\renewcommand{\theequation}{\thesection.\arabic{equation}}
Fix any $\lambda \in [0,1]$. Define
\begin{align}
\mathbf{a} \triangleq 
\begin{bmatrix}
a_1 \\ a_2 \\ a_3
\end{bmatrix} \triangleq \overline{\mathbf{R}} - \overline{\mathbf{H}} + \frac{\log n}{2n}
\begin{bmatrix}
\lambda \\ 1-\lambda \\ 1
\end{bmatrix}.
\end{align} By the assumption that $X_1$ and $X_2$ are independent, we have
\begin{align}
a_3 = a_1 + a_2.
\end{align} Denote 
\begin{align}
r_1 \triangleq \frac{a_1\sqrt{n}}{\sqrt{V(X_1)}}, \; r_2 \triangleq \frac{a_2\sqrt{n}}{\sqrt{V(X_2)}}.
\end{align}
Let $\mathbf{Z} \triangleq (Z_1, Z_2, Z_3) \sim \mathcal{N}(\mathbf{0},\mathsf{V})$ be a multivariate Gaussian in $\mathbb{R}^3$, where $\mathsf{V}$ is the entropy dispersion matrix of the independent sources $X_1$ and $X_2$. It follows in this case that $Z_1$ and $Z_2$ are independent and $Z_3 = Z_1 + Z_2$. We then have
\begin{IEEEeqnarray}{rCl}
\IEEEeqnarraymulticol{3}{l}{\mathbb{P} \left[\mathbf{Z} \leq \sqrt{n} \mathbf{a}\right]} \nonumber \\
&=& \mathbb{P}\left[Z_1 \leq a_1\sqrt{n}\right] \mathbb{P}\left[Z_2 \leq a_2\sqrt{n}\right] \nonumber \\
&& \left.\mathbb{P}\left[Z_3 \leq a_3\sqrt{n} \, \right|Z_1 \leq a_1\sqrt{n}, Z_2 \leq a_2\sqrt{n} \right] \\
&=& \mathbb{P}\left[Z_1 \leq a_1\sqrt{n}\right] \mathbb{P}\left[Z_2 \leq a_2\sqrt{n}\right] \\
&=& \Phi(r_1)\Phi(r_2).
\end{IEEEeqnarray} 
Thus, for any $r_1$, $r_2$ such that
\begin{align}
\Phi(r_1)\Phi(r_2) \geq 1-\epsilon, 
\end{align}
$\mathbf{a} \in \frac{\mathscr{Q}_{\rm inv}(\mathsf{V},\epsilon)}{\sqrt{n}}$ 
and hence
\begin{IEEEeqnarray}{rCl}
\overline{\mathbf{R}} &\in& \overline{\mathbf{H}} + \frac{\mathscr{Q}_{\rm inv}(\mathsf{V},\epsilon)}{\sqrt{n}} - \frac{\log n}{2n}
\begin{bmatrix}
\lambda \\ 1-\lambda \\ 1
\end{bmatrix} 
\subseteq
 \overline{\mathscr{R}}^{*}(n,\epsilon). 
\end{IEEEeqnarray} 
Therefore,
\begin{IEEEeqnarray}{rCl}
\IEEEeqnarraymulticol{3}{l}{\overline{R}^*_{\rm sum}(n,\epsilon) \leq H(X_1) + H(X_2) + } \label{eq:sumrateind} \\
&&\min\limits_{\substack{{(r_1,r_2):} \\ {\Phi(r_1)\Phi(r_2) \geq 1- \epsilon}}} \left(\sqrt{\frac{V(X_1)}{n}}r_1 +  \sqrt{\frac{V(X_2)}{n}}r_2\right) - \frac{\log n}{2n}. \nonumber 
\end{IEEEeqnarray}
On the other hand, for any $r_1$, $r_2$ such that
\begin{equation}
\Phi(r_1)\Phi(r_2) < 1 - \epsilon, 
\end{equation} 
$\mathbf{a} \not\in \frac{\mathscr{Q}_{\rm inv}(\mathsf{V},\epsilon)}{\sqrt{n}}$ 
and hence $\overline{\mathbf{R}} \notin  \overline{\mathscr{R}}^{*}(n,\epsilon)$ 
\eqref{eq-def-third-order}.  
Thus, \eqref{eq:sumrateind} holds with equality. 

\section{Proof of Theorem~\ref{thm-cf-rasc-conv}}
\label{append-cf-rasc-conv}
\renewcommand{\theequation}{\thesection.\arabic{equation}}

	The proof employs an extension of Han's MASC converse~\cite[Lemma~7.2.2]{han}.
	
	Given an $(L, M_1, M_2,\epsilon)$ CF-MASC 
	$(\mathsf{L}, \mathsf{f}_1, \mathsf{f}_2, \mathsf{g})$, 
	let 
	\begin{IEEEeqnarray*}{rCl}	
	\mathcal{S} & \triangleq & 	\big\{(x_1,x_2) \in \mathcal{X}_1\times\mathcal{X}_2: \\
	&&	(x_1,x_2)= \mathsf{g} \left(\mathsf{f}_1\left(\mathsf{L}(x_1,x_2), x_1\right), 
	\mathsf{f}_2\left(\mathsf{L}(x_1,x_2), x_2\right) \right) \big\} \\
	\mathcal{S}_{1}(x_2) & \triangleq & \big\{x_1 \in \mathcal{X}_1: 
	(x_1,x_2) \in \mathcal{S} \big\}\ \ \forall x_2\in{\cal X}_2 \\
	\mathcal{S}_{2}(x_1) & \triangleq & \big\{x_2 \in \mathcal{X}_2: 
	(x_1,x_2) \in \mathcal{S} \big\}\ \ \forall x_1\in{\cal X}_1.
	\end{IEEEeqnarray*}
	Then $\mathbb{P}[\mathcal{S}^c]$ equals the code's error probability, and 
	\begin{IEEEeqnarray}{rcl}
	|\mathcal{S}| &\, \leq \,& M_1 M_2 \\
	|\mathcal{S}_1(x_2)| 
	&\, \leq \,& L M_1, \text{ for any } x_2 \in \mathcal{X}_2 \label{eq-cf-rasc-conv-5}\\
	|\mathcal{S}_2(x_1)| 
	&\, \leq \,& L M_2, \text{ for any } x_1 \in \mathcal{X}_1, 
	\end{IEEEeqnarray}
	where the bound on $|\mathcal{S}|$ is the number of distinct decoder inputs 
	and the bounds on $|\mathcal{S}_1(x_2)|$ and $|\mathcal{S}_2(x_1)|$ 
	are the number of distinct decoder inputs 
	under fixed values of $x_2$ and $x_1$ and an $\ell$-bit CF.  
	Fix $\gamma > 0$. Define sets
	\begin{IEEEeqnarray}{rCl}
	\mathcal{U} &\triangleq& 
	\left\{(x_1,x_2) \in \mathcal{X}_1 \times \mathcal{X}_2: \right. \nonumber \\
	&& \left. \imath(x_1, x_2) \geq \log M_1 + \log M_2 +  \gamma \right\} \\
	\mathcal{U}_1 &\triangleq& 
	\left\{(x_1,x_2) \in \mathcal{X}_1 \times \mathcal{X}_2: \; 
	\imath(x_1|x_2) \geq \log (L M_1 ) + \gamma \right\} \\
	\mathcal{U}_2 &\triangleq& 
	\left\{(x_1,x_2) \in \mathcal{X}_1 \times \mathcal{X}_2: \; 
	\imath(x_2|x_1) \geq \log (L M_2 ) + \gamma \right\}. \IEEEeqnarraynumspace
	\end{IEEEeqnarray}
	Then, 
	\begin{IEEEeqnarray}{rCl}
	\lefteqn{\ \ \ \ \mathbb{P}\left[\mathcal{U}_1 \cap \mathcal{S}\right]}  \label{eq-cf-rasc-conv-0}  \\
	&=& \mathbb{E} \left[1\left\{P_{X_1|X_2}(X_1|X_2) 
	\leq \frac{\exp(-\gamma) }{L M_1  } \right\} 
	1 \left\{(X_1,X_2) \in \mathcal{S} \right\} \right] \notag \\
	&\leq& \sum\limits_{x_2 \in \mathcal{X}_2} 
	P_{X_2}(x_2)  |\mathcal{S}_1(x_2)| \frac{\exp(-\gamma)}{L M_1  } 
	\label{eq-cf-rasc-conv-1} \IEEEeqnarraynumspace \\
	&\leq& \exp(-\gamma), \label{eq-cf-rasc-conv-2}
	\end{IEEEeqnarray} 
	where \eqref{eq-cf-rasc-conv-0} follows the definition of $\mathcal{U}_1$, 
	\eqref{eq-cf-rasc-conv-1} applies $1\{Z \leq z\} \leq z$,
	and \eqref{eq-cf-rasc-conv-2} holds by \eqref{eq-cf-rasc-conv-5}. Similarly,
	\begin{IEEEeqnarray}{rCl}
	\mathbb{P}\left[\mathcal{U}_2 \cap \mathcal{S}\right] &\leq& \exp(-\gamma) \\
	\mathbb{P}\left[\mathcal{U} \cap \mathcal{S}\right] &\leq& \exp(-\gamma).
	\end{IEEEeqnarray}
	Thus, 
	\begin{IEEEeqnarray}{rCl}
	\!\!\!\!\!\!\!\!\!\!\!\!\! \mathbb{P}\left[\mathcal{U}_1 \cup \mathcal{U}_2 \cup \mathcal{U} \right] 
	\!&\leq&\! \mathbb{P}\left[\mathcal{U}_1 \cap \mathcal{S}\right] 
	+ \mathbb{P}\left[\mathcal{U}_2 \cap \mathcal{S}\right] 
	+ \mathbb{P}\left[\mathcal{U} \cap \mathcal{S}\right] 
	+  \mathbb{P}\left[\mathcal{S}^c\right] \\
	&\leq& 3 \exp(-\gamma) 
	+ \mathbb{P}\left[\mathcal{S}^c\right]. \label{eq-cf-rasc-conv-3}
	\end{IEEEeqnarray}
	Rearranging \eqref{eq-cf-rasc-conv-3} 
	gives a lower bound on the error probability 
	$\epsilon = \mathbb{P} \left[\mathcal{S}^c\right]$. 
	Thus, any $(L, M_1, M_2, \epsilon)$ CF-MASC must satisfy
	\begin{IEEEeqnarray}{rCl}
	\epsilon &\geq& \mathbb{P} \left[\left\{\imath(X_1|X_2) \geq \log (L M_1) + \gamma \right\} \cup \nonumber \right.\\
	&& \phantom{\mathbb{P} [} \left\{\imath(X_2|X_1) \geq \log (L M_2) + \gamma \right\} \cup \notag \\
	&& \phantom{\mathbb{P} [} \left. \left\{\imath(X_1, X_2) \geq \log (M_1 M_2) + \gamma \right\}\right] \notag \\
	&& - 3 \exp(-\gamma) . \label{eq-cf-rasc-conv-4}
	\end{IEEEeqnarray}
	Particularizing \eqref{eq-cf-rasc-conv-4} 
	to stationary, memoryless sources 
	with single-letter distribution $P_{X_1X_2}$ 
	satisfying \eqref{assump-b1} and \eqref{assump-b2} 
	shows that any $(n,L, M_1, M_2, \epsilon)$ CF-MASC must satisfy
	\begin{IEEEeqnarray}{rCl}
	\epsilon &\geq& \mathbb{P} \left[\left\{I_{1} \geq \log (L M_1) 
	+ \gamma \right\} \cup \left\{I_{2} \geq \log (L M_2)  + \ell+ \gamma \right\} \right. \nonumber \\
	&& \phantom{\mathbb{P} [} \cup \left. \left\{I_{12} \geq \log ( M_1 M_2)  
	+ \gamma \right\}\right] - 3 \exp(-\gamma) 
	\IEEEeqnarraynumspace \\
	&=& 1 - \mathbb{P}\left[\sum_{i = 1}^n \mathbf{U}_i  < n \overline{\mathbf{R}} 
	- n \overline{\mathbf{H}} + \gamma\mathbf{1}\right] - 3 \exp\left(-\gamma\right), \label{eq-cf-rasc-conv-6} 
	\end{IEEEeqnarray} 
	where $\gamma > 0$ is an arbitrary constant, $I_{1}$, $I_{2}$, and $I_{12}$ 
	are defined in \eqref{eq-def-I1n}, \eqref{eq-def-I2n}, and \eqref{eq-def-In}, 
	and $\mathbf{U}_i$ is defined in \eqref{vardef-ui}. 
	Let $L$ be a finite constant that does not grow with $n$ and let $\gamma = \frac{\log n}{2} - \log L$.  
	Applying Lemma~\ref{lem-b-e} and Lemma~\ref{lem-sve}-\ref{part2} to bound the probability \eqref{eq-cf-rasc-conv-6} in a manner similar to \eqref{eq:MASCrates}--\eqref{eq-sw-achiev-3}, 
	we conclude that any $(n,\ell,\epsilon)$-achievable rate pair $(R_1, R_2)$ must be in $\mathscr{R}_{\rm out}^{*}(n,\epsilon)$ \eqref{eq-def-sw-2-out}. \qed

\begin{remark}
One could also prove Theorem~\ref{thm-cf-rasc-conv} by extending our HT converse (Theorem~\ref{thm-sw-cht-conv} ) to the setting with a cooperation facilitator. 
Our Theorem~\ref{thm-sw-cht-conv} continues to hold 
with $M_1$ and $M_2$ replaced by $L M_1$ and $L M_2$ 
in \eqref{eq:beta1star} and \eqref{eq:beta2star}, respectively 
(\eqref{eq:beta3star} remains unchanged).
\end{remark}

\section*{Acknowledgment}
The authors would like to thank the anonymous reviewer for the especially careful review that is reflected in the final version.

\bibliographystyle{IEEEtran}
\bibliography{IEEEabrv,bibtex/bib/refs}

\begin{IEEEbiographynophoto}{Shuqing Chen}(Graduate Student Member, IEEE) received the B.S. degree from Rice University in 2017 and the M.S. (2020) from California Institute of Technology, in electrical engineering. She worked with the Data Compression Laboratory, Caltech, from 2017 to 2019. Her research involves non-asymptotic information theory, composite hypothesis testing and multi-terminal source coding.
\end{IEEEbiographynophoto}

\begin{IEEEbiographynophoto}{Michelle Effros}(Fellow, IEEE)
received the B.S. (Hons.), M.S., and Ph.D. degrees in
electrical engineering from Stanford University,    
in 1989, 1990, and 1994, respectively. 

In 1994, she joined the faculty at the
California Institute of Technology, where she is currently the George
Van Osdol Professor of electrical engineering. Her research interests include
information theory, network coding, data compression, and communications.
She received Stanford's Frederick Emmons Terman Engineering Scholastic
Award (for excellence in engineering) in 1989, the Hughes Masters Full-Study
Fellowship in 1989, the National Science Foundation Graduate Fellowship in 1990,
the AT\&T Ph.D. Scholarship in 1993, the NSF CAREER Award in 1995, the Charles
Lee Powell Foundation Award in 1997, the Richard Feynman-Hughes Fellowship in
1997, and an Okawa Research Grant in 2000. She was cited by Technology Review as
one of the world's top young innovators in 2002. She and her co-authors
received the Communications Society and Information Theory Society Joint Paper
Award in 2009. She became a fellow of the IEEE in 2009.
She is a member of Tau Beta Pi, Phi Beta Kappa, and Sigma Xi. She served as the
Editor of the IEEE Information Theory Society Newsletter 1995 to 1998 and as a
Member of the Board of Governors of the IEEE Information Theory Society from
1998 to 2003 and from 2008 to 2017. She also served as President of the
Information Theory Society in 2015. She was a member of the Advisory Committee and
the Committee of Visitors for the Computer and Information Science and
Engineering (CISE) Directorate at the National Science Foundation from 2009 to
2012 and in 2014, respectively. She served on the IEEE Signal Processing Society
Image and Multi-Dimensional Signal Processing (IMDSP) Technical Committee from
2001 to 2007 and on ISAT from 2006 to 2009. She served as Associate Editor for
the joint special issue on Networking and Information Theory in the {\em IEEE
Transactions on Information Theory} and the {\em IEEE Transactions on Networking/ACM Transactions on Networking} 
and as Associate Editor for Source Coding for the
{\em IEEE Transactions on Information Theory} from 2004 to 2007. She has served on
numerous technical program committees and review boards, including serving as
general co-chair for the 2009 Network Coding Workshop and technical program
committee co-chair for the 2012 IEEE International Symposium on Information
Theory.
\end{IEEEbiographynophoto}

\begin{IEEEbiographynophoto}{Victoria Kostina}(Member, IEEE)
received the bachelor's degree from the Moscow institute of Physics and Technology in 2004, the master?s degree 
from the University of Ottawa in 2006, and the Ph.D. degree from Princeton University in 2013. She was affiliated with the Institute for Information Transmission Problems, Russian Academy of Sciences. In 2014, she joined Caltech, where she is currently a Professor of electrical engineering. Her research spans information theory, coding, control, learning, and communications. She received the Natural Sciences and Engineering Research Council of Canada master's scholarship in 2009, the Princeton Electrical Engineering Best Dissertation Award in 2013, the Simons-Berkeley Research Fellowship in 2015, and the NSF CAREER Award in 2017.
\end{IEEEbiographynophoto}

\end{document}